\renewcommand*\env@matrix[1][*\c@MaxMatrixCols c]{%
  \hskip -\arraycolsep
  \let\@ifnextchar\new@ifnextchar
  \array{#1}}
\DeclareMathOperator*{\argmax}{arg\,max}
\theoremstyle{definition}
\newcommand{\shortminus}{\text{-}}
\newcommand{\shortplus}{\text{+}}
\newcommand{\compositeone}{\hyperref[eq:composite1]{Composite 1}}
\newcommand{\compositetwo}{\hyperref[eq:composite2]{Composite 2}}
\newcommand{\compositethree}{\hyperref[eq:composite3]{Composite 3}}
\newcommand{\compositefour}{\hyperref[eq:composite4]{Composite 4}}
\newtheorem*{remark}{Remark}
\newtheorem{lemma}{Lemma}
\newtheorem{definition}{Definition}
\newtheorem{proposition}{Proposition}
\newtheorem{theorem}{Theorem}
\newtheorem{corollary}{Corollary}
\newtheorem*{conjecture}{Conjecture}
\tikzset{
    mymatrix/.style={
        matrix of math nodes,
        inner ysep= 0pt,
        nodes ={inner xsep=0pt,
            inner ysep=1pt},
        ampersand replacement=\&, 
        }
    }
\newcommand{\thicklen}{2.5pt}
\newcommand{\mym}[1]{%
    \begin{tikzpicture}[line width=.7pt, baseline={([yshift=-.55ex]current bounding box.center)}]
    \matrix[mymatrix](a){#1\\};
    \draw ([xshift=\thicklen]a.north west)  -- ++(-\thicklen,0) -- (a.south west) -- ++(\thicklen,0);
    \draw ([xshift=-\thicklen]a.north east)  -- ++(\thicklen,0) -- (a.south east) -- ++(-\thicklen,0);
    \end{tikzpicture}}
\newcommand{\entropy}[1]{S\left(#1\right)}
\newcommand{\Tr}{\text{Tr}}
\newcommand{\nil}{\text{nil}}
\newcommand{\maxmerge}{\Join}
\newcommand{\rightmerge}{\lhd}
\newcommand{\consistent}{\stackrel{c}{=}}
\newcommand{\horizontalbaby}[2]
{
\mym{
\centeredTikZ{
    \node[] (first) at (0.25, -1) {\footnotesize $#1$};
    \node[] (last) at (1.75, 1.25) {\footnotesize $#2$};
    \emptysquare{0}{0};
    \draw[dotted, line width=0.375mm] (0.75,0.25) -- (1.25, 0.25);
    \emptysquare{1.5}{0};
    \draw[->] (first) -- (0.25, 0);
    \draw[->] (last) -- (1.75, 0.5);
    }
}
}
\newcommand{\horizontalunit}[1]
{
\mym{
    \centeredTikZ{
    \emptysquare{0}{0};
    \emptysquare{0.5}{0};
    }_{#1}
}
}
\newcommand{\twobytwounit}[1]
{
\mym{
    \centeredTikZ{
    \emptysquare{0}{0};
    \emptysquare{0.5}{0};
    \emptysquare{-0.25}{0.5};
    \emptysquare{0.25}{0.5};
    }_{#1}
}
}
\newcommand{\adulthorizontalsnake}[4]
{
\mym{
\centeredTikZ{
\node[] (first) at (0.25, -1) {\footnotesize $(#1,#2)$};
\node[] (last) at (1.5, 1.75) {\footnotesize $(#3,#4)$};
    \emptysquare{0}{0};
    \emptysquare{1.5}{0};
    \emptysquare{-0.25}{0.5};
    \emptysquare{1.25}{0.5};
    \draw[dotted, line width=0.375mm] (0.625,0.5) -- (1.125, 0.5);
    \draw[->] (first) -- (0.25, 0);
    \draw[->] (last) -- (1.5, 1);
    }
}
}
\newcommand{\thicksnake}[4]
{
\mym{
\centeredTikZ{
\node[left] (first) at (-0.5, 0.25) {\footnotesize $(#1, #2)$};
\node[right] (last) at (1.75, 1.75) {\footnotesize $(#3, #4)$};
    \emptysquare{0}{0};
    \emptysquare{-0.75}{1.5};
    \draw[dotted, line width=0.375mm] (0, 0.75) -- (-0.25, 1.25);
    \draw[dotted, line width=0.375mm] (0.75,0.25) -- (1.25, 0.25);
    \emptysquare{1.5}{0};
    \emptysquare{0.75}{1.5};
    \draw[dotted, line width=0.375mm] (0,1.75) -- (0.5, 1.75);
    \draw[dotted, line width=0.375mm] (1.5, 0.75) -- (1.25, 1.25);
    \draw[->] (first) -- (0, 0.25);
    \draw[->] (last) -- (1.25, 1.75);

    }
}
}
\newcommand{\miniblockfull}{
\begin{tikzpicture}[scale=0.2, xscale=-1, baseline={([yshift=-.55ex]current bounding box.center)}]\foreach \x in {0, 1}
    {
    \foreach \y in {0, 1}
    {
    \emptysquare{\x*0.5+\y*0.25}{\y*0.5};
    }
    }\end{tikzpicture}
}
\newcommand{\mymarginal}{\centeredTikZmini{\squaresempty{3}{3}{0}{0};}}
\newcommand{\dottedsquare}[2]{\draw[pattern color=red, pattern=dots] (#1, #2) rectangle (#1 +0.5, #2 + 0.5)}
\newcommand{\emptysquare}[2]{\draw[fill=white] (#1, #2) rectangle (#1 +0.5, #2 + 0.5)}
\newcommand{\filledsquare}[2]{\draw[fill=black!25!white, draw=black] (#1, #2) rectangle (#1 +0.5, #2 + 0.5)}
\newcommand{\squaresdotted}[4]{
\foreach \x in {1,...,#1}
{
\foreach \y in {1,...,#2}
{
\dottedsquare{#3*0.5 - #4*0.25 +0.5*\x - 0.25*\y}{#4*0.5+0.5*\y};
}
}
}
\newcommand{\squaresempty}[4]{
\foreach \x in {1,...,#1}
{
\foreach \y in {1,...,#2}
{
\emptysquare{#3*0.5 - #4*0.25 +0.5*\x - 0.25*\y}{#4*0.5+0.5*\y};
}
}
}
\newcommand{\squaresfilled}[4]{
\foreach \x in {1,...,#1}
{
\foreach \y in {1,...,#2}
{
\filledsquare{#3*0.5 - #4*0.25 +0.5*\x - 0.25*\y}{#4*0.5+0.5*\y};
}
}
}
\newcommand{\centeredTikZ}[1]{
\begin{tikzpicture}[scale=0.55, baseline={([yshift=-.55ex]current bounding box.center)}]
#1
\end{tikzpicture}
}
\newcommand{\centeredTikZmini}[1]{
\begin{tikzpicture}[scale=0.25, baseline={([yshift=-.55ex]current bounding box.center)}]
#1
\end{tikzpicture}
}
\begin{document}
\title{Entropy scaling law and the quantum marginal problem}

\author{\normalsize Isaac H. Kim\thanks{3 Centre for Engineered Quantum Systems, School of Physics, University of Sydney, Sydney, NSW 2006, Australia}}

\date{\today} 
\maketitle
\begin{abstract}
Quantum many-body states that frequently appear in physics often obey an entropy scaling law, meaning that an entanglement entropy of a subsystem can be expressed as a sum of terms that scale linearly with its volume and area, plus a correction term that is independent of its size. We conjecture that these states have an efficient dual description in terms of a set of marginal density matrices on bounded regions, obeying the same entropy scaling law locally. We prove a restricted version of this conjecture for translationally invariant systems in two spatial dimensions. Specifically, we prove that a translationally invariant marginal obeying three non-linear constraints --- all of which follow from the entropy scaling law straightforwardly --- must be consistent with some global state on an infinite lattice.  Moreover, we derive a closed-form expression for the maximum entropy density compatible with those marginals, deriving a variational upper bound on the thermodynamic free energy. Our construction's main assumptions are satisfied exactly by solvable models of topological order and approximately by finite-temperature Gibbs states of certain quantum spin Hamiltonians.
\end{abstract}                             

 \vspace*{\fill}
 \textit{I can illustrate the second approach with the same image of a nut to be opened. The first analogy that came to my mind is of immersing the nut in some softening liquid, and why not simply water? From time to time you rub so the liquid penetrates better, and otherwise you let time pass. The shell becomes more flexible through weeks and months—when the time is ripe, hand pressure is enough, the shell opens like a perfectly ripened avocado! A different image came to me a few weeks ago. The unknown thing to be known appeared to me as some stretch of earth or hard marl, resisting penetration ... the sea advances insensibly in silence, nothing seems to happen, nothing moves, the water is so far off you hardly hear it ... yet it finally surrounds the resistant substance.} 
 [Grothendieck 1985–1987, pp. 552-3]

\newpage
\tableofcontents

\newpage

\section{Introduction}
\label{sec:intro}
The discovery that there are materials exhibiting superconductivity at a temperature significantly higher than what the BCS theory of superconductivity~\cite{Cooper1957} predicts has been one of the major driving forces behind the study of strongly correlated electron systems over the past few decades~\cite{Bednorz1986}. The struggle to definitively identify the underlying microscopic physical mechanism has been a major theme of research in strongly correlated material over this period.

While writing down a toy model of such systems is often straightforward, solving them is far from trivial. Kitaev showed that the problem of estimating the ground state energy of such models is QMA-complete, which is likely to be hard even for a quantum computer~\cite{Kitaev2002}. To make progress, one must resort to a set of physical assumptions that can alleviate this otherwise exorbitant computational cost.

A ``model citizen'' of such physical assumption is the spectral gap. Hastings showed that one-dimensional quantum many-body systems with a constant spectral gap obey area law, in which case the ground state wavefunction can be efficiently described by a tensor network called matrix product state~\cite{Hastings2007}. Moreover, there is a classical algorithm that, given a one-dimensional gapped Hamiltonian, finds the ground state wavefunction in polynomial time~\cite{Landau2013}. These developments firmly establish that one-dimensional gapped systems form an easy subclass of Hamiltonians that are amenable to classical computational methods.

However, attempts to extend these insights to higher dimensions have been stymied by many obstacles. Despite a recent work on establishing the subvolume law of entanglement entropy in two spatial dimensions~\cite{Anshu2020}, a rigorous proof of the area law in higher dimensions remains open. Moreover, even if we can prove area law in the future, that alone cannot be the end of the story. Ge and Eisert showed that a class of states in two-dimensions that satisfy the area law for all Renyi entropies \emph{cannot} be characterized by tensor networks in higher dimensions using a polynomial number of parameters~\cite{Ge2014}. These authors further conclude that, even amongst the states that obey area law, the states that have an efficient classical description constitutes an even smaller corner of that already restrictive set. 

If so, what kind of physical states can we describe efficiently? In this paper, we will take a top-down approach to tackle this problem. By postulating a plausible property of gapped systems, we will be able to \emph{derive} an efficient tensor network that can accurately approximate their ground states. (In fact, the property we demand can be satisfied, albeit approximately, even by the finite-temperature Gibbs state of some locally interacting many-body Hamiltonian.)

The derived tensor network, which we briefly mention in Appendix~\ref{appendix:tensor_network}, have two useful properties. First, the network can be contracted efficiently, in time that scales linearly with the system size. These tensor networks are in stark contrast with (more) general tensor networks, which are computationally hard to contract~\cite{Schuch2007}.\footnote{Our work has a tenuous resemblance with Ref.~\cite{Zaletel2020}, but how these two constructions are related to each other is not entirely clear at the moment.} Second, the global entropy of the tensor network can be computed efficiently. This is a highly nontrivial feature that not many tensor networks can bolster; see however, Ref.~\cite{Kim2017} for a notable exception.

While the existence of such a tensor network is undoubtedly surprising, why it has those properties is not apparent from its definition. A better perspective is to interpret the tensor network as a maximum-entropy state consistent with a set of marginal density matrices, \emph{marginals} for short. From this point of view, the fundamental data that characterizes the global state is the set of marginals on balls of bounded radius that overlap with each other; see Fig.~\ref{fig:rdms}.
\begin{figure}
  \centering
  \includegraphics[width=0.3\columnwidth]{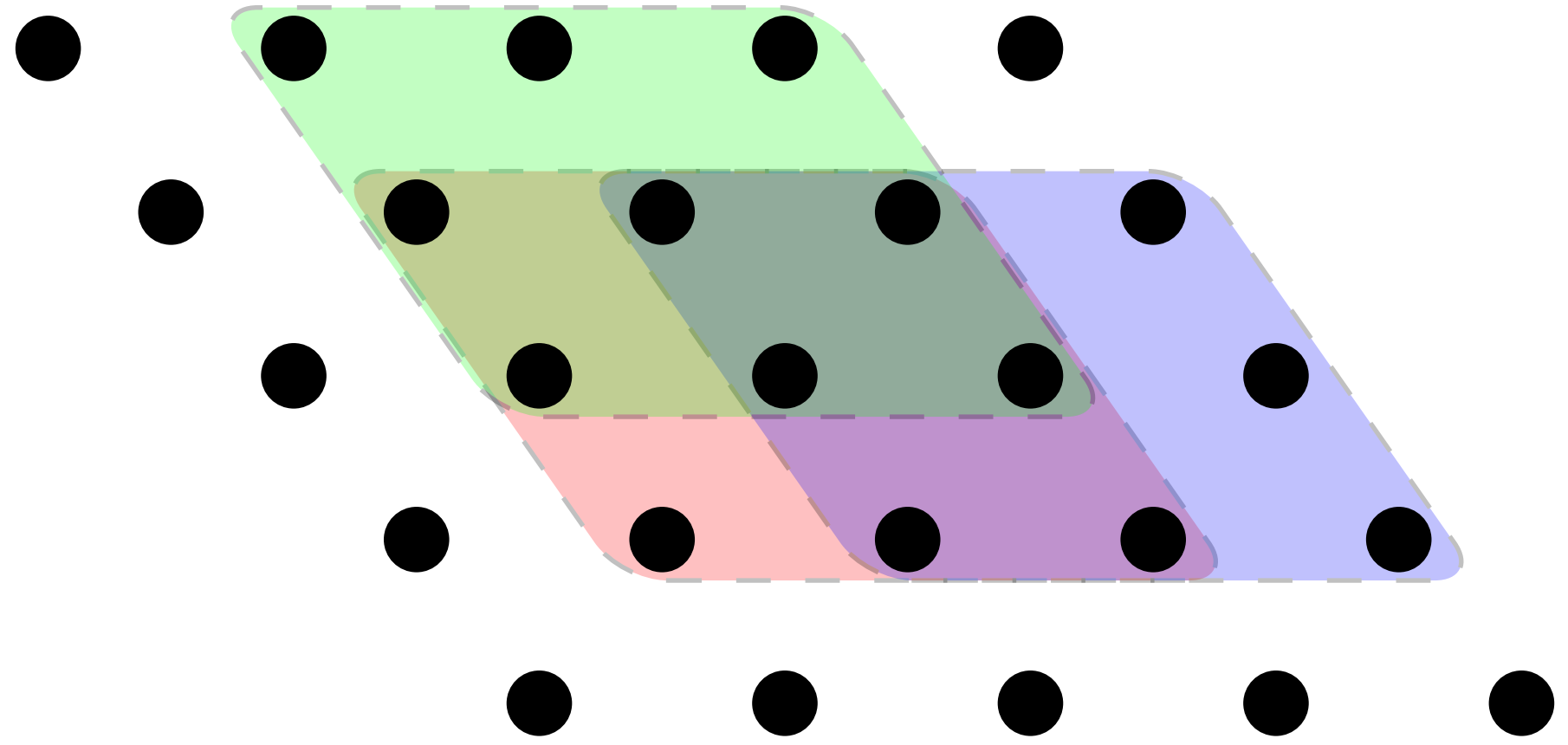}
    \caption{Different supports of the density matrices, enclosed in three different dashed regions, colored in red, blue, and green in a counterclockwise direction. (The overlaps of the supports are colored with their respective color mixtures.) Each subsystem contains $3\times 3=9$ local degrees of freedom, denoted by the black dots. The supports of the density matrices may overlap with each other, as shown in this example.}
    \label{fig:rdms}
\end{figure}

Of course, there is a well-known difficulty in such characterization. Given a set of marginals, one may not be able to efficiently decide if those marginals are consistent with some global state. A completely general version of this problem is known to be QMA-complete~\cite{Liu2006,Broadbent2019}. Therefore, it will be too ambitious to find a complete solution to this problem.

This is where our main result comes in. We provide a physically motivated sufficient condition on the set of marginals such that, if the condition is satisfied, one can guarantee the existence of a global state consistent with those marginals. Moreover, under the same condition, we can derive an \emph{exact} expression for the maximum global entropy consistent with the given marginals. Therefore, not only can we compute an upper bound to the ground state energy, but we can also compute an upper bound to the thermodynamic free energy.


Our conditions are satisfied if the underlying state, up to a finite-depth quantum circuit, obeys the following scaling law for entanglement entropy:
\begin{equation}
    S(A) = \alpha_0 |A|  +\alpha_1 |\partial A| - \gamma, \label{eq:ent_scaling}
\end{equation}
for any disk-shaped region $A$, where the first two terms are proportional to the volume and the boundary of $A$ respectively, quantified in terms of the number of degrees of freedom in the interior/boundary of $A$, and the last term is a constant that is independent of the size of $A$. (See Fig.~\ref{fig:entropy_formula}.) Here $\alpha_0$ and $\alpha_1$ are non-universal constants and $\gamma$ is a universal term that is independent of these microscopic details. Modulo the correction term that vanishes in the $|A| \to \infty$ limit, Eq.~\eqref{eq:ent_scaling} is a commonly used form of the entanglement entropy that is expected to hold in physical systems with a finite correlation length, both at zero~\cite{Kitaev2006,Levin2006} and finite temperature\cite{Castelnovo2007}; see Ref.~\cite{Eisert2010} for a review. In so far as Eq.~\eqref{eq:ent_scaling} is valid, our result is exact. Since Eq.~\eqref{eq:ent_scaling} has been observed to hold in a large class of systems up to a correction that decays exponentially in $|\partial A|$, we expect the approximation error to decay rapidly as we increase the size of $A$. Provided that our speculation is correct, by minimizing the free energy within the space of such marginals, we will be able to obtain a variational upper bound on the free energy which well-approximates the true thermodynamic free energy. 

\begin{figure}[h]
  \centering
  \includegraphics[width=0.3\columnwidth]{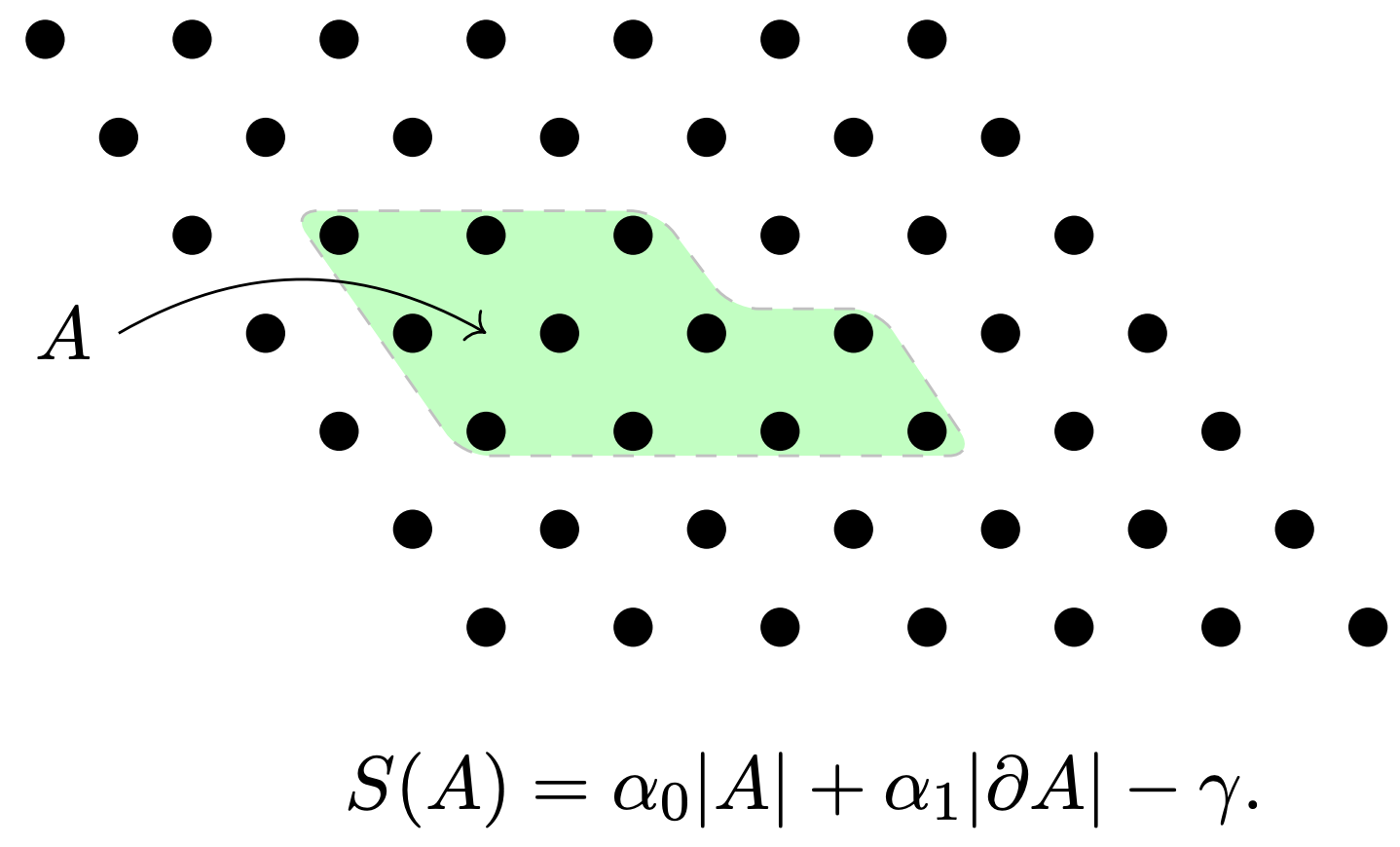}
    \caption{Entanglement entropy of a simply connected subsystem.}
    \label{fig:entropy_formula}
\end{figure}

With Eq.~\eqref{eq:ent_scaling}, we can describe our main result somewhat more precisely, however, still glossing over the mathematical details. Roughly speaking, we prove the following correspondence:
\begin{equation}
\boxed{
\begin{tikzcd}
    \shortstack{\text{Translationally invariant marginal}\\ \text{obeying Eq.~\eqref{eq:ent_scaling} locally.}}  \arrow[d, Leftrightarrow] \\
    \shortstack{\text{Translationally invariant state}\\  \text{obeying Eq.~\eqref{eq:ent_scaling} globally.}}
\end{tikzcd}
}
\label{eq:main_rough}
\end{equation}
Specifically, there is a one-to-one map between a translationally invariant marginal on a \emph{finite} cluster obeying Eq.~\eqref{eq:ent_scaling} within that marginal and a global state consistent with those marginals obeying Eq.~\eqref{eq:ent_scaling} at a larger scale. This means that a translationally invariant marginal on a bounded region obeying Eq.~\eqref{eq:ent_scaling} \emph{defines} a translationally invariant state of an infinite system. Moreover, the maximum entropy (density) of the the latter can be computed exactly using Eq.~\eqref{eq:ent_scaling}.

To reach this conclusion, we make a simple but nontrivial observation: that the states that satisfy Eq.~\eqref{eq:ent_scaling} obey the following identity 
\begin{equation}
    S(AB) + S(BC) - S(B) - S(ABC) = 0.\label{eq:ssa_equality}
\end{equation}
for a judiciously chosen set of subsystems; see Fig.~\ref{fig:ssa_saturation} for examples. The states that satisfy Eq.~\eqref{eq:ssa_equality} have a very special structure known as the \emph{quantum Markov chain structure}~\cite{Petz1988,Petz2003}. The derivation of our result is an ``elementary'' consequence of this fact in the sense that every argument is  based only on the properties of the quantum Markov chain.
\begin{figure}[h]
  \centering
  \includegraphics[width=0.3\columnwidth]{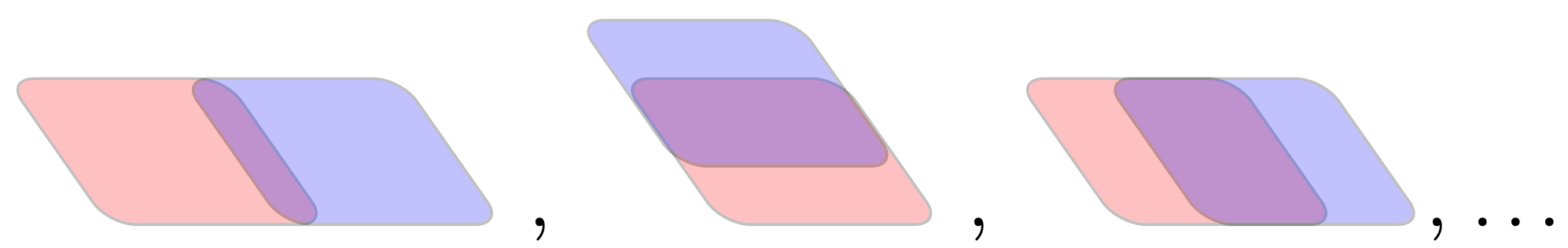}
    \caption{Examples of subsystems for which the relation $S(AB) + S(BC) - S(B) - S(ABC)=0$ holds, assuming Eq.~\eqref{eq:ent_scaling}. The color convention is red for $A$, purple for $B$, and blue for $C$.}
    \label{fig:ssa_saturation}
\end{figure}

However, the proof of Eq.~\eqref{eq:main_rough} is rather long and technical. To alleviate this difficulty, we developed a theory of \emph{merging algebra}. This is an algebra equipped with two types of binary operations over a set of marginals, with nontrivial relations inherited from Eq.~\eqref{eq:ent_scaling}. Interestingly, these operations are generally non-associative, and one of them is not even commutative. Instead, the commutation/association relations hold only in certain specific contexts. Despite this complication, these rules are still structured enough to let us prove our central claims. 

While the merging algebra is very different from what we ordinarily encounter in physics, there is clearly some structure hidden beneath all this new mathematics. That a generic-looking equation like Eq.~\eqref{eq:ent_scaling} leads to such a novel mathematical structure is surprising. But what is even more surprising is that this very structure can lead to a physically well-motivated condition to ensure the existence of a global state consistent with some marginals. This is known as the quantum marginal problem, which has been notoriously difficult to make progress on; see Ref.~\cite{Schilling2015} and the references therein.

One may wonder: is it really essential to invoke a new mathematical structure? Is there no simple way to derive the same result more concisely? Unfortunately, we do not have a satisfactory answer to these questions. However, let us point out that our construction becomes significantly simpler in the classical setting~\cite{Kim2020a}, giving hope that perhaps a more straightforward proof will be found in the future. For now, the main reason why the arguments in Ref.~\cite{Kim2020a} do not directly generalize to the quantum setting is that in quantum mechanics, reduced density matrices generally do not commute with each other. For that reason, many of the operations we consider are intrinsically non-commutative. In fact, they are generally not even associative. Further progress in this direction seems to require a deeper understanding of the structure of multipartite quantum states that satisfy Eq.~\eqref{eq:ssa_equality} on multiple subsystems.

An important message from our work is that one should take Eq.~\eqref{eq:ent_scaling} seriously. As we have already discussed, Eq.~\eqref{eq:ent_scaling} leads to a formulation of efficiently contractable tensor networks, physically-motivated solutions to the quantum marginal problem, and even a new kind of algebra whose exact nature is still a mystery. Moreover, it was shown in Ref.~\cite{SKK2019,Shi2020,Shi2020a} that Eq.~\eqref{eq:ent_scaling} leads to a highly constrained set of consistency equations that govern the basic ``laws'' of the low-energy excitations in such systems. These results strongly indicate that there are rich physics and mathematics hidden beneath Eq.~\eqref{eq:ent_scaling} that warrants a further exploration. In particular, we would like to conjecture the following generalization of Eq.~\eqref{eq:main_rough}:
\begin{equation}
\boxed{
\begin{tikzcd}
    \shortstack{\text{Locally consistent marginals obeying} \\ \text{the identities in Fig.~\ref{fig:ssa_saturation} internally.}}  \arrow[d, Leftrightarrow, "?"] \\
    \shortstack{\text{Global quantum state obeying} \\ \text{the identities in Fig.~\ref{fig:ssa_saturation} everywhere.}}
\end{tikzcd}
}
\label{eq:conjecture_rough}
\end{equation}
That is, one may be able to remove the translational invariance condition entirely. Proving or disproving Eq.~\eqref{eq:conjecture_rough} is an important open problem that is left for future work.

The rest of this paper is structured as follows. In Section~\ref{sec:summary}, we provide a more in-depth summary of our main results. In Section~\ref{sec:marginals}, we introduce the fundamental objects behind our work, namely the marginals over bounded regions. We will summarize the constraints imposed on those marginals and briefly sketch their implications. In Section~\ref{sec:prelim}, we provide a short review on the properties of the von Neumann entropy. In Section~\ref{sec:merging_algebra}, we introduce and set out the basic rules of the merging algebra. Moreover, we introduce an object called \emph{snake} and study its properties. In Section~\ref{sec:snakes_in_action}, we will summarize the key statements that directly lead to the main results of this paper. This section will only provide an overview of these statements, deferring the proofs to Appendix~\ref{appendix:2to1},~\ref{appendix:1to2}, and~\ref{appendix:twist}. In Section~\ref{sec:main_results}, we derive the main results from the statements in Section~\ref{sec:snakes_in_action}. We end with a discussion in Section~\ref{sec:discussion}.

\section{Solving quantum many-body physics locally}
\label{sec:summary}
In this section, we will provide a more explicit explanation of our main result. The class of systems we are considering is depicted on the right hand side of Fig.~\ref{fig:lattice}, which is a (triangular) lattice of quantum spins which are locally interacting with each other. Without loss of generality, with a sufficient number of coarse-graining steps, we can always assume that the interaction terms of the Hamiltonian can be localized to a $2\times 2$ \emph{cluster} of the coarse-grained lattice.
\begin{figure}[h]
    \centering
\includegraphics[width=0.33\columnwidth]{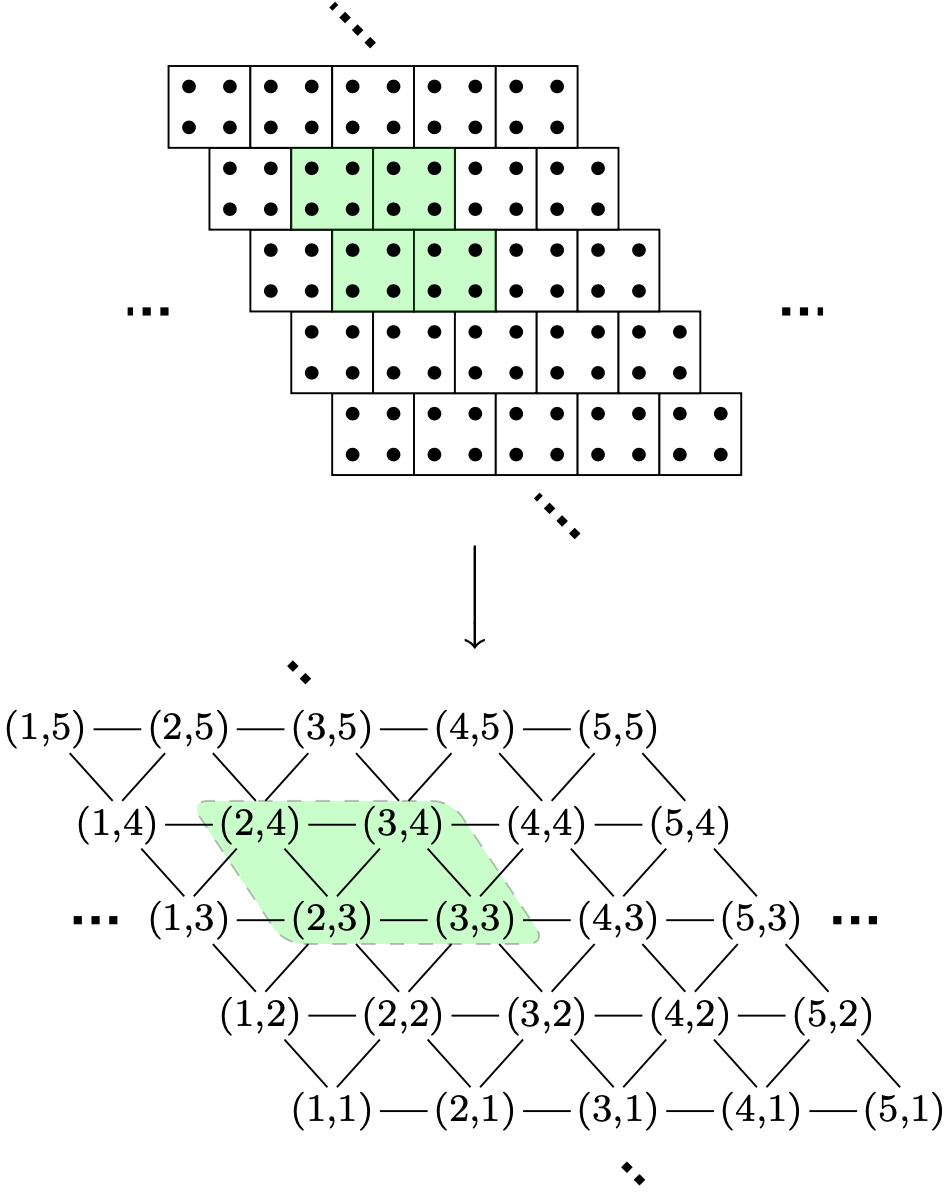}
    \caption{Upon coarse-graining the elementary degrees of freedom (dots on the top figure), we obtain a quantum spin system with larger local Hilbert space dimension, on a triangular lattice. Each of the coarse-grained degrees of freedom are referred to as \emph{clusters}. After coarse-graining enough times, the local terms of the Hamiltonian can be localized to a $2\times 2$ cluster(green). The pair of integers represent the $x$- and $y$-coordinate of each lattice site on the coarse-grained lattice.}
    \label{fig:lattice}
\end{figure}

Given a Hamiltonian that acts locally on this lattice, one may wish to study its properties at both zero and non-zero temperature, by minimizing the energy and the free energy respectively. Unfortunately, this is often a difficult problem because a direct minimization in the entire Hilbert space incurs an exponential cost.

One may hope to reduce the exorbiant cost by considering a smaller subset of physically relevant states and variationally minimizing the energy or the free energy within the family of such states. In this paper, we will take a top-down approach to identify such a family. Instead of coming up with an ansatz and attempting to justify their physical relevance, we will begin with a physical principle and explore its consequences seriously. Our starting point is Eq.~\eqref{eq:ent_scaling}, which for the reader's convenience is restated below:
\begin{equation}
    S(A) = \alpha_0 |A| + \alpha_1 |\partial A| - \gamma. \tag{\ref{eq:ent_scaling}}
\end{equation}

For a class of states that satisfy Eq.~\eqref{eq:ent_scaling}, there is a universal fact that is independent of $\alpha_0, \alpha_1,$ and $\gamma$. This is the fact that the the first two terms of Eq.~\eqref{eq:ent_scaling} obey the inclusion-exclusion principle. Therefore, with a judicious choice of subsystems, certain linear combinations of entanglement entropies may vanish. Of particular relevance to us are the following three identities. Consider a $2\times 2$ and a $3\times 3$ cluster, denoted as \miniblockfull and \mymarginal respectively. The marginals over these clusters should satisfy
\begin{equation}
\begin{aligned}
\entropy{\centeredTikZ{\squaresfilled{2}{2}{0}{0};\squaresempty{2}{1}{0}{0};\squaresempty{1}{1}{0}{1};}} &= 
\entropy{\centeredTikZ{\squaresfilled{2}{2}{0}{0};\squaresempty{2}{1}{0}{0};\squaresfilled{1}{1}{0}{1};}}
+
\entropy{\centeredTikZ{\squaresfilled{2}{2}{0}{0};\squaresempty{1}{2}{0}{0};\squaresfilled{1}{1}{1}{0};}}
-
\entropy{\centeredTikZ{\squaresfilled{2}{2}{0}{0};\squaresfilled{2}{1}{0}{0};\squaresfilled{1}{1}{0}{1};\squaresempty{1}{1}{0}{0};}},\\[8pt]
\entropy{\centeredTikZ{\squaresfilled{2}{2}{0}{0};\squaresempty{2}{1}{0}{1};\squaresempty{1}{1}{1}{0};}} &= 
\entropy{\centeredTikZ{\squaresfilled{2}{2}{0}{0};\squaresempty{2}{1}{0}{1};\squaresfilled{1}{1}{1}{0};}}
+
\entropy{\centeredTikZ{\squaresfilled{2}{2}{0}{0};\squaresempty{1}{2}{1}{0};\squaresfilled{1}{1}{0}{1};}}
-
\entropy{\centeredTikZ{\squaresfilled{2}{2}{0}{0};\squaresfilled{2}{1}{0}{1};\squaresfilled{1}{1}{1}{0};\squaresempty{1}{1}{1}{1};}},
\end{aligned}
\label{eq:ent_assumption1}
\end{equation}
and
\begin{equation}
\begin{aligned}
    \entropy{\centeredTikZ{\squaresempty{3}{3}{0}{0};}}
    =
    &\entropy{\centeredTikZ{\squaresfilled{3}{3}{0}{0};\squaresempty{2}{2}{0}{0};}}
    +
    \entropy{\centeredTikZ{\squaresfilled{3}{3}{0}{0};\squaresempty{2}{2}{1}{0};}} \\
    &+
    \entropy{\centeredTikZ{\squaresfilled{3}{3}{0}{0};\squaresempty{2}{2}{0}{1};}}
    +
    \entropy{\centeredTikZ{\squaresfilled{3}{3}{0}{0};\squaresempty{2}{2}{1}{1};}} \\
    &-
    \entropy{\centeredTikZ{\squaresfilled{3}{3}{0}{0};\squaresempty{2}{1}{0}{1};}}
    -
    \entropy{\centeredTikZ{\squaresfilled{3}{3}{0}{0};\squaresempty{2}{1}{1}{1};}}\\
    &-
    \entropy{\centeredTikZ{\squaresfilled{3}{3}{0}{0};\squaresempty{1}{2}{1}{0};}}
    -
    \entropy{\centeredTikZ{\squaresfilled{3}{3}{0}{0};\squaresempty{1}{2}{1}{1};}}
     \\
     &+ 
    \entropy{\centeredTikZ{\squaresfilled{3}{3}{0}{0};\squaresempty{1}{1}{1}{1};}},
\end{aligned}
\label{eq:ent_assumption2}
\end{equation}
where $S(\cdot)$ represent the entanglement entropy of the empty clusters in the paranthesis over some (fixed) global state that satisfies Eq.~\eqref{eq:ent_scaling}. These are precisely the clusters appearing on the left-hand side of Fig.~\ref{fig:lattice}, representing coarse-grained degrees of freedom. The gray clusters mean that we have applied a partial trace on those clusters. For instance, the first term on the right hand side of Eq.~\eqref{eq:ent_assumption2} is the entropy of the reduced density matrix of a $2\times 2$ cluster, located at the bottom left corner of the $3\times 3$ cluster.

Let us briefly comment on what the ``volume'' ($|A|$) and the ``area'' ($|\partial A|$) terms mean. There are two heuristic ways to calculate them, both leading to Eq.~\eqref{eq:ent_assumption1} and~\eqref{eq:ent_assumption2}. The first approach is to count the number of clusters. Specifically, for the volume term, simply count the number of clusters; for the area term, simply count the number of clusters that are placed at the boundary. It is a straightforward exercise to verify that both Eq.~\eqref{eq:ent_assumption1} and~\eqref{eq:ent_assumption2} follow from this simple rule.

Alternatively, one may view each of the cluster as collection of smaller clusters. For instance, imagine replacing each cluster by a $2\times 2$ cluster.
\begin{equation*}
\begin{aligned}
\centeredTikZ{\squaresempty{2}{2}{0}{0};}   &\longrightarrow
\centeredTikZ{\squaresempty{4}{4}{0}{0};}, \\
\centeredTikZ{\squaresempty{3}{3}{0}{0};}   &\longrightarrow
\centeredTikZ{\squaresempty{6}{6}{0}{0};}.
\end{aligned}
\end{equation*}
Now Eq.~\eqref{eq:ent_assumption1} and~\eqref{eq:ent_assumption2} reads:
\begin{equation*}
\begin{aligned}
\entropy{\centeredTikZ{\squaresfilled{4}{4}{0}{0};\squaresempty{4}{2}{0}{0};\squaresempty{2}{2}{0}{2};}}&=
\entropy{\centeredTikZ{\squaresfilled{4}{4}{0}{0};\squaresempty{4}{2}{0}{0};}}
+
\entropy{\centeredTikZ{\squaresfilled{4}{4}{0}{0};\squaresempty{2}{4}{0}{0};}}\\
&-
\entropy{\centeredTikZ{\squaresfilled{4}{4}{0}{0};\squaresempty{2}{2}{0}{0};}},\\
\entropy{\centeredTikZ{\squaresfilled{4}{4}{0}{0};\squaresempty{4}{2}{0}{2};\squaresempty{2}{2}{2}{0};}}&=
\entropy{\centeredTikZ{\squaresfilled{4}{4}{0}{0};\squaresempty{4}{2}{0}{2};}}
+
\entropy{\centeredTikZ{\squaresfilled{4}{4}{0}{0};\squaresempty{2}{4}{2}{0};}}\\
&-
\entropy{\centeredTikZ{\squaresfilled{4}{4}{0}{0};\squaresempty{2}{2}{2}{2};}},
\end{aligned}
\end{equation*}
and
\begin{equation*}
\begin{aligned}
\entropy{\centeredTikZ{\squaresempty{6}{6}{0}{0};}} &= 
\entropy{\centeredTikZ{\squaresfilled{6}{6}{0}{0};\squaresempty{4}{4}{0}{0};}}
+
\entropy{\centeredTikZ{\squaresfilled{6}{6}{0}{0};\squaresempty{4}{4}{2}{0};}}
-
\entropy{\centeredTikZ{\squaresfilled{6}{6}{0}{0};\squaresempty{4}{4}{0}{2};}}
-
\entropy{\centeredTikZ{\squaresfilled{6}{6}{0}{0};\squaresempty{4}{4}{2}{2};}} \\
&-
\entropy{\centeredTikZ{\squaresfilled{6}{6}{0}{0};\squaresempty{4}{2}{0}{2};}}
-
\entropy{\centeredTikZ{\squaresfilled{6}{6}{0}{0};\squaresempty{4}{2}{2}{2};}}
-
\entropy{\centeredTikZ{\squaresfilled{6}{6}{0}{0};\squaresempty{2}{4}{2}{0};}}
-
\entropy{\centeredTikZ{\squaresfilled{6}{6}{0}{0};\squaresempty{2}{4}{2}{2};}}\\
&+
\entropy{\centeredTikZ{\squaresfilled{6}{6}{0}{0};\squaresempty{2}{2}{2}{2};}},
\end{aligned}
\end{equation*}
which again all follow from the same rule based on the counting of clusters. The same conclusion continues to hold even if we replace each cluster in Eq.~\eqref{eq:ent_assumption1} and~\eqref{eq:ent_assumption2} to $n\times m$ cluster for $n, m \in \mathbb{Z}^+$.

Admittedly, this discussion is somewhat heuristic. However, an important point is that there is a large class of quantum many-body systems that are expected to satisfy Eq.~\eqref{eq:ent_assumption1} and~\eqref{eq:ent_assumption2}. We have supplemented our heuristic discussion with concrete many-body quantum states in Appendix~\ref{appendix:examples} for which these conditions are satisfied either exactly (at zero temperature) or approximately (at finite temperature).

So far, we found that the reduced density matrices of a state that satisfies Eq.~\eqref{eq:ent_scaling} obeys Eqs.~\eqref{eq:ent_assumption1} and~\eqref{eq:ent_assumption2} over the $2\times 2$ and $3\times 3$ clusters. Moreover, these reduced density matrices must be \emph{locally consistent} with each other, meaning that their measurement statistics must be identical on their overlapping supports. In this paper, we put forward the following conjecture, which would establish a map in the \emph{opposite} direction:
\begin{conjecture}
Consider a set of locally consistent marginals that satisfy Eqs.~\eqref{eq:ent_assumption1} and~\eqref{eq:ent_assumption2}. The maximum-entropy state consistent with those marginals exists and furthermore obeys $S(ABC)= S(AB) + S(BC) - S(B)$ for every $A, B,$ and $C$ such that (i) $A, B, C, AB, BC,$ and $ABC$ are all disk-like regions and (ii) $A$ and $C$ are not adjacent to each other.
\end{conjecture}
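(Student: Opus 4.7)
The plan is to adapt the paper's merging-algebra machinery, which underlies the translationally invariant theorem, to the non-symmetric setting of the conjecture. The starting observation is that each relation in Eq.~\eqref{eq:ent_assumption1}, and each constituent piece of the $3 \times 3$ relation in Eq.~\eqref{eq:ent_assumption2}, rearranges to a saturation of strong subadditivity of the form in Eq.~\eqref{eq:ssa_equality}. By the Hayden--Jozsa--Petz--Winter theorem, each such saturation certifies that the corresponding marginal is a quantum Markov chain $A - B - C$ and, equivalently, that there is a Petz recovery channel $\mathcal{R}_{B \to BC}$ with $\rho_{ABC} = \mathcal{R}_{B \to BC}(\rho_{AB})$. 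The local input of the conjecture therefore furnishes, at each site, a finite catalogue of recovery maps that can be used to extend any locally specified marginal outward by a single cluster.

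Given this, I would construct the global state by induction on the size of the region. Starting from a $3 \times 3$ marginal at a chosen site, I would repeatedly apply the appropriate recovery map to adjoin one neighboring cluster at a time, using the $2 \times 2$ Markov conditions to add clusters at smooth boundary segments and the $3 \times 3$ Markov condition to add clusters near concave corners. This is precisely what the paper's $\maxmerge$ and $\rightmerge$ operations are designed to encode. After enough extensions, one would obtain a marginal on an arbitrarily large disk-like region. The crucial well-definedness check is that the extended marginal does not depend on the order in which clusters are adjoined; the merging algebra's collection of (non-associative and partially non-commutative) relations, together with the local consistency of the input marginals on their overlaps, is exactly what should make each atomic reordering reducible to a Markov saturation supplied by Eqs.~\eqref{eq:ent_assumption1} or \eqref{eq:ent_assumption2}.

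Assuming this order-independence is established, the remaining two claims follow in tandem. Maximum entropy: every elementary step produces the entropy-maximal Markov extension of its input, so the global entropy density matches the upper bound dictated by the marginal data, and no compatible state of larger entropy exists. Global SSA saturation on every admissible $(A, B, C)$: one decomposes the triple into the elementary building blocks appearing in Eqs.~\eqref{eq:ent_assumption1} and \eqref{eq:ent_assumption2}, applies saturation piecewise, and invokes the merging-algebra fact that concatenating Markov extensions preserves SSA saturation. The disk-likeness assumption on $A, B, C, AB, BC, ABC$ together with the non-adjacency of $A$ and $C$ is precisely what guarantees that such a decomposition exists and that the triple sits transversely to the lattice in a way compatible with the elementary moves.

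The hard part, and presumably the reason this is left as a conjecture rather than a theorem, is the order-independence step without translational invariance. In the symmetric setting, translations identify all recovery maps at equivalent cluster positions, which trivializes many potentially problematic reorderings. Without symmetry, local consistency only pins down recovery maps on overlapping supports, and one must rule out a monodromy-type obstruction where transporting a marginal around a closed loop of elementary extensions returns a genuinely different operator. Ruling this out would likely require either a sharpened form of the merging algebra that tracks such holonomies explicitly, or an additional local identity beyond Eqs.~\eqref{eq:ent_assumption1}--\eqref{eq:ent_assumption2} strong enough to force any such holonomy to be trivial. I expect the bulk of the technical work to concentrate there.
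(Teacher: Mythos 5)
This statement is stated in the paper as a \emph{conjecture}, and the paper offers no proof of it; it proves only the restricted, translationally invariant version (Theorems~\ref{thm:main1} and~\ref{thm:main2}). Your proposal is likewise not a proof: it is a strategy outline whose decisive step --- order-independence of the cluster-by-cluster extension in the absence of translational invariance --- you explicitly leave open. Since that step is exactly what separates the conjecture from the proven special case, the proposal does not close the gap; it restates it.

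Beyond that acknowledged hole, two further points deserve flagging. First, even granting order-independence, your claim that ``every elementary step produces the entropy-maximal Markov extension of its input'' presupposes that the requisite conditional independence relations $I(A{:}C|B)=0$ hold for subsystems that extend \emph{beyond} a single $3\times 3$ patch; these do not follow from one application of Eqs.~\eqref{eq:ent_assumption1}--\eqref{eq:ent_assumption2} but must be derived by combining constraints on overlapping patches. In the translationally invariant case this is the bulk of the paper's technical work (the descendant constraints of Appendix~\ref{appendix:descendants} and the composite identities of Appendices~\ref{appendix:2to1}--\ref{appendix:twist}), and it is not obvious that the analogous nonlocal saturations can be extracted at all from purely local, non-symmetric data. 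Second, the final clause of the conjecture --- SSA saturation for \emph{every} admissible disk-like triple $(A,B,C)$ --- is not established even in the translationally invariant setting; the paper proves only consistency and the maximum-entropy formula, and your ``decompose the triple into elementary blocks and apply saturation piecewise'' step would need its own inductive argument, since conditional mutual information is not additive under arbitrary decompositions. So the proposal correctly identifies the landscape of difficulties but does not resolve any of them; the statement remains a conjecture.
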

\noindent
The part of the conjecture that involves the entropy is formalizing an expectation that Eq.~\eqref{eq:ent_scaling} holds everywhere. In other words, this conjecture states that the local reduced density matrices satisfying Eqs.~\eqref{eq:ent_assumption1} and~\eqref{eq:ent_assumption2} form a ``dual'' (or more precisely, an equivalent) description of a density matrix that obeys Eq.~\eqref{eq:ent_scaling} everywhere. If true, an immediate application of this duality would be an efficient computation of the expectation values of local observables and the entanglement entropy. The former can be computed from the marginals directly and the latter can be computed from the fact that the leading terms of the entropy obeys the inclusion-exclusion principle.

We provide two nontrivial evidences in support of this conjecture, by focusing on translationally-invariant systems in two spatial dimensions. First, given a set of locally consistent marginals satisfying Eqs.~\eqref{eq:ent_assumption1} and~\eqref{eq:ent_assumption2}, we show that for any $N, M\geq 2$, there always exists a global state on a $N\times M$ cluster which is consistent with the marginal \miniblockfull and its translations. Second, we derive a closed-form expression for the maximum entropy of the $N\times M$ cluster that is consistent with Eqs.~\eqref{eq:ent_assumption1} and~\eqref{eq:ent_assumption2}. This expression is consistent with Eq.~\eqref{eq:ent_scaling}.

More concretely, we assume that the marginals \miniblockfull and \mymarginal obey the following set of constraints:
\begin{equation}
\begin{aligned}
\left\{    \centeredTikZ{\squaresempty{2}{2}{0}{0};}, 
    \centeredTikZ{\squaresempty{3}{3}{0}{0};} \right\} \consistent
    \mathcal{T} \left(\left\{\centeredTikZ{\squaresempty{2}{2}{0}{0};},
    \centeredTikZ{\squaresempty{3}{3}{0}{0};}
    \right\}\right)
\end{aligned}
\label{eq:intro_ti}
\end{equation}
and Eqs.~\eqref{eq:ent_assumption1} and~\eqref{eq:ent_assumption2}. Here $\mathcal{T}= \langle t_x, t_y \rangle$ is the group of translations generated by the $x$- and the $y$-translations ($t_x$ and $t_y$ respectively) and $\mathcal{T}(S)$ of a set $S$ is an orbit of $\mathcal{T}$. The notation $\{ \rho_1, \rho_2, \ldots \} \consistent \{ \sigma_1, \sigma_2, \ldots \}$ means $\rho_i \consistent \sigma_j$ for all $\rho_i$ and $\sigma_j$, which is another way of saying that $\rho_i$ and $\sigma_j$ have identical reduced density matrices on their overlapping supports for all $i$ and $j$. For instance, if $\rho_1$ acts on $\mathcal{H}_A \otimes \mathcal{H}_B$ and $\sigma_1$ acts on $\mathcal{H}_B \otimes \mathcal{H}_C$, $\rho_1 \consistent \sigma_1$ means $\Tr_A (\rho_1) = \Tr_C(\sigma_1)$.

From these assumptions, we deduce that (for any $N$ and $M$) there is a state $\rho_{N\times M}$ on a $N\times M$ cluster such that
\begin{equation}
\boxed{
    \rho_{N\times M} \consistent \mathcal{T}\left(\left\{\centeredTikZ{\squaresempty{2}{2}{0}{0};} \right\} \right).}\label{eq:intro_consistency}
\end{equation}
That is, there is a state that is consistent with the marginals  on the $2\times 2$ cluster and its translations. Moreover, we obtain an exact expression for the maximum entropy that obeys Eq.~\eqref{eq:intro_consistency}. This expression can be readily computed from the marginal \miniblockfull.

These expressions are particularly useful for directly probing the thermodynamic limit. Given a translationally-invariant Hamiltonian, the expectation value of the local term with respect to the marginal \miniblockfull ought to be consistent with the energy density of some translationally invariant quantum state, upper bounding the ground state energy density. Moreover, in the infinite-volume limit, the expression for the maximum entropy density reads:
\begin{equation}
\boxed{
\begin{aligned}
    &\lim_{N,M\to \infty} \left( \max_{\tilde{\rho}_{N\times M} \consistent \mathcal{T}\left(\left\{\centeredTikZmini{\squaresempty{2}{2}{0}{0};} \right\} \right)}\left(\frac{\entropy{\tilde{\rho}_{N\times M}}}{NM}\right) \right)
    \\&= 
    \entropy{\centeredTikZ{\squaresempty{2}{2}{0}{0};}}
    -\entropy{\centeredTikZ{\squaresempty{2}{1}{0}{0};\squaresempty{1}{1}{0}{1};}}.
\end{aligned}
}
\end{equation}
Combining these two results, we can obtain an upper bound on the thermodynamic free energy density that can be readily computed from \miniblockfull.

To reach this conclusion, we extensively use a technique called \emph{merging}, first introduced by Kato \emph{et al.}~\cite{Kato2016}. The authors of Ref.~\cite{Kato2016} proved the following statement. Consider two quantum states $\rho_{ABC}$ and $\sigma_{BCD}$ such that $I(A:C|B)_{\rho} = I(B:D|C)_{\sigma} =0$ and $\rho \consistent \sigma$, where $I(A:C|B)_{\rho}$ is the conditional mutual information of the state $\rho$, defined as $I(A:C|B)_{\rho} = S(\rho_{AB}) + S(\rho_{BC}) - S(\rho_B) - S(\rho_{ABC}).$ Then there exists a quantum state on $ABCD$ that is consistent with both $\rho_{ABC}$ and $\sigma_{BCD}$. Moreover, the maximum entropy state consistent with those two marginals again satisfy a conditional independence relation, namely $I(A:CD|B)_{\tau} = I(AB:D|C)_{\tau} =0$.

An important fact about the merging technique is that one can deduce the existence of a consistent global state by merely verifying \emph{local} conditions. Note that all the required conditions can be verified by either (i) computing the entanglement entropies of the given density matrices or (ii) comparing their reduced density matrices. Once those conditions are verified, we can deduce the existence of \emph{some} state $\tau$ that is consistent with both $\rho$ and $\sigma$. Moreover, the maximum-entropy state consistent with those marginals again have zero conditional mutual information. 

That the outcome of the merging process is a state with zero conditional mutual information implies that the merging process can be bootstrapped. In the context of our work, this bootstrapping process can be schematically expressed as follows:
\begin{equation}
\begin{aligned}
     \twobytwounit{(x,y)}  &\longrightarrow 
    \adulthorizontalsnake{1}{y}{N}{y\shortplus 1} \\
    &\longrightarrow
    \thicksnake{1}{1}{N}{M},
\end{aligned}
\end{equation}
where the tuples represent locations of the clusters. We defer the exact meaning of this diagram to the later part of the paper, specifically, to Section~\ref{sec:main_results}. Roughly speaking, by using the merging technique, we merge the marginals on the $2\times 2$ clusters to obtain a set of density matrices on a $N\times 2$ cluster for any $N\geq 2$. Then, we merge these density matrices to obtain a density matrix over a $N\times M$ cluster for any $N,M\geq 2$. By showing that this argument applies to any $N$ and $M$, we can conclude that there is a translationally invariant state consistent with \miniblockfull if the requisite conditions in Eqs.~\eqref{eq:ent_assumption1},~\eqref{eq:ent_assumption2}, and~\eqref{eq:intro_ti} are satisfied. Moreover, using the conditional independence of the maximum-entropy merged state, we can decompose the entropy of the merged state into a linear combination of entropies that can be readily computed from the marginals. 

An important subtlety that we have not discussed up to this point is \emph{how} to use the merging technique. To apply the technique, one must establish certain internal conditional independence relations. At first, one may think that such a relation follows straightforwardly from the repeated application of the merging technique. While this is, in some sense, true, establishing the requisite conditional independence relations is in fact not a straightforward task.

To get a hint of the underlying complexity of this problem, it is helpful to consider a simplified example. Suppose that, somehow, one managed to construct a density matrix over the following clusters:
\begin{equation*}
    \centeredTikZ{\squaresempty{5}{5}{0}{0}; \squaresempty{4}{1}{0}{5};}.
\end{equation*}
Next, the goal is to extend this density matrix to the density matrix supported on the following  (enlarged) set of clusters:
\begin{equation*}
    \centeredTikZ{\squaresempty{5}{5}{0}{0}; \squaresempty{5}{1}{0}{5};}.
\end{equation*}
At first, one may hope to do so by merging density matrices over the following two clusters:
\begin{equation*}
\centeredTikZ{\squaresempty{5}{6}{0}{0};\squaresdotted{1}{1}{4}{5};},
\centeredTikZ{\squaresdotted{5}{6}{0}{0};\squaresempty{3}{3}{2}{3};},
\end{equation*}
and choose the subsystems in the following way:
\begin{equation*}
\begin{aligned}
    A&=\centeredTikZ{\squaresempty{5}{6}{0}{0};\squaresdotted{3}{3}{2}{3};},
    B&=\centeredTikZ{\squaresdotted{5}{6}{0}{0};\squaresempty{3}{3}{2}{3};\squaresdotted{2}{2}{3}{4};},\\
    C&=\centeredTikZ{\squaresdotted{5}{6}{0}{0};\squaresempty{2}{2}{3}{4};\squaresdotted{1}{1}{4}{5};},
    D&=\centeredTikZ{\squaresdotted{5}{6}{0}{0};\squaresempty{1}{1}{4}{5};},
\end{aligned}
\end{equation*}
hoping to verify the aforementioned conditions, \emph{i.e.,} $I(A:C|B)=0$ and $I(B:D|C)=0$. (The red dotted squares are simply the bookkeeping devices to keep track of the relative locations of the white squares. A more detailed explanation of this convention shall be given in Section~\ref{sec:marginals}.) While the second condition is relatively easy to verify (see Appendix~\ref{appendix:descendants}), the first condition is far from trivial.

The key issue is that the condition $I(A:C|B)=0$ involves subsystems that cannot be supported strictly inside of the $3\times 3$ clusters. As such, one cannot directly use the conditions imposed on a single patch of $3\times 3$ clusters to derive the requisite conditional independence condition. To do so, one must somehow \emph{combine} the conditions on overlapping patches of $3\times 3$ clusters.

Combining these local conditions into a nonlocal conditional independence condition is a difficult task and the tools developed in Section~\ref{sec:merging_algebra} will prove to be indispensable for proving such statements; without such tools, the proof is expected to be significantly longer and much less modular. With that said, a further simplification of the proof will be undoubtedly useful for better understanding our construction. This is left for future work.

\section{Fundamental marginals}
\label{sec:marginals}
In this section, we introduce the key objects behind our work, the \emph{fundamental marginals}. These are the basic building blocks of our theory, denoted as 
\begin{equation}
\mathcal{M}=
    \left\{ \centeredTikZ{\squaresempty{2}{2}{0}{0};},  \centeredTikZ{\squaresempty{3}{3}{0}{0};}\right\}.
\end{equation}
The two elements of $\mathcal{M}$ respectively represent density matrices over the $2\times 2$ and $3\times 3$ clusters.\footnote{While both marginals are fundamental marginals, the former will play a more important role. We will prove that there is a global state consistent with the former. However, we do not know if the same conclusion applies to the latter. For our purpose, this is not an issue because all the physically important quantities such as the energy and the (maximum) global entropy consistent with these marginals can be computed directly from the former.}

As it stands, there is an ambiguity because we have not specified the location of these clusters. We will simplify this issue by imposing translational invariance on these marginals. These are the linear constraints, specified as $\mathcal{C}_L$:
\begin{equation}
\mathcal{C}_L:
\left\{    \centeredTikZ{\squaresempty{2}{2}{0}{0};}, 
    \centeredTikZ{\squaresempty{3}{3}{0}{0};} \right\} \consistent
    \mathcal{T} \left(\left\{\centeredTikZ{\squaresempty{2}{2}{0}{0};},
    \centeredTikZ{\squaresempty{3}{3}{0}{0};}
    \right\}\right)
\label{eq:translational_invariance}
\end{equation} 
In particular, the reduced density matrices of these marginals are locally consistent with each other. 

While $\mathcal{C}_L$ removes the need to specify the absolute location of these clusters in many cases, sometimes these locations will be crucial to our analysis. Two kinds of situations will occur. In one case, the location of these clusters within some bounded region will matter. However, the exact location of the region itself will be immaterial to our analysis. In that case, we shall specify this bounded region by dotted squares. One can view this as a ``canvas'' on which the clusters representing the support of the marginals are drawn. For instance,
\begin{equation*}   \centeredTikZ{\squaresdotted{4}{3}{0}{0};\squaresempty{2}{2}{0}{0};} \, = \, \centeredTikZ{\squaresempty{2}{2}{0}{0}} \, \text{ at the bottom left corner of } \, \centeredTikZ{\squaresdotted{4}{3}{0}{0}}.
\end{equation*}
In some cases, we will deal with unbounded regions. In that case, we can no longer avoid specifying the absolute coordinate of the clusters. In those cases, we will use the following convention:
\begin{equation*}
\twobytwounit{(x,y)},
\mym{
\centeredTikZ{\squaresempty{3}{3}{0}{0}}_{(x, y)}
}, \ldots
\end{equation*}
which represent the marginals whose bottom-right corner of their support is located at $(x,y)$. We will say these clusters are \emph{anchored} at $(x,y)$.

The fundamental marginals obey extra constraints inherited from Eqs.~\eqref{eq:ent_assumption1} and ~\eqref{eq:ent_assumption2}. After using the translational invariance condition, we obtain the following \emph{primary Markovian constraints}, primaries for short.

\begin{numcases}{\mathcal{C}_{M,P}:}
    \entropy{\centeredTikZ{\squaresempty{2}{1}{0}{0};
    \squaresempty{1}{1}{0}{1};
    }} = \entropy{\centeredTikZ{\squaresempty{2}{1}{0}{0};}}
    +
    \entropy{\centeredTikZ{\squaresempty{1}{2}{0}{0};}}
    -
    \entropy{\centeredTikZ{\squaresempty{1}{1}{0}{0};}} , \label{constraint:primary1}\\
    \entropy{\centeredTikZ{\squaresempty{2}{1}{0}{0};
    \squaresempty{1}{1}{1}{-1};
    }} = \entropy{\centeredTikZ{\squaresempty{2}{1}{0}{0};}}
    +
    \entropy{\centeredTikZ{\squaresempty{1}{2}{0}{0};}}
    -
    \entropy{\centeredTikZ{\squaresempty{1}{1}{0}{0};}}, \label{constraint:primary2}\\
    \entropy{\centeredTikZ{\squaresempty{3}{3}{0}{0};}} =
    4\entropy{\centeredTikZ{\squaresempty{2}{2}{0}{0};}}
    -
    2\entropy{\centeredTikZ{\squaresempty{2}{1}{0}{0};}}
    -
    2\entropy{\centeredTikZ{\squaresempty{1}{2}{0}{0};}}
    +
    \entropy{ \centeredTikZ{\squaresempty{1}{1}{0}{0};}}. \label{constraint:primary3}
\end{numcases}

The primaries are important because they imply a number of \emph{descendant constraints}, descendants for short. The descendants are the ``emergent'' constraints that arise from the primaries, which follow straightforwardly from the strong subadditivity of entropy(SSA)~\cite{Lieb1973}. While the descendants are more intimately related to the core of our proof, the number of primaries is significantly fewer compared to that of the descendants. Therefore, the set of primaries is a more economical way of organizing the key assumptions behind our work.

The set of descendant constraints we use is summarized in Table~\ref{table:constraints}; see Appendix~\ref{appendix:descendants} for the derivation.\footnote{Let us make a side remark. Let $\mathcal{C}_{M,D}$ be the constraints summarized in Table~\ref{table:constraints}. One can actually show that the set of density matrices obeying \emph{both} $\mathcal{C}_{M,D}$ and $\mathcal{C}_L$ is equal to the set of density matrices obeying $\mathcal{C}_{M,P}$ and $\mathcal{C}_L$. We can formally state this fact as
\begin{equation}
    \mathcal{C}_{M,P}, \mathcal{C}_L \cong \mathcal{C}_{M,D}, \mathcal{C}_L.
\end{equation}
Therefore, one can choose either of these constraints without losing any generality.} In deriving our main results, we shall frequently consult this table and refer to these constraints by their category and the lower-case Roman letters. For example,
\hyperref[constraints:cell1]{Type I-a) cluster constraint} would refer to the first equation in Eq.~\eqref{constraints:cell1}.

\begin{table*}
  \framebox[\textwidth]{\parbox{0.95\textwidth}{
      \vskip 5pt
      \raggedright
      \textbf{Snake constraints:}
      \begin{equation}
      \begin{matrix}
      \text{\textbf{a)} } \entropy{\centeredTikZ{\squaresempty{2}{1}{0}{0};\squaresempty{1}{1}{0}{1};}} = \entropy{\centeredTikZ{\squaresempty{2}{1}{0}{0};}} + \entropy{\centeredTikZ{\squaresempty{1}{2}{0}{0};}}- \entropy{\centeredTikZ{\squaresempty{1}{1}{0}{0};}}, \quad
      &
      \quad 
      \text{\textbf{b)} } \entropy{\centeredTikZ{\squaresempty{3}{1}{0}{0};}} = 2\entropy{\centeredTikZ{\squaresempty{2}{1}{0}{0};}} - \entropy{\centeredTikZ{\squaresempty{1}{1}{0}{0};}}, \\
      \text{\textbf{c)} }
      \entropy{\centeredTikZ{\squaresempty{2}{1}{0}{0};\squaresempty{1}{1}{1}{-1};}} = \entropy{\centeredTikZ{\squaresempty{2}{1}{0}{0};}} + \entropy{\centeredTikZ{\squaresempty{1}{2}{0}{0};}}- \entropy{\centeredTikZ{\squaresempty{1}{1}{0}{0};}}, \quad
      &
      \quad 
      \text{\textbf{d)} } \entropy{\centeredTikZ{\squaresempty{1}{3}{0}{0};}} = 2\entropy{\centeredTikZ{\squaresempty{1}{2}{0}{0};}} - \entropy{\centeredTikZ{\squaresempty{1}{1}{0}{0};}}.
\end{matrix}
\label{constraints:snake}
\end{equation}
      \vskip 5pt
      \textbf{Type I cluster constraints:}

\begin{equation}
\begin{matrix}
\text{\textbf{a)} } &
\entropy{\centeredTikZ{\squaresempty{3}{2}{0}{0};}}=2 \entropy{\centeredTikZ{\squaresempty{2}{2}{0}{0};}} - 
\entropy{\centeredTikZ{\squaresempty{1}{2}{0}{0};}}, 
\\[10pt] 
\text{\textbf{b)} }  &
\entropy{\centeredTikZ{\squaresempty{3}{2}{0}{0};\squaresempty{2}{1}{0}{2};}}
=
\entropy{\centeredTikZ{\squaresempty{3}{2}{0}{0};}} +
\entropy{\centeredTikZ{\squaresempty{2}{2}{0}{0};}}-
\entropy{\centeredTikZ{\squaresempty{2}{1}{0}{0};}},
\\[10pt] 
\text{\textbf{c)} }  &
\entropy{\centeredTikZ{\squaresempty{3}{2}{0}{0};\squaresempty{2}{1}{0}{2};}}
=
\entropy{\centeredTikZ{\squaresempty{2}{3}{0}{0};}} +
\entropy{\centeredTikZ{\squaresempty{2}{2}{0}{0};}}-
\entropy{\centeredTikZ{\squaresempty{1}{2}{0}{0};}},
\\[10pt]
\text{\textbf{d)} }  &
\entropy{\centeredTikZ{\squaresempty{3}{3}{0}{0};}}
=
\entropy{\centeredTikZ{\squaresempty{3}{2}{0}{0};\squaresempty{2}{1}{0}{2};}} +
\entropy{\centeredTikZ{\squaresempty{2}{2}{0}{0};}}-
\entropy{\centeredTikZ{\squaresempty{2}{1}{0}{0};\squaresempty{1}{1}{0}{1};}}.
\end{matrix}
\label{constraints:cell1}
\end{equation}

\vskip 5pt
      \textbf{Type II cluster constraints:}

\begin{equation}
\begin{matrix}
\text{\textbf{a)} } &
\entropy{\centeredTikZ{\squaresempty{2}{3}{0}{0};}} = 
2 \entropy{\centeredTikZ{\squaresempty{2}{2}{0}{0};}} -
\entropy{\centeredTikZ{\squaresempty{2}{1}{0}{0};}}, \\[10pt]
\text{\textbf{b)} } &
\entropy{\centeredTikZ{\squaresempty{3}{2}{0}{1};\squaresempty{2}{1}{1}{0};}}
=
\entropy{\centeredTikZ{\squaresempty{3}{2}{0}{0};}} +
\entropy{\centeredTikZ{\squaresempty{2}{2}{0}{0};}}-
\entropy{\centeredTikZ{\squaresempty{2}{1}{0}{0};}},
\\[10pt]
\text{\textbf{c)} } &
\entropy{\centeredTikZ{\squaresempty{3}{2}{0}{1};\squaresempty{2}{1}{1}{0};}}
=
\entropy{\centeredTikZ{\squaresempty{2}{3}{0}{0};}} +
\entropy{\centeredTikZ{\squaresempty{2}{2}{0}{0};}}-
\entropy{\centeredTikZ{\squaresempty{1}{2}{0}{0};}},
\\[10pt]
\text{\textbf{d)} } &
\entropy{\centeredTikZ{\squaresempty{3}{3}{0}{0};}}
=\entropy{\centeredTikZ{\squaresempty{3}{2}{0}{1};\squaresempty{2}{1}{1}{0};}} +
\entropy{\centeredTikZ{\squaresempty{2}{2}{0}{0};}}-
\entropy{\centeredTikZ{\squaresempty{2}{1}{0}{0};\squaresempty{1}{1}{1}{-1};}}.
\end{matrix}
\label{constraints:cell2}
\end{equation}
    }
  }\caption{Descendant constraints. These identities follow from $\mathcal{C}_L$ and $\mathcal{C}_{M,P}$. \label{table:constraints}}
\end{table*}

\section{Interlude: Entropy}
\label{sec:prelim}
In this section, we provide a streamliend overview of the properties of the von Neumann entropy. Of particular interest to us is the strong subadditivity (SSA) of entropy~\cite{Lieb1973}. Without loss of generality, consider a tripartite state $\rho$, say over $\mathcal{H}_A \otimes \mathcal{H}_B \otimes \mathcal{H}_C$. SSA is the following statement:
\begin{equation}
    S(\rho_{AB}) + S(\rho_{BC}) - S(\rho_B) - S(\rho_{ABC})\geq 0.
\end{equation}

We will frequently use an object called \emph{conditional mutual information}, defined as \begin{equation}
    I(A:C|B)_{\rho} := S(\rho_{AB}) + S(\rho_{BC}) - S(\rho_B) - S(\rho_{ABC}).
\end{equation}
We will refer to $B$ as the \emph{conditioning subsystem} and $A$ and $C$ as the \emph{conditioned subsystems}.

Conditional mutual information is useful because it obeys a number of nontrivial \emph{monotonicity} relations. Specifically, conditional mutual information is non-increasing under a removal of a conditioned subsystem.
\begin{equation}
    I(A:C|B) \leq I(A:CD|B) \label{eq:ssa_monotinicity_intro}
\end{equation}
Also, conditional mutual information is nonincreasing under moving (a part of) conditioned subsystem to the conditioning subsystem.
\begin{equation}
    I(A:C|BD) \leq I(A:CD|B) \label{eq:ssa_monotonicity_intro2}
\end{equation}
Both of these inequalities follow straightforwardly from the SSA.

Some states satisfy SSA with an \emph{equality}. Such states are said to be \emph{conditionally independent} and are important for us for two reasons. First, all the identities summarized in Table~\ref{table:constraints} are precisely statements about certain states being conditionally independent. Second, conditionally independent states have a very special structure. The following theorem was proved by Petz~\cite{Petz1988,Petz2003}.
\begin{theorem}
\cite{Petz2003} $I(A:C|B)=0$ if and only if
\begin{equation}
\begin{aligned}
    \rho_{ABC} &= \Phi_{B\to BC}(\rho_{AB}) \\
    &= \Phi_{B\to AB} (\rho_{BC}),
\end{aligned}
\end{equation}
where $\Phi_{B,\to BC}$ and $\Phi_{B\to AB}$ are \emph{Petz maps}, defined as 
\begin{equation}
\begin{aligned}
    \Phi_{B\to BC}(\cdot) &:= \rho_{BC}^{\frac{1}{2}}\rho_B^{-\frac{1}{2}}(\cdot)\rho_B^{-\frac{1}{2}}\rho_{BC}^{\frac{1}{2}},\\
    \Phi_{B\to AB}(\cdot )&:= \rho_{AB}^{\frac{1}{2}}\rho_B^{-\frac{1}{2}}(\cdot) \rho_B^{-\frac{1}{2}}\rho_{AB}^{\frac{1}{2}},
\end{aligned}
\end{equation}
\label{thm:Petz}
\end{theorem}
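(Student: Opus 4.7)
I would prove the two implications separately. The $(\Leftarrow)$ direction is a direct consequence of the data processing inequality, while the harder $(\Rightarrow)$ direction is most cleanly handled by invoking the Petz recovery theorem for equality in the monotonicity of relative entropy.

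For $(\Leftarrow)$, the first preliminary is to verify that $\Phi_{B\to BC}$ is completely positive and trace-preserving. Complete positivity follows from the Kraus representation with a single Kraus operator $\rho_{BC}^{1/2}\rho_B^{-1/2}$, viewed as a map $\mathcal{H}_B \to \mathcal{H}_{BC}$ (with $\rho_B^{-1/2}$ read as a generalized inverse on the support of $\rho_B$); trace preservation follows from cyclicity of the full trace together with $\mathrm{Tr}_C(\rho_{BC}) = \rho_B$. Then $\mathrm{id}_A \otimes \Phi_{B\to BC}$ is CPTP and, by hypothesis, sends $\rho_{AB}$ to $\rho_{ABC}$. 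Applying the monotonicity of the mutual information under this channel gives
\[
    I(A:BC)_{\rho_{ABC}} \leq I(A:B)_{\rho_{AB}}.
\]
Combined with the opposite inequality $I(A:BC) \geq I(A:B)$, which is just SSA rewritten, both inequalities collapse to equalities and hence $I(A:C|B) = I(A:BC) - I(A:B) = 0$.

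For $(\Rightarrow)$, the key identity is
\[
    I(A:C|B)_{\rho} = D\!\left(\rho_{ABC}\,\|\,\rho_A \otimes \rho_{BC}\right) - D\!\left(\rho_{AB}\,\|\,\rho_A \otimes \rho_B\right),
\]
which exhibits $I(A:C|B)$ as the drop in relative entropy produced by the CPTP map $\mathrm{Tr}_C$ applied to the pair $(\rho_{ABC},\, \rho_A \otimes \rho_{BC})$. The hypothesis $I(A:C|B) = 0$ therefore says that this partial trace saturates the data processing inequality. Invoking Petz's recovery theorem for equality in monotonicity, the Petz recovery map built from the second argument must perfectly invert $\mathrm{Tr}_C$ on the first argument. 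For the specific reference state $\rho_A \otimes \rho_{BC}$ this recovery map factorises: the $A$-factor contributes $\rho_A^{1/2}\rho_A^{-1/2}(\cdot)\rho_A^{-1/2}\rho_A^{1/2} = (\cdot)$ trivially on $A$, and the remaining action on $BC$ is precisely $\Phi_{B\to BC}$ as written in the theorem. This yields $\rho_{ABC} = \Phi_{B\to BC}(\rho_{AB})$; the twin identity $\rho_{ABC} = \Phi_{B\to AB}(\rho_{BC})$ follows from the $A \leftrightarrow C$ symmetry of the problem.

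The main obstacle is the Petz recovery theorem itself, which is the substantive technical input. If one wants to avoid black-boxing it, the standard self-contained route is to prove the Hayden--Jozsa--Petz--Winter structural theorem first: $I(A:C|B) = 0$ forces an orthogonal decomposition $\mathcal{H}_B = \bigoplus_j \mathcal{H}_{B_j^L} \otimes \mathcal{H}_{B_j^R}$ with respect to which $\rho_{ABC} = \bigoplus_j p_j\, \rho_{A B_j^L} \otimes \rho_{B_j^R C}$. This structural result can be derived via operator-convexity (equality conditions for Lieb's joint concavity) or via Uhlmann's theorem applied to purifications of $\rho_{AB}$ and $\rho_{BC}$. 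Once the block form is established, a direct calculation using the explicit formulas for $\rho_{AB}$, $\rho_B$, and $\rho_{BC}$ within each block shows that plugging these into the defining expression of $\Phi_{B\to BC}$ reproduces the block decomposition of $\rho_{ABC}$ exactly, completing the argument.
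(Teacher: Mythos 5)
The paper does not prove this theorem at all: it imports it by citation from Petz (1988, 2003) and immediately uses it as a black box, so there is no in-paper argument to compare against. Your reconstruction is correct and is the standard proof. The $(\Leftarrow)$ direction is fine: the single-Kraus-operator argument for complete positivity, the trace-preservation check via $\Tr_C(\rho_{BC})=\rho_B$, and the squeeze $I(A:B)\geq I(A:BC)\geq I(A:B)$ all go through (with the usual care that $\rho_B^{-1/2}$ is a generalized inverse and all states live on the support of $\rho_B$). The $(\Rightarrow)$ direction via $I(A:C|B)=D(\rho_{ABC}\,\|\,\rho_A\otimes\rho_{BC})-D(\rho_{AB}\,\|\,\rho_A\otimes\rho_B)$ and the equality condition for data processing under $\Tr_C$ is exactly how this is usually done, and your computation that the Petz recovery map for the reference state $\rho_A\otimes\rho_{BC}$ factorizes into the identity on $A$ (more precisely, conjugation by the support projector of $\rho_A$, which acts trivially on $\rho_{AB}$) tensored with $\Phi_{B\to BC}$ is right. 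The only caveat, which you already flag, is that the "equality in monotonicity implies Petz recovery" theorem is essentially the result being proven, so as stated the hard direction is a reduction rather than a proof from scratch; your proposed fallback through the Hayden--Jozsa--Petz--Winter block decomposition $\rho_{ABC}=\bigoplus_j p_j\,\rho_{AB_j^L}\otimes\rho_{B_j^R C}$, followed by a direct verification that $\Phi_{B\to BC}$ reproduces each block, is the standard way to make it self-contained. No gaps beyond that acknowledged dependence.
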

\noindent
where we have suppressed the identity operator for convenience. For instance, $\rho_B$ should be really viewed as $I_A\otimes \rho_B \otimes I_C$. This theorem will be useful because we can build up the ``global'' state $\rho_{ABC}$ from its marginals $\rho_{AB}$ and $\rho_{BC}$. The main difficulty of using Theorem~\ref{thm:Petz} comes from the fact that the reduced density matrices generally do not commute with each other. This motivates the development of the \emph{merging algebra}, explained in our next Section.

\section{Merging algebra}
\label{sec:merging_algebra}
The constraints in Table~\ref{table:constraints} imply that the reduced density matrices of the fundamental marginals can be \emph{merged} in a particular way. In this section, we discuss a variety of facts that will be useful in proving such a statement. 

The main content of this section is an algebra called the \emph{merging algebra}. Most of the objects of this algebra belong to the following set.
\begin{equation}
    \mathcal{S}_{\Lambda} :=\bigcup_{A\in \mathcal{P}(\Lambda)} \mathcal{S}(\mathcal{H}_{A}),
\end{equation}
where $\mathcal{P}(S)$ of a set $S$ is the power set of $S$, $\Lambda$ is the set of lattice sites, $\mathcal{H}_A$ is the Hilbert space associated with $A\subset \Lambda$, and $\mathcal{S}(\mathcal{H})$ of a Hilbert space $\mathcal{H}$ is the state space of $\mathcal{H}$. From now on, unless specified otherwise, we will assume that density matrices belong to $\mathcal{S}_{\Lambda}$.

The merging algebra is equipped with two binary operations, referred to as \emph{right-merge} and \emph{max-merge} operations. Consider $\sigma, \lambda \in \mathcal{S}_{\Lambda}$. These binary operations are defined as follows.
\begin{equation}
\begin{aligned}
    &\text{Right-merge}:\, \sigma \rightmerge \lambda := \Phi_{\lambda}(\sigma)\\
     &\text{Max-merge}:\, \sigma \maxmerge \lambda := \argmax_{\rho \consistent \sigma,\rho \consistent \lambda} \left(S(\rho) \right)
\end{aligned}
    \label{eq:merging_ops_definitions}
\end{equation}
Here $\Phi_{\lambda}$ is the Petz map, which in this context is defined as follows. (See Theorem~\ref{thm:Petz}.) Let $B := \text{Supp}(\lambda) \cap \text{Supp}(\sigma)$. Then the action of $\Phi_{\lambda}$ is defined as 
\begin{equation}
    \Phi_{\lambda}(\sigma) := \lambda^{\frac{1}{2}}\lambda_B^{-\frac{1}{2}}\sigma \lambda_B^{-\frac{1}{2}} \lambda^{\frac{1}{2}}.
\end{equation}
Note that this is a slight abuse of notation because the map $\Phi_{\lambda}$ actually depends on the argument; the arguments with different supports may give rise to a different quantum channel.

By definition, the max-merge operation is commutative. However, the right-merge is not. Moreover, neither of them are generally associative. Nevertheless, the tools developed in this section, together with Table~\ref{table:constraints}, will provide enough structure to let us derive the main results of this paper.

While the max-merge operation will always yield a valid density matrix in this paper, the same cannot be said in a more general context. For completeness, we introduce a special object called $\nil$. Given two density matrices, there may not be any density matrix (on the union of their supports) that are consistent with those density matrices. In that case, we will simply say that
\begin{equation}
    \sigma \maxmerge \lambda = \nil.
\end{equation}
Similarly, we will use the following rule:
\begin{equation}
    \begin{aligned}
    \nil \maxmerge \lambda &= \sigma \maxmerge \nil &= \nil, \\
    \nil \rightmerge \lambda &= \sigma \rightmerge \nil &= \nil.
    \end{aligned}
\end{equation}

Provided that we began with density matrices, the set of objects generated by the max-merge and the right-merge operations are either a density matrix or the $\nil$ object. The merging algebra is generated by these binary operations, applied to the reduced density matrices of \mymarginal; see Table~\ref{table:merging_algebra}. 

\begin{table*}
  \framebox[\textwidth]{\parbox{0.95\textwidth}{
      \vskip 5pt
      \raggedright
      \begin{center}
      \underline{\textbf{Merging algebra}}
      \end{center}
      
      \textbf{Generating set:}
      \[
      \left\{\centeredTikZ{\squaresempty{3}{3}{0}{0};} \text{ and its reduced density matrices, over every $3\times 3$ clusters} \right\}.
      \]
      (\mymarginal must obey $\mathcal{C}_L$ and $\mathcal{C}_{M,P}$.)
      \vskip 5pt
      
      \textbf{Binary operations:}
      \begin{equation*}
      \begin{matrix}
    \sigma \rightmerge \lambda := \Phi_{\lambda}(\sigma), \qquad &
    \qquad \sigma \maxmerge \lambda := \displaystyle\argmax_{\rho \consistent \sigma,\rho \consistent \lambda} \left(S(\rho) \right)
     \end{matrix}
\end{equation*}
where $\Phi_{\lambda}(\sigma) :=  \lambda^{\frac{1}{2}}\lambda_B^{-\frac{1}{2}} \sigma \lambda_B^{-\frac{1}{2}} \lambda^{\frac{1}{2}}$ and $B := \text{Supp}(\sigma) \cap \text{Supp}(\lambda)$.
\begin{equation*}
    \begin{aligned}
    \nil \maxmerge \lambda &= \sigma \maxmerge \nil &= \nil, \\
    \nil \rightmerge \lambda &= \sigma \rightmerge \nil &= \nil.
    \end{aligned}
\end{equation*}
    }
  }\caption{Merging algebra.   \label{table:merging_algebra}}
\end{table*}

\subsection{Basics of the merging algebra}
In this section, we will derive fundamental facts about the merging algebra. This section is a bit abstract but the meaning behind each results should be well-encapsulated by their respective names.

We begin with the \emph{fundamental lemma}, which establishes a web of equivalence relations between various statements about conditionally independent states. This is the lemma that we use the most in this paper.
\begin{lemma}
(Fundamental lemma) The following statements are equivalent:
\begin{enumerate}
    \item $\lambda \maxmerge \sigma$ exists. Moreover, $S(\lambda \maxmerge \sigma)=S(\lambda) + S(\sigma) - S(\tau)$, where $\tau$ is the reduced density matrix of $\lambda$ (and $\sigma$) on $\text{Supp}(\lambda) \cap \text{Supp}(\sigma)$.
    \item $\lambda \maxmerge \sigma = \sigma \rightmerge \lambda$.
    \item $\lambda \maxmerge \sigma = \lambda \rightmerge \sigma$.
    \item $\lambda \rightmerge \sigma \consistent \lambda$ and $\lambda \consistent \sigma$.
    \item $\sigma \rightmerge \lambda \consistent \sigma$ and $\lambda \consistent \sigma$.
    \item There exists a quantum channel $\Phi$ acting on $\text{Supp}(\lambda) \cap \text{Supp}(\sigma)$  such that $\Phi(\lambda) \consistent \{ \lambda, \sigma\}$.
    \item There exists a quantum channel $\Phi$ acting on  $\text{Supp}(\sigma) \cap \text{Supp}(\lambda)$  such that $\Phi(\sigma) \consistent \{\sigma, \lambda\}$.
\end{enumerate}
Moreover, $\Phi(\lambda)$ and $\Phi(\sigma)$ appearing in condition 6 and 7 are both the maximum entropy state consistent with $\lambda$ and $\sigma$.
\label{lemma:fundamental}
\end{lemma}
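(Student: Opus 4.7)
The plan is to reduce all seven conditions to the Petz recovery theorem (Theorem~\ref{thm:Petz}) and the strong-subadditivity saturation criterion. Decompose the supports as $A := \text{Supp}(\lambda) \setminus \text{Supp}(\sigma)$, $B := \text{Supp}(\lambda) \cap \text{Supp}(\sigma)$, and $C := \text{Supp}(\sigma) \setminus \text{Supp}(\lambda)$, so $\lambda$ lives on $AB$, $\sigma$ on $BC$, and the overlap marginal $\tau$ on $B$. Then any common extension $\rho$ on $ABC$ obeys $S(\rho) \leq S(\lambda) + S(\sigma) - S(\tau)$ by SSA with equality iff $I(A:C|B)_\rho = 0$, and a direct partial-trace computation (using $\lambda_B = \sigma_B$) shows that $\sigma \rightmerge \lambda = \lambda^{1/2}\lambda_B^{-1/2}\sigma\lambda_B^{-1/2}\lambda^{1/2}$ automatically has $AB$-marginal $\lambda$ while $\lambda \rightmerge \sigma$ automatically has $BC$-marginal $\sigma$; only the opposite marginal in each case is non-automatic.

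\textbf{Inner chain (1)$\Leftrightarrow$(2)$\Leftrightarrow$(3).} First I show (1)$\Rightarrow$(2),(3): the entropy identity in (1) says SSA saturates on the max-merge, hence $I(A:C|B)=0$ there, and Theorem~\ref{thm:Petz} identifies the max-merge with both $\sigma \rightmerge \lambda$ and $\lambda \rightmerge \sigma$. For (2)$\Rightarrow$(1) (and (3)$\Rightarrow$(1) symmetrically): (2) presupposes that $\rho := \sigma \rightmerge \lambda$ is a valid max-merge, so its actual marginals are $\lambda, \sigma, \tau$. But then the defining formula $\rho = \lambda^{1/2}\lambda_B^{-1/2}\sigma\lambda_B^{-1/2}\lambda^{1/2}$ is \emph{syntactically} the Petz closed form $\rho_{AB}^{1/2}\rho_B^{-1/2}\rho_{BC}\rho_B^{-1/2}\rho_{AB}^{1/2}$ evaluated at those marginals, so Theorem~\ref{thm:Petz} yields $I(A:C|B)_\rho = 0$ and the SSA-saturating entropy formula of (1).

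\textbf{Conditions (4)--(7).} Conditions (4) and (5) are (3) and (2) with the automatic half of consistency stripped away: the identity ``$\lambda \rightmerge \sigma \consistent \{\lambda, \sigma\}$'' reduces to ``$\lambda \rightmerge \sigma \consistent \lambda$'' once $\lambda \consistent \sigma$ is assumed, so (4)$\Leftrightarrow$(3) and (5)$\Leftrightarrow$(2). For (1)$\Rightarrow$(6),(7) I take $\Phi$ to be the Petz recovery map itself, which by construction acts only on $B$ and produces the max-merge. For (6)$\Rightarrow$(1) (and (7) by symmetry): set $\rho := (I_A \otimes \Phi)(\lambda)$. Since the evolution from $\lambda$ to $\rho$ touches only the $B$ system, the data-processing inequality yields $I(A:BC)_\rho \leq I(A:B)_\lambda$, and combining this with $\rho_{AB} = \lambda$ and the chain rule $I(A:BC)_\rho = I(A:B)_\rho + I(A:C|B)_\rho$ forces $I(A:C|B)_\rho = 0$. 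Hence $\rho$ saturates SSA and is the max-entropy common extension, giving (1) and the ``moreover'' claim.

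\textbf{Main difficulty.} The central subtle step is (2)$\Rightarrow$(1) and its mirror: one must notice that the defining expression for the right-merge is literally the Petz recovery formula in Theorem~\ref{thm:Petz}, so that consistency of $\sigma \rightmerge \lambda$ with both input marginals converts, via Petz, directly into $I(A:C|B)=0$ without a separate entropic calculation. Aside from this observation the only technical input is the routine identity $\Tr_C(\sigma \rightmerge \lambda) = \lambda$ under $\lambda_B = \sigma_B$ and its symmetric counterpart; the data-processing step for (6)/(7) is correspondingly short.
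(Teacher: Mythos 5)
Your proof is correct and rests on the same two pillars as the paper's: Petz's theorem (Theorem~\ref{thm:Petz}) to pass between conditional-independence and the right-merge formula, and SSA plus data processing to show that any state satisfying condition 6 or 7 saturates the entropy bound and is therefore the (unique) max-merge. The only difference is organizational — you prove the reverse implications $(2)\Rightarrow(1)$ and $(3)\Rightarrow(1)$ directly by observing that the right-merge is the Petz form evaluated at its own marginals, whereas the paper routes these through the cycle $1\to2\to5\to7\to1$ and $1\to3\to4\to6\to1$ — but the substance is the same.
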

\begin{proof}
First, $1\to 2$ and $1\to 3$ follows from Petz's theorem (Theorem~\ref{thm:Petz}). Moreover, $2\to 5$ and $3\to 4$ follows from the definition of the max-merge. Lastly, $4\to 6$ and $5\to 7$ is true by definition.

Now we prove $6\to 1$. Let $B$ be the intersection of the support of $\lambda$ and $\sigma$. Also, let $A= \text{Supp}(\lambda) \setminus B$ and $C=\text{Supp}(\sigma) \setminus B$.  Because conditional entropy is concave~\cite{Lieb1973},
\begin{equation}
    S(\lambda) - S(\lambda_B) \leq S(\Phi(\lambda)) - S(\Phi(\lambda_B)).
\end{equation}
Note that $\lambda = (\Phi(\lambda))_{AB}$ because of our very assumption. Moreover, $\lambda_B = \Phi(\lambda)_{B}$ by the same reason. Lastly, $\Phi(\lambda_B) = \Tr_A (\Phi(\lambda)) = (\Phi(\lambda))_{BC}$. Therefore, we have
\begin{equation}
S(\tau_{AB}) + S(\tau_{BC}) - S(\tau_B) - S(\tau_{ABC})\leq 0,
\end{equation}
where $\tau_{ABC} = \Phi(\lambda)$. By SSA, we have
\begin{equation}
S(\tau_{AB}) + S(\tau_{BC}) - S(\tau_B) - S(\tau_{ABC})\geq 0.
\end{equation}
Therefore, it must be that
\begin{equation}
S(\tau_{AB}) + S(\tau_{BC}) - S(\tau_B) - S(\tau_{ABC})=0.
\end{equation}

Note that $\Phi(\lambda)$ is consistent with both $\lambda$ and $\sigma$. Moreover, its entropy is maximum over all possible states that are consistent with both $\lambda$ and $\sigma$. Therefore, a density matrix $\lambda \maxmerge \sigma$  satisfying the first condition exists. Such a state is unique~\cite{Kim2014}, so it must be $\Phi(\lambda)$. The proof of $7 \to 1$ is similar.
\end{proof}

Next, we have the commutation lemma. Colloquially speaking, the commutation lemma provides a sufficient condition under which two right-merges commute with each other. This is an intuitive result. If the two density matrices associated with the right-merges do not have overlapping supports, the Petz map assoicated with them act on disjoint supports. As such, their action should commute. 
\begin{lemma}
(Commutation lemma)
If $\text{Supp}(\lambda) \cap \text{Supp}(\tau) = \varnothing$, then for any density matrix $\sigma$,
\begin{equation}
(\sigma \rightmerge \lambda) \rightmerge \tau = (\sigma \rightmerge \tau) \rightmerge \lambda.    
\end{equation}
\label{lemma:commutation}
\end{lemma}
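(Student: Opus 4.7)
The plan is to verify the identity by direct computation, exploiting the fact that operators supported on disjoint regions commute. Write $B_1 := \text{Supp}(\sigma)\cap\text{Supp}(\lambda)$ and $B_2 := \text{Supp}(\sigma)\cap\text{Supp}(\tau)$. The first observation I would record is that after one right-merge the support grows: $\sigma\rightmerge\lambda$ is supported on $\text{Supp}(\sigma)\cup\text{Supp}(\lambda)$, so the intersection that appears in the outer Petz map of $(\sigma\rightmerge\lambda)\rightmerge\tau$ is
\[
\bigl(\text{Supp}(\sigma)\cup\text{Supp}(\lambda)\bigr)\cap\text{Supp}(\tau)\;=\;\text{Supp}(\sigma)\cap\text{Supp}(\tau)\;=\;B_2,
\]
using the hypothesis $\text{Supp}(\lambda)\cap\text{Supp}(\tau)=\varnothing$. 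Symmetrically, the outer intersection in $(\sigma\rightmerge\tau)\rightmerge\lambda$ collapses to $B_1$. In particular, exactly the same ``small'' marginals $\lambda_{B_1}$ and $\tau_{B_2}$ appear in both orderings.

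Next I would expand both sides using the definition of the Petz map. On the one hand,
\[
(\sigma\rightmerge\lambda)\rightmerge\tau \;=\; \tau^{1/2}\,\tau_{B_2}^{-1/2}\bigl(\lambda^{1/2}\lambda_{B_1}^{-1/2}\sigma\,\lambda_{B_1}^{-1/2}\lambda^{1/2}\bigr)\tau_{B_2}^{-1/2}\,\tau^{1/2},
\]
while on the other hand,
\[
(\sigma\rightmerge\tau)\rightmerge\lambda \;=\; \lambda^{1/2}\,\lambda_{B_1}^{-1/2}\bigl(\tau^{1/2}\tau_{B_2}^{-1/2}\sigma\,\tau_{B_2}^{-1/2}\tau^{1/2}\bigr)\lambda_{B_1}^{-1/2}\,\lambda^{1/2},
\]
where all operators are implicitly tensored with identities on the complementary sites.

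The crux is then the commutation observation: every factor of the form $\lambda^{1/2}$, $\lambda_{B_1}^{\pm 1/2}$ is supported inside $\text{Supp}(\lambda)$, every factor $\tau^{1/2}$, $\tau_{B_2}^{\pm 1/2}$ is supported inside $\text{Supp}(\tau)$, and these two supports are disjoint. Hence any $\lambda$-factor commutes with any $\tau$-factor (after the implicit padding by identities). Pulling the $\tau$-factors past the $\lambda$-factors in the first line, and doing the mirror rearrangement in the second, both sides reduce to
\[
\lambda^{1/2}\,\tau^{1/2}\,\lambda_{B_1}^{-1/2}\,\tau_{B_2}^{-1/2}\,\sigma\,\tau_{B_2}^{-1/2}\,\lambda_{B_1}^{-1/2}\,\tau^{1/2}\,\lambda^{1/2},
\]
which proves the claim.

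The bookkeeping step is the only subtle part: I need to confirm the aforementioned support identity for the outer intersections (so that the same $\lambda_{B_1}$ and $\tau_{B_2}$ genuinely appear in both orderings), and I need the convention that $\nil\rightmerge(\cdot)=(\cdot)\rightmerge\nil=\nil$ to handle the degenerate case where $\lambda$ or $\tau$ fails to be consistent with $\sigma$ on the relevant overlap — in that case both sides equal $\nil$ and the lemma is vacuous. Once these support issues are pinned down, no hard analysis is needed; the lemma really is the algebraic statement that disjoint-support Petz maps commute.
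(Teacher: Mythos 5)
Your proposal is correct and follows essentially the same route as the paper: expand both sides via the Petz-map definition, observe that the disjointness hypothesis collapses the outer overlap to $\text{Supp}(\sigma)\cap\text{Supp}(\tau)$ (resp.\ $\text{Supp}(\sigma)\cap\text{Supp}(\lambda)$), and then commute the disjointly supported $\lambda$- and $\tau$-factors past each other. Your explicit bookkeeping of the outer intersections and the $\nil$ convention is a slightly more careful write-up of the same argument.
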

\begin{proof}
Specifically, 
\begin{equation}
\begin{aligned}
    (\sigma \rightmerge \lambda) \rightmerge \tau &= \Phi_{\tau}(\Phi_{\lambda}(\sigma)) \\
    &= \Phi_{\tau}(\lambda^{\frac{1}{2}}\lambda_{B}^{-\frac{1}{2}}\sigma \lambda_{B}^{-\frac{1}{2}}\lambda^{\frac{1}{2}}) \\
    &= \tau^{\frac{1}{2}}\tau_{B'}^{-\frac{1}{2}}(\lambda^{\frac{1}{2}}\lambda_{B}^{-\frac{1}{2}}\sigma \lambda_{B}^{-\frac{1}{2}}\lambda^{\frac{1}{2}})\tau_{B'}^{-\frac{1}{2}}\tau^{\frac{1}{2}},
\end{aligned}
\end{equation}
where $B= \text{Supp}(\lambda) \cap \text{Supp}(\sigma)$ and $B' = \text{Supp}(\tau) \cap (\text{Supp}(\lambda) \cup \text{Supp}(\sigma))$. Because $\text{Supp}(\lambda) \cap \text{Supp}(\tau) = \varnothing,$ $B'=\text{Supp}(\tau) \cap \text{Supp}(\sigma)$. In particular, $B\cap B'= \varnothing$. Therefore, 
\begin{equation}
    [\tau^{\frac{1}{2}}\tau_{B'}^{-\frac{1}{2}}, \lambda^{\frac{1}{2}} \lambda_B^{-\frac{1}{2}}]=0.
\end{equation}
Therefore, we conclude that
\begin{equation}
(\sigma \rightmerge \lambda) \rightmerge \tau =  \lambda^{\frac{1}{2}}\lambda_{B}^{-\frac{1}{2}}(\tau^{\frac{1}{2}}\tau_{B'}^{-\frac{1}{2}}
\sigma \tau_{B'}^{-\frac{1}{2}}\tau^{\frac{1}{2}})\lambda_{B}^{-\frac{1}{2}}\lambda^{\frac{1}{2}}.
\label{eq:mergelemma_triple_temp}
\end{equation}
It is straightforward to show that the right hand side of Eq.~\eqref{eq:mergelemma_triple_temp} is equal to $(\sigma \rightmerge \tau) \rightmerge \tau$, establishing Eq.~\eqref{eq:merginglemma_triple_key1}.
\end{proof}

Next, we have the merging lemma. This lemma was originally proved by Kato \emph{et al.}~\cite{Kato2016}. Given two conditionally independent states, it provides a sufficient condition under which one can merge those two states into a larger state. Importantly, the merged state possesses nontrivial conditional independence relations that are inherited from the states prior to the merging.
\begin{lemma}
(Merging lemma)
Suppose 
\begin{equation}
\begin{aligned}
    \lambda \maxmerge \sigma &= \lambda \rightmerge \sigma \\
    \sigma \maxmerge \tau &= \sigma \rightmerge \tau
\end{aligned}
\end{equation}
 and $\text{Supp}(\lambda) \cap \text{Supp}(\tau) = \varnothing$. Then
\begin{equation}
    \begin{tikzcd}
    (\lambda \maxmerge \sigma) \maxmerge \tau \arrow[r,equal]\arrow[d, equal] & (\lambda \maxmerge \sigma) \rightmerge \tau\arrow[d, equal] \\
    (\sigma \maxmerge \tau)\maxmerge \lambda \arrow[r, equal] & (\sigma \maxmerge \tau) \rightmerge \lambda
    \end{tikzcd}
    \label{eq:merging_square}
\end{equation} 
 \label{lemma:merging_lemma}
\end{lemma}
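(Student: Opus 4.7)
The plan is to first establish the right edge of the square via the commutation lemma, and then promote the resulting right-merges to max-merges via the fundamental lemma. By the first hypothesis and condition~2 of the fundamental lemma, $\lambda \maxmerge \sigma = \sigma \rightmerge \lambda$, so $(\lambda \maxmerge \sigma) \rightmerge \tau = (\sigma \rightmerge \lambda) \rightmerge \tau$. Since $\text{Supp}(\lambda) \cap \text{Supp}(\tau) = \varnothing$, the commutation lemma rewrites this as $(\sigma \rightmerge \tau) \rightmerge \lambda$, which by the second hypothesis equals $(\sigma \maxmerge \tau) \rightmerge \lambda$. Denote this common state by $\omega$.

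Next, I would apply the fundamental lemma to the pair $(\lambda \maxmerge \sigma,\, \tau)$ to upgrade the top edge. Two consistencies suffice (condition~4): $\lambda \maxmerge \sigma \consistent \tau$ and $\omega \consistent \lambda \maxmerge \sigma$. The first reduces to checking consistency on the overlap $R := \text{Supp}(\lambda \maxmerge \sigma) \cap \text{Supp}(\tau)$, which equals $\text{Supp}(\sigma) \cap \text{Supp}(\tau)$ by the disjointness hypothesis. On $R$, the state $\lambda \maxmerge \sigma$ agrees with $\sigma$ (by the defining consistency of the max-merge), and $\sigma$ agrees with $\tau$ (since the existence of $\sigma \maxmerge \tau$ forces $\sigma \consistent \tau$). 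For the second consistency, I would invoke the Petz recovery property: since $(\lambda \maxmerge \sigma)_R = \sigma_R = \tau_R$, tracing $\Phi_\tau(\lambda \maxmerge \sigma) = \omega$ over $\text{Supp}(\tau) \setminus R$ reproduces $\lambda \maxmerge \sigma$.

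With both consistencies established, the fundamental lemma delivers $(\lambda \maxmerge \sigma) \maxmerge \tau = (\lambda \maxmerge \sigma) \rightmerge \tau = \omega$. An entirely parallel argument, swapping the roles of $\lambda$ and $\tau$ (symmetric in the hypothesis), yields $(\sigma \maxmerge \tau) \maxmerge \lambda = (\sigma \maxmerge \tau) \rightmerge \lambda = \omega$. All four vertices of the square then equal $\omega$, and the remaining (left) edge follows automatically.

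The main obstacle is the Petz recovery step used to verify $\omega \consistent \lambda \maxmerge \sigma$. The subtlety is that although $\Phi_\tau$ is built from the reference state $\tau$, it is being applied to the strictly larger state $\lambda \maxmerge \sigma$; one must check that the natural extension of $\Phi_\tau$ by identity off of $R$ still acts as a genuine recovery channel, i.e., that partial-tracing over $\text{Supp}(\tau) \setminus R$ returns the input unchanged. This hinges on the transitive marginal identity $(\lambda \maxmerge \sigma)_R = \tau_R$, which is precisely the place where both hypotheses and the disjointness of $\text{Supp}(\lambda)$ and $\text{Supp}(\tau)$ are simultaneously used. Once this recovery property is in hand, the remainder is routine bookkeeping with the fundamental and commutation lemmas.
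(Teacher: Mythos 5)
Your overall architecture is the same as the paper's: establish the common state $\omega=(\sigma\rightmerge\tau)\rightmerge\lambda$ via the commutation lemma, then verify the two consistencies in condition~4 of the fundamental lemma to upgrade the right-merges to max-merges, and finish by symmetry. Your first consistency ($\lambda\maxmerge\sigma\consistent\tau$, reduced to $\sigma\consistent\tau$ via disjointness and the non-nilness of $\sigma\maxmerge\tau$) is fine. The gap is in the second consistency, $\omega\consistent\lambda\maxmerge\sigma$. You justify it by a ``Petz recovery property'': since $(\lambda\maxmerge\sigma)_R=\tau_R$ on the overlap $R$, tracing $\Phi_\tau(\lambda\maxmerge\sigma)$ over $\text{Supp}(\tau)\setminus R$ should return $\lambda\maxmerge\sigma$. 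That implication is false in general. Writing $A=\text{Supp}(\lambda\maxmerge\sigma)\setminus R$, $B=R$, $C=\text{Supp}(\tau)\setminus R$, the map $\Tr_C\circ\Phi_{B\to BC}$ is \emph{not} the identity on states whose $B$-marginal equals $\tau_B$: take $\tau_{BC}$ a pure entangled state and $\rho_{AB}$ entangled with $\rho_B=\tau_B$; then $\Tr_C\,\Phi_{B\to BC}(\rho_{AB})$ is a product state of the form $N_A\otimes\tau_B$ and cannot equal $\rho_{AB}$. Matching marginals on the overlap is exactly the hypothesis that makes $\Phi_\tau(\tau_B)=\tau$; it says nothing about how $\Phi_\tau$ treats the correlations between $A$ and $B$ carried by $\lambda\maxmerge\sigma$. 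So the step you flag as ``the main obstacle'' is indeed the crux, but the hinge you propose for it does not hold.

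The correct route — and the actual content of the paper's proof — is to evaluate the partial trace on the \emph{other} representation of $\omega$. Since $\text{Supp}(\lambda)\cap\text{Supp}(\tau)=\varnothing$, the Petz map $\Phi_\lambda$ commutes with $\Tr_{\text{Supp}(\tau)\setminus\text{Supp}(\sigma)}$, so
\begin{equation*}
\Tr_{\text{Supp}(\tau)\setminus\text{Supp}(\sigma)}\bigl((\sigma\rightmerge\tau)\rightmerge\lambda\bigr)
=\bigl(\Tr_{\text{Supp}(\tau)\setminus\text{Supp}(\sigma)}(\sigma\rightmerge\tau)\bigr)\rightmerge\lambda
=\sigma\rightmerge\lambda=\lambda\maxmerge\sigma,
\end{equation*}
where the middle equality uses that $\sigma\rightmerge\tau=\sigma\maxmerge\tau$ is by hypothesis a genuine max-merge and hence consistent with $\sigma$. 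The essential input is therefore the conditional-independence hypothesis $\sigma\maxmerge\tau=\sigma\rightmerge\tau$ (not a marginal identity), combined with the re-association of $\omega$ so that the last merge is the one whose support avoids the traced-out region. With that replacement your argument closes.
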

\begin{proof}
Note that
\begin{equation}
\begin{aligned}
    (\lambda \maxmerge \sigma) \rightmerge \tau  &=  (\sigma \rightmerge \lambda) \rightmerge \tau \\
    &= (\sigma \rightmerge \tau) \rightmerge \lambda, \label{eq:merginglemma_triple_key1}
\end{aligned}
\end{equation}
where the first line is our assumption and the second line follows from the commutation lemma (Lemma~\ref{lemma:commutation}). 

The following identity is the key.
\begin{equation}
    \Tr_{\text{Supp}(\tau) \setminus \text{Supp}(\sigma)} ((\sigma \rightmerge \tau)\rightmerge \lambda) = \sigma \rightmerge \lambda.
\end{equation}
To see why, note that
\begin{equation}
    \begin{aligned}
    &\Tr_{\text{Supp}(\tau) \setminus \text{Supp}(\sigma)} ((\sigma \rightmerge \tau)\rightmerge \lambda)\\
    &=  \Tr_{\text{Supp}(\tau) \setminus \text{Supp}(\sigma)}(\sigma \rightmerge \tau) \rightmerge \lambda \\
    &= \sigma \rightmerge \lambda.
    \end{aligned}
\end{equation}
In the first line, we used the fact that the support of $\tau$ does not intersect with the support of $\lambda$. Therefore, the Petz map associated with $\lambda$ ``commutes" with the partial trace.

Note that $\sigma \rightmerge \lambda = \lambda \maxmerge \sigma$ follows from Lemma~\ref{lemma:fundamental} and our assumption. Therefore,
\begin{equation}
    (\lambda \maxmerge \sigma) \rightmerge \tau \consistent \lambda \maxmerge \sigma. 
\end{equation}
Moreover, $\lambda \maxmerge \sigma \consistent \tau$. To prove this fact, it suffices to show that $\sigma \consistent \tau$ because the support of $\lambda$ does not intersect with that of $\tau$. Because $\sigma \maxmerge \tau = \sigma \rightmerge \tau$, and $\sigma \rightmerge \tau$ is a density matrix, $\sigma \maxmerge \tau$ cannot be $\nil$. Therefore, $\sigma \consistent \tau$, which implies $\lambda \maxmerge \sigma \consistent \tau$.

Therefore, the fourth condition in Lemma~\ref{lemma:fundamental} is satisfied. Therefore, 
\begin{equation}
    (\lambda \maxmerge \sigma) \maxmerge \tau = (\lambda \maxmerge \sigma) \rightmerge \tau,
\end{equation}
establishing the horizontal identity on the top of Eq.~\ref{eq:merging_square}.

Note that our conditions, by the virtue of Lemma~\ref{lemma:fundamental}, can be rewritten as
\begin{equation}
\begin{aligned}
    \sigma \maxmerge \lambda &= \sigma \rightmerge \lambda, \\
    \tau \maxmerge \sigma &= \tau \rightmerge \sigma.
\end{aligned}
\end{equation}
Thus, we can apply the same logic to conclude that
\begin{equation}
    (\tau \maxmerge \sigma)\maxmerge \lambda = (\tau \maxmerge \sigma) \rightmerge \lambda.
\end{equation}
By the commutativity of $\maxmerge$, establishing the horizontal identity on the bottom of Eq.~\ref{eq:merging_square}.

Now, it remains to establish the vertical identities. Note that
\begin{equation}
\begin{aligned}
    (\lambda \maxmerge \sigma) \rightmerge \tau &= (\sigma \rightmerge \lambda) \rightmerge \tau \\
    &= (\sigma \rightmerge \tau)\rightmerge \lambda \\
    &= (\sigma \maxmerge \tau) \rightmerge \lambda,
\end{aligned}
\end{equation}
establishing the vertical identity on the right side of Eq.~\eqref{eq:merging_square}. This completes the proof of Eq.~\eqref{eq:merging_square}.
\end{proof}

Let us make an important remark on the merging lemma: that the lemma can be bootstrapped. Specifically, the lemma begins with two conditionally independent states and ends up with another conditionally independent state over an enlarged system. The fact that this new state is conditionally independent (with an appropriate choice of subsystems) implies that it can be merged with yet another state provided that the conditions stated in the merging lemma holds. Therefore, by chaining this argument, it is possible to start with a set of marginals on bounded regions and merge them to form a density matrix on an unbounded region. Importantly, the fact that this can be done can be verified locally, over the originally given density matrices. We shall see a nontrivial example in Section~\ref{sec:snake_intro}.

\subsection{Snake}
\label{sec:snake_intro}
There is a very important object that will repeatedly appear in the remainder of this paper. This is the \emph{snake}.\footnote{One may think of the snake as a quantum Markov chain. However, that point of view does not mesh well with the analysis in Section~\ref{sec:snakes_in_action}.} In this section, we will study their properties. Let us begin with the definition.
\begin{definition}
Consider a sequence of density matrices $\left(\rho_{i}\right)_{i=1}^N:= \left(\rho_1,\ldots, \rho_N\right)$ such that
\begin{enumerate}
    \item $\rho_i \maxmerge \rho_{i+1} = \rho_i \rightmerge \rho_{i+1}$ and 
    \item $\text{Supp}(\rho_{i}) \cap \text{Supp}(\rho_j) =\varnothing$ unless $|i-j| \leq 1$.
\end{enumerate}
Then, we define a snake of $\left(\rho_{i}\right)_{i=1}^N:= \left(\rho_1,\ldots, \rho_N\right)$ as
\begin{equation}
    \mathbb{S}\left(\left(\rho_i \right)_{i=1}^N \right) := (\cdots(\rho_1 \maxmerge \rho_2) \maxmerge \cdots) \maxmerge \rho_{N-1} ) \maxmerge \rho_N.
\end{equation}
\label{definition:snake}
\end{definition}

\begin{remark}
A snake of a sequence $\left(\rho_{i}\right)_{i=1}^N:= \left(\rho_1,\ldots, \rho_N\right)$, if it exists, is a quantum state~\cite{Kim2017}. Moreover, this state is consistent with all $\rho_i$ from $i=1$ to $N$.
\end{remark}

\begin{remark}
If a sequence of density matrices $\left(\rho_i \right)_{i=1}^N$ can form a snake, so can its subsequence $\left(\rho_i \right)_{i=n}^m$ for any $1\leq n < m \leq N.$
\end{remark}

Snake is a fluid object. It can take many different forms that are equivalent to each other. We will prove a number of results in this direction. First, let us introduce the mutation lemma. This lemma says two things. First, a snake forms a Markov chain; see Eq.~\eqref{eq:mutation_result1}. Second, it can be written as a sequence of right-merges, either from the left to the right, or from the right to the left; see Eq.~\eqref{eq:mutation_result2}. The main point of this lemma is that snake is a mutable object that can have many different forms. Depending on the context, we can ``mutate'' a snake into a form that becomes more amenable to our analysis.
\begin{lemma}
(Mutation lemma)
Consider a snake $\mathbb{S}\left(\left(\rho_i \right)_{i=1}^N \right)$. Then,
\begin{equation}
    \mathbb{S}\left(\left( \rho_i\right)_{i=1}^n\right) \maxmerge \rho_{n+1} =  \mathbb{S}\left(\left( \rho_i\right)_{i=1}^n\right) \rightmerge \rho_{n+1} \label{eq:mutation_result1}
\end{equation}
for all $1\leq n < N$. Moreover,
\begin{equation}
\begin{aligned}
    \mathbb{S}\left(\left(\rho_i \right)_{i=1}^N \right) &= (\cdots((\rho_1 \rightmerge \rho_2) \rightmerge \cdots) \rightmerge \rho_{N-1} ) \rightmerge \rho_N) \\
    &= ((\cdots(\rho_N \rightmerge \rho_{N-1})\rightmerge \cdots) \rightmerge \rho_{2} ) \rightmerge \rho_1).
\end{aligned}\label{eq:mutation_result2}
\end{equation}
\label{lemma:snake_mutation}
\end{lemma}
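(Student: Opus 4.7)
The plan is to argue by strong induction on $N$. The base case $N=1$ is vacuous and $N=2$ reduces directly to condition 1 of Definition~\ref{definition:snake} combined with the equivalences $1 \Leftrightarrow 2 \Leftrightarrow 3$ of Lemma~\ref{lemma:fundamental}. For the inductive step, assume both Eq.~\eqref{eq:mutation_result1} and Eq.~\eqref{eq:mutation_result2} for all snake lengths strictly less than $N$, and abbreviate $\mathbb{S}_n := \mathbb{S}\bigl( (\rho_i)_{i=1}^n \bigr)$.

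First I dispatch Eq.~\eqref{eq:mutation_result1}. For $1 \leq n < N-1$ the claim is immediate from the inductive hypothesis applied to the sub-sequence $(\rho_i)_{i=1}^{n+1}$, which is itself a valid snake by the second remark after Definition~\ref{definition:snake}. The only genuinely new case is $n = N-1$, which I resolve by applying the Merging Lemma (Lemma~\ref{lemma:merging_lemma}) with $\lambda = \mathbb{S}_{N-2}$, $\sigma = \rho_{N-1}$, $\tau = \rho_N$: the first hypothesis is the inductive hypothesis at $n = N-2$; the second is condition 1 of the snake; support-disjointness is automatic because $\text{Supp}(\mathbb{S}_{N-2}) = \bigcup_{i=1}^{N-2} \text{Supp}(\rho_i)$ and condition 2 makes each summand disjoint from $\text{Supp}(\rho_N)$. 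The left-to-right form of Eq.~\eqref{eq:mutation_result2} then follows at once by unfolding $\mathbb{S}_N = \mathbb{S}_{N-1} \rightmerge \rho_N$ and substituting the inductive hypothesis for $\mathbb{S}_{N-1}$.

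The main obstacle is the right-to-left form of Eq.~\eqref{eq:mutation_result2}. I plan to prove it via a secondary downward induction that rotates the outermost max-merge inward. Define $X_k := \rho_k \maxmerge (\rho_{k+1} \maxmerge (\cdots \maxmerge \rho_N) \cdots)$ with $X_N := \rho_N$, and establish the auxiliary identity $\mathbb{S}_N = \mathbb{S}_{k-1} \maxmerge X_k$ for every $1 \leq k \leq N-1$ (interpreting $\mathbb{S}_0$ as the empty state, which acts as an identity for $\maxmerge$). The base case $k = N-1$ is exactly the identity obtained in the preceding paragraph. For the step from $k$ to $k-1$, I invoke the Merging Lemma once more with $\lambda = \mathbb{S}_{k-2}$, $\sigma = \rho_{k-1}$, $\tau = X_k$. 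Support-disjointness again follows from condition 2, and $\mathbb{S}_{k-2} \maxmerge \rho_{k-1} = \mathbb{S}_{k-2} \rightmerge \rho_{k-1}$ is the outer inductive hypothesis. The delicate hypothesis is $\rho_{k-1} \maxmerge X_k = \rho_{k-1} \rightmerge X_k$; for this I verify condition 6 of Lemma~\ref{lemma:fundamental} using the Petz map $\Phi_{X_k}$ on $B := \text{Supp}(\rho_{k-1}) \cap \text{Supp}(\rho_k)$. Because $X_k$ is consistent with $\rho_k$ by construction, the reduction $(X_k)_B$ equals $\rho_{k,B} = \rho_{k-1,B}$, so $\Phi_{X_k}(\rho_{k-1})$ preserves both marginals and supplies the required channel.

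Running the secondary induction down to $k = 1$ yields $\mathbb{S}_N = X_1$. A final inside-out pass then converts each nested layer via the equivalence $\rho_j \maxmerge X_{j+1} = X_{j+1} \rightmerge \rho_j$ (Lemma~\ref{lemma:fundamental}, $1 \Rightarrow 2$), starting with $\rho_{N-1} \maxmerge \rho_N = \rho_N \rightmerge \rho_{N-1}$ and propagating outward to reach the right-to-left cascade in Eq.~\eqref{eq:mutation_result2}. The main technical subtlety throughout is that each step of the secondary induction confronts the Petz structure of a compound state $X_k$ rather than a primitive marginal; the saving grace is that condition 2 blocks every overlap except with the immediately adjacent $\rho_k$, so each Petz computation collapses onto the pairwise overlap already controlled by condition 1.
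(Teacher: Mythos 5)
Your handling of Eq.~\eqref{eq:mutation_result1} and of the left-to-right cascade is correct, and invoking the Merging Lemma with $\lambda=\mathbb{S}_{N-2}$, $\sigma=\rho_{N-1}$, $\tau=\rho_N$ is a clean packaging of what the paper does by hand with the commutation lemma. The problem is in your secondary induction. To feed $\lambda=\mathbb{S}_{k-2}$, $\sigma=\rho_{k-1}$, $\tau=X_k$ into the Merging Lemma you must first establish $\rho_{k-1}\maxmerge X_k=\rho_{k-1}\rightmerge X_k$, and your justification --- that $(X_k)_B=\rho_{k-1,B}$ on the overlap $B$, hence $\Phi_{X_k}(\rho_{k-1})$ ``preserves both marginals'' --- does not go through. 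Matching marginals on $B$ gives only half of condition 6 of Lemma~\ref{lemma:fundamental}: it guarantees $\Tr_{\text{Supp}(\rho_{k-1})\setminus B}\left[\Phi_{X_k}(\rho_{k-1})\right]=X_k$, but the other reduction equals $(\text{id}\otimes\mathcal{N}_B)(\rho_{k-1})$ for the channel $\mathcal{N}_B:=\Tr_{\text{Supp}(X_k)\setminus B}\circ\Phi_{X_k}$, which fixes $(X_k)_B$ but is not the identity map in general; consistency of the Petz output with $\rho_{k-1}$ is exactly the nontrivial conditional-independence statement you are trying to prove. If local consistency on the overlap sufficed here, any pair of locally consistent marginals would merge, which is false --- that is the quantum marginal problem.

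The gap is fixable without new ideas. The snake axioms are reversal-symmetric (condition 2 of Definition~\ref{definition:snake} is symmetric, and condition 1 is equivalent to $\rho_{i+1}\maxmerge\rho_i=\rho_{i+1}\rightmerge\rho_i$ by Lemma~\ref{lemma:fundamental}), and by commutativity of $\maxmerge$ your $X_k$ is precisely the snake of the reversed subsequence $(\rho_N,\dots,\rho_k)$. Hence $X_k\maxmerge\rho_{k-1}=X_k\rightmerge\rho_{k-1}$ is just Eq.~\eqref{eq:mutation_result1} applied to the reversed sequence --- for $k>2$ this is covered by your strong induction hypothesis, and for $k=2$ by rerunning your first paragraph on the reversed sequence, whose argument nowhere uses the orientation --- and Lemma~\ref{lemma:fundamental} then flips it into the form you need. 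It is worth noting that the paper sidesteps the secondary induction entirely: in its single induction it writes $\mathbb{S}_n$ in right-to-left form, right-merges $\rho_{n+1}$, commutes it past the disjoint $\rho_{n-1},\dots,\rho_1$, and swaps it with $\rho_n$, so the right-to-left cascade at level $n+1$ falls out of the very computation that establishes Eq.~\eqref{eq:mutation_result1}.
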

\begin{proof}
The proof is based on induction. For $n=1$, the statement is true by our assumptions. Suppose the claim is true for $n\geq 1$.

Let us first prove
\begin{equation}
    \mathbb{S}\left(\left( \rho_i\right)_{i=1}^n\right) \maxmerge \rho_{n+1} =  \mathbb{S}\left(\left( \rho_i\right)_{i=1}^n\right) \rightmerge \rho_{n+1}.
\end{equation}
By our assumption,

\begin{equation}
\begin{aligned}
    \mathbb{S}(\left( \rho_i\right)_{i=1}^n) \rightmerge \rho_{n+1} &= ((\cdots((\rho_n \rightmerge \rho_{n-1}) \rightmerge \rho_{n-2} )\rightmerge \cdots) \rightmerge \rho_{2} ) \rightmerge \rho_1) \rightmerge \rho_{n+1} \\
    &= ((\cdots(((\rho_n \rightmerge \rho_{n+1} ) \rightmerge \rho_{n-1}) \rightmerge \rho_{n-2} )\rightmerge \cdots) \rightmerge \rho_{2} ) \rightmerge \rho_1) \\
    &= ((\cdots(((\rho_{n+1} \rightmerge \rho_{n} ) \rightmerge \rho_{n-1}) \rightmerge \rho_{n-2} )\rightmerge \cdots) \rightmerge \rho_{2} ) \rightmerge \rho_1).
\end{aligned}
\label{eq:348}
\end{equation}

In the second line, we used the commutation lemma (Lemma~\ref{lemma:commutation}). In the third line, we used our assumption.

By tracing out $\text{Supp}(\rho_{n+1}) \setminus \text{Supp}(\mathbb{S}(\left( \rho_i \right)_{i=1}^n))$, we obtain
\begin{equation}
\mathbb{S}(\left( \rho_i\right)_{i=1}^n) \rightmerge \rho_{n+1}  \consistent \mathbb{S}(\left( \rho_i\right)_{i=1}^n).
\end{equation}
Furthermore, 
\begin{equation}
    \mathbb{S}(\left( \rho_i\right)_{i=1}^n) \rightmerge \rho_{n+1} \consistent \rho_{n+1}.
\end{equation}
Therefore, by the equivalence of the fourth condition and the third condition in Lemma~\ref{lemma:fundamental}, we conclude
\begin{equation}
    \mathbb{S}(\left( \rho_i\right)_{i=1}^n) \maxmerge \rho_{n+1} =  \mathbb{S}(\left( \rho_i\right)_{i=1}^n) \rightmerge \rho_{n+1}.
\end{equation}

Thus, we have proved Eq.~\eqref{eq:mutation_result1}. Repeatedly applying this result, the first line of Eq.~\eqref{eq:mutation_result2} follows. The second line follows from the expression in the last line of Eq.~\eqref{eq:348}.
\end{proof}

Snake has a \emph{local} entropy decomposition. The proof follows straightforwardly from Lemma~\ref{lemma:fundamental} and the mutation lemma  (Lemma~\ref{lemma:snake_mutation}).
\begin{corollary}
\begin{equation}
    S\left(\mathbb{S}(\left(\rho_i \right)_{i=1}^N)\right) = \left( \sum_{i=1}^N S(\rho_i)\right) - \left(\sum_{i=1}^{N-1} S(\tau_i) \right),
\end{equation}
where $\tau_i$ is the reduced density matrix of $\rho_i$ (and $\rho_{i+1}$) on $\text{Supp}(\rho_i) \cap \text{Supp}(\rho_{i+1})$.
\label{corollary:snake_entropy}
\end{corollary}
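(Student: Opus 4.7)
The plan is to prove the corollary by induction on $N$, peeling off the rightmost factor at each step using the mutation lemma (Lemma~\ref{lemma:snake_mutation}) to convert a max-merge into a right-merge, and then extracting an additive entropy decomposition from condition 1 of the fundamental lemma (Lemma~\ref{lemma:fundamental}). The base case $N=1$ is trivial, and the case $N=2$ is exactly condition 1 of Lemma~\ref{lemma:fundamental} applied to $\rho_1 \maxmerge \rho_2$, which is allowed because the snake's defining hypothesis in Definition~\ref{definition:snake} gives $\rho_1 \maxmerge \rho_2 = \rho_1 \rightmerge \rho_2$.

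For the inductive step, assume the formula for $N-1$. By the definition of the snake,
\[
\mathbb{S}\bigl((\rho_i)_{i=1}^{N}\bigr) = \mathbb{S}\bigl((\rho_i)_{i=1}^{N-1}\bigr) \maxmerge \rho_N,
\]
and Eq.~\eqref{eq:mutation_result1} of the mutation lemma converts this into $\mathbb{S}((\rho_i)_{i=1}^{N-1}) \rightmerge \rho_N$. Since conditions 2 and 1 of Lemma~\ref{lemma:fundamental} are equivalent, this triggers the additive entropy identity
\[
S\bigl(\mathbb{S}((\rho_i)_{i=1}^{N})\bigr) = S\bigl(\mathbb{S}((\rho_i)_{i=1}^{N-1})\bigr) + S(\rho_N) - S(\tau),
\]
where $\tau$ is the common reduction on $\text{Supp}(\mathbb{S}((\rho_i)_{i=1}^{N-1})) \cap \text{Supp}(\rho_N)$. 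Plugging the induction hypothesis into the first term on the right produces the claimed telescoping sum, provided $\tau = \tau_{N-1}$.

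The only subtlety, and the main (mild) obstacle, is the support bookkeeping needed to identify $\tau$ with $\tau_{N-1}$. I would establish by a parallel induction that $\text{Supp}(\mathbb{S}((\rho_i)_{i=1}^{N-1})) = \bigcup_{i=1}^{N-1}\text{Supp}(\rho_i)$: each max-merge is consistent with its two inputs, so its support is contained in the union of the input supports, and the Petz-map form $\mathbb{S}((\rho_i)_{i=1}^{n-1}) \rightmerge \rho_n$ has support exactly that union. Combined with condition 2 of Definition~\ref{definition:snake}, which forces $\text{Supp}(\rho_i) \cap \text{Supp}(\rho_N) = \varnothing$ for $i \leq N-2$, this collapses the intersection to $\text{Supp}(\rho_{N-1}) \cap \text{Supp}(\rho_N)$ and hence $\tau = \tau_{N-1}$, closing the induction. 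Every step is a direct invocation of the two lemmas already proved, so no new technical machinery is required.
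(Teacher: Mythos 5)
Your proof is correct and is exactly the argument the paper intends: the paper proves this corollary only by the remark that it "follows straightforwardly from Lemma~\ref{lemma:fundamental} and the mutation lemma," and your induction — using Eq.~\eqref{eq:mutation_result1} to turn each max-merge into a right-merge, invoking the equivalence of conditions 3 and 1 of the fundamental lemma to get the additive entropy step, and using condition 2 of Definition~\ref{definition:snake} to identify the overlap with $\text{Supp}(\rho_{N-1})\cap\text{Supp}(\rho_N)$ — fills in precisely those details. The support bookkeeping you flag is the right (and only) point needing care, and your handling of it is sound.
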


Another very useful property is that snakes can be ``split" into two snakes. This lemma says that a snake can be viewed as a merged state of two ``shorter'' snakes.
\begin{lemma}
(Splitting lemma) Consider a snake $\mathbb{S}\left(\left(\rho_i \right)_{i=1}^N \right)$.

\begin{equation}
\begin{tikzcd}
    \mathbb{S}\left(\left(\rho_i \right)_{i=1}^N \right)
     \arrow[r, equal]\arrow[d, equal]& \mathbb{S}\left(\left(\rho_i \right)_{i=1}^n \right) \maxmerge
        \mathbb{S}\left(\left(\rho_i \right)_{i=n+1}^N \right) \arrow[d, equal]\\
  \mathbb{S}\left(\left(\rho_i \right)_{i=n+1}^N \right) \rightmerge \mathbb{S}\left(\left(\rho_i \right)_{i=1}^n \right)     \arrow[r, equal] & \mathbb{S}\left(\left(\rho_i \right)_{i=1}^n \right) \rightmerge
        \mathbb{S}\left(\left(\rho_i \right)_{i=n+1}^N \right)
        \label{eq:1140}
\end{tikzcd}
\end{equation}

for all $n$ such that $1\leq n < N$.
\label{lemma:snake_splitting}
\end{lemma}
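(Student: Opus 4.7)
The plan is to induct on the length $N$ of the snake. For the base case $N = 2$, the only split is $n = 1$ and all four entries of the diagram reduce to $\rho_1 \maxmerge \rho_2$; the $\rightmerge$ equivalences follow immediately from Lemma~\ref{lemma:fundamental} because the snake definition already provides $\rho_1 \maxmerge \rho_2 = \rho_1 \rightmerge \rho_2$. For the inductive step with a snake of length $N$, I would treat the boundary split $n = N-1$ separately: here the right sub-snake is just $\rho_N$, the top-row identity is the very definition of the snake, the right-column identity is Eq.~\eqref{eq:mutation_result1} of the mutation lemma (Lemma~\ref{lemma:snake_mutation}), and Lemma~\ref{lemma:fundamental} closes the remaining equalities.

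For the generic split $n < N-1$, I would expand $\mathbb{S}\left(\left(\rho_i\right)_{i=1}^N\right) = \mathbb{S}\left(\left(\rho_i\right)_{i=1}^{N-1}\right) \maxmerge \rho_N$ by the snake's definition and apply the inductive hypothesis to the sub-snake $\mathbb{S}\left(\left(\rho_i\right)_{i=1}^{N-1}\right)$ at index $n$, which yields $\mathbb{S}\left(\left(\rho_i\right)_{i=1}^{N-1}\right) = \mathbb{S}\left(\left(\rho_i\right)_{i=1}^n\right) \maxmerge \mathbb{S}\left(\left(\rho_i\right)_{i=n+1}^{N-1}\right)$ along with the matching $\rightmerge$ equalities. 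The heart of the argument is then the merging lemma (Lemma~\ref{lemma:merging_lemma}) applied with $\lambda = \mathbb{S}\left(\left(\rho_i\right)_{i=1}^n\right)$, $\sigma = \mathbb{S}\left(\left(\rho_i\right)_{i=n+1}^{N-1}\right)$, and $\tau = \rho_N$: the hypothesis $\lambda \maxmerge \sigma = \lambda \rightmerge \sigma$ is precisely the inductive hypothesis; $\sigma \maxmerge \tau = \sigma \rightmerge \tau$ is Eq.~\eqref{eq:mutation_result1} applied to the sub-snake $\mathbb{S}\left(\left(\rho_i\right)_{i=n+1}^N\right)$; and $\text{Supp}(\lambda) \cap \text{Supp}(\tau) = \varnothing$ follows from the snake support condition $\text{Supp}(\rho_i) \cap \text{Supp}(\rho_j) = \varnothing$ for $|i-j| > 1$, which for $n < N - 1$ rules out any overlap between $\text{Supp}(\rho_N)$ and $\text{Supp}(\rho_i)$ with $i \le n$. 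The merging lemma then delivers three of the four entries of the target diagram (identifying $\mathbb{S}\left(\left(\rho_i\right)_{i=1}^{N-1}\right) \maxmerge \rho_N$ with $\mathbb{S}\left(\left(\rho_i\right)_{i=1}^N\right)$ on one side and $\mathbb{S}\left(\left(\rho_i\right)_{i=n+1}^{N-1}\right) \maxmerge \rho_N$ with $\mathbb{S}\left(\left(\rho_i\right)_{i=n+1}^N\right)$ on the other, and invoking commutativity of $\maxmerge$), and the remaining entry $\mathbb{S}\left(\left(\rho_i\right)_{i=1}^n\right) \rightmerge \mathbb{S}\left(\left(\rho_i\right)_{i=n+1}^N\right)$ is extracted via Lemma~\ref{lemma:fundamental} since the max-merge of the two sub-snakes has now been shown to exist.

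The principal obstacle I anticipate is the support bookkeeping required to invoke the merging lemma. The boundary case $n = N - 1$ must be peeled off separately because there $\text{Supp}(\lambda)$ genuinely meets $\text{Supp}(\rho_N)$ and the merging lemma does not apply; in the generic case $n < N - 1$ the snake's own support-disjointness condition provides exactly the hypothesis that the merging lemma demands. Once that split is in place the induction is structurally clean: the merging lemma's conclusion is again of the form $\maxmerge = \rightmerge$, which is precisely the structure that the next inductive step consumes, so no separate argument is needed to propagate the hypothesis upward.
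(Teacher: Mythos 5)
Your proof is correct, but it follows a genuinely different route from the paper's. The paper gives a direct, non-inductive argument: by the mutation lemma (Lemma~\ref{lemma:snake_mutation}) the two sub-snakes are obtained from $\mathbb{S}\left(\left(\rho_i\right)_{i=1}^N\right)$ by partial trace, so the full snake is consistent with both; Corollary~\ref{corollary:snake_entropy} then shows that its entropy equals $S\bigl(\mathbb{S}((\rho_i)_{i=1}^n)\bigr)+S\bigl(\mathbb{S}((\rho_i)_{i=n+1}^N)\bigr)-S(\tau_n)$, which is exactly the first condition of Lemma~\ref{lemma:fundamental}; uniqueness of the maximum-entropy extension identifies the full snake with the max-merge, and the remaining three entries of the diagram drop out of conditions 2 and 3. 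You instead induct on $N$, peel off $\rho_N$, and feed the inductive hypothesis into the merging lemma (Lemma~\ref{lemma:merging_lemma}) with $\lambda=\mathbb{S}((\rho_i)_{i=1}^n)$, $\sigma=\mathbb{S}((\rho_i)_{i=n+1}^{N-1})$, $\tau=\rho_N$, correctly isolating the boundary split $n=N-1$ where the support-disjointness hypothesis of the merging lemma fails and the statement instead reduces to the snake's definition plus Eq.~\eqref{eq:mutation_result1}. Both arguments are sound; the paper's buys brevity by reusing the already-proved entropy decomposition and the uniqueness result of the maximum-entropy Markov extension, while yours stays entirely inside the merging-algebra calculus (definition, mutation lemma, merging lemma, fundamental lemma) and makes the support bookkeeping explicit, at the cost of an induction and a separately handled boundary case. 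One small point worth stating if you write this up: for $n=N-1$ the right sub-snake is the single-element snake $\mathbb{S}((\rho_i)_{i=N}^N)=\rho_N$, a convention the paper uses only implicitly.
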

\begin{proof}
By the mutation lemma (Lemma~\ref{lemma:snake_mutation}), 
\begin{equation}
\begin{aligned}
    \Tr_{n\shortplus 1, \ldots, N} \left(\mathbb{S}\left(\left(\rho_i \right)_{i=1}^N \right) \right)&= \mathbb{S}\left(\left(\rho_i \right)_{i=1}^n \right),\\
    \Tr_{1, \ldots, n} \left(\mathbb{S}\left(\left(\rho_i \right)_{i=1}^N \right) \right)&= \mathbb{S}\left(\left(\rho_i \right)_{i=n+1}^N \right).
\end{aligned}
\end{equation}

From Corollary~\ref{corollary:snake_entropy}, 
\begin{equation}
\begin{aligned}
    \entropy{\mathbb{S}\left(\left(\rho_i \right)_{i=1}^N \right)}&= \entropy{\mathbb{S}\left(\left(\rho_i \right)_{i=1}^n \right)} \\ &+\entropy{\mathbb{S}\left(\left(\rho_i \right)_{i=n+1}^N \right)}
    - \entropy{\tau_n},
\end{aligned}
\end{equation}
where $\tau_n$ is the reduced  density matrix of $\mathbb{S}\left(\left(\rho_i \right)_{i=1}^n \right)$ and $\mathbb{S}\left(\left(\rho_i \right)_{i=n+1}^N \right)$) on the $\text{Supp}\left(\mathbb{S}\left(\left(\rho_i \right)_{i=1}^n \right)\right) \cap \text{Supp}\left(\mathbb{S}\left(\left(\rho_i \right)_{i=n+1}^N \right)\right)$. Thus, the first condition of Lemma~\ref{lemma:fundamental} is satisfied by $\mathbb{S}\left(\left(\rho_i \right)_{i=1}^N\right)$. By the uniqueness of such a state~\cite{Kim2014}, we establish the horizontal identity on the top. The remaining identities follow from the second and the third condition of Lemma~\ref{lemma:fundamental}.
\end{proof}

\section{Snakes in action}
\label{sec:snakes_in_action}
Up to this point, we have focused on the properties of the fundamental marginals. From here on, we will shift our focus to the \emph{composite marginals} that are made out of those marginals from a sequence of merging operations.

Specifically, we will be studying the properties of the density matrices that are created by merging the following density matrices:
\begin{equation}
    \horizontalunit{(x,y)} \qquad \text{and} \qquad  \twobytwounit{(x,y)},
\end{equation}
which represent the marginals inherited from \mymarginal over the $2\times 1$ and $2\times 2$ clusters, anchored at an arbitrary point $(x,y)$. 

From these objects, we can define the \emph{level}-$1$ and \emph{level}-$2$ snakes, introduced below.
\begin{definition}
Level-$1$ snake at row $y$ is defined as
\begin{equation}
\begin{aligned}
    \horizontalbaby{(1,y)}{(N,y)} &:= \left(\left(\horizontalunit{(2,y)} \maxmerge \horizontalunit{(3,y)} \right) \cdots \right) \\
    &\maxmerge \horizontalunit{(N,y)}.
\end{aligned}
\end{equation}
\label{def:snake1}
\end{definition}
\begin{definition}
Level-$2$ snake at row $y$ is defined as
\begin{equation}
\begin{aligned}
    \adulthorizontalsnake{1}{y}{N}{y\shortplus 1} &:= \left(\left(\twobytwounit{(2,y)} \maxmerge \twobytwounit{(3,y)} \right) \cdots \right)\\ 
    &\maxmerge \twobytwounit{(N,y)}.
\end{aligned}
\end{equation}
\label{def:snake2}
\end{definition}
\noindent
Note that both the level-1 and the level-2 snakes are indeed snakes in the sense of Definition~\ref{definition:snake}, by the \hyperref[constraints:snake]{Snake-b) constraint} and the \hyperref[constraints:cell1]{Type I-a) cluster constraint} respectively, by applying these constraints to the first condition of Lemma~\ref{lemma:fundamental}.

These snakes have properties that will play a vital role in Section~\ref{sec:consistency} and~\ref{sec:entropy}. To wit, it will be useful to introduce a notion of \emph{extension maps}. Let $\rho_{\leq y}$ be a density matrix supported on a set of sites with $y$-coordinate less or equal to $y$ and let $\rho_{\geq y}$ be a density matrix supported on a set of sites $\{(x', y'): y'\leq y \}$. The \emph{upward extension} acting on the $y$-th row, denoted as $\mathcal{E}_{y,\uparrow}$, acts as follows:
\begin{equation*}
    \mathcal{E}_{y,\uparrow}(\rho_{\leq y}) = \left(\left(\rho_{\leq y} \rightmerge \twobytwounit{(2,y)}\right)\cdots \right) \rightmerge  \twobytwounit{(N,y)}.
\end{equation*}
The \emph{downward extension} acting on the $y$-th row, denoted as $\mathcal{E}_{y,\downarrow}$, acts as follows:
\begin{equation*}
    \mathcal{E}_{y,\downarrow}(\rho_{\leq y}) = \left(\left(\rho_{\geq y} \rightmerge \twobytwounit{(N,y\shortminus 1)}\right)\cdots \right) \rightmerge  \twobytwounit{(2,y\shortminus 1)}.
\end{equation*}
Note that the extension maps ``extend" the density matrix, either in an upward or a downward direction. 

Roughly speaking, these extension maps can be thought as the ``inverses" of the partial trace of a single row. That is, for a class of quantum states we consider, we will see that
\begin{equation}
\begin{aligned}
\Tr_{y+1}\left(\mathcal{E}_{y,\uparrow}(\rho_{\leq y})\right) &= \rho_{\leq y}, \\
\Tr_{y-1}\left(\mathcal{E}_{y,\downarrow}(\rho_{\geq y})\right) &= \rho_{\geq y},
\end{aligned}    
\label{eq:extension_meaning}
\end{equation}
where $\Tr_{\cdots}$ means taking a partial trace, over the clusters with $y$-coordinate specified in the subscript.
Note that we only said that Eq.~\eqref{eq:extension_meaning} holds on a class of states \emph{we consider}. This is because the composition of the partial trace and the extensions are generally not equal to an identity map.\footnote{However, if the states are classical, \emph{i.e.,} diagonal in a fixed product basis, the composition does become the identity operation.}

Let us summarize the important identities. The first equation is Eq.~\eqref{eq:snakes_in_action1}, the solid lines in particular:

\begin{equation}
\begin{tikzcd}[column sep=tiny]
& \text{Level-$2$ snake at } y 
\arrow[dl,leftharpoondown,shift right=0.25ex, "\mathcal{E}_{y,\uparrow}"'] 
\arrow[dl,rightharpoonup,shift left=0.25ex, "\Tr_{y\shortplus 1}"]
\arrow[dr,leftharpoonup,shift left=0.25ex, "\Tr_y"'] 
\arrow[dr,rightharpoondown,shift right=0.25ex, "\mathcal{E}_{y\shortplus 1, \downarrow}"]
& \\
\text{Level-$1$ snake at } y
\arrow[rr,leftharpoondown,dashed,shift right=0.25ex, "t_y^{-1}"']
\arrow[rr,rightharpoonup,dashed,shift left=0.25ex, "t_y"]
& & \text{Level-$1$ snake at } y\shortplus 1
\end{tikzcd},
\label{eq:snakes_in_action1}
\end{equation}

where $t_y\in \mathcal{T}$ is a translation in the $y$-direction by $1$; see Section~\ref{sec:summary} for the setup.

We will also prove the following \emph{twist identity}:
\begin{equation}
    \mathcal{E}_{y,\uparrow}\left(\adulthorizontalsnake{1}{y\shortminus 1}{N}{y} \right) = \mathcal{E}_{y,\downarrow} \left(\adulthorizontalsnake{1}{y}{N}{y\shortplus 1} \right).
    \label{eq:snakes_in_action2}
\end{equation}

More formally, the map from the level-$2$ snakes to the level-$1$ snakes is encapsulated by the following proposition.
\begin{restatable}[]{proposition}{twotoone}
\begin{equation}
\begin{aligned}
&\Tr_y \left(\adulthorizontalsnake{1}{y}{N}{y\shortplus 1}\right) = \horizontalbaby{(1,y\shortplus 1)}{(N, y\shortplus 1)}, \\
&\Tr_{y+1} \left(\adulthorizontalsnake{1}{y}{N}{y\shortplus 1}\right) = \horizontalbaby{(1,y)}{(N, y)}.
\end{aligned}
\end{equation}
\label{prop:2to1}
\end{restatable}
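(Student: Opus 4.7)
The plan is to prove both identities by induction on $N$, focusing on the second (tracing over row $y+1$); the first identity is the mirror image of the second and follows by the same argument with the roles of rows $y$ and $y+1$ swapped, a substitution justified by the translational invariance of the fundamental marginals. For the base case $N=2$, the level-2 snake reduces to the single cluster $\twobytwounit{(2,y)}$, whose reduction to row $y$ is precisely $\horizontalunit{(2,y)}$, matching the trivial level-1 snake of two sites.

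For the inductive step, assume the claim for length-$k$ snakes. By the mutation lemma (Lemma~\ref{lemma:snake_mutation}), the length-$(k\shortplus 1)$ level-2 snake can be rewritten as $\sigma \rightmerge \lambda$, where $\sigma$ is the length-$k$ level-2 snake and $\lambda := \twobytwounit{(k\shortplus 1,y)}$. The inductive step then reduces to the \emph{commutation identity}
\begin{equation}
\Tr_{y+1}(\sigma \rightmerge \lambda) = \Tr_{y+1}(\sigma) \rightmerge \Tr_{y+1}(\lambda). \label{eq:plan_commute}
\end{equation}
Once Eq.~\eqref{eq:plan_commute} is available, the inductive hypothesis identifies $\Tr_{y+1}(\sigma)$ with the length-$k$ level-1 snake, $\Tr_{y+1}(\lambda)$ is the 2×1 cluster $\horizontalunit{(k\shortplus 1,y)}$, and the \hyperref[constraints:snake]{Snake-b) constraint} together with Lemma~\ref{lemma:fundamental} identifies the right-hand side with the length-$(k\shortplus 1)$ level-1 snake, closing the induction.

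To verify Eq.~\eqref{eq:plan_commute}, I would split the trace over row $y+1$ into three disjoint pieces: over the cells of row $y+1$ in $\text{Supp}(\sigma) \setminus B$, in $B := \text{Supp}(\sigma) \cap \text{Supp}(\lambda)$, and in $\text{Supp}(\lambda) \setminus B$. The first piece acts outside $\text{Supp}(\lambda)$ and commutes trivially with the Petz map $\Phi_\lambda$, so it passes through and acts only on $\sigma$. For the remaining two pieces the key structural input is the \hyperref[constraints:snake]{Snake-a) constraint}, which implies $I(C_y : B_{y+1} \mid B_y)_\lambda = 0$, where $B_y,B_{y+1}$ are the two cells of $B$ and $C_y,C_{y+1}$ the two cells of $\text{Supp}(\lambda)\setminus B$. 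This conditional independence, combined with the commutation lemma (Lemma~\ref{lemma:commutation}), should allow $\Phi_\lambda$ to be factored as a ``horizontal'' Petz map $\Phi_{\tilde\lambda}$ from $B_y$ to $\{B_y,C_y\}$ followed by a ``vertical'' completion that is annihilated by the trace over $\{B_{y+1},C_{y+1}\}$, producing exactly the right-hand side of Eq.~\eqref{eq:plan_commute}.

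The main obstacle is making this Petz-map factorization precise. Because Petz maps associated with different L-shapes inside a 2×2 cluster do not in general commute, tracking their composition requires careful algebra and likely uses both the \hyperref[constraints:snake]{Snake-a)} and \hyperref[constraints:snake]{Snake-c) constraints} simultaneously. A fallback strategy is to invoke the fundamental lemma to characterize the length-$(k\shortplus 1)$ level-1 snake as the unique maximum-entropy state consistent with $\Tr_{y+1}(\sigma)$ and $\Tr_{y+1}(\lambda)$, and to show directly that $\Tr_{y+1}(\sigma\rightmerge\lambda)$ attains this maximum entropy; however, computing the entropy of the row-$y$ reduction of the level-2 snake in closed form appears to require input essentially equivalent to the Petz-map factorization above, so the principal difficulty cannot obviously be circumvented.
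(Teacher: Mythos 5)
Your skeleton --- induct along the snake, reduce the whole proposition to one local statement about how the partial trace over a row interacts with the right-merge by the last $2\times 2$ cluster, and settle that local statement using the Markov structure of the $2\times 2$ marginal --- matches the paper's strategy. The gap is that your local identity $\Tr_{y+1}(\sigma\rightmerge\lambda)=\Tr_{y+1}(\sigma)\rightmerge\Tr_{y+1}(\lambda)$ \emph{is} the hard content of the proposition, and the mechanism you sketch for it does not work as stated. The claim that the ``vertical completion is annihilated by the trace over $\{B_{y+1},C_{y+1}\}$'' is false for Petz maps in general: $\Tr_Y\circ\Phi_{X\to XY}\neq \mathrm{id}_X$ on arbitrary inputs (it holds on the reference state, but your input is the merged object built from $\sigma$, not $\lambda$ itself); it becomes true only once the relevant conditional independence of the \emph{merged} state is already known, which is essentially the conclusion you are trying to reach. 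Relatedly, $\Phi_\lambda$ conditions on \emph{both} cells of the overlap column while $\Phi_{\Tr_{y+1}\lambda}$ conditions on only one, and showing that the second conditioning cell can be discarded without changing the traced output is exactly what Lemmas~\ref{lemma:prop2to1_rectangle1} and~\ref{lemma:prop2to1_initial1} (and their rotated versions) establish; those in turn need the new entropy identities of Lemmas~\ref{lemma:prop2to1_entlemma1}--\ref{lemma:prop2to1_entlemma2}, which do not follow from the \hyperref[constraints:snake]{Snake constraints} alone but require an extra SSA-monotonicity step. Your fallback (characterize the target via the maximum-entropy condition of Lemma~\ref{lemma:fundamental} and match entropies) is in fact how the paper closes each step, but as you note you cannot compute the required entropy without the same missing input, so the proposal names the obstacle without overcoming it.

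Two further points. The paper does not trace a whole row and then commute the trace through all the merges at once: it interleaves, tracing one bottom-row cell per step while retaining the leading bottom-row cell as part of the conditioning system for the next merge --- this is why an L-shaped buffer appears in the invariant $\mathfrak{S}_i$ --- and your all-at-once formulation forces you to prove a strictly stronger commutation statement in one shot. Also, deducing the first identity from the second is not ``justified by translational invariance'': a translation maps row $y$ to row $y+1$ but does not exchange which row of a given level-$2$ snake is traced. The correct symmetry is the $\pi$-rotation, which swaps the Type~I and Type~II cluster constraints and the Snake-a)/c) constraints and reverses the order of the right-merges; this is precisely why the paper carries both families of descendant constraints.
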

\noindent
The map from the level-$1$ snakes to the level-$2$ snakes can be summarized as follows.
\begin{restatable}[]{proposition}{onetotwo}
\begin{equation}
    \begin{aligned}
    \adulthorizontalsnake{1}{y}{N}{y\shortplus 1} &= \mathcal{E}_{y, \uparrow} \left(\horizontalbaby{(1,y)}{(N,y)} \right)  \\
    &= \mathcal{E}_{y\shortplus 1, \downarrow} \left(\horizontalbaby{(1,y\shortplus 1)}{(N,y\shortplus 1)} \right).
    \end{aligned}
\end{equation}
\label{prop:onetotwo}
\end{restatable}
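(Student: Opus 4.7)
My plan is to prove both identities by induction on $N$, using the mutation lemma (Lemma~\ref{lemma:snake_mutation}) as the central tool. Throughout, I abbreviate $h_j := \horizontalunit{(j,y)}$ and $u_j := \twobytwounit{(j,y)}$. For the base case $N=2$, the level-$1$ snake reduces to $h_2$ and the level-$2$ snake to $u_2$; by $\mathcal{C}_L$, $h_2$ is the reduction of $u_2$ to row $y$, so $\mathcal{E}_{y,\uparrow}(h_2) = h_2 \rightmerge u_2 = \Phi_{u_2}(h_2) = u_2$, using the identity $\rho_B \rightmerge \rho = \rho$ that holds whenever $\rho_B$ is the reduction of $\rho$ to $B$ (a direct consequence of the recovery property of the Petz map). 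The $N=2$ case of the downward identity is completely analogous.

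For the inductive step, the mutation lemma gives $\adulthorizontalsnake{1}{y}{N\shortplus 1}{y\shortplus 1} = \adulthorizontalsnake{1}{y}{N}{y\shortplus 1} \rightmerge u_{N+1}$ for the level-$2$ snake and $\horizontalbaby{(1,y)}{(N\shortplus 1,y)} = \horizontalbaby{(1,y)}{(N,y)} \rightmerge h_{N+1}$ for the level-$1$ snake. Invoking the inductive hypothesis, the first identity at length $N+1$ reduces to
\begin{equation*}
\mathcal{E}_{y,\uparrow}\bigl(\horizontalbaby{(1,y)}{(N,y)}\bigr) \rightmerge u_{N+1} \;=\; \mathcal{E}_{y,\uparrow}\bigl(\horizontalbaby{(1,y)}{(N,y)} \rightmerge h_{N+1}\bigr) \rightmerge u_{N+1}.
\end{equation*}
Expanding $\mathcal{E}_{y,\uparrow}$ as a sequence of right-merges and applying the commutation lemma (Lemma~\ref{lemma:commutation}), the extra $\rightmerge h_{N+1}$ on the right-hand side can be pushed past the right-merges with $u_2, \ldots, u_{N-1}$, whose supports are disjoint from $\text{Supp}(h_{N+1}) = \{(N,y), (N+1,y)\}$. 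This brings $h_{N+1}$ adjacent to the terminal right-merges with $u_N$ and $u_{N+1}$.

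The hard part is absorbing this extra $h_{N+1}$: its support overlaps both $u_N$ and $u_{N+1}$, so the commutation lemma no longer applies. To close the induction, my plan is to argue by uniqueness of the maximum-entropy state. Concretely, I will verify that both $\adulthorizontalsnake{1}{y}{N\shortplus 1}{y\shortplus 1}$ and $\mathcal{E}_{y,\uparrow}\bigl(\horizontalbaby{(1,y)}{(N\shortplus 1,y)}\bigr)$ are consistent with the common collection $\{\horizontalbaby{(1,y)}{(N\shortplus 1,y)}, u_2, \ldots, u_{N+1}\}$ and have the same entropy. Consistency with the level-$1$ snake on row $y$ follows from Proposition~\ref{prop:2to1} on the snake side and from iterated application of the Petz recovery property on the extension side; consistency with each $u_j$ is immediate on the snake side by construction, and on the extension side it is obtained by invoking the Type I cluster constraints (Eq.~\eqref{constraints:cell1}) together with the fundamental lemma (Lemma~\ref{lemma:fundamental}) to upgrade each right-merge in the extension to a max-merge. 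The entropies then match via Corollary~\ref{corollary:snake_entropy} combined with the Type I identities, and the uniqueness clause in Lemma~\ref{lemma:fundamental} forces the two states to agree. The second identity follows by running the analogous induction from the opposite row: under the vertical reflection $y \leftrightarrow y+1$ the roles of the Type I and Type II cluster constraints (Eq.~\eqref{constraints:cell2}) are interchanged, but since both are assumed the argument proceeds in parallel.
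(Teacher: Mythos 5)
Your overall architecture (induction on $N$, peel off the last marginal with the mutation lemma, commute the stray piece past the disjoint merges, then close by a consistency-plus-entropy-plus-uniqueness argument) is reasonable, and you correctly identify where the difficulty lives. But the step you offer to resolve it does not work as stated, and it is exactly the step that constitutes the substance of the proposition. You claim that consistency of $\mathcal{E}_{y,\uparrow}\bigl(\horizontalbaby{(1,y)}{(N,y)}\bigr)$ with each $\twobytwounit{(j,y)}$ follows by "invoking the Type I cluster constraints together with the fundamental lemma to upgrade each right-merge in the extension to a max-merge." To upgrade even the first right-merge $\horizontalbaby{(1,y)}{(N,y)} \rightmerge \twobytwounit{(2,y)}$ to a max-merge via Lemma~\ref{lemma:fundamental}, one needs a conditional independence statement in which the conditioned subsystem is the \emph{entire remainder of row $y$} (columns $3,\ldots,N$), conditioned on a $2\times 1$ overlap. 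The Type I constraints in Eq.~\eqref{constraints:cell1} only assert vanishing conditional mutual information for subsystems contained in a single $3\times 3$ cluster; they do not, by themselves or via monotonicity, yield the row-length statement you need. This is precisely the obstruction the paper flags in Section~\ref{sec:summary} ("one cannot directly use the conditions imposed on a single patch of $3\times 3$ clusters to derive the requisite conditional independence condition").

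The paper's proof fills this gap by introducing the composite marginals (\compositeone--\compositefour, Definition~\ref{def:composite1}) and proving the merged identities of Eqs.~\eqref{eq:identities_composite1}--\eqref{eq:identities_composite4}, which are themselves obtained by a bootstrapped application of the merging lemma to new entropy identities (Lemmas~\ref{lemma:prop1to2_ent1}, \ref{lemma:prop1to2_ent2}, Corollary~\ref{corollary:prop1to2_prelim1}, and the basic identities of Appendix~\ref{sec:onetotwo_identities_basic}). These composites are exactly the device for propagating conditional independence from one $3\times3$ patch to the next so that the overlapping merges at columns $N$ and $N+1$ (where, as you note, the commutation lemma fails) can be re-associated. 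Without an argument of this kind, neither the consistency of the extension with the $2\times2$ marginals nor the claim that its entropy attains the maximum can be established, so the uniqueness clause has nothing to bite on. A secondary caution: even granting consistency, "the entropies match" is not enough for uniqueness — you must show the common value is the \emph{maximum} entropy over the stated feasible set; this does follow for the level-$2$ snake side from Corollary~\ref{corollary:snake_entropy} and Lemma~\ref{lemma:fundamental}, but it should be said explicitly.
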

Lastly, we have the following proposition.
\begin{restatable}[]{proposition}{twist}
\begin{equation}
    \mathcal{E}_{y,\uparrow}\left(\adulthorizontalsnake{1}{y\shortminus 1}{N}{y} \right)
    =
    \mathcal{E}_{y, \downarrow}\left(\adulthorizontalsnake{1}{y}{N}{y\shortplus 1} \right).\label{eq:twist_main_result}
\end{equation}
\label{prop:twist}
\end{restatable}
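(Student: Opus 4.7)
The strategy is to show that both sides of Eq.~\eqref{eq:twist_main_result} coincide with the common max-merge $L^{(2)}_{y-1} \maxmerge L^{(2)}_y$, where I write $L^{(2)}_y$ (respectively $L^{(1)}_y$) for the level-2 (respectively level-1) snake at row $y$; commutativity of $\maxmerge$ then yields the identity. As a preparatory observation, Proposition~\ref{prop:2to1} gives $\Tr_{y\shortminus 1}\, L^{(2)}_{y-1} = L^{(1)}_y = \Tr_{y\shortplus 1}\, L^{(2)}_y$, so $L^{(2)}_{y-1} \consistent L^{(2)}_y$ on their overlap, which is row $y$.

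The core step is to establish the two marginal identities
\[
\Tr_{y\shortplus 1}\, \mathcal{E}_{y,\uparrow}(L^{(2)}_{y-1}) = L^{(2)}_{y-1} \qquad \text{and} \qquad \Tr_{y\shortminus 1}\, \mathcal{E}_{y,\uparrow}(L^{(2)}_{y-1}) = L^{(2)}_y.
\]
The first is proved by induction along the right-merge chain that defines $\mathcal{E}_{y,\uparrow}(L^{(2)}_{y-1})$: at each step, condition~4 of Lemma~\ref{lemma:fundamental} can be verified from the Type~I cluster constraints at the merging frontier together with the conditional independences propagated from previous applications of Lemma~\ref{lemma:merging_lemma}, so each right-merge coincides with the max-merge and, by condition~5, preserves the previous state as a marginal. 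The second identity follows by commuting $\Tr_{y\shortminus 1}$ through each Petz map in $\mathcal{E}_{y,\uparrow}$—these maps act only within row $y$ and are therefore disjoint in support from row $y\shortminus 1$—which reduces the expression to $\mathcal{E}_{y,\uparrow}(\Tr_{y\shortminus 1}\, L^{(2)}_{y-1})$, equal by Proposition~\ref{prop:2to1} to $\mathcal{E}_{y,\uparrow}(L^{(1)}_y)$, which is $L^{(2)}_y$ by Proposition~\ref{prop:onetotwo}.

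With both marginals verified, condition~6 of Lemma~\ref{lemma:fundamental}, applied with $\Phi$ equal to the composition of Petz maps appearing in $\mathcal{E}_{y,\uparrow}$ viewed as a channel from row $y$ to rows $y$ and $y\shortplus 1$, identifies $\mathcal{E}_{y,\uparrow}(L^{(2)}_{y-1})$ with $L^{(2)}_{y-1} \maxmerge L^{(2)}_y$. A symmetric argument with $\mathcal{E}_{y,\downarrow}$, using the downward analogues of Propositions~\ref{prop:2to1} and~\ref{prop:onetotwo}, gives $\mathcal{E}_{y,\downarrow}(L^{(2)}_y) = L^{(2)}_y \maxmerge L^{(2)}_{y-1}$; commutativity of $\maxmerge$ then delivers Eq.~\eqref{eq:twist_main_result}.

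The main obstacle will be the inductive validation of the right-merge chain in $\mathcal{E}_{y,\uparrow}(L^{(2)}_{y-1})$. After merging in the first $k$ of the $2\times 2$ marginals anchored at $(2,y), \ldots, (k,y)$, the running state has a non-rectangular ``staircase'' support, so establishing the conditional independence required to merge in the $(k\shortplus 1, y)$ marginal calls for combining the conditional independences propagated by earlier merges with fresh ones drawn from the Type~I cluster constraints on a $3\times 3$ patch at the frontier. I expect this careful bookkeeping—considerably eased by the merging-algebra formalism of Section~\ref{sec:merging_algebra}—to be the technical content deferred to Appendix~\ref{appendix:twist}.
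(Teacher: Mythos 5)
Your endgame is sound: if one knows that $\mathcal{E}_{y,\uparrow}$ applied to the lower level-$2$ snake and $\mathcal{E}_{y,\downarrow}$ applied to the upper one both reproduce the two snakes as marginals, then conditions 6/7 of Lemma~\ref{lemma:fundamental} identify both with the (unique) max-merge, and commutativity of $\maxmerge$ finishes the proof. Your second marginal identity is also fine, since the Petz maps in $\mathcal{E}_{y,\uparrow}$ have supports disjoint from row $y\shortminus 1$. The gap is the first marginal identity, $\Tr_{y\shortplus 1}\,\mathcal{E}_{y,\uparrow}(L^{(2)}_{y-1})=L^{(2)}_{y-1}$, which you propose to prove by an induction in which ``condition 4 of Lemma~\ref{lemma:fundamental} can be verified from the Type I cluster constraints at the merging frontier together with the conditional independences propagated from previous applications of Lemma~\ref{lemma:merging_lemma}.'' That single sentence is where the entire difficulty of the proposition lives, and it does not go through as stated. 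At step $i$ the conditioning subsystem is a $2\times 1$ block of row $y$, while the conditioned subsystems sprawl over a three-row staircase covering rows $y\shortminus 1$, $y$, and the partially built row $y\shortplus 1$; this conditional independence is not supported inside any single $3\times 3$ patch, so it is neither a cluster constraint nor something the merging lemma hands you for free (the merging lemma also requires the outer supports to be disjoint, which fails here because each new $2\times 2$ cluster overlaps row $y$ of the already-built state). This is exactly the obstruction the paper flags in Section~\ref{sec:summary} as ``far from trivial.''

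The paper's actual route is logically the reverse of yours: it proves the twist identity first, by expanding $\mathcal{E}_{y,\uparrow}(L^{(2)}_{y-1})$ as a left-to-right chain of right-merges and $\mathcal{E}_{y,\downarrow}(L^{(2)}_{y})$ as a right-to-left chain, and showing the two chains can be deformed into one another via a ``transfer'' identity (Lemma~\ref{lemma:twist_transfer}). That identity in turn rests on new composite marginals (Definition~\ref{def:twist_transfer}) and on commutation relations for right-merges with \emph{overlapping} supports (Propositions~\ref{prop:twist_key1} and~\ref{prop:twist_key2}), which do not follow from the commutation lemma and require fresh entropy identities on $4\times 3$ regions (Lemmas~\ref{lemma:twist_commutation1}--\ref{lemma:twist_commutation_entropy}). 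Only afterwards does the paper obtain your max-merge statement, as Theorem~\ref{thm:forgotten}, \emph{using} the twist identity to establish precisely the marginal identity you left unproven. So your proposal is not circular, but to complete it you would have to supply, by other means, essentially all of the machinery of Appendix~\ref{appendix:twist}; calling it careful bookkeeping understates what is missing.
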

\noindent
The proofs of these statements are quite technical. We will explain them in great detail in Appendix~\ref{appendix:2to1},~\ref{appendix:1to2}, and~\ref{appendix:twist}.

These propositions are important because, as a whole, they imply that one can form a snake from a sequence of level-$2$ snakes, as we explain in Section~\ref{sec:consistency}.

\section{Main results}
\label{sec:main_results}
From Proposition~\ref{prop:2to1},~\ref{prop:onetotwo}, and~\ref{prop:twist}, our main claims follow immediately, as we explain below.

\subsection{Consistency}
\label{sec:consistency}
Suppose \miniblockfull and \mymarginal obey $\mathcal{C}_L$ and $\mathcal{C}_{M,P}$. Then, we can show that \miniblockfull is consistent with a translationally invariant state on an infinite lattice.\footnote{We expect the same to be true for the marginal on the $3\times 3$ cluster as well. However, we do not attempt to prove this in this paper.} Our approach is to build up a global state gradually, from the $2\times 2$ clusters to a level-$2$ snake, and then a snake made out of the level-$2$ snakes. Schematically, we have
\begin{equation}
\begin{aligned}
     \twobytwounit{(x,y)}  &\longrightarrow 
    \adulthorizontalsnake{1}{y}{N}{y\shortplus 1} \\
    &\longrightarrow
    \thicksnake{1}{1}{N}{M}.
\end{aligned}
\end{equation}

An important observation is that two level-$2$ snakes that overlap with each other can be merged together.
\begin{theorem}
\label{thm:forgotten}
\begin{equation}
\begin{aligned}
    \adulthorizontalsnake{1}{y\shortminus 1}{N}{y} \maxmerge 
    \adulthorizontalsnake{1}{y}{N}{y\shortplus 1}\\
    =
    \adulthorizontalsnake{1}{y\shortminus 1}{N}{y} \rightmerge 
    \adulthorizontalsnake{1}{y}{N}{y\shortplus 1}.
\end{aligned}
\end{equation}
\end{theorem}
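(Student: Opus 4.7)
The plan is to verify condition~6 of the Fundamental Lemma (Lemma~\ref{lemma:fundamental}) for $\lambda := \adulthorizontalsnake{1}{y\shortminus 1}{N}{y}$ and $\sigma := \adulthorizontalsnake{1}{y}{N}{y\shortplus 1}$. Their supports overlap exactly on row $y$, so I want to exhibit a channel $\Phi$ on row $y$ such that $\Phi(\lambda)$ is consistent with both $\lambda$ and $\sigma$; the equivalence of conditions~3 and~6 in that lemma will then yield $\lambda \maxmerge \sigma = \lambda \rightmerge \sigma$, which is the theorem.

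The witness is built in two moves. First, Proposition~\ref{prop:2to1} gives $\lambda \consistent \sigma$ on the overlap, since both snakes trace down to the same level-$1$ snake at row $y$:
\begin{equation*}
\Tr_{y-1}(\lambda) \;=\; \horizontalbaby{(1,y)}{(N,y)} \;=\; \Tr_{y+1}(\sigma).
\end{equation*}
Second, the twist identity (Proposition~\ref{prop:twist}) supplies a common three-row extension
\begin{equation*}
\tau \;:=\; \mathcal{E}_{y,\uparrow}(\lambda) \;=\; \mathcal{E}_{y,\downarrow}(\sigma).
\end{equation*}
By construction $\mathcal{E}_{y,\uparrow}$ is a finite composition of Petz maps, each associated with a $2\times 2$ cluster straddling rows $y$ and $y+1$ and each acting on a two-cluster slice of row $y$; together these assemble into a channel $\Phi$ on row $y$ (extended trivially to the remainder of $\lambda$'s support), so $\tau = \Phi(\lambda)$ in exactly the sense demanded by condition~6.

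What remains is to show $\Tr_{y+1}(\tau) = \lambda$ and $\Tr_{y-1}(\tau) = \sigma$. The first is the statement that the successive right-merges defining $\mathcal{E}_{y,\uparrow}(\lambda)$ each preserve the state they are applied to — equivalently, by the Fundamental Lemma, that each intermediate max-merge coincides with its right-merge. The second then follows by symmetry from the twist identity applied to $\sigma$ via $\mathcal{E}_{y,\downarrow}$. The hard part of the plan is precisely this step-by-step consistency: each right-merge requires an internal Markov condition of the form $I(\text{new cluster} : \text{far part} \mid \text{two overlap clusters})_{\tau_{\mathrm{int}}} = 0$ for the relevant intermediate state $\tau_{\mathrm{int}}$, and verifying these forces a careful orchestration of the commutation lemma, the merging lemma, and the Type~I/II cluster constraints of Table~\ref{table:constraints}. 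This is exactly the technical content that Section~\ref{sec:snakes_in_action} and its supporting appendices are designed to deliver, and once in hand it closes the loop of the Fundamental Lemma and finishes the proof.
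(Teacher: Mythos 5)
Your strategy is the paper's: take $\lambda$ and $\sigma$ to be the two overlapping level-$2$ snakes, exhibit $\Phi=\mathcal{E}_{y,\uparrow}$ as the witness for condition~6 of Lemma~\ref{lemma:fundamental}, and conclude via condition~3; the twist identity and Proposition~\ref{prop:2to1} are exactly the right ingredients. The one place you make the argument harder than it needs to be is the final pair of consistency checks: no fresh step-by-step Markov verification is required, because Propositions~\ref{prop:2to1}, \ref{prop:onetotwo}, and \ref{prop:twist} already deliver both traces in one line each, provided you pair each partial trace with the representation of $\tau$ whose extension map does not touch the traced-out row. Concretely, $\mathcal{E}_{y,\downarrow}$ never acts on row $y+1$, so $\Tr_{y+1}(\tau)=\Tr_{y+1}\bigl(\mathcal{E}_{y,\downarrow}(\sigma)\bigr)=\mathcal{E}_{y,\downarrow}\bigl(\Tr_{y+1}(\sigma)\bigr)$, which equals $\mathcal{E}_{y,\downarrow}$ of the level-$1$ snake at row $y$ by Proposition~\ref{prop:2to1} and hence equals $\lambda$ by Proposition~\ref{prop:onetotwo}; symmetrically, $\mathcal{E}_{y,\uparrow}$ never acts on row $y-1$, so $\Tr_{y-1}(\tau)=\mathcal{E}_{y,\uparrow}\bigl(\Tr_{y-1}(\lambda)\bigr)=\sigma$ by the same two propositions. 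Your pairing is inverted --- you evaluate $\Tr_{y+1}$ on the $\mathcal{E}_{y,\uparrow}$ form and $\Tr_{y-1}$ on the $\mathcal{E}_{y,\downarrow}$ form --- and that is precisely what forces you back into verifying that each individual right-merge preserves the state it acts on, i.e.\ into redoing the appendix-level work. Swapping the roles, and invoking Proposition~\ref{prop:onetotwo} (which your outline omits entirely), closes the proof with no machinery beyond the three propositions you already have in hand.
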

\begin{proof}
\begin{equation}
\begin{aligned}
&\Tr_{(y+1)\text{'th row}}\left(\mathcal{E}_{y,\uparrow}\left(
\adulthorizontalsnake{1}{y\shortminus 1}{N}{y}
\right)\right)\\
&= \Tr_{(y+1)\text{'th row}}\left(\mathcal{E}_{y,\downarrow}\left(
\adulthorizontalsnake{1}{y}{N}{y\shortplus 1}\right)\right)\\
&= \mathcal{E}_{y,\downarrow}\left(
\horizontalbaby{(1,y)}{(N,y)}\right) \\
&= \adulthorizontalsnake{1}{y\shortminus 1}{N}{y},
\end{aligned}
\end{equation}
using Proposition~\ref{prop:twist}, Proposition~\ref{prop:2to1} (second identity), and Proposition~\ref{prop:onetotwo}.

By the first identity in Proposition~\ref{prop:2to1}, 
\begin{equation}
\begin{aligned}
  \Tr_{(y-1)\text{'th row}}\left(\mathcal{E}_{y,\uparrow}\left(
\adulthorizontalsnake{1}{y\shortminus 1}{N}{y}
\right)\right) \\
= \adulthorizontalsnake{1}{y}{N}{y\shortplus 1}.
\end{aligned}
\end{equation}

Thus, the sixth condition of Lemma~\ref{lemma:fundamental} is satisfied, with the following choice of $\lambda, \sigma,$ and $\Phi$:
\begin{equation}
\begin{aligned}
    \lambda &= \adulthorizontalsnake{1}{y\shortminus 1}{N}{y}, \\
    \sigma &= \adulthorizontalsnake{1}{y}{N}{y\shortplus 1},  \text{ and }\\
    \Phi &= \mathcal{E}_{y,\uparrow}.
\end{aligned}
\label{eq:lambsigphi}
\end{equation}
The third condition of Lemma~\ref{lemma:fundamental} implies the main claim.
\end{proof}

It follows that we can define the following snake:
\begin{equation}
    \mathbb{S}\left( \left(\adulthorizontalsnake{1}{y}{N}{y\shortplus 1} \right)_{y=1}^{M-1} \right). \label{eq:global_state}
\end{equation}
Importantly, this object is not $\nil$. (See Definition~\ref{definition:snake}.) Therefore, this snake must be consistent with 
\begin{equation}
\adulthorizontalsnake{1}{y}{N}{y\shortplus 1}
\end{equation}
for all $0<y\leq N-1$. It then follows
\begin{equation}
    \mathbb{S}\left( \left(\adulthorizontalsnake{1}{y}{N}{y\shortplus 1} \right)_{y=1}^{M-1} \right) \consistent \twobytwounit{(x,y)}\label{eq:main1_final}
\end{equation}
for all $x,y \in \mathbb{Z}^+$; the consistencies of the $2\times 2$ clusters on the boundary invoke $\mathcal{C}_L$.

\begin{restatable}[]{theorem}{theoremone}
For every \miniblockfull and \mymarginal that satisfy $\mathcal{C}_L$ and $\mathcal{C}_{M,P}$, for every integer $N, M\geq 2$, there is a density matrix over $N\times M$ cluster(see Fig.~\ref{fig:lattice}) that is consistent with \miniblockfull over every $2\times 2$ cluster.
\label{thm:main1}
\end{restatable}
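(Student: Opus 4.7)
The plan is to assemble the global density matrix on an $N\times M$ cluster in three stages that mirror the schematic $\twobytwounit{(x,y)} \to \adulthorizontalsnake{1}{y}{N}{y\shortplus 1} \to \thicksnake{1}{1}{N}{M}$ already advertised in the paper. First, fix $N,M\geq 2$. For each $y\in\{1,\dots,M-1\}$, the \hyperref[constraints:cell1]{Type I-a) cluster constraint} together with the fundamental lemma (Lemma~\ref{lemma:fundamental}) shows that the sequence of $2\times 2$ marginals $\twobytwounit{(x,y)}$ for $x=1,\dots,N$ satisfies the hypotheses of Definition~\ref{definition:snake}, so the level-$2$ snake $\adulthorizontalsnake{1}{y}{N}{y\shortplus 1}$ of Definition~\ref{def:snake2} is a bona fide snake and, in particular, a genuine density matrix consistent with each of its constituent $2\times 2$ marginals.

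Next, I would promote the collection of level-$2$ snakes $\bigl(\adulthorizontalsnake{1}{y}{N}{y\shortplus 1}\bigr)_{y=1}^{M-1}$ to a snake in the sense of Definition~\ref{definition:snake}. The support condition (clause 2) is immediate: consecutive level-$2$ snakes share exactly one row of $2\times 1$ clusters, while non-consecutive ones have disjoint support. The conditional independence condition (clause 1), $\adulthorizontalsnake{1}{y\shortminus 1}{N}{y} \maxmerge \adulthorizontalsnake{1}{y}{N}{y\shortplus 1} = \adulthorizontalsnake{1}{y\shortminus 1}{N}{y} \rightmerge \adulthorizontalsnake{1}{y}{N}{y\shortplus 1}$, is precisely the content of Theorem~\ref{thm:forgotten}. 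With both clauses verified, the snake
\[
\rho_{N\times M} \;:=\; \mathbb{S}\!\left( \Bigl(\adulthorizontalsnake{1}{y}{N}{y\shortplus 1}\Bigr)_{y=1}^{M-1} \right)
\]
is a well-defined density matrix on the $N\times M$ cluster.

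Finally, I would verify that $\rho_{N\times M}$ is consistent with $\twobytwounit{(x,y)}$ for every $2\times 2$ cluster inside the $N\times M$ region. By the remark following Definition~\ref{definition:snake}, $\rho_{N\times M}$ is consistent with each of its constituent level-$2$ snakes, and each level-$2$ snake is in turn consistent with its constituent $2\times 2$ marginals by the first-stage construction. Consistency is transitive on shared supports, so $\rho_{N\times M}\consistent \twobytwounit{(x,y)}$ for all $2\times 2$ clusters in the interior. The $2\times 2$ clusters that sit across two adjacent rows but are \emph{not} among the constituents (i.e.\ those whose $y$-location straddles the split between two level-$2$ snakes) reduce to the same marginal by $\mathcal{C}_L$, so they are covered as well. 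This is essentially the argument already sketched at Eq.~\eqref{eq:main1_final}.

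The only genuinely nontrivial ingredient is Theorem~\ref{thm:forgotten}, which in turn relies on Propositions~\ref{prop:2to1}, \ref{prop:onetotwo}, and~\ref{prop:twist}; everything else in the proof of Theorem~\ref{thm:main1} is bookkeeping. Thus I expect the main (and essentially only) obstacle to be confirming that the Propositions really do combine as in Theorem~\ref{thm:forgotten} to supply the sixth condition of Lemma~\ref{lemma:fundamental}. Once that is granted, the proof is a short assembly: pass from $2\times 2$ marginals to level-$2$ snakes via the Type I-a) constraint, chain the level-$2$ snakes via Theorem~\ref{thm:forgotten}, invoke translational invariance $\mathcal{C}_L$ to sweep in the $2\times 2$ clusters sitting between two snake layers, and conclude.
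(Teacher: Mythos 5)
Your proposal is correct and follows essentially the same route as the paper: build level-$2$ snakes from the $2\times 2$ marginals via the Type I-a) constraint, merge consecutive level-$2$ snakes using Theorem~\ref{thm:forgotten} to form the snake of Eq.~\eqref{eq:global_state}, and read off consistency with every $2\times 2$ cluster (invoking $\mathcal{C}_L$ for the remaining translates). You also correctly isolate Theorem~\ref{thm:forgotten} — and through it Propositions~\ref{prop:2to1}, \ref{prop:onetotwo}, and~\ref{prop:twist} — as the only genuinely nontrivial input.
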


\begin{remark}
The proof of Theorem~\ref{thm:main1} is constructive, as one can see in Eq.~\eqref{eq:main1_final}.
\end{remark}

\subsection{Entropy}
\label{sec:entropy}
In this Section, we derive an \emph{exact} formula for the maximum entropy consistent with $\centeredTikZmini{\squaresempty{2}{2}{0}{0};}$, subject to the constraints $\mathcal{C}_L$ and $\mathcal{C}_{M,P}$. We will proceed in two steps. First, we will compute the entropy of the snakes made out of the level-$2$ snakes. Using Corollary~\ref{corollary:snake_entropy}, this calculation becomes straightforward. Second, we will prove that this entropy is in fact the \emph{maximum} entropy consistent with $\centeredTikZmini{\squaresempty{2}{2}{0}{0};}$. This way, we derive an expression for the maximum entropy.
\begin{theorem}
\begin{equation}
\begin{aligned}
    &\entropy{\mathbb{S}\left( \left(\adulthorizontalsnake{1}{y}{N}{y\shortplus 1} \right)_{y=1}^{M-1} \right)} \\
    &=(M-1)(N-1)\entropy{\centeredTikZ{\squaresempty{2}{2}{0}{0};}} - (M-1)(N-2)\entropy{\centeredTikZ{\squaresempty{1}{2}{0}{0};}}\\
    &-(M-2)(N-1)\entropy{\centeredTikZ{\squaresempty{2}{1}{0}{0};}}
    +(M-2)(N-2)\entropy{\centeredTikZ{\squaresempty{1}{1}{0}{0};}}
\end{aligned}
\end{equation}
\label{thm:entropy_maxmerge}
\end{theorem}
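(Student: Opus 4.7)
The strategy is to apply Corollary~\ref{corollary:snake_entropy} at two different scales. First I would view the object of interest as itself a snake whose constituents are level-$2$ snakes, and then I would view each individual level-$2$ (respectively level-$1$) snake as a snake built from fundamental $2\times 2$ (respectively $2\times 1$) marginals. Each of these two applications of Corollary~\ref{corollary:snake_entropy} expresses the corresponding entropy as a signed sum of constituent entropies and overlap entropies, and assembling the two decompositions yields the claimed formula.

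For the macroscopic step, I would first verify that $\bigl(\text{level-$2$ snake at row }y\bigr)_{y=1}^{M-1}$ satisfies the hypotheses of Definition~\ref{definition:snake}: condition (1) is exactly the content of Theorem~\ref{thm:forgotten}, and condition (2) follows from the geometric observation that the level-$2$ snake at row $y$ is supported only on rows $y$ and $y+1$, so supports of level-$2$ snakes whose row indices differ by more than one are disjoint. By Proposition~\ref{prop:2to1}, the reduced density matrix of consecutive level-$2$ snakes (at rows $y$ and $y+1$) on their overlap, which is row $y+1$, is precisely the level-$1$ snake at row $y+1$. Consequently Corollary~\ref{corollary:snake_entropy} yields
\begin{equation*}
S\Bigl(\mathbb{S}\bigl((\text{level-$2$ at }y)_{y=1}^{M-1}\bigr)\Bigr) \;=\; \sum_{y=1}^{M-1} S(\text{level-$2$ at }y) \;-\; \sum_{y=1}^{M-2} S(\text{level-$1$ at }y+1).
\end{equation*}

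For the microscopic step, each level-$2$ snake at row $y$ is by Definition~\ref{def:snake2} a snake of $N-1$ consecutive $2\times 2$ marginals whose pairwise overlaps are $1\times 2$ clusters; the snake hypotheses are validated by the \hyperref[constraints:cell1]{Type I-a) cluster constraint} via the first condition of Lemma~\ref{lemma:fundamental}, as already noted below Definition~\ref{def:snake2}. Similarly, each level-$1$ snake is a snake of $N-1$ consecutive $2\times 1$ marginals with $1\times 1$ pairwise overlaps, valid by the \hyperref[constraints:snake]{Snake-b) constraint}. Translational invariance $\mathcal{C}_L$ guarantees that the constituent and overlap entropies are independent of row, so Corollary~\ref{corollary:snake_entropy} applied once more gives
\begin{equation*}
S(\text{level-$2$ at }y) \;=\; (N-1)\,\entropy{\centeredTikZ{\squaresempty{2}{2}{0}{0};}} \;-\; (N-2)\,\entropy{\centeredTikZ{\squaresempty{1}{2}{0}{0};}},
\end{equation*}
\begin{equation*}
S(\text{level-$1$ at }y) \;=\; (N-1)\,\entropy{\centeredTikZ{\squaresempty{2}{1}{0}{0};}} \;-\; (N-2)\,\entropy{\centeredTikZ{\squaresempty{1}{1}{0}{0};}}.
\end{equation*}

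Substituting these into the macroscopic decomposition and collecting the coefficients produces the target formula. I do not expect a genuinely hard step in this proof: all of the nontrivial content, namely that the relevant objects actually form snakes and that their overlaps are what geometry suggests, is already packaged into Theorem~\ref{thm:forgotten}, Proposition~\ref{prop:2to1}, and the descendant constraints in Table~\ref{table:constraints}. The only care required is bookkeeping: counting $N-1$ fundamental marginals per row-snake, $N-2$ horizontal overlaps, $M-1$ level-$2$ constituents in the outer snake, and $M-2$ vertical overlaps between them.
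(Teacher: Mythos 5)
Your proposal is correct and follows the same route as the paper: a two-level application of Corollary~\ref{corollary:snake_entropy}, first to the snake of level-$2$ snakes (with level-$1$ snakes as the overlaps, via Proposition~\ref{prop:2to1} and Theorem~\ref{thm:forgotten}) and then to each level-$2$ and level-$1$ snake individually. Your version simply spells out the verification of the snake hypotheses and the overlap identification more explicitly than the paper does.
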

\begin{proof}
From Corollary~\ref{corollary:snake_entropy},
\begin{equation}
\begin{aligned}
    &\entropy{\mathbb{S}\left( \left(\adulthorizontalsnake{1}{y}{N}{y\shortplus 1} \right)_{y=1}^{M-1} \right)}\\
    &= (M-1)\entropy{\adulthorizontalsnake{1}{1}{N}{2}} \\
    &- (M-2) \entropy{\horizontalbaby{(1,y)}{(N,y)}}.
\end{aligned}
\end{equation}
Therefore, again from Corollary~\ref{corollary:snake_entropy},
\begin{equation}
\begin{aligned}
    &\entropy{\mathbb{S}\left( \left(\adulthorizontalsnake{1}{y}{N}{y\shortplus 1} \right)_{y=1}^{M-1} \right)}\\
    &=(M-1)(N-1)\entropy{\centeredTikZ{\squaresempty{2}{2}{0}{0};}} - (M-1)(N-2)\entropy{\centeredTikZ{\squaresempty{1}{2}{0}{0};}}\\
    &-(M-2)(N-1)\entropy{\centeredTikZ{\squaresempty{2}{1}{0}{0};}}
    +(M-2)(N-2)\entropy{\centeredTikZ{\squaresempty{1}{1}{0}{0};}}.
\end{aligned}
\end{equation}
\end{proof}

Now, let us derive an upper bound on the entropy that matches Theorem~\ref{thm:entropy_maxmerge}. First, consider a density matrix on the first two rows, denoted as $\rho[2]$. By repeatedly using SSA, we obtain
\begin{equation}
    \entropy{\rho[2]} \leq (N-1)\entropy{\centeredTikZ{\squaresempty{2}{2}{0}{0};}} - (N-2) \entropy{\centeredTikZ{\squaresempty{1}{2}{0}{0};}}.
\end{equation}
More generally, let $\rho[k]$ be a density matrix over the first $k$ rows. We can derive the following recursive inequality:
\begin{equation}
\begin{aligned}
    \entropy{\rho[k+1]} &\leq \entropy{\rho[k]} + (N-1)\entropy{\centeredTikZ{\squaresempty{2}{2}{0}{0};}} \\
    &- (N-2)\entropy{\centeredTikZ{\squaresempty{2}{1}{0}{0};\squaresempty{1}{1}{0}{1};}} - \entropy{\centeredTikZ{\squaresempty{2}{1}{0}{0};}}.
\end{aligned}
\end{equation}
Therefore, we can obtain the following bound. Using \hyperref[constraints:snake]{Snake-a) constraint}, for any density matrix over the first $M$ rows,
\begin{equation}
\begin{aligned}
    \entropy{\rho[M]}&\leq (M-1)(N-1)\entropy{\centeredTikZ{\squaresempty{2}{2}{0}{0};}} \\ &- (M-1)(N-2)\entropy{\centeredTikZ{\squaresempty{1}{2}{0}{0};}}\\
    &-(M-2)(N-1)\entropy{\centeredTikZ{\squaresempty{2}{1}{0}{0};}}\\
    &+(M-2)(N-2)\entropy{\centeredTikZ{\squaresempty{1}{1}{0}{0};}}.
\end{aligned}
\end{equation}

Thus, we have proved Theorem~\ref{thm:main2}, stated below.
\begin{restatable}[]{theorem}{theoremtwo}
Consider a family of density matrices acting on a $N\times M$ cluster (see Fig.~\ref{fig:lattice}) which are consistent with \miniblockfull obeying $\mathcal{C}_L$ and $\mathcal{C}_{M,P}$. The maximum entropy within this family is 
\begin{equation}
\begin{aligned}
   &(N-1)(M-1)S\left(\centeredTikZ{\squaresempty{2}{2}{0}{0};} \right) \\ &+ 
    (N-2)(M-2)S\left( 
    \centeredTikZ{
    \foreach \x in {0}
    {
    \foreach \y in {0}
    {
    \emptysquare{\x*0.5+\y*0.25}{\y*0.5};
    }
    }
    }
    \right) \\
    &-
    (N-2)(M-1)
    S\left( 
    \centeredTikZ{
    \foreach \x in {0}
    {
    \foreach \y in {0,1}
    {
    \emptysquare{\x*0.5-\y*0.25}{\y*0.5};
    }
    }
    }\right)
    \\&- (N-1)(M-2)
    S\left( 
    \centeredTikZ{
    \foreach \x in {1,0}
    {
    \foreach \y in {0}
    {
    \emptysquare{\x*0.5+\y*0.25}{\y*0.5};
    }
    }
    }
    \right).
\end{aligned}
\end{equation}
\label{thm:main2}
\end{restatable}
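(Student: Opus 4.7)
The plan is to prove a matching upper and lower bound on the entropy, and identify both with the stated formula. The lower bound is already essentially in hand: Theorem~\ref{thm:main1} exhibits the explicit density matrix $\mathbb{S}\!\left(\left(\adulthorizontalsnake{1}{y}{N}{y\shortplus 1}\right)_{y=1}^{M-1}\right)$ on the $N\times M$ cluster that is consistent with \miniblockfull, and Theorem~\ref{thm:entropy_maxmerge} evaluates its entropy to precisely the target expression. So the real work is the matching upper bound on the entropy of an \emph{arbitrary} density matrix on the $N\times M$ cluster that is consistent with \miniblockfull.

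For the upper bound, let $\rho[k]$ denote the restriction of any such density matrix to the first $k$ rows. I would argue by induction on $k$. For the base case $k=2$, one chains strong subadditivity (SSA) across the $N-1$ overlapping $2\times 2$ clusters on the first two rows (whose pairwise overlaps are precisely the $1\times 2$ columns) to conclude
\[
\entropy{\rho[2]} \leq (N-1)\,\entropy{\centeredTikZ{\squaresempty{2}{2}{0}{0};}} - (N-2)\,\entropy{\centeredTikZ{\squaresempty{1}{2}{0}{0};}}.
\]

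For the inductive step, one applies SSA along the seam between rows $k$ and $k\shortplus 1$, chaining over the $N-1$ new $2\times 2$ clusters that straddle those two rows. The SSA grouping should be arranged so that adjacent $2\times 2$ clusters overlap along the L-shaped region $\centeredTikZ{\squaresempty{2}{1}{0}{0};\squaresempty{1}{1}{0}{1};}$. This allows one to invoke the \hyperref[constraints:snake]{Snake-a) constraint} to re-express the L-shaped entropy as a combination of $2\times 1$, $1\times 2$, and $1\times 1$ marginal entropies. The resulting recursive inequality
\[
\entropy{\rho[k\shortplus 1]} \leq \entropy{\rho[k]} + (N-1)\entropy{\centeredTikZ{\squaresempty{2}{2}{0}{0};}} - (N-2)\entropy{\centeredTikZ{\squaresempty{2}{1}{0}{0};\squaresempty{1}{1}{0}{1};}} - \entropy{\centeredTikZ{\squaresempty{2}{1}{0}{0};}}
\]
is then telescoped from $k=2$ up to $M-1$, and after applying \hyperref[constraints:snake]{Snake-a)} to expand the L-shaped term, one recovers exactly the formula stated in the theorem, matching the construction of Theorem~\ref{thm:entropy_maxmerge}.

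The main obstacle I anticipate is selecting the correct SSA decomposition in the inductive step. The constraints $\mathcal{C}_{M,P}$ give us access only to marginal entropies of the shapes appearing in Table~\ref{table:constraints}, so any SSA grouping whose conditioning subsystem strays outside these controlled shapes would leave residual entropy terms that we cannot bound. One must therefore thread the SSA chain so that every conditioning region is either an L-shape or a strict subregion of \miniblockfull, and then deploy the snake constraint at just the right moment to collapse L-shapes into linear combinations of lower-order marginal entropies. Getting the bookkeeping right, so that the telescoping sum leaves precisely the four terms in the theorem and no error remainder, is the delicate part of the argument.
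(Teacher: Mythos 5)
Your proposal follows essentially the same route as the paper: the lower bound comes from the explicit snake-of-level-$2$-snakes construction whose entropy is computed in Theorem~\ref{thm:entropy_maxmerge}, and the upper bound is obtained by exactly the same SSA recursion — the base case $\entropy{\rho[2]} \leq (N-1)\entropy{\centeredTikZ{\squaresempty{2}{2}{0}{0};}} - (N-2)\entropy{\centeredTikZ{\squaresempty{1}{2}{0}{0};}}$, the identical recursive inequality for $\entropy{\rho[k\shortplus 1]}$ involving the L-shaped marginal, and the \hyperref[constraints:snake]{Snake-a) constraint} to collapse that term before telescoping. The bookkeeping you flag as delicate does close correctly and reproduces the four-term formula, so the argument is sound and matches the paper's.
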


We also obtain the following expression for the maximum entropy density in the thermodynamic limit.
\begin{corollary}
For any \miniblockfull obeying $\mathcal{C}_L$ and $\mathcal{C}_{M,P}$,
\begin{equation}
\begin{aligned}
    \lim_{N,M\to \infty} \left( \max_{\tilde{\rho}_{N\times M} \consistent \mathcal{T}\left(\left\{\centeredTikZmini{\squaresempty{2}{2}{0}{0};} \right\} \right)}\left(\frac{\entropy{\tilde{\rho}_{N\times M}}}{NM}\right) \right)
    \\= 
    \entropy{\centeredTikZ{\squaresempty{2}{2}{0}{0};}}
    -\entropy{\centeredTikZ{\squaresempty{2}{1}{0}{0};\squaresempty{1}{1}{0}{1};}}.
\end{aligned}
\end{equation}
\end{corollary}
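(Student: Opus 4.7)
The plan is to derive this corollary as a direct consequence of Theorem~\ref{thm:main2} combined with a single application of the \hyperref[constraints:snake]{Snake-a) constraint}. The exact closed-form expression for the maximum entropy on an $N\times M$ cluster is already in hand, so the remaining work is purely an asymptotic bookkeeping argument followed by an algebraic simplification. No new structural results about the merging algebra or snakes are needed.

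First I would take the expression from Theorem~\ref{thm:main2} and divide it by $NM$. Writing $c_1=(N-1)(M-1)/(NM)$, $c_2=(N-2)(M-2)/(NM)$, $c_3=(N-2)(M-1)/(NM)$, and $c_4=(N-1)(M-2)/(NM)$, each of the four coefficients satisfies $c_i \to 1$ as $N,M\to\infty$. Since the four entropy quantities $\entropy{\miniblockfull}$, $\entropy{\miniblockone}$, $\entropy{\miniblocktwovertical}$, $\entropy{\miniblocktwohorizontal}$ are fixed constants (depending only on the given marginal, not on $N$ or $M$), the limit of the maximum-entropy density exists and equals
\begin{equation*}
\entropy{\centeredTikZ{\squaresempty{2}{2}{0}{0};}} + \entropy{\centeredTikZ{\squaresempty{1}{1}{0}{0};}} - \entropy{\centeredTikZ{\squaresempty{1}{2}{0}{0};}} - \entropy{\centeredTikZ{\squaresempty{2}{1}{0}{0};}}.
\end{equation*}

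Second I would invoke the \hyperref[constraints:snake]{Snake-a) constraint} from Table~\ref{table:constraints}, which reads
\begin{equation*}
\entropy{\centeredTikZ{\squaresempty{2}{1}{0}{0};\squaresempty{1}{1}{0}{1};}} = \entropy{\centeredTikZ{\squaresempty{2}{1}{0}{0};}} + \entropy{\centeredTikZ{\squaresempty{1}{2}{0}{0};}} - \entropy{\centeredTikZ{\squaresempty{1}{1}{0}{0};}}.
\end{equation*}
Rearranging and substituting into the previous display immediately produces the claimed closed form $\entropy{\centeredTikZ{\squaresempty{2}{2}{0}{0};}} - \entropy{\centeredTikZ{\squaresempty{2}{1}{0}{0};\squaresempty{1}{1}{0}{1};}}$.

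There is essentially no obstacle here, since Theorem~\ref{thm:main2} has already done all the heavy lifting (it both constructs a state achieving the claimed entropy via the snake of level-$2$ snakes of Theorem~\ref{thm:entropy_maxmerge}, and establishes the matching upper bound via iterated SSA). The only thing I would be careful about is making sure the maximum is taken over exactly the same family of density matrices as in Theorem~\ref{thm:main2} (namely those consistent with the translates of $\centeredTikZmini{\squaresempty{2}{2}{0}{0};}$), and that the family is nonempty for every $N,M\geq 2$, which is guaranteed by Theorem~\ref{thm:main1}. Once those two points are noted, the corollary is immediate.
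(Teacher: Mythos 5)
Your proposal is correct and matches the paper's (implicit) argument exactly: the paper states this corollary as an immediate consequence of Theorem~\ref{thm:main2}, and the intended derivation is precisely dividing the closed-form maximum entropy by $NM$, letting all four coefficients tend to $1$, and then absorbing $\entropy{\centeredTikZ{\squaresempty{1}{1}{0}{0};}}-\entropy{\centeredTikZ{\squaresempty{1}{2}{0}{0};}}-\entropy{\centeredTikZ{\squaresempty{2}{1}{0}{0};}}$ into $-\entropy{\centeredTikZ{\squaresempty{2}{1}{0}{0};\squaresempty{1}{1}{0}{1};}}$ via the Snake-a) constraint. Your added remarks about the family being the same as in Theorem~\ref{thm:main2} and nonempty by Theorem~\ref{thm:main1} are appropriate and consistent with the paper.
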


\subsection{Beyond mean-field}
From Theorem~\ref{thm:main1} and~\ref{thm:main2}, so long as $\mathcal{C}_L$ and $\mathcal{C}_{M,P}$ are satisfied \emph{exactly}, one can compute an upper bound to the global free energy. By minimizing this upper bound over a family of marginals that satisfy these constraints, we can obtain a variational upper bound to the global free energy. In this section, we emphasize that this upper bound must be better than mean-field theory in some sense.

At least for translationally-invariant systems, our family of marginals include the mean-field solutions. This is because mean-field states have a tensor product structure and any states that has a tensor product structure automatically satisfies $\mathcal{C}_{M,P}$.

However, there are states that satisfy $\mathcal{C}_{M,P}$ that cannot be written as a product state. A notable example is the ground states of Levin-Wen model~\cite{Levin2005} for which Eq.~\eqref{eq:ent_scaling} holds \emph{exactly}~\cite{Levin2006,Kitaev2006} locally everywhere. Therefore, the reduced density matrices of the ground state of the Levin-Wen model must obey $\mathcal{C}_{M,P}$ exactly. Therefore, our class of density matrices include local reduced density matrices of topologically ordered systems, which cannot be mapped from a product state by a finite-depth quantum circuit~\cite{Bravyi2006}. For concreteness, we have added an explicit calculation for Kitaev's toric code~\cite{Kitaev2002} in Appendix~\ref{appendix:examples}.

Let us end with a side remark. While Theorem~\ref{thm:main1} does prove that the reduced density matrices of Levin-Wen models are consistent with some global state on an infinite lattice, it \emph{does not} prove that they are consistent with some global state on a torus. That would require a periodic boundary condition, which is incompatible with the definition of the snake; see Definition~\ref{definition:snake}. However, the state itself is still long-range entangled; see Ref.~\cite{SKK2019}.

\section{Discussion}
\label{sec:discussion}
In this paper, we have put forward a conjectured duality between many-body quantum states obeying Eq.~\eqref{eq:ent_scaling} globally and the set of density matrices obeying the same equation \emph{locally}. If true, in numerical studies of systems that are expected to obey Eq.~\eqref{eq:ent_scaling}, one can take a shortcut and just minimize the energy and the free energy of the ansatz over a set of density matrices on bounded regions that obey Eq.~\eqref{eq:ent_scaling}. The number of parameters in the latter case is polynomial in the system size, an exponential improvement over a naive approach in which the minimization is taken over exponentially many parameters. 

We provided nontrivial evidences for this conjecture, by proving this conjecture in the context of translationally invariant systems. This has led to a nontrivial upper bound to the ground state energy and the thermodynamic free energy of generic interacting quantum many-body system, which subsumes the mean-field theory bound. 

While there are systems in two spatial dimensions that break Eq.~\eqref{eq:ent_scaling}~\cite{Zou2016,Williamson2019}, the requisite entropy scaling law can be restored by applying a finite-depth quantum circuit. Viewed this way, one can (in principle) simply redefine the Hamiltonian so that Eq.~\eqref{eq:ent_scaling} holds, at which point the machinery of this paper again applies. We are unaware of any solvable models of gapped two-dimensional quantum many-body systems which cannot be accommodated using this approach. 

This is not to say that two-dimensional quantum many-body systems is solved, of course. In order to utilize our result, one must actually find a practical numerical method that can minimize the energy and the free energy in the family of states that satisfy $\mathcal{C}_L$ and $\mathcal{C}_{M,P}$. The former set of constraints are linear, and as such, more benign. On the other hand, the latter set of constraints are non-linear, making it somewhat difficult to make progress on.

Moreover, realistic systems will not satisfy Eq.~\eqref{eq:ent_scaling} exactly. The effect of this approximation error can be studied using the approach of Ref.~\cite{Kim2016}. However, for practical purposes, it will be desirable to tighten this bound.

Overcoming these challenges will be an important step we need to take to make our approach practical. We end with some thoughts on how to extend the current result.
\begin{enumerate}
    \item While we have focused on quantum spins, an analogous statement for fermions are expected to hold. The main question is whether the statements in Section~\ref{sec:merging_algebra} can be reproduced. A fermionic version of SSA would follow from the monotonicity of quantum relative entropy~\cite{Uhlmann1977} and its equality condition was investigated by Petz~\cite{Petz1988}. In this setup, the partial trace operation needs to be replaced by a conditional expectation.
    \item The maximum-entropy state obtained in this paper, restricted to a quasi-one-dimensional cluster extending in the $\hat{x}$-direction, becomes a snake constructed from a set of marginals that form a ``chain.'' This is an instance of the Markovian matrix product density operator~\cite{Kim2017}, which admits an exact contraction for any $2$-point correlation function along the chain. One may be able to prove a similar statement for quasi-one-dimensional cluster in \emph{any} direction. The readers are encouraged to verify that this is true in the $\hat{y}$-direction.
    \item One may be able to simplify $\mathcal{C}_{M,P}$ further. For classical states, the last condition of $\mathcal{C}_{M,P}$ is implied by the first two conditions, reducing the number of nonlinear-constraints to two and also reducing the size of the maximal cluster one needs to consider. To what extent we can reduce these constraints is an important open problem.
    \item If we have a set of translationally invariant marginals that satisfy $\mathcal{C}_{M,P}$ individually, their convex combinations are also consistent with a convex combination of translationally invariant states on an infinite lattice, each of which are the maximum-entropy extensions of the marginals. This is true even if the convex combination of the marginals do not satisfy $\mathcal{C}_{M,P}$ as a whole. Moreover, we can compute the the convex combination of the extensions if the set is finite and if the maximum-entropy extensions of the marginals on an infinite lattice are mutually orthogonal to each other.\footnote{More precisely, the overlap should vanish faster than the inverse of the system volume as we take the thermodynamic limit.} If these conditions are met, the entropy density converges to a convex combination of the individual maximum entropy densities in the thermodynamic limit.\footnote{However, the convex combination of the extensions may not be the maximum-entropy state consistent with the convex combination of the marginals.} This way, one can relax the condition $\mathcal{C}_{M,P}$.
\end{enumerate}

\section*{Acknowledgement}
This work was supported by the Australian Research Council via the Centre of Excellence in Engineered Quantum Systems (EQUS) project number CE170100009.

\appendix

\section{Descendants}
\label{appendix:descendants}
In this Appendix, we derive the identities in Table~\ref{table:constraints} from $\mathcal{C}_{M,P}$. To start with, the following identity will be useful.
\begin{equation}
\begin{aligned}
&4\entropy{\centeredTikZ{\squaresempty{2}{2}{0}{0};}}
    -
    2\entropy{\centeredTikZ{\squaresempty{2}{1}{0}{0};}}
    -
    2\entropy{\centeredTikZ{\squaresempty{1}{2}{0}{0};}}
    +
    \entropy{\centeredTikZ{\squaresempty{1}{1}{0}{0};}}
    -
    \entropy{\centeredTikZ{\squaresempty{3}{3}{0}{0};}}\\
    &=\left(2\entropy{\centeredTikZ{\squaresempty{2}{2}{0}{0};}} - \entropy{\centeredTikZ{\squaresempty{1}{2}{0}{0};}}
    - \entropy{\centeredTikZ{\squaresempty{3}{2}{0}{0};}}\right) \\
    &+ \left(\entropy{ \centeredTikZ{\squaresempty{3}{2}{0}{0};}}+ 
    \entropy{ \centeredTikZ{\squaresempty{2}{2}{0}{0};}}
    -
    \entropy{ \centeredTikZ{\squaresempty{2}{1}{0}{0};}}
    -
    \entropy{ \centeredTikZ{\squaresempty{3}{2}{0}{0};\squaresempty{2}{1}{0}{2}}}\right)\\
    &+
    \left(
    \entropy{ \centeredTikZ{\squaresempty{3}{2}{0}{0};\squaresempty{2}{1}{0}{2}}}
    +
    \entropy{ \centeredTikZ{\squaresempty{2}{2}{0}{0};}}
    -
    \entropy{ \centeredTikZ{\squaresempty{2}{1}{0}{0};\squaresempty{1}{1}{0}{1}}}
    -
    \entropy{ \centeredTikZ{\squaresempty{3}{3}{0}{0};}}\right) \\
    &-
    \left(
    \entropy{\centeredTikZ{\squaresempty{2}{1}{0}{0};}}
    +
    \entropy{\centeredTikZ{\squaresempty{1}{2}{0}{0};}}
    -
    \entropy{\centeredTikZ{\squaresempty{1}{1}{0}{0};}}
    -
    \entropy{\centeredTikZ{\squaresempty{2}{1}{0}{0};\squaresempty{1}{1}{0}{1}}}
    \right).
\end{aligned}
\end{equation}
Note that the last term vanishes because of Eq.~\eqref{constraint:primary1}. Moreover, the terms in the paranthesis on line $2, 3,$ and $4$ must be nonnegative because of SSA. Because the overall sum must be $0$ by Eq.~\eqref{constraint:primary3}, each of these terms must vanish. Moreover, we can alternatively consider the following decomposition:
\begin{equation}
\begin{aligned}
&4\entropy{\centeredTikZ{\squaresempty{2}{2}{0}{0};}}
    -
    2\entropy{\centeredTikZ{\squaresempty{2}{1}{0}{0};}}
    -
    2\entropy{\centeredTikZ{\squaresempty{1}{2}{0}{0};}}
    +
    \entropy{\centeredTikZ{\squaresempty{1}{1}{0}{0};}}
    -
    \entropy{\centeredTikZ{\squaresempty{3}{3}{0}{0};}}\\
    &=\left(2\entropy{\centeredTikZ{\squaresempty{2}{2}{0}{0};}} - \entropy{\centeredTikZ{\squaresempty{2}{1}{0}{0};}}
    - \entropy{\centeredTikZ{\squaresempty{2}{3}{0}{0};}}\right) \\
    &+ \left(\entropy{ \centeredTikZ{\squaresempty{2}{3}{0}{0};}}+ 
    \entropy{ \centeredTikZ{\squaresempty{2}{2}{0}{0};}}
    -
    \entropy{ \centeredTikZ{\squaresempty{1}{2}{0}{0};}}
    -
    \entropy{ \centeredTikZ{\squaresempty{3}{2}{0}{0};\squaresempty{2}{1}{0}{2}}}\right)\\
    &+
    \left(
    \entropy{ \centeredTikZ{\squaresempty{3}{2}{0}{0};\squaresempty{2}{1}{0}{2}}}
    +
    \entropy{ \centeredTikZ{\squaresempty{2}{2}{0}{0};}}
    -
    \entropy{ \centeredTikZ{\squaresempty{2}{1}{0}{0};\squaresempty{1}{1}{0}{1}}}
    -
    \entropy{ \centeredTikZ{\squaresempty{3}{3}{0}{0};}}\right) \\
    &-
    \left(
    \entropy{\centeredTikZ{\squaresempty{2}{1}{0}{0};}}
    +
    \entropy{\centeredTikZ{\squaresempty{1}{2}{0}{0};}}
    -
    \entropy{\centeredTikZ{\squaresempty{1}{1}{0}{0};}}
    -
    \entropy{\centeredTikZ{\squaresempty{2}{1}{0}{0};\squaresempty{1}{1}{0}{1}}}
    \right).
\end{aligned}
\end{equation}
Again, the last term vanishes and the remaining terms must vanish. We can summarize these identities as follows.

\begin{lemma}
If \mymarginal satisfies $\mathcal{C}_L$ and $\mathcal{C}_{M,P}$, 
\begin{equation}
\begin{aligned}
    \entropy{ \centeredTikZ{\squaresempty{3}{2}{0}{0};}} &= 2\entropy{ \centeredTikZ{\squaresempty{2}{2}{0}{0};}} - \entropy{ \centeredTikZ{\squaresempty{1}{2}{0}{0};}},\\
    \entropy{ \centeredTikZ{\squaresempty{2}{3}{0}{0};}} &= 2\entropy{ \centeredTikZ{\squaresempty{2}{2}{0}{0};}} - \entropy{ \centeredTikZ{\squaresempty{2}{1}{0}{0};}},\\
    \entropy{ \centeredTikZ{\squaresempty{2}{3}{0}{0};\squaresempty{1}{2}{2}{0}}} &=\entropy{ \centeredTikZ{\squaresempty{3}{2}{0}{0};}}+ 
    \entropy{ \centeredTikZ{\squaresempty{2}{2}{0}{0};}}
    -
    \entropy{ \centeredTikZ{\squaresempty{2}{1}{0}{0};}},\\
    \entropy{ \centeredTikZ{\squaresempty{2}{3}{0}{0};\squaresempty{1}{2}{2}{0}}} &=\entropy{ \centeredTikZ{\squaresempty{2}{3}{0}{0};}}+ 
    \entropy{ \centeredTikZ{\squaresempty{2}{2}{0}{0};}}
    -
    \entropy{ \centeredTikZ{\squaresempty{1}{2}{0}{0};}},
    \\
    \entropy{ \centeredTikZ{\squaresempty{3}{3}{0}{0};}}&=
    \entropy{ \centeredTikZ{\squaresempty{3}{2}{0}{0};\squaresempty{2}{1}{0}{2}}}
    +
    \entropy{ \centeredTikZ{\squaresempty{2}{2}{0}{0};}}
    -
    \entropy{ \centeredTikZ{\squaresempty{2}{1}{0}{0};\squaresempty{1}{1}{0}{1}}}.
\end{aligned}
\end{equation}
\label{lemma:descendant1}
\end{lemma}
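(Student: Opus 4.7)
The plan is to read Lemma~\ref{lemma:descendant1} as the collection of identities already extracted from the two telescoping decompositions written out immediately above the lemma. My approach has three stages.

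First, I will package the two primary constraints as the single algebraic identity
\begin{equation*}
Q \;:=\; 4\entropy{\centeredTikZ{\squaresempty{2}{2}{0}{0};}} - 2\entropy{\centeredTikZ{\squaresempty{2}{1}{0}{0};}} - 2\entropy{\centeredTikZ{\squaresempty{1}{2}{0}{0};}} + \entropy{\centeredTikZ{\squaresempty{1}{1}{0}{0};}} - \entropy{\centeredTikZ{\squaresempty{3}{3}{0}{0};}} \;=\; 0,
\end{equation*}
which is exactly constraint \eqref{constraint:primary3}. Then, using the decomposition shown before the lemma, I will write $Q$ as a sum of four bracketed expressions, one of which vanishes by \eqref{constraint:primary1} and the other three of which are each of the form $S(XY)+S(YZ)-S(Y)-S(XYZ)$ for an explicit tripartition of a region into $X, Y, Z$. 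Since each of these three expressions is nonnegative by SSA and their total is zero, each must individually vanish. Identifying each vanishing bracket with the corresponding entry in Lemma~\ref{lemma:descendant1} yields four of the five claimed identities.

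Second, I will repeat the same strategy using the \emph{alternative} decomposition of $Q$ written just below the first one, in which the roles of the horizontal and vertical directions are swapped. This decomposition again exhibits $Q$ as a primary-vanishing term plus three SSA-nonnegative brackets; the SSA tripartitions this time will involve the $2\times 3$ rectangle rather than the $3\times 2$ rectangle. The same argument forces each bracket to vanish, supplying the remaining identities in the lemma (those involving $S(\centeredTikZ{\squaresempty{2}{3}{0}{0};})$ and the mirrored $L$-shape).

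The only real work is bookkeeping: for each bracket $S(R_1)+S(R_2)-S(R_3)-S(R_4)\geq 0$ appearing in the two decompositions, I must exhibit an explicit tripartition $X,Y,Z$ of the ambient region so that $R_1=XY$, $R_2=YZ$, $R_3=Y$, $R_4=XYZ$; for instance, in $2\entropy{\centeredTikZ{\squaresempty{2}{2}{0}{0};}} - \entropy{\centeredTikZ{\squaresempty{1}{2}{0}{0};}} - \entropy{\centeredTikZ{\squaresempty{3}{2}{0}{0};}}\geq 0$, take $X,Y,Z$ to be the three successive $1\times 2$ vertical columns of a $3\times 2$ rectangle. I expect the main (mild) obstacle to be simply checking that the algebraic sum of the four brackets really does recombine into $Q$ and that the translational invariance condition $\mathcal{C}_L$ is used to identify all the rectangular marginals appearing in overlapping decompositions as the same density matrix; once this is verified, the conclusion is immediate from SSA and the uniqueness of the vanishing term forced by \eqref{constraint:primary3} and \eqref{constraint:primary1}.
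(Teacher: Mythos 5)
Your proposal is correct and is essentially the paper's own derivation: the paper proves Lemma~\ref{lemma:descendant1} by exactly these two decompositions of the combination $Q$ from Eq.~\eqref{constraint:primary3} into one bracket that vanishes by Eq.~\eqref{constraint:primary1} plus three SSA-nonnegative conditional mutual informations, which must each vanish since their sum is zero. The only quibble is bookkeeping: the first decomposition yields three of the five identities (the first, third, and fifth), not four, with the second decomposition supplying the remaining two.
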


Similarly, we can consider the following identity.

\begin{equation}
\begin{aligned}
&4\entropy{ \centeredTikZ{\squaresempty{2}{2}{0}{0};}}
    -
    2\entropy{ \centeredTikZ{\squaresempty{2}{1}{0}{0};}}
    -
    2\entropy{ \centeredTikZ{\squaresempty{1}{2}{0}{0};}}
    +
    \entropy{ \centeredTikZ{\squaresempty{1}{1}{0}{0};}}
    -
    \entropy{\centeredTikZ{\squaresempty{3}{3}{0}{0};} }\\
    &=\left(2\entropy{ \centeredTikZ{\squaresempty{2}{2}{0}{0};}} - \entropy{ \centeredTikZ{\squaresempty{2}{1}{0}{0};}}
    - \entropy{ \centeredTikZ{\squaresempty{2}{3}{0}{0};}}\right) \\
    &+ \left(\entropy{ \centeredTikZ{\squaresempty{2}{3}{0}{0};}}+ 
    \entropy{ \centeredTikZ{\squaresempty{2}{2}{0}{0};}}
    -
    \entropy{ \centeredTikZ{\squaresempty{1}{2}{0}{0};}}
    -
    \entropy{ \centeredTikZ{\squaresempty{3}{2}{0}{1};\squaresempty{2}{1}{1}{0}}}\right)\\
    &+
    \left(
    \entropy{ \centeredTikZ{\squaresempty{3}{2}{0}{1};\squaresempty{2}{1}{1}{0}}}
    +
    \entropy{ \centeredTikZ{\squaresempty{2}{2}{0}{0};}}
    -
    \entropy{ \centeredTikZ{\squaresempty{2}{1}{0}{0};\squaresempty{1}{1}{1}{-1}}}
    -
    \entropy{ \centeredTikZ{\squaresempty{3}{3}{0}{0};}}\right) \\
    &-
    \left(
    \entropy{\centeredTikZ{\squaresempty{2}{1}{0}{0};} }
    +
    \entropy{\centeredTikZ{\squaresempty{1}{2}{0}{0};} }
    -
    \entropy{\centeredTikZ{\squaresempty{1}{1}{0}{0};} }
    -
    \entropy{\centeredTikZ{\squaresempty{2}{1}{0}{0};\squaresempty{1}{1}{1}{-1}} }
    \right).
\end{aligned}
\end{equation}
and
\begin{equation}
\begin{aligned}
&4\entropy{ \centeredTikZ{\squaresempty{2}{2}{0}{0};}}
    -
    2\entropy{ \centeredTikZ{\squaresempty{2}{1}{0}{0};}}
    -
    2\entropy{ \centeredTikZ{\squaresempty{1}{2}{0}{0};}}
    +
    \entropy{ \centeredTikZ{\squaresempty{1}{1}{0}{0};}}
    -
    \entropy{\centeredTikZ{\squaresempty{3}{3}{0}{0};} }\\
    &=\left(2\entropy{ \centeredTikZ{\squaresempty{2}{2}{0}{0};}} - \entropy{ \centeredTikZ{\squaresempty{1}{2}{0}{0};}}
    - \entropy{ \centeredTikZ{\squaresempty{3}{2}{0}{0};}}\right) \\
    &+ \left(\entropy{ \centeredTikZ{\squaresempty{3}{2}{0}{0};}}+ 
    \entropy{ \centeredTikZ{\squaresempty{2}{2}{0}{0};}}
    -
    \entropy{ \centeredTikZ{\squaresempty{2}{1}{0}{0};}}
    -
    \entropy{ \centeredTikZ{\squaresempty{3}{2}{0}{1};\squaresempty{2}{1}{1}{0}}}\right)\\
    &+
    \left(
    \entropy{ \centeredTikZ{\squaresempty{3}{2}{0}{1};\squaresempty{2}{1}{1}{0}}}
    +
    \entropy{ \centeredTikZ{\squaresempty{2}{2}{0}{0};}}
    -
    \entropy{ \centeredTikZ{\squaresempty{2}{1}{0}{0};\squaresempty{1}{1}{1}{-1}}}
    -
    \entropy{ \centeredTikZ{\squaresempty{3}{3}{0}{0};}}\right) \\
    &-
    \left(
    \entropy{\centeredTikZ{\squaresempty{2}{1}{0}{0};} }
    +
    \entropy{\centeredTikZ{\squaresempty{1}{2}{0}{0};} }
    -
    \entropy{\centeredTikZ{\squaresempty{1}{1}{0}{0};} }
    -
    \entropy{\centeredTikZ{\squaresempty{2}{1}{0}{0};\squaresempty{1}{1}{1}{-1}} }
    \right).
\end{aligned}
\end{equation}

Like before, the last term vanishes because of Eq.~\eqref{constraint:primary2}. Moreover, the terms in the paranthesis on line $2,3,$ and $4$ are all nonnegative because of SSA. Because the overall sum is $0$ by Eq.~\eqref{constraint:primary3}, each of these terms must vanish. Neglecting the ones that have already appeared in Lemma~\ref{lemma:descendant1}, we arrive at the following lemma.
\begin{lemma}
If \mymarginal satisfies $\mathcal{C}_L$ and $\mathcal{C}_{M,P}$,
\begin{equation}
\begin{aligned}
    \entropy{ \centeredTikZ{\squaresempty{3}{2}{0}{1};\squaresempty{2}{1}{1}{0}}} &=\entropy{ \centeredTikZ{\squaresempty{2}{3}{0}{0};}}+ 
    \entropy{ \centeredTikZ{\squaresempty{2}{2}{0}{0};}}
    -
    \entropy{ \centeredTikZ{\squaresempty{1}{2}{0}{0};}},\\
    \entropy{ \centeredTikZ{\squaresempty{3}{2}{0}{1};\squaresempty{2}{1}{1}{0}}} &=\entropy{ \centeredTikZ{\squaresempty{3}{2}{0}{0};}}+ 
    \entropy{ \centeredTikZ{\squaresempty{2}{2}{0}{0};}}
    -
    \entropy{ \centeredTikZ{\squaresempty{2}{1}{0}{0};}},\\
    \entropy{ \centeredTikZ{\squaresempty{3}{3}{0}{0};}}&=
    \entropy{ \centeredTikZ{\squaresempty{3}{2}{0}{1};\squaresempty{2}{1}{1}{0}}}
    +
    \entropy{ \centeredTikZ{\squaresempty{2}{2}{0}{0};}}
    -
    \entropy{ \centeredTikZ{\squaresempty{2}{1}{0}{0};\squaresempty{1}{1}{1}{-1}}}.
\end{aligned}
\end{equation}
\label{lemma:descendant2}
\end{lemma}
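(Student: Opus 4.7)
The plan is to prove all three identities as a package, exploiting the fact that \eqref{constraint:primary3} saturates an SSA-style inequality that can be decomposed into several non-negative pieces. Each target identity has the form $S(X \cup Y) = S(X) + S(Y) - S(X \cap Y)$, which is precisely the statement that a conditional mutual information (CMI) vanishes; so it suffices to exhibit three such CMIs as summands of a telescoping expansion whose total equals zero, together with one additional summand that vanishes for an independent reason.

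Concretely, I would rewrite \eqref{constraint:primary3} as
\begin{equation*}
4\entropy{\centeredTikZ{\squaresempty{2}{2}{0}{0};}} - 2\entropy{\centeredTikZ{\squaresempty{2}{1}{0}{0};}} - 2\entropy{\centeredTikZ{\squaresempty{1}{2}{0}{0};}} + \entropy{\centeredTikZ{\squaresempty{1}{1}{0}{0};}} - \entropy{\centeredTikZ{\squaresempty{3}{3}{0}{0};}} = 0,
\end{equation*}
and then split the left-hand side into four signed groups. Three of these are built so as to be CMIs matching the three desired equalities: one overlaps a $2\times 2$ and a $2\times 2$ along a $2\times 1$ to produce the intermediate ``ramp'' shape $\centeredTikZ{\squaresempty{3}{2}{0}{1};\squaresempty{2}{1}{1}{0};}$, one overlaps a $2\times 3$ and a $2\times 2$ along a $1\times 2$ to produce the same ramp, and one overlaps the ramp with a $2\times 2$ along the diagonal L-shape $\centeredTikZ{\squaresempty{2}{1}{0}{0};\squaresempty{1}{1}{1}{-1};}$ to produce the full $3\times 3$. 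The fourth group is arranged so that it reduces to $S(\centeredTikZ{\squaresempty{2}{1}{0}{0};}) + S(\centeredTikZ{\squaresempty{1}{2}{0}{0};}) - S(\centeredTikZ{\squaresempty{1}{1}{0}{0};}) - S(\centeredTikZ{\squaresempty{2}{1}{0}{0};\squaresempty{1}{1}{1}{-1};})$, which is exactly zero by \eqref{constraint:primary2}. Verifying that the four groups sum to the left-hand side is bookkeeping on the coefficients of each shape.

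The conclusion then follows mechanically: each of the first three groups is non-negative by SSA, the fourth is zero by \eqref{constraint:primary2}, and the total is zero by \eqref{constraint:primary3}. Hence each of the three non-negative groups must itself vanish, and these vanishings are precisely the three asserted equalities. The only real obstacle is guessing the correct geometry for the decomposition, namely choosing the ``ramp'' shape $\centeredTikZ{\squaresempty{3}{2}{0}{1};\squaresempty{2}{1}{1}{0};}$ and routing the telescope through the diagonal L-shape rather than the horizontal/vertical L-shape used in Lemma~\ref{lemma:descendant1}. Once one notices that this is the only modification required relative to the template of Lemma~\ref{lemma:descendant1}---swap which corner of the L is missing, and the decomposition carries over verbatim with the other constraints of Lemma~\ref{lemma:descendant1} filling in the shape identities for $S(\centeredTikZ{\squaresempty{3}{2}{0}{0};})$ and $S(\centeredTikZ{\squaresempty{2}{3}{0}{0};})$---the proof becomes a mirror image of the previous one.
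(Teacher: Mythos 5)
Your mechanism is the paper's mechanism: expand the (vanishing) left-hand side of Eq.~\eqref{constraint:primary3} into signed groups, kill one group using Eq.~\eqref{constraint:primary2}, and force the remaining SSA-non-negative groups to vanish individually. However, the specific decomposition you propose does not close. If the three non-negative groups are taken to be exactly the three conditional mutual informations corresponding to the three target identities, namely
\begin{equation*}
\begin{aligned}
&\entropy{\centeredTikZ{\squaresempty{2}{3}{0}{0};}}+\entropy{\centeredTikZ{\squaresempty{2}{2}{0}{0};}}-\entropy{\centeredTikZ{\squaresempty{1}{2}{0}{0};}}-\entropy{\centeredTikZ{\squaresempty{3}{2}{0}{1};\squaresempty{2}{1}{1}{0};}},\\
&\entropy{\centeredTikZ{\squaresempty{3}{2}{0}{0};}}+\entropy{\centeredTikZ{\squaresempty{2}{2}{0}{0};}}-\entropy{\centeredTikZ{\squaresempty{2}{1}{0}{0};}}-\entropy{\centeredTikZ{\squaresempty{3}{2}{0}{1};\squaresempty{2}{1}{1}{0};}},\\
&\entropy{\centeredTikZ{\squaresempty{3}{2}{0}{1};\squaresempty{2}{1}{1}{0};}}+\entropy{\centeredTikZ{\squaresempty{2}{2}{0}{0};}}-\entropy{\centeredTikZ{\squaresempty{2}{1}{0}{0};\squaresempty{1}{1}{1}{-1};}}-\entropy{\centeredTikZ{\squaresempty{3}{3}{0}{0};}},
\end{aligned}
\end{equation*}
then their sum, minus the Eq.~\eqref{constraint:primary2} group, differs from the left-hand side of Eq.~\eqref{constraint:primary3} by $\entropy{\centeredTikZ{\squaresempty{2}{3}{0}{0};}}+\entropy{\centeredTikZ{\squaresempty{3}{2}{0}{0};}}-\entropy{\centeredTikZ{\squaresempty{3}{2}{0}{1};\squaresempty{2}{1}{1}{0};}}-\entropy{\centeredTikZ{\squaresempty{2}{2}{0}{0};}}$: the $2\times 3$, $3\times 2$, and ramp entropies survive with coefficients $+1$, $+1$, $-1$, while none of them occur in Eq.~\eqref{constraint:primary3}. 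The vanishing of this residual is equivalent to the identities you are trying to prove, so the ``bookkeeping on the coefficients'' step is circular rather than automatic.

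The repair is small and is what the paper actually does: run \emph{two} telescopes, each containing only one of the two ramp CMIs. In the first, the leading group is $2\entropy{\centeredTikZ{\squaresempty{2}{2}{0}{0};}}-\entropy{\centeredTikZ{\squaresempty{2}{1}{0}{0};}}-\entropy{\centeredTikZ{\squaresempty{2}{3}{0}{0};}}$ (two $2\times2$ clusters overlapping on a $2\times 1$ strip), followed by the ramp-via-$2\times3$ CMI and the $3\times3$-via-ramp CMI; in the second, the leading group is $2\entropy{\centeredTikZ{\squaresempty{2}{2}{0}{0};}}-\entropy{\centeredTikZ{\squaresempty{1}{2}{0}{0};}}-\entropy{\centeredTikZ{\squaresempty{3}{2}{0}{0};}}$, followed by the ramp-via-$3\times2$ CMI and again the $3\times3$-via-ramp CMI. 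Each of these telescopes exactly to the left-hand side of Eq.~\eqref{constraint:primary3} once the Eq.~\eqref{constraint:primary2} group is subtracted, so every group vanishes, and between the two runs you obtain all three asserted identities. With that correction, your reading of the situation --- that this is the mirror image of Lemma~\ref{lemma:descendant1} with the diagonal L-shape of Eq.~\eqref{constraint:primary2} replacing the axis-aligned one --- is exactly right.
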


We will also need the following identities:
\begin{lemma}
If \mymarginal satisfies $\mathcal{C}_L$ and $\mathcal{C}_{M,P}$,
\begin{equation}
\begin{aligned}
\entropy{\centeredTikZ{
\squaresempty{3}{1}{0}{0}
}} = 
2\entropy{\centeredTikZ{
\squaresempty{2}{1}{0}{0}
}}
-
\entropy{\centeredTikZ{
\squaresempty{1}{1}{0}{0}
}}, \\
\entropy{\centeredTikZ{
\squaresempty{1}{3}{0}{0}
}} = 
2\entropy{\centeredTikZ{
\squaresempty{1}{2}{0}{0}
}}
-
\entropy{\centeredTikZ{
\squaresempty{1}{1}{0}{0}
}}.
\end{aligned}
\end{equation}
\label{lemma:descendant3}
\end{lemma}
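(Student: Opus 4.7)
The plan is to recognize both identities as statements that a specific conditional mutual information of three collinear clusters vanishes, and to deduce each such vanishing from a single primary Markovian constraint, suitably translated, together with the Type I-a) (resp.\ Type II-a)) Markov-chain identity already proved in Lemma~\ref{lemma:descendant1}, via one standard chain-rule manipulation. Using lattice coordinates $a=(1,1), b=(2,1), c=(3,1)$ for the horizontal triple and $u=(1,1), m=(1,2), v=(1,3)$ for the vertical triple, the content of the first identity is $I(a:c|b)=0$ and the second is $I(u:v|m)=0$.

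For the first identity, I introduce a helper cluster $\bar b=(2,2)$ directly above $b$, and write $\bar a=(1,2), \bar c=(3,2)$ for the remaining sites of the enclosing $3\times 2$ block. The Type I-a) identity $\entropy{\centeredTikZ{\squaresempty{3}{2}{0}{0};}} = 2 \entropy{\centeredTikZ{\squaresempty{2}{2}{0}{0};}} - \entropy{\centeredTikZ{\squaresempty{1}{2}{0}{0};}}$ from Lemma~\ref{lemma:descendant1}, together with $\mathcal{C}_L$, is equivalent to $I(a\bar a : c\bar c \mid b\bar b) = 0$; SSA monotonicity, discarding $\bar a$ and $\bar c$ from the conditioned subsystems, then contracts this to $I(a:c\mid b\bar b)=0$. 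Separately, the translate of the primary constraint Eq.~\eqref{constraint:primary1} by $(+1,0)$, applied to the L-shape with corner $b$, right-arm $c$ and up-arm $\bar b$, yields $I(\bar b : c \mid b)=0$. Expanding $I(a\bar b : c \mid b)$ by the chain rule in the two possible orders gives
\begin{equation*}
I(a:c\mid b) + I(\bar b : c \mid ab) \;=\; I(a\bar b : c \mid b) \;=\; I(\bar b : c \mid b) + I(a:c\mid b\bar b) \;=\; 0,
\end{equation*}
and since both terms on the left are nonnegative by SSA, each must vanish, in particular $I(a:c\mid b)=0$, which is the first identity after rewriting in terms of entropies using $\mathcal{C}_L$.

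The second identity follows by the same three steps with rows and columns interchanged. I take the helper $\bar m=(2,2)$ directly to the right of $m$; the identity $\entropy{\centeredTikZ{\squaresempty{2}{3}{0}{0};}} = 2 \entropy{\centeredTikZ{\squaresempty{2}{2}{0}{0};}} - \entropy{\centeredTikZ{\squaresempty{2}{1}{0}{0};}}$ from Lemma~\ref{lemma:descendant1} together with $\mathcal{C}_L$ gives $I(u\bar u : v\bar v \mid m\bar m)=0$ and hence $I(u:v\mid m\bar m)=0$; the translate of Eq.~\eqref{constraint:primary1} with corner $m$, up-arm $v$ and right-arm $\bar m$ gives $I(\bar m : v \mid m)=0$; and the identical chain-rule expansion applied to $I(u\bar m : v \mid m)$ forces $I(u:v\mid m)=0$. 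Notably, Eq.~\eqref{constraint:primary1} alone suffices for both cases; no other primary is invoked.

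The only delicate point in the plan is the geometric bookkeeping: the helper must be chosen so that the L-shape it forms with two of the three collinear clusters is precisely a translate of the L of Eq.~\eqref{constraint:primary1}, rather than a reflected or rotated L that is not supplied by $\mathcal{C}_L$. With the choices $\bar b$ above $b$ and $\bar m$ to the right of $m$ this holds automatically, so I do not anticipate a real obstacle; once the geometry is fixed the remainder is a mechanical use of the chain rule together with SSA.
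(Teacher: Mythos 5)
Your proof is correct and follows essentially the same route as the paper: both reduce the three-site collinear Markov condition to the known $3\times 2$ (resp.\ $2\times 3$) chain identity of Lemma~\ref{lemma:descendant1} by adjoining a helper cell to the conditioning site via an L-shaped primary constraint and then stripping the extra cells off with SSA monotonicity. The only cosmetic differences are that the paper works at the top row using Eq.~\eqref{constraint:primary2} while you work at the bottom row using a translate of Eq.~\eqref{constraint:primary1}, and that you package the final step as a two-way chain-rule expansion of $I(a\bar b:c\mid b)$ rather than as two sequential monotonicity applications.
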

\begin{proof}
The first equation can be derived as follows. Let us rearrange the first result within Lemma~\ref{lemma:descendant1} as follows:
\begin{equation}
\entropy{\centeredTikZ{\squaresempty{2}{2}{0}{0};\squaresfilled{1}{2}{2}{0};}}
+
\entropy{\centeredTikZ{\squaresempty{2}{2}{1}{0};\squaresfilled{1}{2}{0}{0}}}
-
\entropy{\centeredTikZ{\squaresfilled{3}{2}{0}{0};\squaresempty{1}{2}{1}{0};}}
-
\entropy{\centeredTikZ{\squaresempty{3}{2}{0}{0}}} = 0.
\end{equation}
Interpreting this as a conditional mutual information between \centeredTikZmini{\squaresfilled{3}{2}{0}{0};\squaresempty{1}{2}{0}{0};} and \centeredTikZmini{\squaresfilled{3}{2}{0}{0};\squaresempty{1}{2}{2}{0};} conditioned on \centeredTikZmini{\squaresfilled{3}{2}{0}{0};\squaresempty{1}{2}{1}{0};} being $0$, we can apply Eq.~\eqref{eq:ssa_monotinicity_intro} to obtain the following result:
\begin{equation}
\begin{aligned}
 &\entropy{\centeredTikZ{\squaresempty{2}{2}{0}{0};\squaresfilled{1}{2}{2}{0};\squaresfilled{1}{1}{0}{0};}}
+
\entropy{\centeredTikZ{\squaresempty{2}{2}{1}{0};\squaresfilled{1}{2}{0}{0}}}
-
\entropy{\centeredTikZ{\squaresfilled{3}{2}{0}{0};\squaresempty{1}{2}{1}{0};}}
-
\entropy{\centeredTikZ{\squaresempty{3}{2}{0}{0};\squaresfilled{1}{1}{0}{0};}}\\
&\leq
     \entropy{\centeredTikZ{\squaresempty{2}{2}{0}{0};\squaresfilled{1}{2}{2}{0};}}
+
\entropy{\centeredTikZ{\squaresempty{2}{2}{1}{0};\squaresfilled{1}{2}{0}{0}}}
-
\entropy{\centeredTikZ{\squaresfilled{3}{2}{0}{0};\squaresempty{1}{2}{1}{0};}}
-
\entropy{\centeredTikZ{\squaresempty{3}{2}{0}{0}}} \\
&=0
\end{aligned}
\end{equation}
By SSA, the left hand side should be nonnegative. Therefore, it must be equal to $0$. Now, note that
\begin{equation}
    \entropy{\centeredTikZ{\squaresempty{2}{2}{0}{0};\squaresfilled{1}{2}{2}{0};\squaresfilled{1}{1}{0}{0};}}
    -
    \entropy{\centeredTikZ{\squaresfilled{3}{2}{0}{0};\squaresempty{1}{2}{1}{0};}}
     = 
     \entropy{\centeredTikZ{\squaresfilled{3}{2}{0}{0};\squaresempty{2}{1}{0}{1};}}-
     \entropy{\centeredTikZ{\squaresfilled{3}{2}{0}{0};\squaresempty{1}{1}{1}{1};}}.
\end{equation}
because of Eq.~\eqref{constraint:primary2}.

Therefore, we can conclude that
\begin{equation}
   \entropy{\centeredTikZ{\squaresfilled{3}{2}{0}{0};\squaresempty{2}{1}{0}{1};}}
   +
   \entropy{\centeredTikZ{\squaresfilled{3}{2}{0}{0};\squaresempty{2}{2}{1}{0}}}
   -
\entropy{\centeredTikZ{\squaresfilled{3}{2}{0}{0};\squaresempty{1}{1}{1}{1};}} 
- 
\entropy{\centeredTikZ{\squaresempty{3}{2}{0}{0};\squaresfilled{1}{1}{0}{0};}} =0.
\end{equation}
Viewing the left hand side of this equation as a conditional mutual information between \centeredTikZmini{\squaresfilled{3}{2}{0}{0};\squaresempty{1}{1}{0}{1};} and \centeredTikZmini{\squaresfilled{3}{2}{0}{0};\squaresempty{1}{1}{2}{1};\squaresempty{2}{1}{1}{0};} conditioned on \centeredTikZmini{\squaresfilled{3}{2}{0}{0};\squaresempty{1}{1}{1}{1};}, we can use Eq.~\eqref{eq:ssa_monotinicity_intro} to derive the following result:
\begin{equation}
\begin{aligned}
&\entropy{\centeredTikZ{\squaresfilled{3}{2}{0}{0};\squaresempty{2}{1}{0}{1};}}
   +
   \entropy{\centeredTikZ{\squaresfilled{3}{2}{0}{0};\squaresempty{2}{1}{1}{1}}}
   -
\entropy{\centeredTikZ{\squaresfilled{3}{2}{0}{0};\squaresempty{1}{1}{1}{1};}} 
- 
\entropy{\centeredTikZ{\squaresempty{3}{2}{0}{0};\squaresfilled{3}{1}{0}{0};}} \\
&\leq 
\entropy{\centeredTikZ{\squaresfilled{3}{2}{0}{0};\squaresempty{2}{1}{0}{1};}}
   +
   \entropy{\centeredTikZ{\squaresfilled{3}{2}{0}{0};\squaresempty{2}{2}{1}{0}}}
   -
\entropy{\centeredTikZ{\squaresfilled{3}{2}{0}{0};\squaresempty{1}{1}{1}{1};}} 
- 
\entropy{\centeredTikZ{\squaresempty{3}{2}{0}{0};\squaresfilled{1}{1}{0}{0};}}\\
&=0.
\end{aligned}    
\end{equation}
Using translational invariance, we conclude 
\begin{equation}
    \entropy{\centeredTikZ{
\squaresempty{3}{1}{0}{0}
}} = 
2\entropy{\centeredTikZ{
\squaresempty{2}{1}{0}{0}
}}
-
\entropy{\centeredTikZ{
\squaresempty{1}{1}{0}{0}
}}.
\end{equation}

The other equation can be derived in a similar way, by making the following changes to the argument above:
\begin{equation}
\begin{aligned}
    \centeredTikZ{\squaresempty{3}{2}{0}{0};} \to 
    \centeredTikZ{\squaresempty{2}{3}{0}{0};} \qquad \text{ and } \qquad  
    \centeredTikZ{\squaresfilled{3}{2}{0}{0};\squaresempty{2}{1}{0}{1};\squaresempty{1}{1}{1}{0};}
    \to 
    \centeredTikZ{\squaresfilled{2}{3}{0}{0};\squaresempty{2}{1}{0}{1};\squaresempty{1}{1}{0}{2};}.
\end{aligned}
\end{equation}
\end{proof}

\section{Level-$2$ $\to$ Level-$1$}
\label{appendix:2to1}
We prove that level-$2$ snakes become a level-$1$ snake upon taking a partial trace over either of the rows. Specifically,
\twotoone*

Below, we derive Proposition~\ref{prop:2to1} in three steps. First, we derive a set of entropy identities in Appendix~\ref{appendix:2to1_ent}. From these identities, we derive a set of merging algebra identities in Appendix~\ref{appendix:2to1_merge}. With these identities, we complete the proof in Appendix~\ref{appendix:2to1_complete}

\subsection{Entropy identities}
\label{appendix:2to1_ent}
We will need to prove the following identities:
\begin{align}
    \entropy{\centeredTikZ{\squaresempty{2}{2}{1}{0};\squaresempty{1}{1}{0}{1};}}
    &= \entropy{\centeredTikZ{\squaresempty{1}{2}{1}{0};\squaresempty{1}{1}{0}{1};}} + 
    \entropy{\centeredTikZ{\squaresempty{2}{2}{1}{0};}} - 
    \entropy{\centeredTikZ{\squaresempty{1}{2}{1}{0};}},\label{eq:prop2to1_ent1}\\
    \entropy{\centeredTikZ{\squaresempty{3}{1}{0}{1};\squaresempty{1}{1}{2}{0};}} &= \entropy{\centeredTikZ{\squaresempty{2}{1}{0}{1};\squaresempty{1}{1}{1}{0};}} + \entropy{\centeredTikZ{\squaresempty{2}{1}{0}{0};}} - \entropy{\centeredTikZ{\squaresempty{1}{1}{0}{0};}},\label{eq:prop2to1_ent2}
\end{align}
as well as their rotated versions (by $\pi$):
\begin{align}
    \entropy{\centeredTikZ{\squaresempty{3}{1}{0}{0};\squaresempty{2}{1}{0}{1};}}
    &= \entropy{\centeredTikZ{\squaresempty{2}{1}{0}{0};\squaresempty{1}{1}{0}{1};}} + 
    \entropy{\centeredTikZ{\squaresempty{2}{2}{1}{0};}} - 
    \entropy{\centeredTikZ{\squaresempty{1}{2}{1}{0};}},\label{eq:prop2to1_ent3}\\
    \entropy{\centeredTikZ{\squaresempty{3}{1}{0}{0};\squaresempty{1}{1}{0}{1};}} &= \entropy{\centeredTikZ{\squaresempty{2}{1}{0}{0};\squaresempty{1}{1}{0}{1};}} + \entropy{\centeredTikZ{\squaresempty{2}{1}{0}{0};}} - \entropy{\centeredTikZ{\squaresempty{1}{1}{0}{0};}}.\label{eq:prop2to1_ent4}
\end{align}

\begin{lemma}
\begin{equation}
    \entropy{\centeredTikZ{\squaresempty{2}{2}{1}{0};\squaresempty{1}{1}{0}{1};}}
    = \entropy{\centeredTikZ{\squaresempty{1}{2}{1}{0};\squaresempty{1}{1}{0}{1};}} + 
    \entropy{\centeredTikZ{\squaresempty{2}{2}{1}{0};}} - 
    \entropy{\centeredTikZ{\squaresempty{1}{2}{1}{0};}}.
\end{equation}
\label{lemma:prop2to1_entlemma1}
\end{lemma}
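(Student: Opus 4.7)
Rearranging, the identity claimed in Lemma~\ref{lemma:prop2to1_entlemma1} is equivalent to
\begin{equation*}
S(AB) + S(BC) - S(B) - S(ABC) = 0,
\end{equation*}
that is, $I(A:C|B) = 0$, where $A$ is the single cluster appearing as an extra on the left of the top row, $B$ is the left column of the adjacent $2\times 2$ block, and $C$ is the right column of that block. So the lemma is really a conditional-independence statement on the L-shaped marginal.

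The plan is to lift this to the conditional independence already supplied by the \hyperref[constraints:cell1]{Type I-a) cluster constraint} and then descend by monotonicity of SSA. Applied to the translated $3\times 2$ cluster whose three columns are $\widetilde{A} := A \cup \{\text{the cluster directly below } A\}$, $B$, and $C$, the Type I-a) constraint is exactly the statement $I(\widetilde{A}:C|B) = 0$. The translational invariance $\mathcal{C}_L$ is used here only to anchor the auxiliary $3\times 2$ cluster at the location needed to make its left column equal to $\widetilde{A}$ and its middle and right columns equal to $B$ and $C$ respectively.

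I would then shrink $\widetilde{A}$ back to $A$. By monotonicity of the conditional mutual information under removal of a subsystem on the ``conditioned'' side (Eq.~\eqref{eq:ssa_monotinicity_intro}, used together with the symmetry $I(A:C|B) = I(C:A|B)$), one obtains $I(A:C|B) \le I(\widetilde{A}:C|B) = 0$, while SSA supplies the matching lower bound $I(A:C|B) \ge 0$. These two bounds sandwich $I(A:C|B)$ to zero, which, rewritten in the pictorial notation, is exactly the identity in the lemma.

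I do not anticipate any substantive technical obstacle: the argument is a two-line deduction from Type I-a) and SSA monotonicity. The only care needed is the pictorial bookkeeping to verify that the three columns of the auxiliary $3\times 2$ cluster, after the appropriate translation, really do coincide with $\widetilde{A}$, $B$, and $C$ as drawn in the statement.
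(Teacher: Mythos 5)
Your proof is correct and is essentially identical to the paper's: the paper also rewrites Type I-a) as $I(\widetilde{A}:C|B)=0$ for the three columns of a $3\times 2$ cluster, removes the bottom cluster of the left column from the conditioned subsystem via Eq.~\eqref{eq:ssa_monotinicity_intro}, and sandwiches the result against the SSA lower bound. No gaps.
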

\begin{proof}
By the \hyperref[constraints:cell1]{Type I-a) cluster constraint}, 
\begin{equation}
\entropy{\centeredTikZ{\squaresempty{2}{2}{0}{0};\squaresfilled{1}{2}{2}{0};}}
+
\entropy{\centeredTikZ{\squaresempty{2}{2}{1}{0};\squaresfilled{1}{2}{0}{0}}}
-
\entropy{\centeredTikZ{\squaresfilled{3}{2}{0}{0};\squaresempty{1}{2}{1}{0};}}
-
\entropy{\centeredTikZ{\squaresempty{3}{2}{0}{0}}} = 0.
\label{eq:lemma803temp1}
\end{equation}
Interpreting this as a conditional mutual information between \centeredTikZmini{\squaresfilled{3}{2}{0}{0};\squaresempty{1}{2}{0}{0};} and \centeredTikZmini{\squaresfilled{3}{2}{0}{0};\squaresempty{1}{2}{2}{0};} conditioned on \centeredTikZmini{\squaresfilled{3}{2}{0}{0};\squaresempty{1}{2}{1}{0};} being $0$, we can apply Eq.~\eqref{eq:ssa_monotinicity_intro} to obtain the following result:
\begin{equation}
\begin{aligned}
 &\entropy{\centeredTikZ{\squaresempty{2}{2}{0}{0};\squaresfilled{1}{2}{2}{0};\squaresfilled{1}{1}{0}{0};}}
+
\entropy{\centeredTikZ{\squaresempty{2}{2}{1}{0};\squaresfilled{1}{2}{0}{0}}}
-
\entropy{\centeredTikZ{\squaresfilled{3}{2}{0}{0};\squaresempty{1}{2}{1}{0};}}
-
\entropy{\centeredTikZ{\squaresempty{3}{2}{0}{0};\squaresfilled{1}{1}{0}{0};}}\\
&\leq
     \entropy{\centeredTikZ{\squaresempty{2}{2}{0}{0};\squaresfilled{1}{2}{2}{0};}}
+
\entropy{\centeredTikZ{\squaresempty{2}{2}{1}{0};\squaresfilled{1}{2}{0}{0}}}
-
\entropy{\centeredTikZ{\squaresfilled{3}{2}{0}{0};\squaresempty{1}{2}{1}{0};}}
-
\entropy{\centeredTikZ{\squaresempty{3}{2}{0}{0}}}. \label{eq:lemma803temp0}
\end{aligned}
\end{equation}
By SSA, the first line of Eq.~\eqref{eq:lemma803temp0} must be nonnegative. By Eq.~\eqref{eq:lemma803temp1}, the second line must be $0$. Therefore, the first line must be $0$. This proves our claim.
\end{proof}

\begin{lemma}
\begin{equation}
    \entropy{\centeredTikZ{\squaresempty{3}{1}{0}{1};\squaresempty{1}{1}{2}{0};}} = \entropy{\centeredTikZ{\squaresempty{2}{1}{0}{1};\squaresempty{1}{1}{1}{0};}} + \entropy{\centeredTikZ{\squaresempty{2}{1}{0}{0};}} - \entropy{\centeredTikZ{\squaresempty{1}{1}{0}{0};}}.
\end{equation}
\label{lemma:prop2to1_entlemma2}
\end{lemma}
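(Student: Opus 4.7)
The plan is to recognize that the claimed identity, after invoking translational invariance~$\mathcal{C}_L$, is equivalent to the single conditional mutual information identity
\[
I\bigl((1,2) : \{(3,1),(3,2)\} \bigm| (2,2)\bigr) = 0.
\]
Indeed, $\mathcal{C}_L$ lets me replace $\entropy{\centeredTikZ{\squaresempty{2}{1}{0}{0};}}$ by $S(\{(2,2),(3,2)\})$ and $\entropy{\centeredTikZ{\squaresempty{1}{1}{0}{0};}}$ by $S(\{(2,2)\})$, while $\entropy{\centeredTikZ{\squaresempty{2}{1}{0}{1};\squaresempty{1}{1}{1}{0};}} = S(\{(1,2),(2,2),(2,1)\})$ coincides with $S(\{(2,2),(3,2),(3,1)\})$ under a one-step horizontal translation; the desired identity collapses to the CMI~$=0$ statement above.

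I would assemble this CMI from two already-established building blocks. First, the \hyperref[constraints:snake]{Snake-c) constraint}, after translating upward by $(0,1)$, yields $I((1,2):(2,1)\mid(2,2))=0$. Second, the preceding Lemma~\ref{lemma:prop2to1_entlemma1}, when rewritten as a CMI, is precisely $I((1,2):\{(3,1),(3,2)\}\mid\{(2,1),(2,2)\})=0$. Adding these via the chain rule for CMI,
\[
I\bigl((1,2):\{(2,1),(3,1),(3,2)\}\bigm|(2,2)\bigr) = I\bigl((1,2):(2,1)\bigm|(2,2)\bigr) + I\bigl((1,2):\{(3,1),(3,2)\}\bigm|(2,1),(2,2)\bigr) = 0,
\]
and then invoking the monotonicity relation~\eqref{eq:ssa_monotinicity_intro} (dropping $(2,1)$ from the conditioned side),
\[
0 \le I\bigl((1,2):\{(3,1),(3,2)\}\bigm|(2,2)\bigr) \le I\bigl((1,2):\{(2,1),(3,1),(3,2)\}\bigm|(2,2)\bigr) = 0,
\]
forces the target CMI to vanish. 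The final step is to unfold this CMI into an entropy equation and use translations to recast each entropy in the canonical cluster shape appearing in the statement.

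The main obstacle I anticipate lies entirely in choosing the right building blocks. A tempting shortcut is to start from the \hyperref[constraints:snake]{Snake-b) constraint}, which (after translation) already gives $I((1,2):(3,2)\mid(2,2))=0$, and attempt to enlarge the right-hand side by adjoining $(3,1)$; however, monotonicity~\eqref{eq:ssa_monotinicity_intro} then runs in the wrong direction, and no direct cluster constraint in Table~\ref{table:constraints} hands us a CMI with exactly $(2,2)$ alone on the conditioning side. The detour through the enlarged conditioning set $\{(2,1),(2,2)\}$ supplied by Lemma~\ref{lemma:prop2to1_entlemma1} is therefore essential: one first \emph{enlarges} the conditioning, and only afterwards peels off~$(2,1)$ using the chain rule together with monotonicity. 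Beyond this, the argument is routine bookkeeping with $\mathcal{C}_L$.
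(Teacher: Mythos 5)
Your proof is correct and follows essentially the same route as the paper: both arguments combine Lemma~\ref{lemma:prop2to1_entlemma1} with the Snake-c) constraint to establish the vanishing of the enlarged quantity $I\left((1,2):\{(2,1),(3,1),(3,2)\}\mid(2,2)\right)$ and then finish with a single SSA monotonicity step. The only (minor) difference is at the end: the paper applies Eq.~\eqref{eq:ssa_monotonicity_intro2} to absorb $\{(3,1),(3,2)\}$ into the conditioning system and must then unwind the resulting identity using Snake-c) and Lemma~\ref{lemma:prop2to1_entlemma1} once more, whereas you apply Eq.~\eqref{eq:ssa_monotinicity_intro} to discard $(2,1)$ from the conditioned side and land directly on the target conditional mutual information, which is a slightly more economical finish.
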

\begin{proof}
Using Lemma~\ref{lemma:prop2to1_entlemma1} and the \hyperref[constraints:snake]{Snake-c) constraint}, one can derive
\begin{equation}
    \entropy{\centeredTikZ{\squaresempty{2}{2}{1}{0};\squaresempty{1}{1}{0}{1};}} = \entropy{\centeredTikZ{\squaresempty{2}{2}{0}{0};}}
    +
    \entropy{\centeredTikZ{\squaresempty{2}{1}{0}{0};}}
    -
    \entropy{\centeredTikZ{\squaresempty{1}{1}{0}{0};}},
\end{equation}
which can be rewritten as
\begin{equation}
    \entropy{\centeredTikZ{\squaresempty{2}{2}{1}{0};\squaresfilled{1}{1}{0}{1};}}
    +
    \entropy{\centeredTikZ{\squaresfilled{2}{2}{1}{0};\squaresempty{2}{1}{0}{1};}}
    -
    \entropy{\centeredTikZ{\squaresfilled{2}{2}{1}{0};\squaresempty{2}{1}{0}{1};\squaresfilled{1}{1}{0}{1};}}
    -
    \entropy{\centeredTikZ{\squaresempty{2}{2}{1}{0};\squaresempty{1}{1}{0}{1};}}=0. \label{eq:lemma818temp0}
\end{equation}
Using Eq.~\eqref{eq:ssa_monotonicity_intro2}, 
\begin{equation}
    \begin{aligned}
    \entropy{\centeredTikZ{\squaresempty{2}{2}{1}{0};\squaresfilled{1}{1}{0}{1};}}
    +
    \entropy{\centeredTikZ{\squaresfilled{2}{2}{1}{0};\squaresempty{2}{1}{0}{1};\squaresempty{1}{2}{2}{0};}}
    -
    \entropy{\centeredTikZ{\squaresfilled{2}{2}{1}{0};\squaresempty{2}{1}{0}{1};\squaresfilled{1}{1}{0}{1};\squaresempty{1}{2}{2}{0};}}
    -
    \entropy{\centeredTikZ{\squaresempty{2}{2}{1}{0};\squaresempty{1}{1}{0}{1};}}\\
    \leq
    \entropy{\centeredTikZ{\squaresempty{2}{2}{1}{0};\squaresfilled{1}{1}{0}{1};}}
    +
    \entropy{\centeredTikZ{\squaresfilled{2}{2}{1}{0};\squaresempty{2}{1}{0}{1};}}
    -
    \entropy{\centeredTikZ{\squaresfilled{2}{2}{1}{0};\squaresempty{2}{1}{0}{1};\squaresfilled{1}{1}{0}{1};}}
    -
    \entropy{\centeredTikZ{\squaresempty{2}{2}{1}{0};\squaresempty{1}{1}{0}{1};}}.\label{eq:lemma818_temp1}
    \end{aligned}
\end{equation}
By SSA, the first line of Eq.~\eqref{eq:lemma818_temp1} must be nonnegative. By Eq.~\eqref{eq:lemma818temp0}, the second line must be $0$. Therefore, the first line must be $0$. The main claim follows from this fact, by using the \hyperref[constraints:snake]{Snake-c) constraint} and the identity in Lemma~\ref{lemma:prop2to1_entlemma1}.
\end{proof}

The proof of Eqs.~\eqref{eq:prop2to1_ent3} and~\eqref{eq:prop2to1_ent4} follows the same strategy up to a $\pi$ rotation in the chosen constraints and subsystems. As such, we state these results without proof.

\begin{lemma}
\begin{equation}
    \entropy{\centeredTikZ{\squaresempty{3}{1}{0}{0};\squaresempty{2}{1}{0}{1};}}
    = \entropy{\centeredTikZ{\squaresempty{2}{1}{0}{0};\squaresempty{1}{1}{0}{1};}} + 
    \entropy{\centeredTikZ{\squaresempty{2}{2}{1}{0};}} - 
    \entropy{\centeredTikZ{\squaresempty{1}{2}{1}{0};}}.
\end{equation}
\label{lemma:prop2to1_entlemma3}
\end{lemma}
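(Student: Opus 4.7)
The plan is to mirror the argument of Lemma~\ref{lemma:prop2to1_entlemma1} under a $180^\circ$ rotation of the $3\times 2$ cluster, which swaps the roles of the leftmost and rightmost $1\times 2$ columns. The seed identity is again the \hyperref[constraints:cell1]{Type I-a) cluster constraint}, which I will rewrite as the vanishing conditional mutual information
\begin{equation*}
\entropy{\centeredTikZ{\squaresempty{2}{2}{0}{0};\squaresfilled{1}{2}{2}{0};}} + \entropy{\centeredTikZ{\squaresempty{2}{2}{1}{0};\squaresfilled{1}{2}{0}{0};}} - \entropy{\centeredTikZ{\squaresfilled{3}{2}{0}{0};\squaresempty{1}{2}{1}{0};}} - \entropy{\centeredTikZ{\squaresempty{3}{2}{0}{0};}} = 0,
\end{equation*}
expressing that the left $1\times 2$ column is conditionally independent of the right $1\times 2$ column given the middle $1\times 2$ column.

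Next I will apply the monotonicity inequality~\eqref{eq:ssa_monotinicity_intro}, but this time by shrinking the ``$C$'' subsystem (the right column) down to just its bottom cell, rather than shrinking the ``$A$'' subsystem (the left column) down to just its top cell as was done in Lemma~\ref{lemma:prop2to1_entlemma1}. Monotonicity bounds the new conditional mutual information above by zero, SSA bounds it below by zero, and so it must vanish:
\begin{equation*}
\entropy{\centeredTikZ{\squaresempty{2}{2}{0}{0};\squaresfilled{1}{2}{2}{0};}} + \entropy{\centeredTikZ{\squaresempty{2}{2}{1}{0};\squaresfilled{1}{2}{0}{0};\squaresfilled{1}{1}{2}{1};}} - \entropy{\centeredTikZ{\squaresfilled{3}{2}{0}{0};\squaresempty{1}{2}{1}{0};}} - \entropy{\centeredTikZ{\squaresempty{3}{2}{0}{0};\squaresfilled{1}{1}{2}{1};}} = 0.
\end{equation*}

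Finally I will identify each term using the translational invariance constraint $\mathcal{C}_L$: the first term equals $\entropy{\centeredTikZ{\squaresempty{2}{2}{1}{0};}}$, the second term equals $\entropy{\centeredTikZ{\squaresempty{2}{1}{0}{0};\squaresempty{1}{1}{0}{1};}}$, the third term equals $\entropy{\centeredTikZ{\squaresempty{1}{2}{1}{0};}}$, and the fourth term is exactly $\entropy{\centeredTikZ{\squaresempty{3}{1}{0}{0};\squaresempty{2}{1}{0}{1};}}$. Rearranging then yields the claimed identity. The main obstacle is purely pictorial bookkeeping, namely confirming that after deleting the top cell of the right column the ``$ABC$'' subsystem is exactly the five-cell shape on the left-hand side of the target identity and the ``$BC$'' subsystem is a translate of the three-cell ``L'' shape on the right-hand side; once these identifications are checked, the argument is structurally identical to Lemma~\ref{lemma:prop2to1_entlemma1}.
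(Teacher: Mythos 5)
Your proposal is correct and is exactly the route the paper intends: the paper states this lemma without proof, remarking that it follows from the proof of Lemma~\ref{lemma:prop2to1_entlemma1} "up to a $\pi$ rotation in the chosen constraints and subsystems," and your argument is precisely that rotation carried out explicitly (seed the \hyperref[constraints:cell1]{Type I-a) cluster constraint} as a vanishing conditional mutual information, shrink the right column to its bottom cell via Eq.~\eqref{eq:ssa_monotinicity_intro}, sandwich with SSA, and translate). The pictorial bookkeeping you flag does check out: after deleting the top cell of the right column, the $ABC'$ region is the five-cell shape on the left-hand side and $BC'$ is a translate of the three-cell L-shape, so the identification of terms via $\mathcal{C}_L$ is valid.
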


\begin{lemma}
\begin{equation}
    \entropy{\centeredTikZ{\squaresempty{3}{1}{0}{0};\squaresempty{1}{1}{0}{1};}} = \entropy{\centeredTikZ{\squaresempty{2}{1}{0}{0};\squaresempty{1}{1}{0}{1};}} + \entropy{\centeredTikZ{\squaresempty{2}{1}{0}{0};}} - \entropy{\centeredTikZ{\squaresempty{1}{1}{0}{0};}}.
\end{equation}
\label{lemma:prop2to1_entlemma4}
\end{lemma}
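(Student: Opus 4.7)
The plan is to mirror the proof of Lemma \ref{lemma:prop2to1_entlemma2} step by step, replacing each ingredient with its $\pi$-rotated analogue. The proof of Lemma \ref{lemma:prop2to1_entlemma2} used (i) Lemma \ref{lemma:prop2to1_entlemma1} and (ii) the \hyperref[constraints:snake]{Snake-c) constraint}. In the present rotated setting, the role of Lemma \ref{lemma:prop2to1_entlemma1} will be played by Lemma \ref{lemma:prop2to1_entlemma3}, and the role of the Snake-c) constraint will be played by the \hyperref[constraints:snake]{Snake-a) constraint}. The goal is to produce the same three-move sequence: (1) deduce an auxiliary entropy identity; (2) reinterpret it as a vanishing conditional mutual information; (3) sharpen it via a monotonicity step and extract the claimed equality.

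First, I would combine Lemma \ref{lemma:prop2to1_entlemma3} with the \hyperref[constraints:snake]{Snake-a) constraint} and translational invariance to derive the intermediate identity
\begin{equation*}
\entropy{\centeredTikZ{\squaresempty{3}{1}{0}{0};\squaresempty{2}{1}{0}{1};}}
 = \entropy{\centeredTikZ{\squaresempty{2}{2}{0}{0};}} + \entropy{\centeredTikZ{\squaresempty{2}{1}{0}{0};}} - \entropy{\centeredTikZ{\squaresempty{1}{1}{0}{0};}},
\end{equation*}
which is the $\pi$-rotated counterpart of the equality appearing just before Eq.~\eqref{eq:lemma818temp0}. This in turn can be reorganized (just as Eq.~\eqref{eq:lemma818temp0} was) as the statement that a particular conditional mutual information over the cluster $\centeredTikZmini{\squaresempty{3}{1}{0}{0};\squaresempty{2}{1}{0}{1};}$ vanishes, where the subsystems are the $\pi$-rotations of those used in the original proof.

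Next, I would apply the monotonicity relation Eq.~\eqref{eq:ssa_monotonicity_intro2} by enlarging the conditioning subsystem: specifically, I would adjoin a $\pi$-rotated analogue of the $\centeredTikZmini{\squaresempty{1}{2}{2}{0};}$ cluster that was appended in the original argument. Because the CMI before enlargement is zero, and since enlarging the conditioning subsystem can only decrease it while SSA keeps it nonnegative, the enlarged CMI must also equal zero. Finally, unpacking this second vanishing CMI via the \hyperref[constraints:snake]{Snake-a) constraint} and Lemma \ref{lemma:prop2to1_entlemma3} (again, the $\pi$-rotated analogues of the tools used at the end of the proof of Lemma \ref{lemma:prop2to1_entlemma2}) produces the desired identity.

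The only real obstacle is bookkeeping: correctly identifying the $\pi$-rotated clusters, anchor points, and partial-trace markers so that the arrows ``Snake-c $\leftrightarrow$ Snake-a'' and ``Lemma \ref{lemma:prop2to1_entlemma1} $\leftrightarrow$ Lemma \ref{lemma:prop2to1_entlemma3}'' line up at every step. There is no new conceptual ingredient; once the rotated diagrams are fixed consistently, each of the three moves in the proof of Lemma \ref{lemma:prop2to1_entlemma2} transcribes essentially verbatim. This is presumably why the paper's author chose to omit the proof.
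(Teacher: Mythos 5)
Your proposal is correct and is exactly the paper's intended argument: the paper explicitly states that Eqs.~\eqref{eq:prop2to1_ent3} and~\eqref{eq:prop2to1_ent4} "follow the same strategy up to a $\pi$ rotation in the chosen constraints and subsystems" and omits the proof, and your dictionary (Lemma~\ref{lemma:prop2to1_entlemma1} $\to$ Lemma~\ref{lemma:prop2to1_entlemma3}, \hyperref[constraints:snake]{Snake-c)} $\to$ \hyperref[constraints:snake]{Snake-a)}, with the intermediate identity being the rotated counterpart of Corollary~\ref{corollary:prop2to1_1}, i.e.\ Corollary~\ref{corollary:prop2to1_2}) is the right one. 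The three-move transcription (intermediate identity, vanishing CMI, monotonicity via Eq.~\eqref{eq:ssa_monotonicity_intro2}, then unpacking) goes through verbatim on the rotated clusters.
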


Moreover, using Lemma~\ref{lemma:prop2to1_entlemma1} and the \hyperref[constraints:snake]{Snake-c) constraint}, one can conclude:
\begin{corollary}
\begin{equation}
    \entropy{\centeredTikZ{\squaresempty{2}{2}{1}{0};\squaresempty{1}{1}{0}{1};}} = \entropy{\centeredTikZ{\squaresempty{2}{2}{0}{0};}}
    +
    \entropy{\centeredTikZ{\squaresempty{2}{1}{0}{0};}}
    -
    \entropy{\centeredTikZ{\squaresempty{1}{1}{0}{0};}}.
\end{equation}
\label{corollary:prop2to1_1}
\end{corollary}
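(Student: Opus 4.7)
The plan is essentially algebraic, since the statement is advertised as a direct corollary of Lemma~\ref{lemma:prop2to1_entlemma1} combined with a snake constraint. First I would invoke Lemma~\ref{lemma:prop2to1_entlemma1} verbatim to rewrite the left-hand side as
\begin{equation*}
\entropy{\centeredTikZ{\squaresempty{2}{2}{1}{0};\squaresempty{1}{1}{0}{1};}} = \entropy{\centeredTikZ{\squaresempty{1}{2}{1}{0};\squaresempty{1}{1}{0}{1};}} + \entropy{\centeredTikZ{\squaresempty{2}{2}{1}{0};}} - \entropy{\centeredTikZ{\squaresempty{1}{2}{1}{0};}},
\end{equation*}
which trades the $2\times 2$ portion for a single column at the cost of introducing a three-block L-shape on the right.

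The one substantive observation is that this three-block cluster $\centeredTikZmini{\squaresempty{1}{2}{1}{0};\squaresempty{1}{1}{0}{1};}$ is, up to a single lattice translation, the same as the diagonal L-shape $\centeredTikZmini{\squaresempty{2}{1}{0}{0};\squaresempty{1}{1}{1}{-1};}$ whose entropy is pinned down by the \hyperref[constraints:snake]{Snake-c) constraint}. Invoking $\mathcal{C}_L$ to identify these two marginals and then applying Snake-c) yields
\begin{equation*}
\entropy{\centeredTikZ{\squaresempty{1}{2}{1}{0};\squaresempty{1}{1}{0}{1};}} = \entropy{\centeredTikZ{\squaresempty{2}{1}{0}{0};}} + \entropy{\centeredTikZ{\squaresempty{1}{2}{0}{0};}} - \entropy{\centeredTikZ{\squaresempty{1}{1}{0}{0};}}.
\end{equation*}

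Finally, I would use $\mathcal{C}_L$ once more to replace $\entropy{\centeredTikZ{\squaresempty{2}{2}{1}{0};}}$ by $\entropy{\centeredTikZ{\squaresempty{2}{2}{0}{0};}}$ and $\entropy{\centeredTikZ{\squaresempty{1}{2}{1}{0};}}$ by $\entropy{\centeredTikZ{\squaresempty{1}{2}{0}{0};}}$, and substitute everything into the first display. The two copies of $\entropy{\centeredTikZ{\squaresempty{1}{2}{0}{0};}}$ cancel, and what remains is exactly the claimed identity. There is no real obstacle here: all of the analytic work (an SSA-monotonicity argument feeding on the \hyperref[constraints:cell1]{Type I-a) cluster constraint}) has already been absorbed into Lemma~\ref{lemma:prop2to1_entlemma1}, so the corollary is a one-line bookkeeping step whose only conceptual content is spotting the translational equivalence between the two diagonal L-shapes.
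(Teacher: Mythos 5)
Your proposal is correct and is exactly the paper's argument: the corollary is obtained by substituting the Snake-c) constraint (applied, via $\mathcal{C}_L$, to the translated three-cluster L-shape) into the identity of Lemma~\ref{lemma:prop2to1_entlemma1}, after which the two vertical-domino entropies cancel. Nothing further is needed.
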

Similarly, by a $\pi$-rotation,
\begin{corollary}
\begin{equation}
    \entropy{\centeredTikZ{\squaresempty{2}{2}{0}{0};\squaresempty{1}{1}{2}{0};}} = \entropy{\centeredTikZ{\squaresempty{2}{2}{0}{0};}}
    +
    \entropy{\centeredTikZ{\squaresempty{2}{1}{0}{0};}}
    -
    \entropy{\centeredTikZ{\squaresempty{1}{1}{0}{0};}}.
\end{equation}
\label{corollary:prop2to1_2}
\end{corollary}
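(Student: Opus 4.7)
The plan is to mirror the derivation of Corollary~\ref{corollary:prop2to1_1} under a $\pi$-rotation of every shape involved, using that the primary constraints $\mathcal{C}_{M,P}$ together with translational invariance $\mathcal{C}_L$ respect this symmetry.

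First, I would establish the rotated analog of Lemma~\ref{lemma:prop2to1_entlemma1}, namely
\begin{equation*}
\entropy{\centeredTikZ{\squaresempty{2}{2}{0}{0};\squaresempty{1}{1}{2}{0};}} = \entropy{\centeredTikZ{\squaresempty{1}{2}{1}{0};\squaresempty{1}{1}{2}{0};}} + \entropy{\centeredTikZ{\squaresempty{2}{2}{0}{0};}} - \entropy{\centeredTikZ{\squaresempty{1}{2}{1}{0};}}.
\end{equation*}
This is exactly the statement $I(A{:}C|B) = 0$ with $A$ the left column of the $2 \times 2$, $B$ its right column, and $C$ the added cell to the right of $B$. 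Following the template of Lemma~\ref{lemma:prop2to1_entlemma1}, I would read the \hyperref[constraints:cell1]{Type I-a) cluster constraint} inside the $3 \times 2$ cluster obtained by adjoining the upper-right corner cell; this constraint asserts that the larger conditional mutual information $I(A{:}CD|B)$ vanishes, where $D$ is the adjoined cell. Applying the monotonicity inequality Eq.~\eqref{eq:ssa_monotinicity_intro} then gives $I(A{:}C|B) \leq I(A{:}CD|B) = 0$, and SSA provides the reverse direction $\geq 0$, so $I(A{:}C|B) = 0$, which is the claimed identity.

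Next I would reduce the three-cell snake entropy on the right-hand side. By $\mathcal{C}_L$, the shape $\centeredTikZ{\squaresempty{1}{2}{1}{0};\squaresempty{1}{1}{2}{0};}$ is a lattice translate of the snake appearing in the \hyperref[constraints:snake]{Snake-a) constraint}, so
\begin{equation*}
\entropy{\centeredTikZ{\squaresempty{1}{2}{1}{0};\squaresempty{1}{1}{2}{0};}} = \entropy{\centeredTikZ{\squaresempty{2}{1}{0}{0};}} + \entropy{\centeredTikZ{\squaresempty{1}{2}{0}{0};}} - \entropy{\centeredTikZ{\squaresempty{1}{1}{0}{0};}}.
\end{equation*}
Substituting this into the identity from the first step and using $\mathcal{C}_L$ again to identify $\entropy{\centeredTikZ{\squaresempty{1}{2}{1}{0};}}$ with $\entropy{\centeredTikZ{\squaresempty{1}{2}{0}{0};}}$, the two $1 \times 2$ contributions cancel and the claimed identity follows.

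The only nontrivial ingredient is the first step, and the main subtlety there is combinatorial bookkeeping: one must identify the tripartition of the enlarged $3 \times 2$ cluster so that the Type I-a) CMI, after a single-cell shrink of the conditioned region, produces exactly the CMI for the rightward $L$-shape of interest. Once the subsystems are correctly aligned, the sandwich between Eq.~\eqref{eq:ssa_monotinicity_intro} and SSA closes automatically, exactly as in Lemma~\ref{lemma:prop2to1_entlemma1} and Corollary~\ref{corollary:prop2to1_1}.
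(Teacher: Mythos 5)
Your proof is correct and is essentially the paper's own argument: the paper derives this corollary ``by a $\pi$-rotation'' of Corollary~\ref{corollary:prop2to1_1}, and you have simply unrolled that rotation explicitly --- the Type I-a) constraint read as $I(\text{col }1:\text{col }3\,|\,\text{col }2)=0$ on the $3\times 2$ cluster, monotonicity under dropping the upper-right cell, SSA to close the sandwich, and then the Snake-a) constraint (which is exactly the $\pi$-rotation of the Snake-c) constraint used for Corollary~\ref{corollary:prop2to1_1}) to reduce the three-cell snake entropy. No gaps.
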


\subsection{Merging algebra identities}
\label{appendix:2to1_merge}
Let us use the key identities that follow from Appendix~\ref{appendix:2to1_ent}.

\begin{lemma}
\begin{equation}
\begin{tikzcd}
\centeredTikZ{\squaresdotted{3}{2}{0}{0};\squaresempty{2}{1}{0}{1};\squaresempty{1}{1}{1}{0};} \maxmerge \centeredTikZ{\squaresdotted{1}{2}{0}{0};\squaresempty{2}{2}{1}{0};} 
 \arrow[r, equal] \arrow[d, equal]&  
\centeredTikZ{\squaresdotted{3}{2}{0}{0};\squaresempty{2}{1}{0}{1};\squaresempty{1}{1}{1}{0};} \rightmerge \centeredTikZ{\squaresdotted{1}{2}{0}{0};\squaresempty{2}{2}{1}{0};}\arrow[d, equal]
 \\
\centeredTikZ{\squaresdotted{3}{2}{0}{0};\squaresempty{2}{1}{0}{1};} \maxmerge \centeredTikZ{\squaresdotted{1}{2}{0}{0};\squaresempty{2}{2}{1}{0};} \arrow[r, equal]
& 
\centeredTikZ{\squaresdotted{3}{2}{0}{0};\squaresempty{2}{1}{0}{1};} \rightmerge \centeredTikZ{\squaresdotted{1}{2}{0}{0};\squaresempty{2}{2}{1}{0};}
\end{tikzcd}
\label{eq:lemma849_temp-1}
\end{equation}
\label{lemma:prop2to1_rectangle1}
\end{lemma}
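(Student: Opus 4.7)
The plan is to apply the Fundamental Lemma (Lemma~\ref{lemma:fundamental}) to both the $(\lambda,\sigma)$ pair on the top row and the $(\lambda',\sigma)$ pair on the bottom row of Eq.~\eqref{eq:lemma849_temp-1}, where I abbreviate
\[
\lambda := \centeredTikZ{\squaresempty{2}{1}{0}{1};\squaresempty{1}{1}{1}{0};}, \qquad \lambda' := \centeredTikZ{\squaresempty{2}{1}{0}{1};}, \qquad \sigma := \centeredTikZ{\squaresempty{2}{2}{1}{0};}.
\]
Both max-merges will be shown to equal a common state $\tau$, namely the reduction of $\centeredTikZ{\squaresempty{3}{3}{0}{0};}$ to the five-cell union $\text{Supp}(\lambda) \cup \text{Supp}(\sigma)$, which fits inside a single $3\times 3$ cluster. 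Once this is done, the horizontal equalities $\maxmerge = \rightmerge$ follow from conditions~2 and~3 of Lemma~\ref{lemma:fundamental}, and the vertical equalities follow from the uniqueness of the max-entropy extension~\cite{Kim2014}.

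Consistency of $\tau$ with each of $\lambda$, $\lambda'$, and $\sigma$ is a direct consequence of $\mathcal{C}_L$ together with the definitions of these marginals: $\sigma$ is a $2\times 2$ marginal sitting inside the ambient $3\times 3$ cluster, while $\lambda$ is by definition the reduction of a $2\times 2$ marginal to its L-shaped sub-support (and $\lambda'$ is a further reduction of $\lambda$). This verifies the consistency half of condition~1 of the Fundamental Lemma for both pairs.

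The key computational step is to check that $\tau$ saturates the SSA upper bound $S(\cdot) \leq S(\text{marginal}_1) + S(\text{marginal}_2) - S(\text{overlap})$ for each of the two pairs. For $(\lambda,\sigma)$, the overlap is $B = \centeredTikZmini{\squaresempty{1}{2}{1}{0};}$, and Lemma~\ref{lemma:prop2to1_entlemma1} is exactly the statement $S(\tau) = S(\lambda) + S(\sigma) - S(\tau_B)$; hence condition~1 of the Fundamental Lemma holds and $\lambda \maxmerge \sigma = \tau$. For $(\lambda',\sigma)$, the overlap shrinks to a single cell $B'$; rewriting $S(\lambda)$ via the \hyperref[constraints:snake]{Snake-a) constraint} as $S(\lambda) = S(\centeredTikZmini{\squaresempty{2}{1}{0}{0};}) + S(\centeredTikZmini{\squaresempty{1}{2}{0}{0};}) - S(\centeredTikZmini{\squaresempty{1}{1}{0}{0};})$ and substituting $S(\tau_B) = S(\centeredTikZmini{\squaresempty{1}{2}{0}{0};})$, $S(\lambda') = S(\centeredTikZmini{\squaresempty{2}{1}{0}{0};})$, and $S(\tau_{B'}) = S(\centeredTikZmini{\squaresempty{1}{1}{0}{0};})$ into the previous identity yields $S(\tau) = S(\lambda') + S(\sigma) - S(\tau_{B'})$. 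Condition~1 is thus verified for the second pair as well, and $\lambda' \maxmerge \sigma = \tau$ by uniqueness.

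The main obstacle is geometric bookkeeping: one must correctly identify the five-cell union, the vertical two-cell overlap $B$, and the single-cell overlap $B'$, and match each reduced density matrix appearing in the computation to a shape in Lemma~\ref{lemma:prop2to1_entlemma1} or the Snake-a constraint. After this bookkeeping is done, the argument reduces to two invocations of the Fundamental Lemma separated by a single entropy rewrite, with no additional merging-algebra machinery required.
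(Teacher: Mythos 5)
Your proof is correct and follows essentially the same route as the paper: you verify condition~1 of the Fundamental Lemma for both pairs against the same five-cell reduction of the $3\times 3$ marginal (the paper packages your inline entropy rewrite as Corollary~\ref{corollary:prop2to1_1}), then read off the horizontal identities from condition~3 and the vertical ones from the uniqueness of the maximum-entropy state. One small correction: the L-shape $\lambda$ here has its extra cell hanging below the \emph{right} end of the $2\times 1$ row, so the decomposition of $S(\lambda)$ you invoke is the Snake-c) constraint rather than Snake-a) --- though the two constraints have identical right-hand sides, so nothing downstream changes.
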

\begin{proof}
The horizontal identity at the top of Eq.~\eqref{eq:lemma849_temp-1} can be derived by applying Lemma~\ref{lemma:prop2to1_entlemma1} to Lemma~\ref{lemma:fundamental}. Specifically, $\centeredTikZmini{\squaresempty{2}{2}{1}{0};
\squaresempty{1}{1}{0}{1};}$ is a state that satisfies the first condition of Lemma~\ref{lemma:fundamental}. Therefore, the top left corner of Eq.~\eqref{eq:lemma849_temp-1} is precisely $\centeredTikZmini{\squaresempty{2}{2}{1}{0};
\squaresempty{1}{1}{0}{1};}$. The third condition of Lemma~\ref{lemma:fundamental} in this context establishes the top horizontal identity.

Corollary~\ref{corollary:prop2to1_1} implies that $\centeredTikZmini{\squaresempty{2}{2}{1}{0};
\squaresempty{1}{1}{0}{1};}$ again satisfies the first condition of Lemma~\ref{lemma:fundamental} but with a different choice of marginals. Specifically, the bottom left corner of Eq.~\eqref{eq:lemma849_temp-1} is precisely $\centeredTikZmini{\squaresempty{2}{2}{1}{0};
\squaresempty{1}{1}{0}{1};}$. The third condition of Lemma~\ref{lemma:fundamental} in this context establishes the bottom horizontal identity.

The bottom and top on the left corners of Eq.~\eqref{eq:lemma849_temp-1} are identical density matrices, namely $\centeredTikZmini{\squaresempty{2}{2}{1}{0};
\squaresempty{1}{1}{0}{1};}$. This leads to the vertical identities.
\end{proof}

\begin{lemma}
\begin{equation}
    \centeredTikZ{\squaresdotted{3}{2}{0}{0};\squaresempty{3}{1}{0}{1};\squaresempty{1}{1}{2}{0};}
    =
    \centeredTikZ{\squaresdotted{3}{2}{0}{0};\squaresempty{2}{1}{1}{1};\squaresempty{1}{1}{2}{0};}
    \maxmerge
    \centeredTikZ{\squaresdotted{3}{2}{0}{0};\squaresempty{2}{1}{0}{1};}
    =
    \centeredTikZ{\squaresdotted{3}{2}{0}{0};\squaresempty{2}{1}{1}{1};\squaresempty{1}{1}{2}{0};}
    \rightmerge
    \centeredTikZ{\squaresdotted{3}{2}{0}{0};\squaresempty{2}{1}{0}{1};}.
\end{equation}
\label{lemma:prop2to1_initial1}
\end{lemma}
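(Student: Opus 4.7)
The plan is to recognize the $4$-cell L-shape on the left as the unique Markov chain extending the two operands, and then read off the merging identities from the Fundamental Lemma (Lemma~\ref{lemma:fundamental}). First I would label the disjoint pieces of the support: let $A=\{(2,1),(2,0)\}$, $B=\{(1,1)\}$, and $C=\{(0,1)\}$. Then the first operand $\centeredTikZ{\squaresdotted{3}{2}{0}{0};\squaresempty{2}{1}{1}{1};\squaresempty{1}{1}{2}{0};}$ is supported on $AB$, the second operand $\centeredTikZ{\squaresdotted{3}{2}{0}{0};\squaresempty{2}{1}{0}{1};}$ is supported on $BC$, and the left-hand side is supported on $ABC$. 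Consistency of the two operands on their overlap $B$ is immediate from $\mathcal{C}_L$, since both reduce on the single cell $(1,1)$ to the unique translation-invariant one-cell marginal.

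The key step is to convert Lemma~\ref{lemma:prop2to1_entlemma2} into an SSA-saturation statement for the LHS marginal. By $\mathcal{C}_L$, the entropies of the shapes appearing in Lemma~\ref{lemma:prop2to1_entlemma2} are unchanged under translation, so that lemma reads
\[
S(\rho_{ABC}) \;=\; S(\rho_{AB}) + S(\rho_{BC}) - S(\rho_B),
\]
where $\rho_{ABC}$ is the LHS marginal and $\rho_{AB}$, $\rho_{BC}$, $\rho_B$ are its reductions to the first operand, the second operand, and the overlap respectively. This is precisely the vanishing of $I(A:C|B)_{\rho_{ABC}}$, i.e.\ condition $1$ of the Fundamental Lemma (with the candidate max-merge taken to be $\rho_{ABC}$).

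With condition $1$ in force, I would conclude as follows. By SSA, any state consistent with both operands has entropy at most $S(\rho_{AB})+S(\rho_{BC})-S(\rho_B)$; the LHS marginal is consistent with both operands (it is their common extension within the global state provided by $\mathcal{C}_L$) and saturates this bound, so it is the unique maximum-entropy extension, i.e.\ $\text{op}_1 \maxmerge \text{op}_2$. The equivalences $1 \Leftrightarrow 2 \Leftrightarrow 3$ in Lemma~\ref{lemma:fundamental} then immediately yield both $\text{op}_1 \maxmerge \text{op}_2 = \text{op}_1 \rightmerge \text{op}_2$ and the identification with the LHS, completing the proof.

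There is no real obstacle here: the analytic content was already extracted in Lemma~\ref{lemma:prop2to1_entlemma2} (itself an SSA-monotonicity argument bootstrapped off the \hyperref[constraints:cell1]{Type~I-a) constraint} and \hyperref[constraints:snake]{Snake-c) constraint}). All that is being done at this step is a translation from an entropy identity to a merging-algebra identity, plus the bookkeeping needed to line up the translated shapes appearing in the two statements. The main thing to be careful about is the placement of the canvas $\centeredTikZ{\squaresdotted{3}{2}{0}{0};}$, which is a visual aid only; the merging identities themselves are invariant under translation by $\mathcal{C}_L$.
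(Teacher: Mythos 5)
Your proof is correct and follows exactly the route the paper takes: it cites Lemma~\ref{lemma:prop2to1_entlemma2} (read as SSA saturation, i.e.\ condition~1 of Lemma~\ref{lemma:fundamental} with the L-shaped marginal as the candidate maximum-entropy extension) and then invokes the equivalence of conditions~1 and~3 to obtain both identities. The extra bookkeeping you supply — labelling $A,B,C$, checking consistency on the overlap, and noting the translation needed to line up the shapes — is exactly what the paper leaves implicit.
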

\begin{proof}
This is a straightforward application of Lemma~\ref{lemma:fundamental} to Lemma~\ref{lemma:prop2to1_entlemma2}, by using the equivalence of the first and the third condition of Lemma~\ref{lemma:fundamental}.
\end{proof}

As in Appendix~\ref{appendix:2to1_ent}, a $\pi$-rotated versions of these statements can be proven in a similar way. We state them without proof.
\begin{lemma}
\begin{equation}
\begin{tikzcd}
\centeredTikZ{\squaresdotted{3}{2}{0}{0};\squaresempty{2}{1}{1}{0};\squaresempty{1}{1}{1}{1};} \maxmerge \centeredTikZ{\squaresdotted{3}{2}{0}{0};\squaresempty{2}{2}{0}{0};} 
 \arrow[r, equal] \arrow[d, equal]&  
\centeredTikZ{\squaresdotted{3}{2}{0}{0};\squaresempty{2}{1}{1}{0};\squaresempty{1}{1}{1}{1};} \rightmerge \centeredTikZ{\squaresdotted{3}{2}{0}{0};\squaresempty{2}{2}{0}{0};}
\arrow[d, equal]
 \\
\centeredTikZ{\squaresdotted{3}{2}{0}{0};\squaresempty{2}{1}{1}{0};} \maxmerge \centeredTikZ{\squaresdotted{3}{2}{0}{0};\squaresempty{2}{2}{0}{0};} \arrow[r, equal]
& 
\centeredTikZ{\squaresdotted{3}{2}{0}{0};\squaresempty{2}{1}{1}{0};} \rightmerge \centeredTikZ{\squaresdotted{3}{2}{0}{0};\squaresempty{2}{2}{0}{0};}
\end{tikzcd}
\end{equation}
\label{lemma:prop2to1_rectangle2}
\end{lemma}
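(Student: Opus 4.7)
The plan is to mirror the proof of Lemma~\ref{lemma:prop2to1_rectangle1} almost verbatim, with the roles of the two key entropy identities played by their $\pi$-rotated counterparts. Those counterparts are already available in the excerpt: Lemma~\ref{lemma:prop2to1_entlemma3} plays the role of Lemma~\ref{lemma:prop2to1_entlemma1}, and Corollary~\ref{corollary:prop2to1_2} plays the role of Corollary~\ref{corollary:prop2to1_1}. Each of these expresses the entropy of the same five-cell state $\centeredTikZmini{\squaresempty{2}{2}{0}{0};\squaresempty{1}{1}{2}{0};}$ in exactly the form $S(\lambda)+S(\sigma)-S(\tau)$ required by the first condition of Lemma~\ref{lemma:fundamental}.

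For the top horizontal identity I would take $\lambda = \centeredTikZmini{\squaresempty{2}{1}{1}{0};\squaresempty{1}{1}{1}{1};}$ and $\sigma = \centeredTikZmini{\squaresempty{2}{2}{0}{0};}$, so that their overlap $\tau$ is the $1 \times 2$ vertical strip at column~$1$. Lemma~\ref{lemma:prop2to1_entlemma3}, combined with $\mathcal{C}_L$ to recognize the L-shape in $\lambda$ as a translate of the snake-a L appearing in that lemma, then yields $S(\text{5-cell}) = S(\lambda) + S(\sigma) - S(\tau)$. This is precisely the first condition of Lemma~\ref{lemma:fundamental}, so the top-left corner of the square is forced to be the five-cell state, and the equivalence $1 \Leftrightarrow 3$ of the same lemma delivers $\lambda \maxmerge \sigma = \lambda \rightmerge \sigma$.

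For the bottom horizontal identity I would rerun the argument with the smaller $\lambda = \centeredTikZmini{\squaresempty{2}{1}{1}{0};}$, so that $\tau$ reduces to a single cluster; Corollary~\ref{corollary:prop2to1_2} supplies the corresponding decomposition $S(\text{5-cell}) = S(\sigma) + S(\lambda) - S(\tau)$, after which Lemma~\ref{lemma:fundamental} again pins down the bottom-left corner as the five-cell state and equates the two merges. The vertical identities then fall out immediately: the top-left max-merge is by construction consistent with the L-shape marginal, and hence also with the $2\times 1$ marginal obtained by tracing out the top cluster of the L; it is therefore a valid candidate for the bottom-left max-merge, and uniqueness of the maximum-entropy state~\cite{Kim2014} forces the two to coincide.

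I expect no real obstacle. The only step requiring a moment of care is the translational bookkeeping --- namely verifying that $\centeredTikZmini{\squaresempty{2}{1}{1}{0};\squaresempty{1}{1}{1}{1};}$ really is a translate of the snake-a L $\centeredTikZmini{\squaresempty{2}{1}{0}{0};\squaresempty{1}{1}{0}{1};}$, and that the five-cell shape on the left-hand side of Lemma~\ref{lemma:prop2to1_entlemma3} agrees up to translation with $\centeredTikZmini{\squaresempty{2}{2}{0}{0};\squaresempty{1}{1}{2}{0};}$. Both facts are immediate from $\mathcal{C}_L$, but they are precisely what allows the earlier identities to be quoted here without any further work.
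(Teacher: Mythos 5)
Your proposal is correct and matches the paper's intended argument: the paper states this lemma without proof as the $\pi$-rotated analogue of Lemma~\ref{lemma:prop2to1_rectangle1}, and you have correctly identified Lemma~\ref{lemma:prop2to1_entlemma3} and Corollary~\ref{corollary:prop2to1_2} as the rotated entropy identities that feed the first condition of Lemma~\ref{lemma:fundamental} for the top and bottom rows respectively, with both left-hand corners pinned to the same five-cell marginal. The only slip is cosmetic: the overlap $\tau$ of the L-shape at columns $2$--$3$ with the $2\times 2$ block at columns $1$--$2$ is the vertical $1\times 2$ strip at column $2$, not column $1$; this does not affect the argument since $\mathcal{C}_L$ makes the entropies translation invariant.
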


\begin{lemma}
\begin{equation}
    \centeredTikZ{\squaresdotted{3}{2}{0}{0};\squaresempty{3}{1}{0}{0};\squaresempty{1}{1}{0}{1};}
    =
    \centeredTikZ{\squaresdotted{3}{2}{0}{0};\squaresempty{2}{1}{0}{0};\squaresempty{1}{1}{0}{1};}
    \maxmerge
    \centeredTikZ{\squaresdotted{3}{2}{0}{0};\squaresempty{2}{1}{1}{0};}
    =
    \centeredTikZ{\squaresdotted{3}{2}{0}{0};\squaresempty{2}{1}{0}{0};\squaresempty{1}{1}{0}{1};}
    \rightmerge
    \centeredTikZ{\squaresdotted{3}{2}{0}{0};\squaresempty{2}{1}{1}{0};}.
\end{equation}
\label{lemma:prop2to1_initial2}
\end{lemma}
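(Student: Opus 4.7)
The plan is to mirror the proof of the companion statement Lemma~\ref{lemma:prop2to1_initial1}. Since the target identity is the $\pi$-rotated analog and the correspondingly rotated entropy identity is already at our disposal as Lemma~\ref{lemma:prop2to1_entlemma4}, the proof reduces to a direct invocation of the fundamental lemma (Lemma~\ref{lemma:fundamental}).

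First I would set
\[
\lambda := \centeredTikZ{\squaresdotted{3}{2}{0}{0};\squaresempty{2}{1}{0}{0};\squaresempty{1}{1}{0}{1};}, \qquad \sigma := \centeredTikZ{\squaresdotted{3}{2}{0}{0};\squaresempty{2}{1}{1}{0};},
\]
viewed as marginals inherited from \mymarginal. The overlap $\text{Supp}(\lambda)\cap\text{Supp}(\sigma)$ is a single cluster, and the translational invariance constraint $\mathcal{C}_L$ forces the restrictions of $\lambda$ and $\sigma$ to this overlap to both coincide with $\tau := \centeredTikZ{\squaresempty{1}{1}{0}{0};}$; in particular $\lambda \consistent \sigma$. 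Lemma~\ref{lemma:prop2to1_entlemma4} then reads exactly
\[
\entropy{\centeredTikZ{\squaresempty{3}{1}{0}{0};\squaresempty{1}{1}{0}{1};}} = \entropy{\lambda} + \entropy{\sigma} - \entropy{\tau},
\]
and the marginal $\centeredTikZ{\squaresempty{3}{1}{0}{0};\squaresempty{1}{1}{0}{1};}$ on $\text{Supp}(\lambda)\cup\text{Supp}(\sigma)$, interpreted as a reduced density matrix of \mymarginal, is manifestly consistent with both $\lambda$ and $\sigma$.

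Together these two facts verify condition 1 of Lemma~\ref{lemma:fundamental}: an entropy-saturating joint extension of $\lambda$ and $\sigma$ exists, and by the uniqueness of the maximum-entropy merged state~\cite{Kim2014} it must coincide with $\lambda \maxmerge \sigma$. The equivalence with conditions 2 and 3 of the fundamental lemma then yields
\[
\lambda \maxmerge \sigma = \lambda \rightmerge \sigma = \sigma \rightmerge \lambda = \centeredTikZ{\squaresdotted{3}{2}{0}{0};\squaresempty{3}{1}{0}{0};\squaresempty{1}{1}{0}{1};},
\]
which is precisely the asserted double equality.

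I do not anticipate a substantive obstacle here: the only step that is not pure bookkeeping is identifying the max-merge with the natural marginal of \mymarginal, but once both the consistency relations and the saturated entropy bound are in hand this identification is forced by uniqueness. The rest is a verbatim $\pi$-rotation of the argument underlying Lemma~\ref{lemma:prop2to1_initial1}, and carries over without modification.
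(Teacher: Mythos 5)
Your proposal is correct and follows the same route the paper intends: Lemma~\ref{lemma:prop2to1_initial2} is stated there without proof precisely because it is the $\pi$-rotated analogue of Lemma~\ref{lemma:prop2to1_initial1}, whose proof is exactly the one-line application of the entropy identity (here Lemma~\ref{lemma:prop2to1_entlemma4}) to the first condition of the fundamental lemma, followed by the equivalence with the third condition. Your additional remarks on consistency via $\mathcal{C}_L$ and uniqueness of the maximum-entropy extension are accurate and fill in the bookkeeping the paper leaves implicit.
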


\subsection{Completing the proof}
\label{appendix:2to1_complete}
Now we are in a position to complete the proof of Proposition~\ref{prop:2to1}. We restate the proposition for the reader's convenience.
\twotoone*
\begin{proof}
By the mutation lemma (Lemma~\ref{lemma:snake_mutation}),

\begin{equation}
\adulthorizontalsnake{1}{y}{N}{y\shortplus 1} = 
    \left(\left(\twobytwounit{(2,y)} \rightmerge \twobytwounit{(3,y)}\right) \cdots \right) \rightmerge \twobytwounit{(N,y)}.\label{eq:lemma10_tempwhat}
\end{equation}

Consider an object $\mathfrak{S}_i$, which is defined below:
\begin{equation}
\mathfrak{S}_{i+1}
=
\Tr_{(i,y)}\left(\mathfrak{S}_i \rightmerge \twobytwounit{(i\shortplus 1,y)}\right),\label{eq:lemma10_temp2}
\end{equation}
for $i$ such that $1<i<N$, where
\begin{equation}
    \mathfrak{S}_2:=
    \mym{
    \centeredTikZ{
    \emptysquare{0.5}{0};
    \emptysquare{-0.25}{0.5};
    \emptysquare{0.25}{0.5};
    }_{(2,y)}}.
\end{equation}

We will show that
\begin{equation}
\mathfrak{S}_i
=
\mym{
    \centeredTikZ{
    \emptysquare{0.5}{0};
    \emptysquare{-0.25}{0.5};
    \emptysquare{0.25}{0.5};
    }_{(i,y)}}
    \rightmerge
\horizontalbaby{(1,y\shortplus 1)}{(i\shortminus 1, y\shortplus 1)}.
\label{eq:lemma10_temp3}
\end{equation}

The $i=3$ case follows from Lemma~\ref{lemma:prop2to1_initial1}. For $i>3$, note that the right-merge in Eq.~\eqref{eq:lemma10_temp3} ``commutes" with the right-merge of the $2\times 2$ cluster in Eq.~\eqref{eq:lemma10_temp2}, by using the commutation lemma (Lemma~\ref{lemma:commutation}). After exchanging these right-merges, we have

\begin{equation}
\begin{aligned}
    \mathfrak{S}_{i+1} &= \Tr_{(i,y)} \left(\left(\mym{
    \centeredTikZ{
    \emptysquare{0.5}{0};
    \emptysquare{-0.25}{0.5};
    \emptysquare{0.25}{0.5};
    }_{(i,y)}}
    \rightmerge
    \twobytwounit{(i\shortplus 1,y)}
    \right)\rightmerge \horizontalbaby{(1,y\shortplus 1)}{(i\shortminus 1, y\shortplus 1)} \right)\\
    &= 
     \Tr_{(i,y)}
      \left(\mym{
    \centeredTikZ{\squaresempty{2}{2}{1}{0};\squaresempty{1}{1}{0}{1};}_{(i\shortplus 1,y)}}
    \rightmerge \horizontalbaby{(1,y\shortplus 1)}{(i\shortminus 1, y\shortplus 1)} \right)\\
    &=
    \mym{
    \centeredTikZ{\squaresempty{3}{1}{0}{1};\squaresempty{1}{1}{2}{0};}_{(i\shortplus 1,y)}}
    \rightmerge \horizontalbaby{(1,y\shortplus 1)}{(i\shortminus 1, y\shortplus 1)},
\end{aligned}
\end{equation}

using the Lemma~\ref{lemma:prop2to1_rectangle1} in the second line. Moreover, 
\begin{equation}
    \begin{aligned}
    \mathfrak{S}_{i+1} &= \left(\horizontalunit{(i, y\shortplus 1)} \rightmerge 
    \mym{
    \centeredTikZ{
    \emptysquare{0.5}{0};
    \emptysquare{-0.25}{0.5};
    \emptysquare{0.25}{0.5};
    }_{(i\shortplus 1,y)}}\right) \rightmerge \horizontalbaby{(1,y\shortplus 1)}{(i\shortminus 1, y\shortplus 1)}
    \\
    &=\left(\horizontalunit{(i, y\shortplus 1)}
    \rightmerge
    \horizontalbaby{(1,y\shortplus 1)}{(i\shortminus 1, y\shortplus 1)}\right)
    \rightmerge
    \mym{
    \centeredTikZ{
    \emptysquare{0.5}{0};
    \emptysquare{-0.25}{0.5};
    \emptysquare{0.25}{0.5};
    }_{(i\shortplus 1,y)}},
    \end{aligned}
\end{equation}
using Lemma~\ref{lemma:prop2to1_initial1} in the first line and using the commutation lemma (Lemma~\ref{lemma:commutation}) in the second line. 

After applying the splitting lemma (Lemma~\ref{lemma:snake_splitting}),
\begin{equation}
    \mathfrak{S}_{i+1} = \horizontalbaby{(1, y\shortplus 1)}{(i,y\shortplus 1)} \rightmerge \mym{
    \centeredTikZ{
    \emptysquare{0.5}{0};
    \emptysquare{-0.25}{0.5};
    \emptysquare{0.25}{0.5};
    }_{(i\shortplus 1,y)}}.
\end{equation}
Moreover, we can use Lemma~\ref{lemma:prop2to1_initial1} and the merging lemma (Lemma~\ref{lemma:merging_lemma}) in the following way. Let $\lambda$ be the reduced density matrix of the level-$1$ snake over the clusters ranging from column $1$ to $i-1$, $\sigma$ be the reduced density matrix of the level-$1$ snake over the clusters on column $i-1$ and $i$, and $\tau$ be the \centeredTikZmini{\squaresempty{2}{1}{0}{1};\squaresempty{1}{1}{1}{0};}-shaped cluster anchored at $(i+1, y)$. We can apply Lemma~\ref{lemma:merging_lemma} to precisely these choices of $\lambda,\sigma,$ and $\tau$. Thus, we conclude that 
\begin{equation}
\mathfrak{S}_{i+1} = \horizontalbaby{(1, y\shortplus 1)}{(i,y\shortplus 1)} \maxmerge \mym{
    \centeredTikZ{
    \emptysquare{0.5}{0};
    \emptysquare{-0.25}{0.5};
    \emptysquare{0.25}{0.5};
    }_{(i\shortplus 1,y)}}.
\end{equation}
By using the second condition of Lemma~\ref{lemma:fundamental}, we conclude that
\begin{equation}
\mathfrak{S}_{i+1} = \mym{
    \centeredTikZ{
    \emptysquare{0.5}{0};
    \emptysquare{-0.25}{0.5};
    \emptysquare{0.25}{0.5};
    }_{(i\shortplus 1,y)}} \rightmerge \horizontalbaby{(1, y\shortplus 1)}{(i,y\shortplus 1)},
\end{equation}
proving Eq.~\eqref{eq:lemma10_temp3}. 

From Eq.~\eqref{eq:lemma10_temp3}, the main claim readily follows.

\begin{equation}
\begin{aligned}
    \Tr_y\left(\adulthorizontalsnake{1}{y}{N}{y\shortplus 1} \right)
    &=\Tr_{y, \text{last cluster}}\left(\mym{
    \centeredTikZ{
    \emptysquare{0.5}{0};
    \emptysquare{-0.25}{0.5};
    \emptysquare{0.25}{0.5};
    }_{(N,y)}} \rightmerge \horizontalbaby{(1, y\shortplus 1)}{(N\shortminus 1,y\shortplus 1)} \right)\\
    &= \horizontalunit{(N,y\shortplus 1)} \rightmerge \horizontalbaby{(1, y\shortplus 1)}{(N\shortminus 1,y\shortplus 1)} \\
    &= \horizontalbaby{(1,y\shortplus 1)}{(N, y\shortplus 1)}.
\end{aligned}
\end{equation}

The remainder of our main claim can be shown in an exactly analogous way, by ``rotating" every subsystem and constraints by $\pi$. Specifically, the order of the right-merges in Eq.~\eqref{eq:lemma10_tempwhat} is reversed. Moreover, Lemma~\ref{lemma:prop2to1_rectangle1} and Lemma~\ref{lemma:prop2to1_initial1} are changed to Lemma~\ref{lemma:prop2to1_rectangle2} and Lemma~\ref{lemma:prop2to1_initial2} respectively.
\end{proof}

\section{Level-$1$ $\to$ Level-$2$}
\label{appendix:1to2}
Now, we prove that the upward and the downward extensions map level-$1$ snakes to level-$2$ snakes. The main result of this appendix is Proposition~\ref{prop:onetotwo}.

\onetotwo*

To prove Proposition~\ref{prop:onetotwo}, it will be convenient to introduce new composite marginals, defined below.
\begin{definition}
\begin{numcases}{}
    \text{Composite 1: } \centeredTikZ{\squaresdotted{4}{2}{0}{0};\squaresempty{2}{2}{0}{0};\squaresempty{2}{1}{2}{0};} := \centeredTikZ{\squaresdotted{4}{2}{0}{0}; \squaresempty{2}{2}{0}{0};\squaresempty{1}{1}{2}{0};} \rightmerge
    \centeredTikZ{\squaresdotted{4}{2}{0}{0}; \squaresempty{2}{1}{2}{0};},\label{eq:composite1}\\[8pt]
    \text{Composite 2: }
    \centeredTikZ{\squaresempty{4}{2}{0}{0};\squaresdotted{1}{1}{3}{1};} := \centeredTikZ{\squaresdotted{4}{2}{-1}{0}; \squaresempty{2}{2}{0}{0};\squaresempty{1}{1}{2}{0};} \rightmerge 
    \centeredTikZ{\squaresdotted{4}{2}{0}{0}; \squaresempty{2}{2}{0}{0};},\label{eq:composite2}\\[8pt]
    \text{Composite 3: }
    \centeredTikZ{\squaresdotted{4}{2}{0}{0};\squaresempty{2}{2}{2}{0};\squaresempty{2}{1}{0}{1};} := \centeredTikZ{\squaresdotted{4}{2}{0}{0}; \squaresempty{2}{2}{2}{0};\squaresempty{1}{1}{1}{1};} \rightmerge
    \centeredTikZ{\squaresdotted{4}{2}{0}{0}; \squaresempty{2}{1}{0}{1};},\label{eq:composite3}\\[8pt]
    \text{Composite 4: }
    \centeredTikZ{\squaresempty{4}{2}{0}{0};\squaresdotted{1}{1}{0}{0};} := \centeredTikZ{\squaresdotted{4}{2}{-1}{0}; \squaresempty{2}{2}{0}{0};\squaresempty{1}{1}{-1}{1};} \rightmerge 
    \centeredTikZ{\squaresdotted{4}{2}{0}{0}; \squaresempty{2}{2}{2}{0};}.\label{eq:composite4}
\end{numcases}
\label{def:composite1}
\end{definition}
\noindent
We will study the properties of these objects in the remainder of Appendix~\ref{appendix:1to2}.

As a side comment, the readers may soon notice that the analysis of these objects are significantly more involved than the reduced density matrices of the fundamental marginals. This is to be expected. Unlike the fundamental marginals, it is a priori not even clear if the composite marginals are consistent with the fundamental marginals. Due to this reason, the structure of Appendix~\ref{appendix:1to2} will be quite different from that of Appendix~\ref{appendix:2to1}. We won't be able to completely decouple the analysis on the entropy from the analysis on the merging algebra.

Our analysis is organized as follows. In Appendix~\ref{sec:onetotwo_identities_basic}, we derive the basic identities that \emph{do not} involve the composite objects. In Appendix~\ref{sec:onetotwo_identities_basic}, we will derive the identities involving the composite objects. By using these identities, we will prove Proposition~\ref{prop:onetotwo} in  Appendix~\ref{sec:onetotwo_complete}.

\subsection{Basic identities}
\label{sec:onetotwo_identities_basic}
In this appendix, we will prove simple identities over the marginals of \mymarginal.
\begin{lemma}
\begin{equation}
    \centeredTikZ{\squaresdotted{3}{2}{0}{0};\squaresempty{2}{2}{0}{0};\squaresempty{1}{1}{2}{0};}
    =
    \centeredTikZ{\squaresdotted{3}{2}{0}{0};\squaresempty{3}{1}{0}{0};} \maxmerge \centeredTikZ{\squaresdotted{3}{2}{0}{0};\squaresempty{2}{2}{0}{0};} =
    \centeredTikZ{\squaresdotted{3}{2}{0}{0};\squaresempty{3}{1}{0}{0};} \rightmerge \centeredTikZ{\squaresdotted{3}{2}{0}{0};\squaresempty{2}{2}{0}{0};}.
\end{equation}
\label{lemma:prop1to2_basic1}
\end{lemma}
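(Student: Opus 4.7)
The plan is to apply the fundamental lemma (Lemma~\ref{lemma:fundamental}) to $\lambda=\centeredTikZ{\squaresempty{3}{1}{0}{0};}$ and $\sigma=\centeredTikZ{\squaresempty{2}{2}{0}{0};}$, whose overlap is $\tau=\centeredTikZ{\squaresempty{2}{1}{0}{0};}$. Once the first condition of that lemma is verified, the equivalent third condition will deliver the desired equality $\lambda\maxmerge\sigma=\lambda\rightmerge\sigma$, and the identification of the left-hand $5$-cluster shape with $\lambda\maxmerge\sigma$ will complete the proof.

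First I would observe that $\lambda\consistent\sigma$ on $\tau$ by the local consistency constraint $\mathcal{C}_L$, so that $\lambda\maxmerge\sigma$ exists as a bona fide density matrix. Next, I would establish the entropy identity
\[
S\left(\centeredTikZ{\squaresdotted{3}{2}{0}{0};\squaresempty{2}{2}{0}{0};\squaresempty{1}{1}{2}{0};}\right) = S(\lambda)+S(\sigma)-S(\tau).
\]
This is a direct consequence of Lemma~\ref{lemma:prop2to1_entlemma3}: using the \hyperref[constraints:snake]{Snake-a)} constraint to eliminate $S(\centeredTikZ{\squaresempty{2}{1}{0}{0};\squaresempty{1}{1}{0}{1};})$ and the \hyperref[constraints:snake]{Snake-b)} constraint (Lemma~\ref{lemma:descendant3}) to rewrite $2S(\centeredTikZ{\squaresempty{2}{1}{0}{0};})-S(\centeredTikZ{\squaresempty{1}{1}{0}{0};})$ as $S(\centeredTikZ{\squaresempty{3}{1}{0}{0};})$, the auxiliary entropies $S(\centeredTikZ{\squaresempty{1}{2}{0}{0};})$ and $S(\centeredTikZ{\squaresempty{1}{1}{0}{0};})$ cancel, leaving precisely $S(\lambda)+S(\sigma)-S(\tau)$.

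To conclude, the $5$-cluster shape on the left-hand side arises as a reduced density matrix of \mymarginal, so it is a valid density matrix consistent with both $\lambda$ and $\sigma$. By SSA, every such state has entropy at most $S(\lambda)+S(\sigma)-S(\tau)$, so the $5$-cluster shape achieves the maximum; by the uniqueness of the maximum-entropy extension~\cite{Kim2014} it must coincide with $\lambda\maxmerge\sigma$, which in particular verifies the first condition of Lemma~\ref{lemma:fundamental}. Invoking the equivalence with the third condition then yields $\lambda\maxmerge\sigma=\lambda\rightmerge\sigma$. The main obstacle is purely bookkeeping, namely matching the various shape conventions of Lemma~\ref{lemma:prop2to1_entlemma3} and the Snake constraints to the canonical $S(\lambda)+S(\sigma)-S(\tau)$ form demanded by the fundamental lemma; no new Markov-property derivation is needed.
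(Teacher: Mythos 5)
Your proof is correct and follows the same skeleton as the paper's: verify the first condition of Lemma~\ref{lemma:fundamental} for $\lambda$ the $3\times 1$ row, $\sigma$ the $2\times 2$ cluster, and $\tau$ their $2\times 1$ overlap, then invoke the equivalence with the third condition. The one place the routes diverge is the derivation of the entropy identity $S(\text{shape})=S(\lambda)+S(\sigma)-S(\tau)$: the paper gets it in a single step from the \hyperref[constraints:cell1]{Type I-b) cluster constraint} combined with the monotonicity of conditional mutual information (Eq.~\eqref{eq:ssa_monotinicity_intro}), whereas you assemble it linearly from Lemma~\ref{lemma:prop2to1_entlemma3} together with the Snake-a) and Snake-b) constraints; your arithmetic checks out and there is no circularity, since that lemma is established independently in Appendix~\ref{appendix:2to1}. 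Two small remarks: the inference ``$\lambda\consistent\sigma$, so $\lambda\maxmerge\sigma$ exists'' is not valid on its own (local consistency does not guarantee a joint extension in the quantum setting), but your third paragraph repairs this by exhibiting the five-cluster reduced density matrix of \mymarginal\ as an explicit consistent state saturating the SSA bound, which --- together with the uniqueness of the maximum-entropy extension --- is exactly what the paper's terser ``plug into the first condition'' step amounts to.
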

\begin{proof}
From the \hyperref[constraints:cell1]{Type I-b) cluster constraint},
\begin{equation}
\entropy{\centeredTikZ{\squaresempty{3}{3}{0}{0};\squaresdotted{3}{1}{0}{2};}}
+
\entropy{\centeredTikZ{\squaresdotted{3}{3}{0}{0};\squaresempty{2}{2}{0}{1};}}
-
\entropy{\centeredTikZ{\squaresdotted{3}{3}{0}{0};\squaresempty{2}{1}{0}{1};}}
-
\entropy{\centeredTikZ{\squaresempty{3}{3}{0}{0};\squaresdotted{1}{1}{2}{2};}}
=0.
\end{equation}
Using the monotonicity of conditional mutual information (Eq.~\eqref{eq:ssa_monotinicity_intro}), 
\begin{equation}
    \begin{aligned}
    \entropy{\centeredTikZ{\squaresempty{3}{3}{0}{0};\squaresdotted{3}{1}{0}{2};\squaresfilled{3}{1}{0}{0};}}
+
\entropy{\centeredTikZ{\squaresdotted{3}{3}{0}{0};\squaresempty{2}{2}{0}{1};}}
-
\entropy{\centeredTikZ{\squaresdotted{3}{3}{0}{0};\squaresempty{2}{1}{0}{1};}}
-
\entropy{\centeredTikZ{\squaresempty{3}{3}{0}{0};\squaresdotted{1}{1}{2}{2};\squaresfilled{3}{1}{0}{0};}} \\
\leq 
\entropy{\centeredTikZ{\squaresempty{3}{3}{0}{0};\squaresdotted{3}{1}{0}{2};}}
+
\entropy{\centeredTikZ{\squaresdotted{3}{3}{0}{0};\squaresempty{2}{2}{0}{1};}}
-
\entropy{\centeredTikZ{\squaresdotted{3}{3}{0}{0};\squaresempty{2}{1}{0}{1};}}
-
\entropy{\centeredTikZ{\squaresempty{3}{3}{0}{0};\squaresdotted{1}{1}{2}{2};}}.
    \end{aligned}
    \label{eq:556}
\end{equation}
The first line of Eq.~\eqref{eq:556} must be nonnegative because of SSA. The second line is $0$. Therefore, the first line must be $0$. Plugging in this result to the first condition of Lemma~\ref{lemma:fundamental}, we arrive at our claim.
\end{proof}

Of course, the $\pi$-rotated version can be proved in a similar way, by replacing the \hyperref[constraints:cell1]{Type I-b) cluster constraint} to the \hyperref[constraints:cell2]{Type II-b) cluster constraint}. This leads to the following lemma.
\begin{lemma}
\begin{equation}
    \centeredTikZ{\squaresdotted{3}{2}{0}{0};\squaresempty{2}{2}{1}{0};\squaresempty{1}{1}{0}{1};}
    =
    \centeredTikZ{\squaresdotted{3}{2}{0}{0};\squaresempty{3}{1}{0}{1};} \maxmerge \centeredTikZ{\squaresdotted{3}{2}{0}{0};\squaresempty{2}{2}{1}{0};} =
    \centeredTikZ{\squaresdotted{3}{2}{0}{0};\squaresempty{3}{1}{0}{1};} \rightmerge \centeredTikZ{\squaresdotted{3}{2}{0}{0};\squaresempty{2}{2}{1}{0};}.
\end{equation}
\label{lemma:prop1to2_basic2}
\end{lemma}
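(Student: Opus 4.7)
The plan is to mirror the proof of Lemma~\ref{lemma:prop1to2_basic1} verbatim under vertical reflection, replacing the Type~I-b) cluster constraint with the Type~II-b) cluster constraint from Eq.~\eqref{constraints:cell2}. The whole argument consists of three moves: read off a vanishing conditional mutual information from the cluster constraint, extend it by monotonicity plus SSA, and then invoke the fundamental lemma.

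First, I would rewrite the Type~II-b) identity $\entropy{\centeredTikZmini{\squaresempty{3}{2}{0}{1};\squaresempty{2}{1}{1}{0};}} = \entropy{\centeredTikZmini{\squaresempty{3}{2}{0}{0};}} + \entropy{\centeredTikZmini{\squaresempty{2}{2}{0}{0};}} - \entropy{\centeredTikZmini{\squaresempty{2}{1}{0}{0};}}$ as the vanishing of a conditional mutual information inside a $3\times 3$ window, with the tripartition obtained from the one used in Lemma~\ref{lemma:prop1to2_basic1} by reflecting vertically.

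Second, I would extend the conditioned subsystem by a filled $3\times 1$ cluster placed along the \emph{top} row, mirroring the filled bottom row that was used in Eq.~\eqref{eq:556}, and apply the monotonicity of conditional mutual information in Eq.~\eqref{eq:ssa_monotinicity_intro}. Since the original conditional mutual information is zero, the extended one is bounded above by zero, and SSA provides the matching lower bound; the extended quantity must therefore also vanish.

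Third, I would apply the equivalence of conditions 1 and 3 in Lemma~\ref{lemma:fundamental} to this vanishing, with $\lambda = \centeredTikZmini{\squaresdotted{3}{2}{0}{0};\squaresempty{3}{1}{0}{1};}$ and $\sigma = \centeredTikZmini{\squaresdotted{3}{2}{0}{0};\squaresempty{2}{2}{1}{0};}$, whose supports intersect on a $2\times 1$ block. This directly yields the chain of equalities claimed in the statement, with the common maximum-entropy state identified as $\centeredTikZmini{\squaresdotted{3}{2}{0}{0};\squaresempty{2}{2}{1}{0};\squaresempty{1}{1}{0}{1};}$.

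I do not expect any genuinely new obstacle; the proof is structurally identical to that of Lemma~\ref{lemma:prop1to2_basic1} and the only bookkeeping is to apply the vertical reflection consistently to every subsystem label and each constraint. If there is any subtle point, it is at the monotonicity step, where I must check that the added top-row cluster is placed on the side of the tripartition that makes Eq.~\eqref{eq:ssa_monotinicity_intro} tighten in the correct direction rather than the reverse.
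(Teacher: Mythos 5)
Your proposal matches the paper's own treatment: the paper proves Lemma~\ref{lemma:prop1to2_basic1} in full and then states this lemma without proof, noting only that one repeats the argument with the Type~II-b) cluster constraint in place of Type~I-b), which is exactly your plan. The only quibble is that the relevant symmetry is a $\pi$ rotation rather than a vertical reflection (the $2\times 2$ block moves from the left to the right end), but since you correctly identify the final $\lambda$, $\sigma$, and the added filled top-row cluster, this does not affect the argument.
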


We can also derive the following identity:
\begin{lemma}
\begin{equation}
    \centeredTikZ{\squaresempty{3}{2}{0}{0};} = \centeredTikZ{\squaresdotted{3}{2}{0}{0};\squaresempty{2}{2}{0}{0};\squaresempty{1}{1}{2}{0};} \maxmerge \centeredTikZ{\squaresdotted{3}{2}{0}{0};\squaresempty{2}{2}{1}{0};}
    =
    \centeredTikZ{\squaresdotted{3}{2}{0}{0};\squaresempty{2}{2}{0}{0};\squaresempty{1}{1}{2}{0};} \rightmerge \centeredTikZ{\squaresdotted{3}{2}{0}{0};\squaresempty{2}{2}{1}{0};}.
\end{equation}
\label{lemma:prop1to2_basic3}
\end{lemma}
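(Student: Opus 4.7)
The plan is to verify condition~1 of the Fundamental lemma (Lemma~\ref{lemma:fundamental}) with $\centeredTikZ{\squaresempty{3}{2}{0}{0};}$ serving as the candidate for $\lambda \maxmerge \sigma$, where
\begin{equation*}
\lambda := \centeredTikZ{\squaresdotted{3}{2}{0}{0};\squaresempty{2}{2}{0}{0};\squaresempty{1}{1}{2}{0};}, \qquad \sigma := \centeredTikZ{\squaresdotted{3}{2}{0}{0};\squaresempty{2}{2}{1}{0};},
\end{equation*}
and $\tau$ denotes the reduced density matrix on the overlap $\text{Supp}(\lambda) \cap \text{Supp}(\sigma) = \centeredTikZ{\squaresdotted{3}{2}{0}{0};\squaresempty{2}{1}{1}{0};\squaresempty{1}{1}{1}{1};}$ (an L-shape). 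Once condition~1 is secured, the equivalence of conditions~1--3 of Lemma~\ref{lemma:fundamental} automatically promotes the max-merge identity to the $\rightmerge$ identity demanded by the lemma statement.

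First I would establish the entropy identity $\entropy{\centeredTikZ{\squaresempty{3}{2}{0}{0};}} = \entropy{\lambda} + \entropy{\sigma} - \entropy{\tau}$. Applying the Fundamental lemma to Lemma~\ref{lemma:prop1to2_basic1} gives $\entropy{\lambda} = \entropy{\centeredTikZ{\squaresempty{3}{1}{0}{0};}} + \entropy{\centeredTikZ{\squaresempty{2}{2}{0}{0};}} - \entropy{\centeredTikZ{\squaresempty{2}{1}{0}{0};}}$, which collapses via the \hyperref[constraints:snake]{Snake-b) constraint} to $\entropy{\centeredTikZ{\squaresempty{2}{1}{0}{0};}} + \entropy{\centeredTikZ{\squaresempty{2}{2}{0}{0};}} - \entropy{\centeredTikZ{\squaresempty{1}{1}{0}{0};}}$. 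The L-shape entropy is $\entropy{\tau} = \entropy{\centeredTikZ{\squaresempty{2}{1}{0}{0};}} + \entropy{\centeredTikZ{\squaresempty{1}{2}{0}{0};}} - \entropy{\centeredTikZ{\squaresempty{1}{1}{0}{0};}}$ by the \hyperref[constraints:snake]{Snake-a) constraint} (modulo translation). Summing $\entropy{\lambda} + \entropy{\sigma} - \entropy{\tau}$ telescopes to $2\,\entropy{\centeredTikZ{\squaresempty{2}{2}{0}{0};}} - \entropy{\centeredTikZ{\squaresempty{1}{2}{0}{0};}}$, which equals $\entropy{\centeredTikZ{\squaresempty{3}{2}{0}{0};}}$ by the \hyperref[constraints:cell1]{Type~I-a) cluster constraint}.

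Next I would verify that $\centeredTikZ{\squaresempty{3}{2}{0}{0};}$ is consistent with both $\lambda$ and $\sigma$. Consistency with $\sigma$ is immediate since $\sigma$ is a direct marginal, so the actual work is showing $\mu = \lambda$, where $\mu := \Tr_{(3,2)}(\centeredTikZ{\squaresempty{3}{2}{0}{0};})$. Translational invariance gives $\mu \consistent \{\centeredTikZ{\squaresempty{3}{1}{0}{0};}, \centeredTikZ{\squaresempty{2}{2}{0}{0};}\}$, and Lemma~\ref{lemma:prop1to2_basic1} identifies $\lambda$ as the unique maximum-entropy state compatible with these two marginals, yielding $\entropy{\mu} \leq \entropy{\lambda}$ for free. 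The matching lower bound is an SSA application to $\centeredTikZ{\squaresempty{3}{2}{0}{0};}$ with the tripartition $A = \{(1,1),(1,2)\}$, $B = \text{L-shape}$, $C = \{(3,2)\}$, giving $\entropy{\mu} = \entropy{AB} \geq \entropy{ABC} + \entropy{B} - \entropy{BC} = \entropy{\centeredTikZ{\squaresempty{3}{2}{0}{0};}} + \entropy{\tau} - \entropy{\sigma}$; by Step~1 this equals $\entropy{\lambda}$. Uniqueness of the maximum-entropy state~\cite{Kim2014} then forces $\mu = \lambda$.

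With both consistency relations and the entropy identity in place, condition~1 of Lemma~\ref{lemma:fundamental} holds with $\lambda \maxmerge \sigma = \centeredTikZ{\squaresempty{3}{2}{0}{0};}$, and conditions~2--3 immediately deliver the right-merge identity. The step I expect to be the main obstacle is the equality $\mu = \lambda$: the SSA lower bound is formally routine, but its \emph{tightness} is not automatic---it relies on Type~I-a), Snake-a), Snake-b), and Lemma~\ref{lemma:prop1to2_basic1} aligning into a single numeric identity, which is exactly the alignment that makes the conditional mutual information for the tripartition $(A,B,C)$ actually vanish. This is the point at which the nonlinear Markovian constraints $\mathcal{C}_{M,P}$ do their real work behind the scenes.
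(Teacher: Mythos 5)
Your proof is correct and rests on the same skeleton as the paper's: identify the $3\times 2$ marginal as the max-merge by verifying condition~1 of the fundamental lemma (Lemma~\ref{lemma:fundamental}), i.e.\ by showing that the conditional mutual information of the $3\times 2$ state for the tripartition (left column : top-right cell $\mid$ L-shaped overlap) vanishes. The difference is in how that vanishing is obtained. The paper gets it in one step: it takes the \hyperref[constraints:cell1]{Type I-a)} identity, which is the statement $I(\text{left column}:\text{right column}\mid\text{middle column})=0$, and applies the monotonicity relation Eq.~\eqref{eq:ssa_monotonicity_intro2} to move the bottom-right cell from the conditioned subsystem into the conditioning subsystem, sandwiching the new conditional mutual information between $0$ and $0$. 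You instead assemble the same numerical identity from previously established pieces (Lemma~\ref{lemma:prop1to2_basic1}, \hyperref[constraints:snake]{Snake-a) and b)}, and \hyperref[constraints:cell1]{Type I-a)}); the telescoping you carry out is exactly the content of the paper's one-line monotonicity argument, so both derivations are valid and of comparable length. One remark: your Step~2 is not needed. In the paper's conventions every white-square diagram on a dotted canvas denotes a reduced density matrix of the fundamental $3\times 3$ marginal, so $\lambda$ \emph{is} by definition the marginal of the $3\times 2$ on that region and consistency is automatic; your argument that the marginal coincides with the max-merge of Lemma~\ref{lemma:prop1to2_basic1} is a correct but redundant re-derivation of what that lemma already asserts.
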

\begin{proof}
From the \hyperref[constraints:cell1]{Type I-a) cluster constraint},
\begin{equation}
    \entropy{\centeredTikZ{\squaresdotted{3}{2}{0}{0};\squaresempty{2}{2}{0}{0};}}
    +
    \entropy{\centeredTikZ{\squaresdotted{3}{2}{0}{0};\squaresempty{2}{2}{1}{0};}}
    -
    \entropy{\centeredTikZ{\squaresdotted{3}{2}{0}{0};\squaresempty{1}{2}{1}{0};}}
    -
    \entropy{\centeredTikZ{\squaresempty{3}{2}{0}{0};}}
    =0.
\end{equation}
Using Eq.~\eqref{eq:ssa_monotonicity_intro2},
\begin{equation}
    \begin{aligned}
    &\entropy{\centeredTikZ{\squaresdotted{3}{2}{0}{0};\squaresempty{2}{2}{0}{0};\squaresempty{1}{1}{2}{0};}}
    +
    \entropy{\centeredTikZ{\squaresdotted{3}{2}{0}{0};\squaresempty{2}{2}{1}{0};}}
    -
    \entropy{\centeredTikZ{\squaresdotted{3}{2}{0}{0};\squaresempty{1}{2}{1}{0};\squaresempty{1}{1}{2}{0};}}
    -
    \entropy{\centeredTikZ{\squaresempty{3}{2}{0}{0};}}
    \\
    &\leq 
    \entropy{\centeredTikZ{\squaresdotted{3}{2}{0}{0};\squaresempty{2}{2}{0}{0};}}
    +
    \entropy{\centeredTikZ{\squaresdotted{3}{2}{0}{0};\squaresempty{2}{2}{1}{0};}}
    -
    \entropy{\centeredTikZ{\squaresdotted{3}{2}{0}{0};\squaresempty{1}{2}{1}{0};}}
    -
    \entropy{\centeredTikZ{\squaresempty{3}{2}{0}{0};}}.
    \end{aligned}
    \label{eq:930}
\end{equation}
The first line of Eq.~\eqref{eq:930} must be nonnegative because of SSA. The second line is $0$. Therefore, the first line must be $0$. Plugging in this result to the first condition of Lemma~\ref{lemma:fundamental}, we arrive at our claim.
\end{proof}

The $\pi$-rotated version is proved in a similar way. We state the result without proof.
\begin{lemma}
\begin{equation}
    \centeredTikZ{\squaresempty{3}{2}{0}{0};} = \centeredTikZ{\squaresdotted{3}{2}{0}{0};\squaresempty{2}{2}{1}{0};\squaresempty{1}{1}{0}{1};} \maxmerge \centeredTikZ{\squaresdotted{3}{2}{0}{0};\squaresempty{2}{2}{0}{0};}
    =
   \centeredTikZ{\squaresdotted{3}{2}{0}{0};\squaresempty{2}{2}{1}{0};\squaresempty{1}{1}{0}{1};} \rightmerge \centeredTikZ{\squaresdotted{3}{2}{0}{0};\squaresempty{2}{2}{0}{0};}.
\end{equation}
\label{lemma:prop1to2_basic4}
\end{lemma}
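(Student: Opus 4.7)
My plan is to follow the exact same four-step template as Lemma~\ref{lemma:prop1to2_basic3}, of which this lemma is the $\pi$-rotation (about the center of the $3\times 2$ cluster). Label the three columns of $\centeredTikZ{\squaresempty{3}{2}{0}{0};}$ as $A$, $B$, $C$ from left to right, and split $A = \{a_0\}\cup\{a_1\}$ with $a_0=(0,0)$, $a_1=(0,1)$. The goal is to show $I(a_0:C\mid Ba_1)=0$, which in cluster notation reads
\begin{equation*}
\entropy{\centeredTikZ{\squaresdotted{3}{2}{0}{0};\squaresempty{2}{2}{0}{0};}}
+ \entropy{\centeredTikZ{\squaresdotted{3}{2}{0}{0};\squaresempty{2}{2}{1}{0};\squaresempty{1}{1}{0}{1};}}
- \entropy{\centeredTikZ{\squaresdotted{3}{2}{0}{0};\squaresempty{1}{2}{1}{0};\squaresempty{1}{1}{0}{1};}}
- \entropy{\centeredTikZ{\squaresempty{3}{2}{0}{0};}} = 0,
\end{equation*}
because this is precisely condition~1 of Lemma~\ref{lemma:fundamental} for $\lambda = \centeredTikZ{\squaresdotted{3}{2}{0}{0};\squaresempty{2}{2}{1}{0};\squaresempty{1}{1}{0}{1};}$ (support $BCa_1$) and $\sigma = \centeredTikZ{\squaresdotted{3}{2}{0}{0};\squaresempty{2}{2}{0}{0};}$ (support $AB$), whose overlap is $\centeredTikZ{\squaresdotted{3}{2}{0}{0};\squaresempty{1}{2}{1}{0};\squaresempty{1}{1}{0}{1};}$, i.e., $Ba_1$.

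To establish the identity, I would start from the \hyperref[constraints:cell1]{Type I-a) cluster constraint} in conditional-mutual-information form, $I(A:C\mid B)=0$ on the $3\times 2$ marginal, and apply the monotonicity Eq.~\eqref{eq:ssa_monotonicity_intro2} with the substitution $X\leftarrow C$, $Y\leftarrow a_0$, $Z\leftarrow B$, $W\leftarrow a_1$ (after invoking the symmetry $I(X:Y\mid Z)=I(Y:X\mid Z)$) to get the chain
\begin{equation*}
0 \le I(a_0:C\mid Ba_1) = I(C:a_0 \mid Ba_1) \le I(C:a_0 a_1 \mid B) = I(A:C\mid B) = 0,
\end{equation*}
where the lower bound is SSA. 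This is the exact mirror of the "append $c_0$" move used to prove Lemma~\ref{lemma:prop1to2_basic3}.

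Finally, I would invoke the equivalences $1\Leftrightarrow 2\Leftrightarrow 3$ of Lemma~\ref{lemma:fundamental} on the pair $(\lambda,\sigma)$ above: the identity just derived is its condition~1, so $\lambda\maxmerge\sigma$ exists and coincides with $\lambda\rightmerge\sigma$, and this unique merged state must be the full $3\times 2$ marginal $\centeredTikZ{\squaresempty{3}{2}{0}{0};}$. I do not anticipate a genuine obstacle; the $3\times 2$ constraint is itself $\pi$-symmetric, and the only bookkeeping is to check that appending the single cell $a_1$ to the right $2\times 2$ mirrors the earlier "append $c_0$ to the left $2\times 2$" choice so that the intersection $\text{Supp}(\lambda)\cap\text{Supp}(\sigma)$ really does take the $Ba_1$ shape appearing in the CMI identity.
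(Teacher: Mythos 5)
Your proposal is correct and is essentially the paper's own argument: the paper states this lemma without proof as the $\pi$-rotated version of Lemma~\ref{lemma:prop1to2_basic3}, and your proof is precisely that rotation — start from the \hyperref[constraints:cell1]{Type I-a) cluster constraint} $I(A:C\mid B)=0$, move the single cell $a_1$ into the conditioning system via Eq.~\eqref{eq:ssa_monotonicity_intro2}, sandwich with SSA to get $I(a_0:C\mid Ba_1)=0$, and feed the resulting entropy identity into condition~1 of Lemma~\ref{lemma:fundamental}. The bookkeeping of supports and the identification of the overlap as $Ba_1$ are both right, so there is nothing to add.
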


\subsection{Composite identities}
\label{sec:onetotwo_identities_composite}
Now, we will derive a number of basic identities involving the objects defined in Definition~\ref{def:composite1}. The key findings are summarized below.

\begin{equation}
    \begin{tikzcd}
    \text{\compositeone} \arrow[r,equal] &\centeredTikZ{\squaresdotted{4}{2}{0}{0};\squaresempty{2}{2}{0}{0};\squaresempty{1}{1}{2}{0};} \maxmerge  \centeredTikZ{\squaresdotted{4}{2}{0}{0};\squaresempty{2}{1}{2}{0};} \arrow[r, equal] \arrow[d, equal]& \centeredTikZ{\squaresdotted{4}{2}{0}{0};\squaresempty{2}{2}{0}{0};\squaresempty{1}{1}{2}{0};} \rightmerge  \centeredTikZ{\squaresdotted{4}{2}{0}{0};\squaresempty{2}{1}{2}{0};}\arrow[d,equal] \\
    &\centeredTikZ{\squaresdotted{4}{2}{0}{0};\squaresempty{2}{2}{0}{0};}
    \maxmerge
    \centeredTikZ{\squaresdotted{4}{2}{0}{0};\squaresempty{3}{1}{1}{0};\squaresempty{1}{1}{1}{1};} \arrow[r, equal] & \centeredTikZ{\squaresdotted{4}{2}{0}{0};\squaresempty{2}{2}{0}{0};}
    \rightmerge
    \centeredTikZ{\squaresdotted{4}{2}{0}{0};\squaresempty{3}{1}{1}{0};\squaresempty{1}{1}{1}{1};}
    \end{tikzcd}
    \label{eq:identities_composite1}
\end{equation}
\begin{equation}
    \begin{tikzcd}
    \text{\compositetwo} \arrow[r, equal] &
    \centeredTikZ{\squaresdotted{4}{2}{-1}{0};\squaresempty{2}{2}{0}{0};\squaresempty{1}{1}{2}{0};}
    \maxmerge 
    \centeredTikZ{\squaresdotted{4}{2}{0}{0};\squaresempty{2}{2}{0}{0};}\arrow[r, equal]\arrow[d, equal] & \centeredTikZ{\squaresdotted{4}{2}{-1}{0};\squaresempty{2}{2}{0}{0};\squaresempty{1}{1}{2}{0};}
    \rightmerge 
    \centeredTikZ{\squaresdotted{4}{2}{0}{0};\squaresempty{2}{2}{0}{0};} \arrow[d, equal]\\
    &
    \text{\compositeone}
    \maxmerge 
    \centeredTikZ{\squaresdotted{4}{2}{0}{0};\squaresempty{2}{2}{1}{0};}\arrow[r, equal] & 
    \text{\compositeone}
    \rightmerge 
    \centeredTikZ{\squaresdotted{4}{2}{0}{0};\squaresempty{2}{2}{1}{0};}
    \end{tikzcd}
    \label{eq:identities_composite2}
\end{equation}

We will also state the $\pi$-rotated versions of these statements at the end. Since the underlying analysis is essentially the same, we will focus on the proof of Eqs.~\eqref{eq:identities_composite1} and~\eqref{eq:identities_composite2}.

Let us first prove Eq.~\eqref{eq:identities_composite1}.
\begin{lemma}
\label{lemma:prop1to2_ent1}
\begin{align}
    \entropy{\centeredTikZ{\squaresdotted{4}{2}{0}{0};\squaresempty{2}{2}{0}{0};\squaresempty{2}{1}{2}{0};}}
     &= 
    \entropy{\centeredTikZ{\squaresempty{2}{2}{0}{0};\squaresempty{1}{1}{2}{0};}}
    +
    \entropy{\centeredTikZ{\squaresempty{2}{1}{0}{0};}}
    -
    \entropy{\centeredTikZ{\squaresempty{1}{1}{0}{0};}}
    \label{eq:350} \\
    &=
    \entropy{\centeredTikZ{\squaresempty{2}{2}{0}{0};}}
    +
    \entropy{\centeredTikZ{\squaresempty{3}{1}{0}{0};
    \squaresempty{1}{1}{0}{1};}}
    -
    \entropy{\centeredTikZ{\squaresempty{1}{2}{0}{0};}}.\label{eq:350_1}
\end{align}
Moreover,
\begin{align}
\centeredTikZ{\squaresdotted{4}{2}{0}{0};\squaresempty{4}{1}{0}{0};\squaresempty{2}{1}{0}{1};}&=
\centeredTikZ{\squaresdotted{4}{2}{0}{0};\squaresempty{2}{2}{0}{0};\squaresempty{1}{1}{2}{0};} \maxmerge  \centeredTikZ{\squaresdotted{4}{2}{0}{0};\squaresempty{2}{1}{2}{0};} = 
\centeredTikZ{\squaresdotted{4}{2}{0}{0};\squaresempty{2}{2}{0}{0};\squaresempty{1}{1}{2}{0};} \rightmerge \centeredTikZ{\squaresdotted{4}{2}{0}{0};\squaresempty{2}{1}{2}{0};}\label{eq:349}\\
&=
    \centeredTikZ{\squaresdotted{4}{2}{0}{0};\squaresempty{2}{2}{0}{0};}
    \maxmerge
    \centeredTikZ{\squaresdotted{4}{2}{0}{0};\squaresempty{3}{1}{1}{0};\squaresempty{1}{1}{1}{1};}
    =
    \centeredTikZ{\squaresdotted{4}{2}{0}{0};\squaresempty{2}{2}{0}{0};}
    \rightmerge
    \centeredTikZ{\squaresdotted{4}{2}{0}{0};\squaresempty{3}{1}{1}{0};\squaresempty{1}{1}{1}{1};}.\label{eq:349_1}
\end{align}
\end{lemma}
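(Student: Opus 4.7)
The plan is to establish all four assertions in the lemma by two applications of the Merging Lemma (Lemma~\ref{lemma:merging_lemma}), with the second supplemented by a partial-trace argument. The first Merging-Lemma invocation produces \eqref{eq:349} together with the entropy identity \eqref{eq:350}; the alternative form \eqref{eq:350_1} follows by rewriting \eqref{eq:350} using the snake and descendant constraints; and a second Merging-Lemma application, followed by partial tracing through the Commutation Lemma, yields \eqref{eq:349_1}.

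For \eqref{eq:349} and \eqref{eq:350}, apply the Merging Lemma with $\lambda=\centeredTikZ{\squaresempty{2}{2}{0}{0};}$, $\sigma=\centeredTikZ{\squaresempty{3}{1}{0}{0};}$, $\tau=\centeredTikZ{\squaresempty{2}{1}{2}{0};}$. The disjointness $\text{Supp}(\lambda)\cap\text{Supp}(\tau)=\varnothing$ is clear; the identity $\lambda\maxmerge\sigma=\lambda\rightmerge\sigma=\centeredTikZ{\squaresempty{2}{2}{0}{0};\squaresempty{1}{1}{2}{0};}$ is exactly Lemma~\ref{lemma:prop1to2_basic1}; and $\sigma\maxmerge\tau=\sigma\rightmerge\tau=\centeredTikZ{\squaresempty{4}{1}{0}{0};}$ is the snake-b constraint fed into condition~1 of the Fundamental Lemma. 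The Merging Lemma then gives $\centeredTikZ{\squaresempty{2}{2}{0}{0};\squaresempty{1}{1}{2}{0};}\maxmerge\tau=\centeredTikZ{\squaresempty{2}{2}{0}{0};\squaresempty{1}{1}{2}{0};}\rightmerge\tau$; by Definition~\ref{def:composite1} the right-hand side is \compositeone, which is \eqref{eq:349}, and \eqref{eq:350} is the corresponding entropy formula from condition~1 of the Fundamental Lemma. To pass to \eqref{eq:350_1}, expand $S(\centeredTikZ{\squaresempty{2}{2}{0}{0};\squaresempty{1}{1}{2}{0};})=S(\centeredTikZ{\squaresempty{2}{2}{0}{0};})+S(\centeredTikZ{\squaresempty{2}{1}{0}{0};})-S(\centeredTikZ{\squaresempty{1}{1}{0}{0};})$ via Lemma~\ref{lemma:prop1to2_basic1} and snake-b, and expand $S(\centeredTikZ{\squaresempty{3}{1}{0}{0};\squaresempty{1}{1}{0}{1};})=2S(\centeredTikZ{\squaresempty{2}{1}{0}{0};})+S(\centeredTikZ{\squaresempty{1}{2}{0}{0};})-2S(\centeredTikZ{\squaresempty{1}{1}{0}{0};})$ via Lemma~\ref{lemma:prop2to1_entlemma4} and snake-a; both \eqref{eq:350} and \eqref{eq:350_1} then collapse to the common value $S(\centeredTikZ{\squaresempty{2}{2}{0}{0};})+2S(\centeredTikZ{\squaresempty{2}{1}{0}{0};})-2S(\centeredTikZ{\squaresempty{1}{1}{0}{0};})$.

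For \eqref{eq:349_1}, apply the Merging Lemma to a shifted triple $\tilde\lambda=\centeredTikZ{\squaresempty{2}{2}{0}{0};}$, $\tilde\sigma=\centeredTikZ{\squaresempty{2}{2}{1}{0};}$, $\tilde\tau=\centeredTikZ{\squaresempty{2}{1}{2}{0};}$. Disjointness $\text{Supp}(\tilde\lambda)\cap\text{Supp}(\tilde\tau)=\varnothing$ is again immediate; the condition $\tilde\lambda\maxmerge\tilde\sigma=\tilde\lambda\rightmerge\tilde\sigma=\centeredTikZ{\squaresempty{3}{2}{0}{0};}$ is the \hyperref[constraints:cell1]{Type I-a) cluster constraint} promoted through the Fundamental Lemma, and $\tilde\sigma\maxmerge\tilde\tau=\tilde\sigma\rightmerge\tilde\tau=\centeredTikZ{\squaresempty{2}{2}{1}{0};\squaresempty{1}{1}{3}{0};}$ is the translate by $(1,0)$ of the L-shape identity that already fell out of the first paragraph. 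The Merging Lemma then equates $\centeredTikZ{\squaresempty{3}{2}{0}{0};}\rightmerge\tilde\tau=\centeredTikZ{\squaresempty{2}{2}{1}{0};\squaresempty{1}{1}{3}{0};}\rightmerge\tilde\lambda$ as a single density matrix on the seven sites $\text{Supp}(\tilde\lambda)\cup\text{Supp}(\tilde\sigma)\cup\text{Supp}(\tilde\tau)$. Applying $\Tr_{(2,1)}$ to both sides and observing that neither $\text{Supp}(\tilde\lambda)$ nor $\text{Supp}(\tilde\tau)$ contains $(2,1)$, the Petz maps $\Phi_{\tilde\lambda}$ and $\Phi_{\tilde\tau}$ commute with the trace (the single-operator form of the Commutation Lemma, Lemma~\ref{lemma:commutation}), producing $\centeredTikZ{\squaresempty{2}{2}{0}{0};\squaresempty{1}{1}{2}{0};}\rightmerge\tilde\tau=\centeredTikZ{\squaresempty{3}{1}{1}{0};\squaresempty{1}{1}{1}{1};}\rightmerge\tilde\lambda$. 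The left-hand side is \compositeone\ by definition, so $\text{Composite 1}=\centeredTikZ{\squaresempty{3}{1}{1}{0};\squaresempty{1}{1}{1}{1};}\rightmerge\tilde\lambda$; the already-verified entropy identity \eqref{eq:350_1}, read as condition~1 of the Fundamental Lemma for the pair $(\tilde\lambda,\centeredTikZ{\squaresempty{3}{1}{1}{0};\squaresempty{1}{1}{1}{1};})$ with overlap $\centeredTikZ{\squaresempty{1}{2}{0}{0};}$, upgrades this single right-merge identity to the full chain \eqref{eq:349_1}.

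The main obstacle is the partial-trace step in the last paragraph: one must correctly recognize the traced marginal $\Tr_{(2,1)}\centeredTikZ{\squaresempty{2}{2}{1}{0};\squaresempty{1}{1}{3}{0};}=\centeredTikZ{\squaresempty{3}{1}{1}{0};\squaresempty{1}{1}{1}{1};}$ as a non-fundamental object, verify that its entropy (computable via Lemma~\ref{lemma:prop2to1_entlemma4} plus translational invariance) matches the arithmetic demanded by \eqref{eq:350_1}, and confirm that the Petz-map/partial-trace exchange is legitimate on both sides of the Merging-Lemma output. Once these geometric and entropic bookkeepings are in place, the remaining steps are mechanical invocations of the Fundamental Lemma and the descendant identities already developed in Section~\ref{sec:merging_algebra} and Appendix~\ref{appendix:descendants}.
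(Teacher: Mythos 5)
Your proof is correct and rests on the same toolkit the paper uses (the fundamental, commutation, and merging lemmas plus the descendant constraints), but the second half takes a genuinely different route. For Eq.~\eqref{eq:349} the difference is minor: you take $\sigma$ to be the $3\times 1$ strip where the paper takes the $2\times 1$ strip anchored one step to the right; the paper's choice is slightly cleaner because the hypothesis $\sigma\maxmerge\tau=\sigma\rightmerge\tau$ then follows from a single application of the \hyperref[constraints:snake]{Snake-b) constraint}, whereas your choice requires first assembling the three-element snake of $2\times 1$ strips (via the mutation lemma and Corollary~\ref{corollary:snake_entropy}) to certify that the $3\times 1$ marginal max-merges with the far $2\times 1$ — a true statement, but not literally ``snake-b fed into condition 1.'' For Eq.~\eqref{eq:349_1} the divergence is substantive: the paper stays on the six-cell target support throughout, decomposing the L-shape as the $2\times 2$ merged with the three-cell hook $\centeredTikZ{\squaresdotted{4}{2}{0}{0};\squaresempty{2}{1}{1}{0};\squaresempty{1}{1}{1}{1};}$ (via Lemma~\ref{lemma:prop2to1_entlemma3}) and extending that hook with Lemma~\ref{lemma:prop2to1_entlemma4} before invoking the merging lemma, whereas you merge the larger triple $(\centeredTikZ{\squaresempty{2}{2}{0}{0};},\centeredTikZ{\squaresempty{2}{2}{1}{0};},\centeredTikZ{\squaresempty{2}{1}{2}{0};})$ on seven cells and then trace out the surplus cell, using the fact that the Petz maps of $\tilde\lambda$ and $\tilde\tau$ act trivially there. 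Your detour is valid — you correctly identify that the traced object must be re-identified with the max-entropy hook extension (which again needs Lemma~\ref{lemma:prop2to1_entlemma4}) before condition 1 of Lemma~\ref{lemma:fundamental} can be read off — and it buys a more uniform argument in which every merged piece is a translate of a fundamental marginal; the paper's version buys a shorter chain of identities with no partial-trace bookkeeping. The arithmetic reduction of \eqref{eq:350} and \eqref{eq:350_1} to the common value $\entropy{\centeredTikZ{\squaresempty{2}{2}{0}{0};}}+2\entropy{\centeredTikZ{\squaresempty{2}{1}{0}{0};}}-2\entropy{\centeredTikZ{\squaresempty{1}{1}{0}{0};}}$ is exactly the computation the paper performs in Corollary~\ref{corollary:prop1to2_prelim1}.
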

\begin{proof}
By Corollary~\ref{corollary:prop2to1_2}, 
\begin{equation}
     \entropy{\centeredTikZ{\squaresdotted{4}{2}{0}{0};\squaresempty{2}{2}{0}{0};\squaresfilled{1}{1}{2}{0};}}
     +
     \entropy{\centeredTikZ{\squaresdotted{4}{2}{0}{0};\squaresfilled{2}{2}{0}{0};\squaresempty{2}{1}{1}{0};}}
     -
     \entropy{\centeredTikZ{\squaresdotted{4}{2}{0}{0};\squaresfilled{2}{2}{0}{0};\squaresfilled{1}{1}{2}{0};\squaresempty{1}{1}{1}{0};}}
     -
     \entropy{\centeredTikZ{\squaresdotted{4}{2}{0}{0};\squaresempty{2}{2}{0}{0};\squaresempty{1}{1}{2}{0};}}
     =0.
\end{equation}
By Lemma~\ref{lemma:fundamental} (specifically, using the equivalence of the first and the third condition),
\begin{equation}
\begin{aligned}
    \centeredTikZ{\squaresdotted{4}{2}{0}{0};\squaresempty{2}{2}{0}{0};\squaresempty{1}{1}{2}{0};} &= \centeredTikZ{\squaresdotted{4}{2}{0}{0};\squaresempty{2}{2}{0}{0};\squaresfilled{1}{1}{2}{0};} \maxmerge \centeredTikZ{\squaresdotted{4}{2}{0}{0};\squaresfilled{2}{2}{0}{0};\squaresempty{2}{1}{1}{0};} = 
    \centeredTikZ{\squaresdotted{4}{2}{0}{0};\squaresempty{2}{2}{0}{0};\squaresfilled{1}{1}{2}{0};} \rightmerge \centeredTikZ{\squaresdotted{4}{2}{0}{0};\squaresfilled{2}{2}{0}{0};\squaresempty{2}{1}{1}{0};} \\
    &=
    \centeredTikZ{\squaresdotted{4}{2}{0}{0};\squaresempty{2}{2}{0}{0};} \maxmerge \centeredTikZ{\squaresdotted{4}{2}{0}{0};\squaresempty{2}{1}{1}{0};} = 
    \centeredTikZ{\squaresdotted{4}{2}{0}{0};\squaresempty{2}{2}{0}{0};} \rightmerge \centeredTikZ{\squaresdotted{4}{2}{0}{0};\squaresempty{2}{1}{1}{0};}.
\end{aligned}
    \label{eq:lemma850_temp0}
\end{equation}

Moreover, applying the \hyperref[constraints:snake]{Snake-b) constraint} to Lemma~\ref{lemma:fundamental} (again, using the equivalence of the first and the third condition),
\begin{equation}
\begin{aligned}
\centeredTikZ{\squaresdotted{4}{2}{0}{0};\squaresempty{3}{1}{1}{0};} &= 
\centeredTikZ{\squaresdotted{4}{2}{0}{0};\squaresempty{3}{1}{1}{0};\squaresfilled{1}{1}{3}{0};} 
\maxmerge
\centeredTikZ{\squaresdotted{4}{2}{0}{0};\squaresempty{2}{1}{2}{0};\squaresfilled{1}{1}{1}{0};}
=
\centeredTikZ{\squaresdotted{4}{2}{0}{0};\squaresempty{3}{1}{1}{0};\squaresfilled{1}{1}{3}{0};} 
\rightmerge
\centeredTikZ{\squaresdotted{4}{2}{0}{0};\squaresempty{2}{1}{2}{0};\squaresfilled{1}{1}{1}{0};} \\
&= 
\centeredTikZ{\squaresdotted{4}{2}{0}{0};\squaresempty{2}{1}{1}{0};} 
\maxmerge
\centeredTikZ{\squaresdotted{4}{2}{0}{0};\squaresempty{2}{1}{2}{0};}
=
\centeredTikZ{\squaresdotted{4}{2}{0}{0};\squaresempty{2}{1}{1}{0};} 
\rightmerge
\centeredTikZ{\squaresdotted{4}{2}{0}{0};\squaresempty{2}{1}{2}{0};}.
\end{aligned}
\label{eq:lemma850_temp1}
\end{equation}

We can plug in Eqs.~\eqref{eq:lemma850_temp0} and~\eqref{eq:lemma850_temp1} to the merging lemma (Lemma~\ref{lemma:merging_lemma}), resulting in the following identities:
\begin{equation}
\begin{aligned}
\centeredTikZ{\squaresdotted{4}{2}{0}{0};\squaresempty{2}{2}{0}{0};\squaresempty{1}{1}{2}{0};} \maxmerge  \centeredTikZ{\squaresdotted{4}{2}{0}{0};\squaresempty{2}{1}{2}{0};} &= 
\centeredTikZ{\squaresdotted{4}{2}{0}{0};\squaresempty{2}{2}{0}{0};\squaresempty{1}{1}{2}{0};} \rightmerge \centeredTikZ{\squaresdotted{4}{2}{0}{0};\squaresempty{2}{1}{2}{0};}\\
&=
\centeredTikZ{\squaresdotted{4}{2}{0}{0};\squaresempty{4}{1}{0}{0};\squaresempty{2}{1}{0}{1};},
\end{aligned}
\end{equation}
establishing Eq.~\eqref{eq:349}. From the equivalence of the third and the first condition of Lemma~\ref{lemma:fundamental}, Eq.~\eqref{eq:350} follows.

Next, from Lemma~\ref{lemma:prop2to1_entlemma3},
\begin{equation}
    \entropy{\centeredTikZ{\squaresdotted{4}{2}{0}{0};\squaresempty{2}{2}{0}{0};\squaresempty{1}{1}{2}{0};\squaresfilled{1}{1}{2}{0};}}+
    \entropy{\centeredTikZ{\squaresdotted{4}{2}{0}{0};\squaresempty{2}{2}{0}{0};\squaresempty{1}{1}{2}{0};\squaresfilled{1}{2}{0}{0};}}-
    \entropy{\centeredTikZ{\squaresdotted{4}{2}{0}{0};\squaresempty{2}{2}{0}{0};\squaresempty{1}{1}{2}{0};\squaresfilled{1}{2}{0}{0};\squaresfilled{1}{1}{2}{0};}}-
    \entropy{\centeredTikZ{\squaresdotted{4}{2}{0}{0};\squaresempty{2}{2}{0}{0};\squaresempty{1}{1}{2}{0};}}=0
\end{equation}
By using the equivalence of the first and the third condition in Lemma~\ref{lemma:fundamental},
\begin{equation}
    \begin{aligned}
\centeredTikZ{\squaresdotted{4}{2}{0}{0};\squaresempty{2}{2}{0}{0};\squaresempty{1}{1}{2}{0};}
&= \centeredTikZ{\squaresdotted{4}{2}{0}{0};\squaresempty{2}{2}{0}{0};\squaresempty{1}{1}{2}{0};\squaresfilled{1}{1}{2}{0};}
\maxmerge 
\centeredTikZ{\squaresdotted{4}{2}{0}{0};\squaresempty{2}{2}{0}{0};\squaresempty{1}{1}{2}{0};\squaresfilled{1}{2}{0}{0};}
=\centeredTikZ{\squaresdotted{4}{2}{0}{0};\squaresempty{2}{2}{0}{0};\squaresempty{1}{1}{2}{0};\squaresfilled{1}{1}{2}{0};}
\rightmerge 
\centeredTikZ{\squaresdotted{4}{2}{0}{0};\squaresempty{2}{2}{0}{0};\squaresempty{1}{1}{2}{0};\squaresfilled{1}{2}{0}{0};}\\
&= \centeredTikZ{\squaresdotted{4}{2}{0}{0};\squaresempty{2}{2}{0}{0};\squaresempty{1}{1}{2}{0};\squaresdotted{1}{1}{2}{0};}
\maxmerge 
\centeredTikZ{\squaresdotted{4}{2}{0}{0};\squaresempty{2}{2}{0}{0};\squaresempty{1}{1}{2}{0};\squaresdotted{1}{2}{0}{0};}
=\centeredTikZ{\squaresdotted{4}{2}{0}{0};\squaresempty{2}{2}{0}{0};\squaresempty{1}{1}{2}{0};\squaresdotted{1}{1}{2}{0};}
\rightmerge 
\centeredTikZ{\squaresdotted{4}{2}{0}{0};\squaresempty{2}{2}{0}{0};\squaresempty{1}{1}{2}{0};\squaresdotted{1}{2}{0}{0};}.\label{eq:551}
    \end{aligned}
\end{equation}

Moreover, applying Lemma~\ref{lemma:prop2to1_entlemma4} to the first condition of Lemma~\ref{lemma:fundamental}, we obtain
\begin{equation}
    \begin{aligned}
\centeredTikZ{\squaresdotted{4}{2}{0}{0};\squaresempty{2}{1}{1}{0};\squaresempty{1}{1}{1}{1};}
\maxmerge
\centeredTikZ{\squaresdotted{4}{2}{0}{0};\squaresempty{2}{1}{2}{0};}
&=
\centeredTikZ{\squaresdotted{4}{2}{0}{0};\squaresempty{2}{1}{1}{0};\squaresempty{1}{1}{1}{1};}
\rightmerge
\centeredTikZ{\squaresdotted{4}{2}{0}{0};\squaresempty{2}{1}{2}{0};}\\
&=
\centeredTikZ{\squaresdotted{4}{2}{0}{0};\squaresempty{3}{1}{1}{0};\squaresempty{1}{1}{1}{1};}.
    \end{aligned}\label{eq:551_1}
\end{equation}

We can plug in Eqs.~\eqref{eq:551} and~\eqref{eq:551_1} to the merging lemma (Lemma~\ref{lemma:merging_lemma}), resulting in the following identities:
\begin{equation}
\centeredTikZ{\squaresdotted{4}{2}{0}{0};\squaresempty{4}{1}{0}{0};\squaresempty{2}{1}{0}{1};}=
    \centeredTikZ{\squaresdotted{4}{2}{0}{0};\squaresempty{2}{2}{0}{0};}
    \maxmerge
    \centeredTikZ{\squaresdotted{4}{2}{0}{0};\squaresempty{3}{1}{1}{0};\squaresempty{1}{1}{1}{1};}
    =
    \centeredTikZ{\squaresdotted{4}{2}{0}{0};\squaresempty{2}{2}{0}{0};}
    \rightmerge
    \centeredTikZ{\squaresdotted{4}{2}{0}{0};\squaresempty{3}{1}{1}{0};\squaresempty{1}{1}{1}{1};},
\end{equation}
establishing Eq.~\eqref{eq:349_1}. From the equivalence of the third and the first condition of Lemma~\ref{lemma:fundamental}, Eq.~\eqref{eq:350_1} follows.
\end{proof}

Let us prove the identity at the top of Eq.~\eqref{eq:identities_composite2}.
\begin{lemma}
\label{lemma:prop1to2_ent2}
\begin{equation}
    \entropy{\centeredTikZ{\squaresempty{4}{2}{0}{0};\squaresdotted{1}{1}{3}{1};}}=
    \entropy{\centeredTikZ{\squaresempty{2}{2}{0}{0};}}
    +
    \entropy{\centeredTikZ{\squaresempty{2}{2}{0}{0};\squaresempty{1}{1}{2}{0};}}
    -
    \entropy{\centeredTikZ{\squaresempty{1}{2}{0}{0};}}.
    \label{eq:351}
\end{equation}
Moreover,
\begin{equation}
\centeredTikZ{\squaresempty{4}{2}{0}{0};\squaresdotted{1}{1}{3}{1};}
=
\centeredTikZ{\squaresdotted{4}{2}{-1}{0};\squaresempty{2}{2}{0}{0};\squaresempty{1}{1}{2}{0};}
    \maxmerge 
    \centeredTikZ{\squaresdotted{4}{2}{0}{0};\squaresempty{2}{2}{0}{0};}
    =
    \centeredTikZ{\squaresdotted{4}{2}{-1}{0};\squaresempty{2}{2}{0}{0};\squaresempty{1}{1}{2}{0};}
    \rightmerge 
    \centeredTikZ{\squaresdotted{4}{2}{0}{0};\squaresempty{2}{2}{0}{0};} \\
\label{eq:352}
\end{equation}
\end{lemma}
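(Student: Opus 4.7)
The plan is to apply the merging lemma (Lemma~\ref{lemma:merging_lemma}) to the triple $\lambda = \centeredTikZmini{\squaresdotted{4}{2}{0}{0};\squaresempty{2}{2}{0}{0};}$ (the $2\times 2$ at columns $0$--$1$, i.e., the second factor in Definition~\ref{def:composite1} of Composite~2), $\sigma = \centeredTikZmini{\squaresdotted{4}{2}{0}{0};\squaresempty{2}{2}{1}{0};}$ (the bridging $2\times 2$ at columns $1$--$2$), and $\tau = \centeredTikZmini{\squaresdotted{4}{2}{0}{0};\squaresempty{2}{1}{2}{0};}$ (the horizontal pair at bottom-row columns $2$--$3$, which supplies the extra cell $(3,0)$). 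The three hypotheses of Lemma~\ref{lemma:merging_lemma} are verified as follows. The identity $\lambda \maxmerge \sigma = \lambda \rightmerge \sigma$, asserting that two overlapping $2\times 2$'s merge into a $3\times 2$, is the \hyperref[constraints:cell1]{Type I-a) cluster constraint} combined with the equivalence of conditions (1) and (3) in Lemma~\ref{lemma:fundamental}. The identity $\sigma \maxmerge \tau = \sigma \rightmerge \tau$, asserting that the $2\times 2$ extends by a single bottom-right square to produce the L-shape $\centeredTikZmini{\squaresempty{2}{2}{0}{0};\squaresempty{1}{1}{2}{0};}$, is Corollary~\ref{corollary:prop2to1_2} (at the translated location permitted by $\mathcal{C}_L$) together with the same equivalence. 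Disjointness $\text{Supp}(\lambda)\cap\text{Supp}(\tau)=\varnothing$ is immediate since columns $0$--$1$ and columns $2$--$3$ do not intersect.

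Once Lemma~\ref{lemma:merging_lemma} applies, it yields the commuting square
\begin{equation*}
(\lambda \maxmerge \sigma) \maxmerge \tau = (\lambda \maxmerge \sigma) \rightmerge \tau = (\sigma \maxmerge \tau) \maxmerge \lambda = (\sigma \maxmerge \tau) \rightmerge \lambda.
\end{equation*}
The density matrix $\sigma \maxmerge \tau$ is precisely the translated L-shape $\centeredTikZmini{\squaresdotted{4}{2}{-1}{0};\squaresempty{2}{2}{0}{0};\squaresempty{1}{1}{2}{0};}$ that appears as the first factor in Definition~\ref{def:composite1}, as guaranteed by Corollary~\ref{corollary:prop2to1_2} combined with Lemma~\ref{lemma:fundamental}. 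Consequently the corner $(\sigma \maxmerge \tau) \rightmerge \lambda$ equals Composite~2 by definition, while its equality with $(\sigma \maxmerge \tau) \maxmerge \lambda$ is precisely the merging identity Eq.~\eqref{eq:352}.

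The entropy identity Eq.~\eqref{eq:351} then follows directly from condition (1) of Lemma~\ref{lemma:fundamental} applied to the max-merge of the L-shape with the $2\times 2$: their overlap is the vertical pair $\centeredTikZmini{\squaresempty{1}{2}{0}{0};}$ in column~$1$, so
\begin{equation*}
\entropy{\centeredTikZmini{\squaresempty{4}{2}{0}{0};\squaresdotted{1}{1}{3}{1};}} = \entropy{\centeredTikZmini{\squaresempty{2}{2}{0}{0};\squaresempty{1}{1}{2}{0};}} + \entropy{\centeredTikZmini{\squaresempty{2}{2}{0}{0};}} - \entropy{\centeredTikZmini{\squaresempty{1}{2}{0}{0};}},
\end{equation*}
matching the claim. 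The only conceptual step is the choice of the triple $(\lambda,\sigma,\tau)$: it turns the two-piece right-merge defining Composite~2 into a three-piece merge whose middle $\sigma$ simultaneously witnesses both conditional-independence hypotheses of the merging lemma, after which everything else is bookkeeping governed by $\mathcal{C}_L$ and the previously established cluster constraints.
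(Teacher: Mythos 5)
Your proof is correct and follows essentially the same route as the paper: both apply the merging lemma (Lemma~\ref{lemma:merging_lemma}) with $\lambda$ and $\sigma$ the two overlapping $2\times 2$ clusters and $\tau$ a piece completing the L-shaped first factor of Composite~2, then read off Eq.~\eqref{eq:352} from the corner $(\sigma\maxmerge\tau)\rightmerge\lambda$ and Eq.~\eqref{eq:351} from condition~1 of Lemma~\ref{lemma:fundamental}. The only difference is your choice of $\tau$ — the two-cell bottom pair, with $\sigma\maxmerge\tau=\sigma\rightmerge\tau$ certified by Corollary~\ref{corollary:prop2to1_2} — versus the paper's three-cell piece certified by Lemma~\ref{lemma:prop2to1_entlemma3}; both decompositions satisfy the disjointness hypothesis and lead to the same conclusion.
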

\begin{proof}
By Lemma~\ref{lemma:prop2to1_entlemma3}, 
\begin{equation}
     \entropy{\centeredTikZ{\squaresdotted{4}{2}{-1}{0};\squaresempty{2}{2}{0}{0};\squaresfilled{1}{1}{2}{0};}}
     +
     \entropy{\centeredTikZ{\squaresdotted{4}{2}{-1}{0};\squaresfilled{2}{2}{0}{0};\squaresempty{2}{1}{1}{0};\squaresempty{1}{1}{1}{1};}}
     -
     \entropy{\centeredTikZ{\squaresdotted{4}{2}{-1}{0};\squaresfilled{2}{2}{0}{0};\squaresfilled{1}{1}{2}{0};\squaresempty{1}{2}{1}{0};}}
     -
     \entropy{\centeredTikZ{\squaresdotted{4}{2}{-1}{0};\squaresempty{2}{2}{0}{0};\squaresempty{1}{1}{2}{0};}}
     =0.
\end{equation}
By Lemma~\ref{lemma:fundamental} (specifically, using the equivalence of the first and the third condition),
\begin{equation}
\begin{aligned}
    \centeredTikZ{\squaresdotted{4}{2}{-1}{0};\squaresempty{2}{2}{0}{0};\squaresempty{1}{1}{2}{0};} &= \centeredTikZ{\squaresdotted{4}{2}{-1}{0};\squaresempty{2}{2}{0}{0};\squaresfilled{1}{1}{2}{0};} \maxmerge \centeredTikZ{\squaresdotted{4}{2}{-1}{0};\squaresfilled{2}{2}{0}{0};\squaresempty{2}{1}{1}{0};\squaresempty{1}{1}{1}{1};} = 
    \centeredTikZ{\squaresdotted{4}{2}{-1}{0};\squaresempty{2}{2}{0}{0};\squaresfilled{1}{1}{2}{0};} \rightmerge \centeredTikZ{\squaresdotted{4}{2}{-1}{0};\squaresfilled{2}{2}{0}{0};\squaresempty{2}{1}{1}{0};\squaresempty{1}{1}{1}{1};} \\
    &=
    \centeredTikZ{\squaresdotted{4}{2}{-1}{0};\squaresempty{2}{2}{0}{0};} \maxmerge \centeredTikZ{\squaresdotted{4}{2}{-1}{0};\squaresempty{2}{1}{1}{0};\squaresempty{1}{1}{1}{1};} = 
    \centeredTikZ{\squaresdotted{4}{2}{-1}{0};\squaresempty{2}{2}{0}{0};} \rightmerge \centeredTikZ{\squaresdotted{4}{2}{-1}{0};\squaresempty{2}{1}{1}{0};\squaresempty{1}{1}{1}{1};}.
\end{aligned}
    \label{eq:lemma918_temp0}
\end{equation}
Moreover, by applying the \hyperref[constraints:cell1]{Type I-a) cluster constraint} to the first condition of Lemma~\ref{lemma:fundamental}, we get
\begin{equation}
    \centeredTikZ{\squaresdotted{4}{2}{0}{0};\squaresempty{2}{2}{0}{0};}
    \maxmerge
    \centeredTikZ{\squaresdotted{4}{2}{0}{0};\squaresempty{2}{2}{1}{0};}
    =
    \centeredTikZ{\squaresdotted{4}{2}{0}{0};\squaresempty{2}{2}{0}{0};}
    \rightmerge
    \centeredTikZ{\squaresdotted{4}{2}{0}{0};\squaresempty{2}{2}{1}{0};}.
\label{eq:lemma918_temp1}
\end{equation}
By plugging in Eqs.~\eqref{eq:lemma918_temp0} and~\eqref{eq:lemma918_temp1} to the merging lemma (Lemma~\ref{lemma:merging_lemma}), we obtain the following identity:
\begin{equation}
    \centeredTikZ{\squaresdotted{4}{2}{-1}{0};\squaresempty{2}{2}{0}{0};\squaresempty{1}{1}{2}{0};}
    \maxmerge 
    \centeredTikZ{\squaresdotted{4}{2}{0}{0};\squaresempty{2}{2}{0}{0};}
    =
    \centeredTikZ{\squaresdotted{4}{2}{-1}{0};\squaresempty{2}{2}{0}{0};\squaresempty{1}{1}{2}{0};}
    \rightmerge 
    \centeredTikZ{\squaresdotted{4}{2}{0}{0};\squaresempty{2}{2}{0}{0};},
\end{equation}
establishing Eq.~\eqref{eq:352}. From the equivalence of the third and the first condition of Lemma~\ref{lemma:fundamental}, Eq.~\eqref{eq:351} follows.
\end{proof}

It remains to prove the equivalence of \compositetwo $\,$ with the bottom row of Eq.~\eqref{eq:identities_composite2}. This follows from two conditions, the consistency condition and the entropy condition. (see the first condition of Lemma~\ref{lemma:fundamental}.) First, 
\begin{equation}
\begin{aligned}
    \text{\compositetwo} &\consistent \text{\compositeone} \\
    \text{\compositetwo} &\consistent \centeredTikZ{\squaresdotted{4}{2}{0}{0};\squaresempty{2}{2}{1}{0};}. \label{eq:compositetwo_consistency}
\end{aligned}
\end{equation}
The first consistency condition can be readily verified from the bottom of Eq.~\eqref{eq:identities_composite1} and the the top right corner of Eq.~\eqref{eq:identities_composite2}. By using this relation, after taking a partial trace on $\text{Supp}(\text{Composite 2})\setminus \text{Supp}(\text{Composite 1})$ on \compositetwo, we obtain \compositeone. The second consistency condition also follows easily. Take a partial trace on the bottom-right corner of \compositetwo, using the top-right corner of Eq.~\eqref{eq:identities_composite2}. By the \hyperref[constraints:cell1]{Type I-a) cluster constraint}, the resulting density matrix must be consistent with $\mymarginal$, which proves the claim. Therefore, we have proved the consistency conditions in Eq.~\eqref{eq:compositetwo_consistency}. 

Second, we prove the entropy condition.
\begin{corollary}
\label{corollary:prop1to2_prelim1}
\begin{equation}
    \entropy{\centeredTikZ{\squaresempty{4}{2}{0}{0};\squaresdotted{2}{1}{2}{1};}}
    =
    \entropy{\centeredTikZ{\squaresempty{2}{2}{0}{0};}}
    +
    \entropy{\centeredTikZ{\squaresempty{3}{1}{0}{0};
    \squaresempty{1}{1}{0}{1};}}
    -
    \entropy{\centeredTikZ{\squaresempty{1}{2}{0}{0};}}.\label{eq:407}
\end{equation}
\end{corollary}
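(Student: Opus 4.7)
The plan is to observe that Corollary~\ref{corollary:prop1to2_prelim1} is essentially a re-drawing of the identity already proved inside Lemma~\ref{lemma:prop1to2_ent1}. The cluster $\centeredTikZmini{\squaresempty{4}{2}{0}{0};\squaresdotted{2}{1}{2}{1};}$ on the left-hand side of Eq.~\eqref{eq:407} consists, as a set of lattice sites, of the four cells in the bottom row of a $4\times 2$ block together with the two cells on the left of the top row. That is exactly the support of \compositeone, which in Lemma~\ref{lemma:prop1to2_ent1} was drawn as $\centeredTikZmini{\squaresdotted{4}{2}{0}{0};\squaresempty{2}{2}{0}{0};\squaresempty{2}{1}{2}{0};}$; the two pictures differ only in convention (the corollary marks the missing upper-right cells explicitly with dotted squares inside the $4\times 2$ canvas, whereas Lemma~\ref{lemma:prop1to2_ent1} draws the L-shape as a $2\times 2$ block with a $2\times 1$ horizontal bar appended to its lower-right).

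Once this identification is made, the proof is a one-line appeal to Eq.~\eqref{eq:350_1}, which states
\[
S(\compositeone) = S(\centeredTikZmini{\squaresempty{2}{2}{0}{0};}) + S(\centeredTikZmini{\squaresempty{3}{1}{0}{0};\squaresempty{1}{1}{0}{1};}) - S(\centeredTikZmini{\squaresempty{1}{2}{0}{0};}).
\]
The right-hand side matches the right-hand side of Eq.~\eqref{eq:407} term for term, so the claim follows. No additional invocation of SSA, merging operations, or the constraints in $\mathcal{C}_{M,P}$ is needed beyond what was already spent deriving Eq.~\eqref{eq:350_1} (which chained Lemma~\ref{lemma:prop2to1_entlemma3}, Lemma~\ref{lemma:prop2to1_entlemma4}, the \hyperref[constraints:cell1]{Type I-a) cluster constraint}, and the merging lemma).

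There is therefore no genuine obstacle in this corollary itself; it is a bookkeeping step. Its point is to re-express $S(\compositeone)$ in the $4\times 2$-canvas notation so that, downstream, one can align its support with $\centeredTikZmini{\squaresdotted{4}{2}{0}{0};\squaresempty{2}{2}{1}{0};}$, read off their intersection, and feed the resulting entropy identity into the first condition of Lemma~\ref{lemma:fundamental} to identify \compositetwo with $\compositeone \maxmerge \centeredTikZmini{\squaresdotted{4}{2}{0}{0};\squaresempty{2}{2}{1}{0};}$. The substantive work lives in Lemma~\ref{lemma:prop1to2_ent1}; this corollary simply repackages it for immediate use.
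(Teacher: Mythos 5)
Your proof is correct, and your reading of the picture is the right one: the cluster on the left-hand side of Eq.~\eqref{eq:407} (the full bottom row of the $4\times 2$ canvas plus the two leftmost cells of the top row) has exactly the support of \compositeone{}, and the only density matrix the paper ever constructs on that support is \compositeone{} itself, so the corollary's left-hand side can only mean $S(\text{\compositeone})$ --- the paper's own proof makes the same silent identification. Granting that, your one-line appeal to Eq.~\eqref{eq:350_1} settles the matter, since that display of Lemma~\ref{lemma:prop1to2_ent1} coincides with Eq.~\eqref{eq:407} term for term. The paper takes a slightly longer road through the same lemma: it starts from the \emph{first} display, Eq.~\eqref{eq:350}, and then invokes Corollary~\ref{corollary:prop2to1_2}, Lemma~\ref{lemma:prop2to1_entlemma4}, and the \hyperref[constraints:snake]{Snake-a) constraint} to massage that expression into the form of Eq.~\eqref{eq:407}. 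Both derivations rest on Lemma~\ref{lemma:prop1to2_ent1}; yours simply notices that the second form proved there (via the merge of the $2\times 2$ cluster with the four-cell snake, whose overlap is the $1\times 2$ column) is already the target identity, so the extra massaging is unnecessary. Nothing is lost, and your account of the corollary's downstream role --- supplying the entropy condition that, together with Eq.~\eqref{eq:compositetwo_consistency}, feeds the first condition of Lemma~\ref{lemma:fundamental} to identify \compositetwo{} with the max-merge --- matches the paper's use of it.
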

\begin{proof}
From Lemma~\ref{lemma:prop1to2_ent1},
\begin{equation}
    \entropy{\centeredTikZ{\squaresdotted{4}{2}{0}{0};\squaresempty{2}{2}{0}{0};\squaresempty{2}{1}{2}{0};}}
     = 
    \entropy{\centeredTikZ{\squaresempty{2}{2}{0}{0};\squaresempty{1}{1}{2}{0};}}
    +
    \entropy{\centeredTikZ{\squaresempty{2}{1}{0}{0};}}
    -
    \entropy{\centeredTikZ{\squaresempty{1}{1}{0}{0};}}.
\end{equation}
From Corollary~\ref{corollary:prop2to1_2}, 
\begin{equation}
\entropy{\centeredTikZ{\squaresempty{2}{2}{0}{0};\squaresempty{1}{1}{2}{0};}} = \entropy{\centeredTikZ{\squaresempty{2}{2}{0}{0};}}
    +
    \entropy{\centeredTikZ{\squaresempty{2}{1}{0}{0};}}
    -
    \entropy{\centeredTikZ{\squaresempty{1}{1}{0}{0};}}.
\end{equation}
Combining these two,
\begin{equation}
    \entropy{\centeredTikZ{\squaresdotted{4}{2}{0}{0};\squaresempty{2}{2}{0}{0};\squaresempty{2}{1}{2}{0};}}
     = \entropy{\centeredTikZ{\squaresempty{2}{2}{0}{0};}}+ 2\entropy{\centeredTikZ{\squaresempty{2}{1}{0}{0};}}
    -
    \entropy{\centeredTikZ{\squaresempty{1}{1}{0}{0};}}.
    \label{eq:corollary221_temp0}
\end{equation}

From Lemma~\ref{lemma:prop2to1_entlemma4},
\begin{equation}
\begin{aligned}
\entropy{\centeredTikZ{\squaresempty{3}{1}{0}{0};\squaresempty{1}{1}{0}{1};}} &= \entropy{\centeredTikZ{\squaresempty{2}{1}{0}{0};\squaresempty{1}{1}{0}{1};}} + \entropy{\centeredTikZ{\squaresempty{2}{1}{0}{0};}} - \entropy{\centeredTikZ{\squaresempty{1}{1}{0}{0};}} \\
&= \entropy{\centeredTikZ{\squaresempty{1}{2}{0}{1};}} + 2\entropy{\centeredTikZ{\squaresempty{2}{1}{0}{0};}} - \entropy{\centeredTikZ{\squaresempty{1}{1}{0}{0};}},
\end{aligned}
\label{eq:corollary221_temp1}
\end{equation}
where in the second line we used the \hyperref[constraints:snake]{Snake-a) constraint}. From Eqs.~\eqref{eq:corollary221_temp0} and~\eqref{eq:corollary221_temp1}, Eq.~\eqref{eq:407} immediately follows.
\end{proof}
Thus, we have proved the entropy condition as well. By plugging in Eqs.~\eqref{eq:407} and~\eqref{eq:compositetwo_consistency} to the first condition of Lemma~\ref{lemma:fundamental}, we conclude
\begin{lemma}
\begin{equation}
\begin{aligned}
\text{\compositetwo}&=
    \text{\compositeone}
    \maxmerge 
    \centeredTikZ{\squaresdotted{4}{2}{0}{0};\squaresempty{2}{2}{1}{0};} \\
    &=\text{\compositeone}
    \rightmerge 
    \centeredTikZ{\squaresdotted{4}{2}{0}{0};\squaresempty{2}{2}{1}{0};},
\end{aligned}
\end{equation}
\end{lemma}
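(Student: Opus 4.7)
The strategy is to apply the fundamental lemma (Lemma~\ref{lemma:fundamental}) directly, using the equivalence between its first condition (existence of a max-merge with prescribed entropy) and its second and third conditions (the max-merge equals each of the two right-merges). To do this, we must check that \compositetwo{} is consistent with both \compositeone{} and $\centeredTikZ{\squaresdotted{4}{2}{0}{0};\squaresempty{2}{2}{1}{0};}$, and that its entropy equals $S(\text{\compositeone}) + S(\centeredTikZ{\squaresdotted{4}{2}{0}{0};\squaresempty{2}{2}{1}{0};}) - S(\tau)$, where $\tau$ is the reduced density matrix of either on the intersection of their supports.

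The consistency conditions are exactly those already recorded as Eq.~\eqref{eq:compositetwo_consistency}. The first one, $\text{\compositetwo}\consistent\text{\compositeone}$, is read off from comparing the right-merge expression for \compositetwo{} (top-right corner of~\eqref{eq:identities_composite2}) with the right-merge expression for \compositeone{} on the bottom row of~\eqref{eq:identities_composite1}; tracing out the single cluster $(2,1)$ from \compositetwo{} yields precisely \compositeone. The second one, $\text{\compositetwo}\consistent \centeredTikZ{\squaresdotted{4}{2}{0}{0};\squaresempty{2}{2}{1}{0};}$, follows by tracing out the bottom-right cluster from the right-merge form of \compositetwo{} and invoking the \hyperref[constraints:cell1]{Type I-a) cluster constraint} to identify the resulting marginal as a genuine reduced density matrix of \mymarginal.

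For the entropy condition, note that the intersection of the supports of \compositeone{} and $\centeredTikZ{\squaresdotted{4}{2}{0}{0};\squaresempty{2}{2}{1}{0};}$ is a vertical $1\times 2$ column together with a single cluster to its lower right, namely a $\centeredTikZmini{\squaresempty{2}{1}{0}{0};\squaresempty{1}{1}{0}{1};}$-shaped region (up to translation). By the \hyperref[constraints:snake]{Snake-a) constraint},
\begin{equation*}
S(\tau) = \entropy{\centeredTikZ{\squaresempty{2}{1}{0}{0};\squaresempty{1}{1}{0}{1};}} = \entropy{\centeredTikZ{\squaresempty{2}{1}{0}{0};}} + \entropy{\centeredTikZ{\squaresempty{1}{2}{0}{0};}} - \entropy{\centeredTikZ{\squaresempty{1}{1}{0}{0};}}.
\end{equation*}
Combining this with Eq.~\eqref{eq:350} for $S(\text{\compositeone})$ and Eq.~\eqref{eq:351} for $S(\text{\compositetwo})$, a direct substitution shows that the four terms involving $S(\centeredTikZmini{\squaresempty{2}{1}{0}{0};})$ and $S(\centeredTikZmini{\squaresempty{1}{1}{0}{0};})$ cancel, leaving exactly $S(\text{\compositeone}) + S(\centeredTikZmini{\squaresdotted{4}{2}{0}{0};\squaresempty{2}{2}{1}{0};}) - S(\tau) = S(\text{\compositetwo})$.

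With both the consistency and entropy conditions verified, Lemma~\ref{lemma:fundamental}'s first condition holds with $\lambda = \text{\compositeone}$ and $\sigma = \centeredTikZ{\squaresdotted{4}{2}{0}{0};\squaresempty{2}{2}{1}{0};}$, and \compositetwo{} is the (unique) max-entropy state realizing it. The equivalences $1\Leftrightarrow 2\Leftrightarrow 3$ then immediately yield both displayed identities. The only non-routine point is the correct identification of the overlap marginal $\tau$ as a Snake-a shape, which is essential for the arithmetic to close; once that geometric observation is made, everything reduces to plugging in the already-proven entropy formulae.
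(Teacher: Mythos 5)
Your proof is correct and follows essentially the same route as the paper: verify the two consistency conditions of Eq.~\eqref{eq:compositetwo_consistency} exactly as the paper does, verify the entropy condition, and invoke the first condition of Lemma~\ref{lemma:fundamental} together with its equivalence to conditions 2 and 3. The only (immaterial) difference is bookkeeping in the entropy check — you substitute Eq.~\eqref{eq:350} for $S(\text{\compositeone})$ and cancel against the Snake-a expression for $S(\tau)$, whereas the paper routes the same arithmetic through Corollary~\ref{corollary:prop1to2_prelim1}.
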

\noindent
completing the proof of Eq.~\eqref{eq:identities_composite2}.

As before, we can apply the analogous proofs to \compositethree $\,$ and \compositefour $\,$ by rotating all the involved subsystems and constraints by $\pi$. We will state them below without proofs.

\begin{equation}
    \begin{tikzcd}
    \text{\compositethree} \arrow[r,equal] &\centeredTikZ{\squaresdotted{4}{2}{0}{0};\squaresempty{2}{2}{2}{0};\squaresempty{1}{1}{1}{1};} \maxmerge  \centeredTikZ{\squaresdotted{4}{2}{0}{0};\squaresempty{2}{1}{0}{1};} \arrow[r, equal] \arrow[d, equal]& \centeredTikZ{\squaresdotted{4}{2}{0}{0};\squaresempty{2}{2}{2}{0};\squaresempty{1}{1}{1}{1};} \rightmerge  \centeredTikZ{\squaresdotted{4}{2}{0}{0};\squaresempty{2}{1}{0}{1};} \\
    &\centeredTikZ{\squaresdotted{4}{2}{0}{0};\squaresempty{2}{2}{2}{0};}
    \maxmerge
    \centeredTikZ{\squaresdotted{4}{2}{0}{0};\squaresempty{3}{1}{0}{1};\squaresempty{1}{1}{2}{0};} \arrow[r, equal] & \centeredTikZ{\squaresdotted{4}{2}{0}{0};\squaresempty{2}{2}{2}{0};}
    \rightmerge
    \centeredTikZ{\squaresdotted{4}{2}{0}{0};\squaresempty{3}{1}{0}{1};\squaresempty{1}{1}{2}{0};}
    \end{tikzcd}
    \label{eq:identities_composite3}
\end{equation}
\begin{equation}
    \begin{tikzcd}
    \text{\compositefour} \arrow[r, equal] &
    \centeredTikZ{\squaresdotted{4}{2}{-1}{0};\squaresempty{2}{2}{0}{0};\squaresempty{1}{1}{-1}{1};}
    \maxmerge 
    \centeredTikZ{\squaresdotted{4}{2}{0}{0};\squaresempty{2}{2}{2}{0};}\arrow[r, equal]\arrow[d, equal] & \centeredTikZ{\squaresdotted{4}{2}{-1}{0};\squaresempty{2}{2}{0}{0};\squaresempty{1}{1}{-1}{1};}
    \rightmerge 
    \centeredTikZ{\squaresdotted{4}{2}{0}{0};\squaresempty{2}{2}{2}{0};} \arrow[d, equal]\\
    &
    \text{\compositethree}
    \maxmerge 
    \centeredTikZ{\squaresdotted{4}{2}{0}{0};\squaresempty{2}{2}{1}{0};}\arrow[r, equal] &
    \text{\compositethree}
    \rightmerge 
    \centeredTikZ{\squaresdotted{4}{2}{0}{0};\squaresempty{2}{2}{1}{0};}
    \end{tikzcd}
    \label{eq:identities_composite4}
\end{equation}

\subsection{Completing the proof}
\label{sec:onetotwo_complete}
With the identities derived so far, we are in a position to prove the main result of this appendix, Proposition~\ref{prop:onetotwo}. We restate this below for the reader's convenience.
\onetotwo*
\begin{proof}
Let us define the following object:
\begin{equation}
    \mathfrak{S}_{i+1} = \left(\mathfrak{S}_{i} \rightmerge \horizontalunit{(i\shortplus 1, y)} \right) \rightmerge \twobytwounit{(i,y)},
\end{equation}
for $i\geq 3$ where
\begin{equation}
    \mathfrak{S}_3 := \twobytwounit{(2,y)} \rightmerge \horizontalunit{(3,y)}.
\end{equation}
$\mathfrak{S}_i$ is related to the main claim of this proposition in the following way. Using the commutation lemma (Lemma~\ref{lemma:commutation}), one can show that
\begin{equation}
    \mathcal{E}_{y,\uparrow} \left(\horizontalbaby{(1,y)}{(N,y)} \right)
    =
    \mathfrak{S}_{N} \rightmerge \twobytwounit{(N,y)}.
    \label{eq:859}
\end{equation}

We will show that
\begin{equation}
    \mathfrak{S}_i = \adulthorizontalsnake{1}{y}{i\shortminus 1}{y\shortplus 1} \rightmerge \mym{\centeredTikZ{\squaresempty{2}{1}{0}{0};\squaresempty{1}{1}{0}{1};}_{(i,y)}}.
\end{equation}
The $i=3$ case follows directly from Lemma~\ref{lemma:prop2to1_rectangle2}. 

For $i>3$, note that
\begin{equation}
    \adulthorizontalsnake{1}{y}{i\shortminus 1}{y\shortplus 1}
    = 
    \twobytwounit{(i\shortminus 1, y)} \rightmerge
    \adulthorizontalsnake{1}{y}{i\shortminus 2}{y\shortplus 1}
\end{equation}
by the splitting lemma (Lemma~\ref{lemma:snake_splitting}). Therefore,

\begin{equation}
\begin{aligned}
    \mathfrak{S}_{i+1} &= 
    \left(\left(\left(\twobytwounit{(i\shortminus 1, y)} \rightmerge
    \adulthorizontalsnake{1}{y}{i\shortminus 2}{y\shortplus 1} \right)
    \rightmerge 
    \mym{\centeredTikZ{\squaresempty{2}{1}{0}{0};\squaresempty{1}{1}{0}{1};}_{(i,y)}}
    \right) \rightmerge \horizontalunit{(i\shortplus 1, y)} \right) \rightmerge \twobytwounit{(i,y)}\\
    &=
    \left(\left(\left(\twobytwounit{(i\shortminus 1, y)}
    \rightmerge 
    \mym{\centeredTikZ{\squaresempty{2}{1}{0}{0};\squaresempty{1}{1}{0}{1};}_{(i,y)}}
    \right) \rightmerge \horizontalunit{(i\shortplus 1, y)} \right) \rightmerge \twobytwounit{(i,y)}\right) \rightmerge
    \adulthorizontalsnake{1}{y}{i\shortminus 2}{y\shortplus 1},
\end{aligned}
\end{equation}

which follows from the commutation lemma (Lemma~\ref{lemma:commutation}). Note that
\begin{equation}
\begin{aligned}
&\left(\left(\left(\twobytwounit{(i\shortminus 1, y)}
    \rightmerge 
    \mym{\centeredTikZ{\squaresempty{2}{1}{0}{0};\squaresempty{1}{1}{0}{1};}_{(i,y)}}
    \right) \rightmerge \horizontalunit{(i\shortplus 1, y)} \right) \rightmerge \twobytwounit{(i,y)}\right)  \\
    &=
\left(\left(
    \mym{\centeredTikZ{\squaresempty{2}{2}{-2}{0};\squaresempty{1}{1}{0}{0};}_{(i,y)}} \rightmerge \horizontalunit{(i\shortplus 1, y)} \right) \rightmerge \twobytwounit{(i,y)}\right) \\
    &=
\left(
    \mym{\centeredTikZ{\squaresempty{2}{2}{-2}{0};\squaresempty{2}{1}{0}{0};}_{(i\shortplus 1,y)}} \rightmerge \twobytwounit{(i,y)}\right)\\
    &=
    \mym{\centeredTikZ{\squaresempty{3}{2}{-2}{0};\squaresempty{2}{1}{0}{0};}_{(i\shortplus 1,y)}}\\
    &=
    \mym{\centeredTikZ{\squaresempty{3}{2}{0}{0}}_{(i,y)}} \rightmerge
    \mym{\centeredTikZ{\squaresempty{2}{1}{0}{0};\squaresempty{1}{1}{0}{1};}_{(i\shortplus 1,y)}},
\end{aligned}
\end{equation}
using the top of Lemma~\ref{lemma:prop2to1_rectangle2}, the top of Eq.~\eqref{eq:identities_composite1}, the bottom of Eq.~\eqref{eq:identities_composite2}, and the top of Eq.~\eqref{eq:identities_composite2} in sequence.

Therefore,
\begin{equation}
\begin{aligned}
    \mathfrak{S}_{i+1} &= \left(\mym{\centeredTikZ{\squaresempty{3}{2}{0}{0}}_{(i,y)}} \rightmerge
    \mym{\centeredTikZ{\squaresempty{2}{1}{0}{0};\squaresempty{1}{1}{0}{1};}_{(i\shortplus 1,y)}}\right) \rightmerge \adulthorizontalsnake{1}{y}{i\shortminus 2}{y\shortplus 1} \\
    &=
    \left(\mym{\centeredTikZ{\squaresempty{3}{2}{0}{0}}_{(i,y)}} \rightmerge \adulthorizontalsnake{1}{y}{i\shortminus 2}{y\shortplus 1}\right) \rightmerge \mym{\centeredTikZ{\squaresempty{2}{1}{0}{0};\squaresempty{1}{1}{0}{1};}_{(i\shortplus 1,y)}},
\end{aligned}
\end{equation}
using the commutation lemma (Lemma~\ref{lemma:commutation}). Using the \hyperref[constraints:cell1]{Type I-a) cluster constraint} and the splitting lemma (Lemma~\ref{lemma:snake_splitting}), we conclude
\begin{equation}
    \mathfrak{S}_{i+1} = \adulthorizontalsnake{1}{y}{i}{y\shortplus 1} \rightmerge \mym{\centeredTikZ{\squaresempty{2}{1}{0}{0};\squaresempty{1}{1}{0}{1};}_{(i\shortplus 1, y)}},
\end{equation}
proving the recursion relation.

Using Eq.~\eqref{eq:859}, we have

\begin{equation}
\begin{aligned}
\mathcal{E}_{y,\uparrow} \left(\horizontalbaby{(1,y)}{(N,y)} \right)
    &=
    \mathfrak{S}_{N} \rightmerge \twobytwounit{(N,y)}\\
    &= \left(\adulthorizontalsnake{1}{y}{N\shortminus 1}{y\shortplus 1} \rightmerge \mym{\centeredTikZ{\squaresempty{2}{1}{0}{0};\squaresempty{1}{1}{0}{1};}_{(N, y)}} \right) \rightmerge \twobytwounit{(N,y)}\\
    &= \left(\left(\twobytwounit{(N\shortminus 1, y)} \rightmerge \adulthorizontalsnake{1}{y}{N\shortminus 2}{y\shortplus 1} \right)\rightmerge \mym{\centeredTikZ{\squaresempty{2}{1}{0}{0};\squaresempty{1}{1}{0}{1};}_{(N, y)}} \right) \rightmerge \twobytwounit{(N,y)} \\
    &= \left(\left(\twobytwounit{(N\shortminus 1, y)} \rightmerge \mym{\centeredTikZ{\squaresempty{2}{1}{0}{0};\squaresempty{1}{1}{0}{1};}_{(N, y)}} \right) \rightmerge \twobytwounit{(N,y)}\right) 
    \rightmerge \adulthorizontalsnake{1}{y}{N\shortminus 2}{y\shortplus 1},
\end{aligned}
\end{equation}

using the splitting lemma (Lemma~\ref{lemma:snake_splitting}) in the third line and the commutation lemma (Lemma~\ref{lemma:commutation}) in the last line. Moreover,
\begin{equation}
    \begin{aligned}
    &\left(\twobytwounit{(N\shortminus 1, y)} \rightmerge \mym{\centeredTikZ{\squaresempty{2}{1}{0}{0};\squaresempty{1}{1}{0}{1};}_{(N, y)}} \right) \rightmerge \twobytwounit{(N,y)} \\
    &= 
  \mym{\centeredTikZ{\squaresempty{3}{1}{0}{0};\squaresempty{2}{1}{0}{1};}_{(N, y)}}  \rightmerge \twobytwounit{(N,y)}\\
    &= \mym{\centeredTikZ{\squaresempty{3}{2}{0}{0};}_{(N,y)}},
    \end{aligned}
\end{equation}
using the top of Lemma~\ref{lemma:prop2to1_rectangle2} in the first line and using Lemma~\ref{lemma:prop1to2_basic3} in the second line. Applying the splitting lemma (Lemma~\ref{lemma:snake_splitting}), the first identity of our main claim is proved.

The second identity follows from an analogous analysis, by rotating every constraints/subsystems by $\pi$.
\end{proof}

\section{Twist}
\label{appendix:twist}
In this appendix, we prove the following \emph{twist identity}.
\twist*
\noindent
The proof of the twist identity involves the following new composite object:
\begin{definition}
\begin{equation}
    \centeredTikZ{\squaresdotted{4}{3}{0}{0};\squaresempty{2}{3}{1}{0};\squaresempty{1}{2}{0}{1};\squaresempty{1}{2}{3}{0};}:=
    \centeredTikZ{\squaresdotted{4}{3}{0}{0};\squaresempty{2}{3}{1}{0};\squaresempty{1}{2}{0}{1};} \rightmerge
    \centeredTikZ{\squaresdotted{4}{3}{0}{0};\squaresempty{2}{2}{2}{0};}.
\end{equation}
\label{def:twist_transfer}
\end{definition}

The key identity is the following:
\begin{equation}
\begin{aligned}
    \left(\centeredTikZ{\squaresdotted{5}{3}{0}{0};\squaresempty{2}{3}{1}{0};\squaresempty{1}{2}{0}{1};\squaresempty{1}{2}{3}{0};}
    \rightmerge 
    \centeredTikZ{\squaresdotted{5}{3}{0}{0};\squaresempty{2}{2}{3}{0};}\right)
    \rightmerge
    \centeredTikZ{\squaresdotted{5}{3}{0}{0};\squaresempty{2}{2}{2}{1};} \\ = 
     \left(\centeredTikZ{\squaresdotted{5}{3}{-1}{0};\squaresempty{2}{3}{1}{0};\squaresempty{1}{2}{0}{1};\squaresempty{1}{2}{3}{0};}
    \rightmerge 
    \centeredTikZ{\squaresdotted{5}{3}{0}{0};\squaresempty{2}{2}{0}{1};}\right)
    \rightmerge
    \centeredTikZ{\squaresdotted{5}{3}{0}{0};\squaresempty{2}{2}{1}{0};}.\label{eq:twist_key}
\end{aligned}
\end{equation}
Roughly speaking, we can use Eq.~\eqref{eq:twist_key} in the following way. First, expand the left hand side of Eq.~\eqref{eq:twist_main_result} by writing down the action of $\mathcal{E}_{y, \uparrow}$ explicitly and also expressing the snake as a sequence of right-merges, involving marginals from the left end to the right end sequentially. One can apply Eq.~\eqref{eq:twist_key} from the left end to the right end on this object. After applying some extra identities at the end, one can show that the end result is equal to the right hand side of Eq.~\eqref{eq:twist_main_result}. This fact can be verified by expanding the right hand side in a way similar to the left hand side, by explicitly writing down the action of $\mathcal{E}_{y,\downarrow}$ and expressing the snake as a sequence of right-merges, involving marginals from the \emph{right to the left}. 

As we shall see later, it is not too difficult to show that Eq.~\eqref{eq:twist_key} would follow if these ``commutation relations'' hold:

\begin{equation}
\begin{aligned}
\left(
\centeredTikZ{\squaresdotted{4}{3}{0}{0};\squaresempty{2}{3}{0}{0};\squaresempty{1}{2}{2}{0};}
\rightmerge 
\centeredTikZ{\squaresdotted{4}{3}{0}{0};\squaresempty{2}{2}{2}{0};}\right)
\rightmerge
\centeredTikZ{\squaresdotted{4}{3}{0}{0};\squaresempty{2}{2}{1}{1};}
= 
\left(
\centeredTikZ{\squaresdotted{4}{3}{0}{0};\squaresempty{2}{3}{0}{0};\squaresempty{1}{2}{2}{0};}
\rightmerge 
\centeredTikZ{\squaresdotted{4}{3}{0}{0};\squaresempty{2}{2}{1}{1};}
\right)
\rightmerge
\centeredTikZ{\squaresdotted{4}{3}{0}{0};\squaresempty{2}{2}{2}{0};},\\
\left(
\centeredTikZ{\squaresdotted{4}{3}{0}{0};\squaresempty{2}{3}{2}{0};\squaresempty{1}{2}{1}{1};}
\rightmerge 
\centeredTikZ{\squaresdotted{4}{3}{0}{0};\squaresempty{2}{2}{0}{1};}\right)
\rightmerge
\centeredTikZ{\squaresdotted{4}{3}{0}{0};\squaresempty{2}{2}{1}{0};}
=
\left(
\centeredTikZ{\squaresdotted{4}{3}{0}{0};\squaresempty{2}{3}{2}{0};\squaresempty{1}{2}{1}{1};}
\rightmerge 
\centeredTikZ{\squaresdotted{4}{3}{0}{0};\squaresempty{2}{2}{1}{0};}
\right)
\rightmerge
\centeredTikZ{\squaresdotted{4}{3}{0}{0};\squaresempty{2}{2}{0}{1};},
\end{aligned}
\label{eq:255}
\end{equation}

While these identities are indeed correct, they do not immediately follow from the commutation lemma (Lemma~\ref{lemma:commutation}). This is because the commutation lemma requires two right-merged marginals to have disjoint supports. 

In the rest of this appendix, we will prove Eqs.~\eqref{eq:255} in Appendix~\ref{appendix:twist_commutation}. We will then complete the proof of Proposition~\ref{prop:twist} in Appendix~\ref{appendix:twist_complete}.

\subsection{Commutation relations}
\label{appendix:twist_commutation}
It will be convenient to work with the following objects.
\begin{definition}
\begin{equation}
\begin{aligned}
\centeredTikZ{\squaresdotted{4}{3}{0}{0}; \squaresempty{4}{2}{0}{0}; \squaresempty{3}{1}{0}{2};}&:=
    \left(
\centeredTikZ{\squaresdotted{4}{3}{0}{0};\squaresempty{2}{3}{0}{0};\squaresempty{1}{2}{2}{0};}
\rightmerge 
\centeredTikZ{\squaresdotted{4}{3}{0}{0};\squaresempty{2}{2}{2}{0};}\right)
\rightmerge
\centeredTikZ{\squaresdotted{4}{3}{0}{0};\squaresempty{2}{2}{1}{1};}, \\
\centeredTikZ{\squaresempty{4}{3}{0}{0};\squaresdotted{1}{1}{0}{0};} &:= \left(
\centeredTikZ{\squaresdotted{4}{3}{0}{0};\squaresempty{2}{3}{2}{0};\squaresempty{1}{2}{1}{1};} \rightmerge \centeredTikZ{\squaresdotted{4}{3}{0}{0};\squaresempty{2}{2}{0}{1};}
\right) \rightmerge
\centeredTikZ{\squaresdotted{4}{3}{0}{0};\squaresempty{2}{2}{1}{0};}.
\end{aligned}
\end{equation}
\end{definition}

\begin{lemma}
\label{lemma:twist_commutation1}
\begin{equation}
    \centeredTikZ{\squaresdotted{4}{3}{0}{0}; \squaresempty{4}{2}{0}{0}; \squaresempty{3}{1}{0}{2};\squaresfilled{1}{3}{0}{0};\squaresfilled{1}{1}{1}{0};}=
    \centeredTikZ{\squaresdotted{4}{3}{0}{0}; \squaresempty{4}{2}{0}{0}; \squaresempty{3}{1}{0}{2};\squaresdotted{1}{3}{0}{0};\squaresdotted{1}{1}{1}{0};}\label{eq:lemma7_first}
\end{equation}
\end{lemma}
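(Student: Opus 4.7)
The plan is to verify that the partial trace displayed on the left-hand side of Eq.~\eqref{eq:lemma7_first} slides through the two Petz maps used to build the composite, so that the resulting density matrix is supported on exactly the seven empty cells drawn on the right-hand side. Unpacking the definition introduced just above the lemma, the composite reads $(A \rightmerge B) \rightmerge C$ with $A = \centeredTikZ{\squaresdotted{4}{3}{0}{0};\squaresempty{2}{3}{0}{0};\squaresempty{1}{2}{2}{0};}$, $B = \centeredTikZ{\squaresdotted{4}{3}{0}{0};\squaresempty{2}{2}{2}{0};}$, and $C = \centeredTikZ{\squaresdotted{4}{3}{0}{0};\squaresempty{2}{2}{1}{1};}$. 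The four cells that the filled squares on the left-hand side instruct us to remove are the leftmost column $(1,1),(1,2),(1,3)$ together with the site $(2,1)$; a direct inspection shows that this set $R$ is disjoint from both $\text{Supp}(B) = \{(3,1),(3,2),(4,1),(4,2)\}$ and $\text{Supp}(C) = \{(2,2),(3,2),(2,3),(3,3)\}$.

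I would first invoke the fact (used already in the proof of Lemma~\ref{lemma:merging_lemma}) that when a Petz map acts only on sites disjoint from a region $R$, it commutes with the partial trace $\Tr_R$. Applying this twice---once for $\Phi_C$ and once for $\Phi_B$---we slide the partial trace through both right-merges so that it ends up applied only to $A$. The outcome is $(\Tr_R A) \rightmerge B \rightmerge C$. The marginal $\Tr_R A$ is supported on the four remaining cells $\{(3,1),(2,2),(3,2),(2,3)\}$ of $A$, and is itself a reduced density matrix of \mymarginal, hence available directly from the generating set of the merging algebra.

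Finally, I would check that the object $(\Tr_R A) \rightmerge B \rightmerge C$ has support exactly $\{(3,1),(4,1),(2,2),(3,2),(4,2),(2,3),(3,3)\}$, because a right-merge only enlarges the support by the new sites of its second argument, and these new sites are read off directly from the TikZ of $B$ and $C$. By the notational convention of Section~\ref{sec:marginals}, this seven-site density matrix is exactly what the right-hand side of Eq.~\eqref{eq:lemma7_first} depicts, with the dotted squares marking the four removed cells as canvas positions rather than traced-out sites.

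The main obstacle is the support bookkeeping: one must carefully verify the disjointness between $R$ and $\text{Supp}(B), \text{Supp}(C)$ before invoking the commutation of Petz maps with partial traces, and then track how the support of $\Tr_R A$ grows under the two remaining right-merges. Once that is in hand, the identity follows immediately from the definitions, with no additional appeal to the Markovian constraints of Table~\ref{table:constraints}.
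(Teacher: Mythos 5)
Your reduction of the left-hand side to $((\Tr_R A)\rightmerge B)\rightmerge C$ is sound: the traced region $R$ is indeed disjoint from $\text{Supp}(B)$ and $\text{Supp}(C)$, so both Petz maps commute with $\Tr_R$, and $\Tr_R A$ is a reduced density matrix of \mymarginal. The gap is in your final step. The right-hand side of Eq.~\eqref{eq:lemma7_first} does not denote ``whatever operator $((\Tr_R A)\rightmerge B)\rightmerge C$ happens to be, drawn on seven cells''; by the paper's conventions, a diagram of white cells fitting inside a single $3\times 3$ cluster denotes the corresponding reduced density matrix of the fundamental marginal \mymarginal. The lemma is therefore a \emph{consistency} statement: the composite, restricted to those seven cells, coincides with the canonical marginal. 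A right-merge $\sigma\rightmerge\lambda=\Phi_\lambda(\sigma)$ is merely the output of a Petz map; absent the relevant conditional independence it need not even be consistent with $\lambda$, let alone reproduce the fundamental marginal on the enlarged support. So matching supports proves nothing, and your explicit claim that no appeal to the constraints of Table~\ref{table:constraints} is needed is precisely where the argument fails --- this is also why the lemma is stated as a lemma rather than as an observation about notation.

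To close the gap you must show that each remaining right-merge is a genuine Markov extension. The paper does this by first deriving, from the \hyperref[constraints:cell1]{Type I-b) cluster constraint} together with the monotonicity properties of conditional mutual information, entropy identities such as
\begin{equation*}
    \entropy{\centeredTikZ{\squaresempty{3}{2}{0}{0};\squaresempty{1}{1}{0}{2};}} = \entropy{\centeredTikZ{\squaresempty{2}{2}{0}{0};\squaresempty{1}{1}{0}{2};}} + \entropy{\centeredTikZ{\squaresempty{2}{2}{0}{0};}} - \entropy{\centeredTikZ{\squaresempty{1}{2}{0}{0};}},
\end{equation*}
which assert the vanishing of the conditional mutual information for the tripartition underlying the merge with $B$; Petz's theorem (Theorem~\ref{thm:Petz}) then identifies the right-merge with the fundamental marginal on the enlarged region. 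A second identity of the same kind (again from Type I-b) and Eq.~\eqref{eq:ssa_monotonicity_intro2}) handles the subsequent merge with $C$. These two conditional-independence arguments are the entire content of the lemma, and your outline contains neither.
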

\begin{proof}
To prove Eq.~\eqref{eq:lemma7_first}, note that
\begin{equation}
\begin{aligned}
\centeredTikZ{\squaresdotted{4}{3}{0}{0}; \squaresempty{4}{2}{0}{0}; \squaresempty{3}{1}{0}{2};\squaresfilled{1}{3}{0}{0};}&= \left(\centeredTikZ{\squaresdotted{4}{3}{0}{0}; \squaresempty{3}{2}{0}{0};\squaresempty{2}{1}{0}{2};\squaresfilled{1}{3}{0}{0};} \rightmerge \centeredTikZ{\squaresdotted{4}{3}{0}{0};\squaresempty{2}{2}{2}{0};}\right) \rightmerge \centeredTikZ{\squaresdotted{4}{3}{0}{0};\squaresempty{2}{2}{1}{1};} \\
&= \centeredTikZ{\squaresdotted{4}{3}{0}{0}; \squaresempty{3}{2}{1}{0};\squaresempty{1}{1}{1}{2};} \rightmerge \centeredTikZ{\squaresdotted{4}{3}{0}{0};\squaresempty{2}{2}{1}{1};}.
\end{aligned}
\end{equation}
To see why, note that
\begin{equation}
    \entropy{\centeredTikZ{\squaresempty{3}{2}{0}{0};\squaresempty{1}{1}{0}{2};}} = 
    \entropy{\centeredTikZ{\squaresempty{2}{2}{0}{0};\squaresempty{1}{1}{0}{2};}}
    +
    \entropy{\centeredTikZ{\squaresempty{2}{2}{0}{0};}}
    -
    \entropy{\centeredTikZ{\squaresempty{1}{2}{0}{0};}},
\end{equation}
which follows from \hyperref[constraints:cell1]{Type I-b) cluster constraint} and the monotonicity of the conditional mutual information (Eq.~\eqref{eq:ssa_monotinicity_intro}). Then by Petz's theorem (Theorem~\ref{thm:Petz}), 
\begin{equation}
    \centeredTikZ{\squaresdotted{4}{3}{0}{0}; \squaresempty{3}{2}{1}{0};\squaresempty{1}{1}{1}{2};} 
    = \centeredTikZ{\squaresdotted{4}{3}{0}{0}; \squaresempty{3}{2}{0}{0};\squaresempty{2}{1}{0}{2};\squaresdotted{1}{3}{0}{0};} \rightmerge \centeredTikZ{\squaresdotted{4}{3}{0}{0};\squaresempty{2}{2}{2}{0};}.
\end{equation}
Moreover, note that 
\begin{equation}
    \entropy{\centeredTikZ{\squaresempty{3}{2}{0}{0};\squaresempty{2}{1}{0}{2};}}
     = \entropy{\centeredTikZ{\squaresempty{3}{2}{0}{0};\squaresempty{1}{1}{0}{2};}} + \entropy{\centeredTikZ{\squaresempty{2}{2}{0}{0};}} - \entropy{\centeredTikZ{\squaresempty{2}{1}{0}{0};\squaresempty{1}{1}{0}{1};}},
\end{equation}
which follows from the \hyperref[constraints:cell1]{Type I-b) cluster constraint} and Eq.~\eqref{eq:ssa_monotonicity_intro2}. By Petz's theorem (Theorem~\ref{thm:Petz}), 
\begin{equation}
\centeredTikZ{\squaresdotted{4}{3}{0}{0}; \squaresempty{3}{2}{1}{0};\squaresempty{1}{1}{1}{2};} \rightmerge \centeredTikZ{\squaresdotted{4}{3}{0}{0};\squaresempty{2}{2}{1}{1};}=\centeredTikZ{\squaresdotted{4}{3}{0}{0};\squaresempty{3}{3}{1}{0};\squaresdotted{1}{1}{3}{2};},
\end{equation}
which immediately implies Eq.~\eqref{eq:lemma7_first}.
\end{proof}

\begin{lemma}
\label{lemma:twist_commutation2}
\begin{equation}
    \centeredTikZ{\squaresdotted{4}{3}{0}{0}; \squaresempty{4}{2}{0}{0}; \squaresempty{3}{1}{0}{2};\squaresfilled{1}{2}{3}{0};\squaresfilled{1}{1}{2}{2};}
    =
    \centeredTikZ{\squaresdotted{4}{3}{0}{0}; \squaresempty{4}{2}{0}{0}; \squaresempty{3}{1}{0}{2};\squaresdotted{1}{2}{3}{0};\squaresdotted{1}{1}{2}{2};}.\label{eq:lemma7_second}
\end{equation}
\end{lemma}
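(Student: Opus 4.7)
The proof plan mirrors that of Lemma~\ref{lemma:twist_commutation1}, with the partial trace applied on the right rather than the left of the composite object. Starting from the definition
\begin{equation*}
\centeredTikZ{\squaresdotted{4}{3}{0}{0}; \squaresempty{4}{2}{0}{0}; \squaresempty{3}{1}{0}{2};} = \left(\centeredTikZ{\squaresdotted{4}{3}{0}{0};\squaresempty{2}{3}{0}{0};\squaresempty{1}{2}{2}{0};} \rightmerge \centeredTikZ{\squaresdotted{4}{3}{0}{0};\squaresempty{2}{2}{2}{0};}\right)\rightmerge \centeredTikZ{\squaresdotted{4}{3}{0}{0};\squaresempty{2}{2}{1}{1};},
\end{equation*}
I would trace out the combined cluster $\centeredTikZ{\squaresfilled{1}{2}{3}{0};\squaresfilled{1}{1}{2}{2};}$ and then peel off the two outer right-merges one at a time by means of Petz's theorem (Theorem~\ref{thm:Petz}).

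The first step is to show that, after tracing out the rightmost column $\centeredTikZ{\squaresfilled{1}{2}{3}{0};}$ of the bottom strip, the reduced state can be written as a right-merge of a smaller L-shape with the $\centeredTikZ{\squaresempty{2}{2}{1}{1};}$ block. The required entropy identity will be obtained from the \hyperref[constraints:cell1]{Type I-b) cluster constraint} via the monotonicity of conditional mutual information (Eq.~\eqref{eq:ssa_monotinicity_intro}); it asserts the conditional independence that allows Petz's theorem to factor the $\centeredTikZ{\squaresempty{2}{2}{2}{0};}$ block off, so that the post-trace state takes the form of an L-shape carrying a single ``notch'' cell at $(2,2)$, right-merged with $\centeredTikZ{\squaresempty{2}{2}{1}{1};}$. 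This parallels the first rewriting in the proof of Lemma~\ref{lemma:twist_commutation1}.

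The second step further traces out the notch cell $\centeredTikZ{\squaresfilled{1}{1}{2}{2};}$ and invokes a second conditional independence, derived from the \hyperref[constraints:cell1]{Type I-b) cluster constraint} now together with Eq.~\eqref{eq:ssa_monotonicity_intro2}. A final application of Petz's theorem absorbs the $\centeredTikZ{\squaresempty{2}{2}{1}{1};}$ factor into the remaining L-shape, extending the top row by one cell and producing precisely the state on the right-hand side of Eq.~\eqref{eq:lemma7_second}.

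The main obstacle will be bookkeeping of the overlapping supports at each Petz rewriting: unlike in Lemma~\ref{lemma:twist_commutation1}, where the traced cluster is a complete column and its neighbour, here the two traced cells $\centeredTikZ{\squaresfilled{1}{2}{3}{0};}$ and $\centeredTikZ{\squaresfilled{1}{1}{2}{2};}$ sit at the reentrant corner of the composite L-shape, so one must verify carefully that each Petz map acts strictly on the intersection of the intended supports and does not inadvertently touch the cells that have already been traced out. Once the correct entropy identities and support intersections are in hand, the sequence of Petz rewritings proceeds in direct analogy with the preceding lemma.
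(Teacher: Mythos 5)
Your high-level plan (peel off the two right-merges, identify the intermediate state with the $3\times 3$ marginal, then trace the corner) matches the paper's, but the justification you give for the first peeling step fails, and the failure is not the bookkeeping issue you flag at the end. Write the composite as $(\rho_1 \rightmerge \rho_2)\rightmerge\rho_3$, where $\rho_1$ is the L-shaped marginal on columns $0$--$2$, $\rho_2$ is the $2\times 2$ block on columns $2$--$3$, and $\rho_3$ is the $2\times 2$ block anchored at $(1,1)$. In Lemma~\ref{lemma:twist_commutation1} the traced cells (the leftmost column and the cell $(1,0)$) are disjoint from $\text{Supp}(\rho_2)$ and $\text{Supp}(\rho_3)$, so the partial trace commutes through both Petz maps and every conditional independence you subsequently need is supported inside a single $3\times 3$ patch, where a descendant constraint plus monotonicity suffices. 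Here, by contrast, the traced cells $(3,0),(3,1)$ are exactly the cells \emph{created} by $\Phi_{\rho_2}$; removing them amounts to proving $\rho_1\rightmerge\rho_2\consistent\rho_1$, which by Lemma~\ref{lemma:fundamental} is equivalent to the entropy identity $S(\rho_1\maxmerge\rho_2)=S(\rho_1)+S(\rho_2)-S(\rho_{1}\cap\rho_2\text{-overlap})$ on a region spanning four columns. That identity cannot be reached from the \hyperref[constraints:cell1]{Type I-b) cluster constraint} (or any single descendant constraint) via Eq.~\eqref{eq:ssa_monotinicity_intro} or Eq.~\eqref{eq:ssa_monotonicity_intro2}: monotonicity only shrinks conditioned subsystems, whereas the target region is strictly larger than any $3\times 3$ cluster. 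This is precisely the obstruction discussed at the end of Section~\ref{sec:summary}, and the paper overcomes it with the merging lemma (Lemma~\ref{lemma:merging_lemma}), stitching the \hyperref[constraints:cell1]{Type I-c) constraint} on the patch of columns $0$--$2$ to the \hyperref[constraints:cell1]{Type I-a) constraint} on the patch of columns $1$--$3$. Your proposal contains no substitute for this step, so the first conditional independence is unproven.

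Your second step is also misdirected, though less seriously. Once the first step is established, the traced composite equals the L-shape right-merged with $\rho_3$; the identity needed there is the \hyperref[constraints:cell1]{Type I-d) cluster constraint} applied directly to Lemma~\ref{lemma:fundamental} (it identifies this state with the $3\times 3$ marginal), not Type I-b) combined with Eq.~\eqref{eq:ssa_monotonicity_intro2}. The right-hand side of the lemma then follows by a plain partial trace of the corner cell from the $3\times 3$ marginal together with $\mathcal{C}_L$, rather than by ``extending the top row by one cell.'' The misattribution is fixable; the missing merging-lemma argument in the first step is the genuine gap.
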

\begin{proof}
To prove Eq.~\eqref{eq:lemma7_second}, recall the \hyperref[constraints:cell1]{Type I-c) cluster constraint}, which through Lemma~\ref{lemma:fundamental} implies
\begin{equation}
    \centeredTikZ{\squaresdotted{4}{3}{0}{0};\squaresempty{3}{2}{0}{0};\squaresempty{2}{1}{0}{2};}
= \centeredTikZ{\squaresdotted{4}{3}{0}{0};\squaresempty{2}{3}{0}{0};\squaresdotted{1}{2}{2}{0};}\maxmerge
\centeredTikZ{\squaresdotted{4}{3}{0}{0};\squaresdotted{2}{3}{0}{0};\squaresempty{2}{2}{1}{0};}=
 \centeredTikZ{\squaresdotted{4}{3}{0}{0};\squaresempty{2}{3}{0}{0};\squaresdotted{1}{2}{2}{0};}\rightmerge
\centeredTikZ{\squaresdotted{4}{3}{0}{0};\squaresdotted{2}{3}{0}{0};\squaresempty{2}{2}{1}{0};}. \label{eq:lemma7_temp3}
\end{equation}
Moreover, by the \hyperref[constraints:cell1]{Type I-a) cluster constraint}, through Lemma~\ref{lemma:fundamental},
\begin{equation}
    \centeredTikZ{\squaresdotted{4}{3}{0}{0}; \squaresempty{3}{2}{1}{0};}
    = \centeredTikZ{\squaresdotted{4}{3}{0}{0}; \squaresempty{2}{2}{1}{0};} \maxmerge
    \centeredTikZ{\squaresdotted{4}{3}{0}{0}; \squaresempty{2}{2}{2}{0};}
    = \centeredTikZ{\squaresdotted{4}{3}{0}{0}; \squaresempty{2}{2}{1}{0};} \rightmerge
    \centeredTikZ{\squaresdotted{4}{3}{0}{0}; \squaresempty{2}{2}{2}{0};}.\label{eq:lemma7_temp4}
\end{equation}
Therefore, Eq.~\eqref{eq:lemma7_temp3} and Eq.~\eqref{eq:lemma7_temp4} can be merged together by using the merging lemma (Lemma~\ref{lemma:merging_lemma}). Moreover, by the same lemma, the merged state is precisely the following state:
\begin{equation}
    \centeredTikZ{\squaresdotted{4}{3}{0}{0}; \squaresempty{3}{2}{0}{0};\squaresempty{2}{1}{0}{2};} \maxmerge \centeredTikZ{\squaresdotted{4}{3}{0}{0};\squaresempty{2}{2}{2}{0};} 
    =\centeredTikZ{\squaresdotted{4}{3}{0}{0}; \squaresempty{3}{2}{0}{0};\squaresempty{2}{1}{0}{2};} \rightmerge \centeredTikZ{\squaresdotted{4}{3}{0}{0};\squaresempty{2}{2}{2}{0};}
    \label{eq:lemma7_temptemp}
\end{equation}
Therefore, 
\begin{equation}
\begin{aligned}
\centeredTikZ{\squaresdotted{4}{3}{0}{0}; \squaresempty{4}{2}{0}{0}; \squaresempty{3}{1}{0}{2};\squaresfilled{1}{2}{3}{0};} &= \centeredTikZ{\squaresdotted{4}{3}{0}{0}; \squaresempty{3}{2}{0}{0};\squaresempty{2}{1}{0}{2};}  \rightmerge \centeredTikZ{\squaresdotted{4}{3}{0}{0};\squaresempty{2}{2}{1}{1};} \\
&=
\centeredTikZ{\squaresdotted{4}{3}{0}{0};\squaresempty{3}{3}{0}{0};}, \label{eq:lemma7_temp6}
\end{aligned}
\end{equation}
where the first line follows from the above-mentioned merging process\footnote{To be more specific, the merged state is consistent with the two density matrices in Eq.~\eqref{eq:lemma7_temptemp}. As such, by taking a partial trace, we obtain (one of the) original marginals.} and the second line follows from \hyperref[constraints:cell1]{Type I-d) cluster constraint} and Lemma~\ref{lemma:fundamental}.
\end{proof}

\begin{lemma}
\label{lemma:twist_commutation_entropy}
\begin{equation}
    \entropy{\centeredTikZ{\squaresempty{4}{3}{0}{0};\squaresdotted{1}{1}{3}{2};}}
    =\entropy{\centeredTikZ{\squaresempty{3}{3}{0}{0};}} + \entropy{\centeredTikZ{\squaresempty{2}{2}{0}{0};}}
    -\entropy{\centeredTikZ{\squaresempty{1}{2}{0}{0};}}
\end{equation}
\end{lemma}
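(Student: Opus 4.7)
The plan is to realize the L-shape marginal $\centeredTikZ{\squaresempty{4}{3}{0}{0};\squaresdotted{1}{1}{3}{2};}$ as the max-merge of the $3\times 3$ marginal (on the leftmost three columns, all three rows) with a $2\times 2$ marginal sitting at the bottom-right corner (rightmost two columns, bottom two rows), whose overlap is the $1\times 2$ strip $\centeredTikZ{\squaresempty{1}{2}{0}{0};}$. Once this max-merge is shown to equal the corresponding right-merge, condition~1 of the fundamental lemma (Lemma~\ref{lemma:fundamental}) immediately reads the entropy off as $\entropy{\centeredTikZ{\squaresempty{3}{3}{0}{0};}} + \entropy{\centeredTikZ{\squaresempty{2}{2}{0}{0};}} - \entropy{\centeredTikZ{\squaresempty{1}{2}{0}{0};}}$, which is the claim. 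To establish the max-merge/right-merge equality I will invoke the merging lemma (Lemma~\ref{lemma:merging_lemma}) on the triple $\lambda = \centeredTikZ{\squaresempty{2}{3}{0}{0};}$ on columns $0$--$1$, $\sigma = \centeredTikZ{\squaresempty{2}{3}{0}{0};}$ on columns $1$--$2$, $\tau = \centeredTikZ{\squaresempty{2}{2}{0}{0};}$ on columns $2$--$3$, rows $0$--$1$. Since $\text{Supp}(\lambda) \cap \text{Supp}(\tau) = \varnothing$, the lemma reduces the task to verifying the two pairwise hypotheses.

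For $\lambda \maxmerge \sigma$, the overlap is a $1\times 3$ column and the natural candidate for the merge is the $3\times 3$ marginal. Expanding $\entropy{\centeredTikZ{\squaresempty{3}{3}{0}{0};}}$ via the primary constraint~\eqref{constraint:primary3} and expanding $2\entropy{\centeredTikZ{\squaresempty{2}{3}{0}{0};}} - \entropy{\centeredTikZ{\squaresempty{1}{3}{0}{0};}}$ via Type II-a~\eqref{constraints:cell2} and Snake-d~\eqref{constraints:snake} shows both expressions equal $4 \entropy{\centeredTikZ{\squaresempty{2}{2}{0}{0};}} - 2\entropy{\centeredTikZ{\squaresempty{2}{1}{0}{0};}} - 2\entropy{\centeredTikZ{\squaresempty{1}{2}{0}{0};}} + \entropy{\centeredTikZ{\squaresempty{1}{1}{0}{0};}}$. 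Combined with the consistency of the $3\times 3$ with both $2\times 3$ marginals furnished by $\mathcal{C}_L$, this triggers condition~1 of Lemma~\ref{lemma:fundamental}, identifying the $3\times 3$ marginal as $\lambda \maxmerge \sigma = \lambda \rightmerge \sigma$. For $\sigma \maxmerge \tau$, the overlap is a $1\times 2$ strip and the merged region is the ``mini-L'' shape $\centeredTikZ{\squaresempty{3}{2}{0}{0};\squaresempty{2}{1}{0}{2};}$ (up to a translation); the Type I-c cluster constraint~\eqref{constraints:cell1} directly gives its entropy as $\entropy{\centeredTikZ{\squaresempty{2}{3}{0}{0};}} + \entropy{\centeredTikZ{\squaresempty{2}{2}{0}{0};}} - \entropy{\centeredTikZ{\squaresempty{1}{2}{0}{0};}}$, and condition~1 of Lemma~\ref{lemma:fundamental} again produces $\sigma \maxmerge \tau = \sigma \rightmerge \tau$.

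With both hypotheses of the merging lemma in hand, I obtain $(3\times 3) \maxmerge \tau = (3\times 3) \rightmerge \tau$, whose support is precisely the L-shape $\centeredTikZ{\squaresempty{4}{3}{0}{0};\squaresdotted{1}{1}{3}{2};}$, and one final application of condition~1 of Lemma~\ref{lemma:fundamental} to the pair $(3\times 3, \tau)$ with $1\times 2$ overlap delivers the claimed entropy formula. I do not foresee a serious obstacle: the proof is a direct concatenation of the fundamental and merging lemmas, and the only fiddly element is tracking that the various $2\times 3$, $2\times 2$, and $1\times 2$ subregions sit at the correct coordinates so that their pairwise overlaps match exactly what $\mathcal{C}_L$ provides, which is a routine check.
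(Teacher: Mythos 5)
Your argument correctly establishes that the \emph{maximum-entropy} state consistent with $\centeredTikZ{\squaresempty{3}{3}{0}{0};}$ and a $2\times 2$ marginal at the bottom-right corner exists, equals the corresponding right-merge, and has entropy $\entropy{\centeredTikZ{\squaresempty{3}{3}{0}{0};}} + \entropy{\centeredTikZ{\squaresempty{2}{2}{0}{0};}} - \entropy{\centeredTikZ{\squaresempty{1}{2}{0}{0};}}$: the two applications of Lemma~\ref{lemma:fundamental}, the use of the merging lemma, and the entropy bookkeeping all check out. The problem is that this is not the statement being proved. The symbol $\centeredTikZ{\squaresempty{4}{3}{0}{0};\squaresdotted{1}{1}{3}{2};}$ does not mean ``whatever maximum-entropy state lives on this L-shaped region''; it is the \emph{composite} object fixed by the definition at the start of Appendix~\ref{appendix:twist_commutation}, namely $\left(\centeredTikZ{\squaresdotted{4}{3}{0}{0};\squaresempty{2}{3}{0}{0};\squaresempty{1}{2}{2}{0};}\rightmerge\centeredTikZ{\squaresdotted{4}{3}{0}{0};\squaresempty{2}{2}{2}{0};}\right)\rightmerge\centeredTikZ{\squaresdotted{4}{3}{0}{0};\squaresempty{2}{2}{1}{1};}$, a state produced by two specific Petz maps. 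The lemma asks for the entropy of \emph{that} state.

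The two objects coincide only once one knows that the composite is the maximum-entropy extension of the $3\times 3$ and the $2\times 2$ --- and that identification is exactly Proposition~\ref{prop:336}, which is proved \emph{from} this lemma: consistency of the composite with the two marginals (Lemma~\ref{lemma:twist_commutation1} and Eq.~\eqref{eq:lemma7_temp6}) gives, via SSA, only the inequality $S(\text{composite})\leq S(\text{max-merge})$, and the present lemma supplies the matching value that forces equality. If your proof were substituted, Proposition~\ref{prop:336} would become circular, since consistency alone does not push the composite's entropy up to the SSA bound. What is missing is precisely the hard part of the paper's argument: Lemmas~\ref{lemma:twist_commutation1} and~\ref{lemma:twist_commutation2} pin down two reduced density matrices of the composite so that condition~7 of Lemma~\ref{lemma:fundamental} can be applied to the channel consisting of the two right-merges, yielding $S(\text{composite}) = \entropy{\centeredTikZ{\squaresempty{3}{2}{0}{0};\squaresempty{2}{1}{0}{2};}} + \entropy{\centeredTikZ{\squaresempty{2}{2}{1}{0};\squaresempty{2}{2}{0}{1};}} - \entropy{\centeredTikZ{\squaresempty{1}{2}{2}{0};\squaresempty{1}{2}{1}{1};}}$, which is then simplified to the stated formula. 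Your computation is, in effect, a correct evaluation of the target number, but it never touches the object whose entropy is being claimed.
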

\begin{proof}
Let us recall the definition.
\begin{equation}
    \centeredTikZ{\squaresdotted{4}{3}{0}{0}; \squaresempty{4}{2}{0}{0}; \squaresempty{3}{1}{0}{2};}:=
    \left(
\centeredTikZ{\squaresdotted{4}{3}{0}{0};\squaresempty{2}{3}{0}{0};\squaresempty{1}{2}{2}{0};}
\rightmerge 
\centeredTikZ{\squaresdotted{4}{3}{0}{0};\squaresempty{2}{2}{2}{0};}\right)
\rightmerge
\centeredTikZ{\squaresdotted{4}{3}{0}{0};\squaresempty{2}{2}{1}{1};}.\label{eq:216}
\end{equation}

Viewing the two right-merges in Eq.~\eqref{eq:216} together as a quantum channel, we can use Lemma~\ref{lemma:twist_commutation1} and Lemma~\ref{lemma:twist_commutation2} can be applied to the seventh condition of Lemma~\ref{lemma:fundamental}. To be more concrete, these two lemmas imply that\footnote{We do not actually use this equation. The purpose of this equation is just to specify the involved subsystems.}
\begin{equation}
    \centeredTikZ{\squaresdotted{4}{3}{0}{0}; \squaresempty{4}{2}{0}{0}; \squaresempty{3}{1}{0}{2};}
    =
    \centeredTikZ{\squaresdotted{4}{3}{0}{0}; \squaresempty{4}{2}{0}{0}; \squaresempty{3}{1}{0}{2};\squaresdotted{1}{2}{3}{0};\squaresdotted{1}{1}{2}{2};}
    \maxmerge 
    \centeredTikZ{\squaresdotted{4}{3}{0}{0}; \squaresempty{4}{2}{0}{0}; \squaresempty{3}{1}{0}{2};\squaresdotted{1}{3}{0}{0};\squaresdotted{1}{1}{1}{0};}
    =
    \centeredTikZ{\squaresdotted{4}{3}{0}{0}; \squaresempty{4}{2}{0}{0}; \squaresempty{3}{1}{0}{2};\squaresdotted{1}{2}{3}{0};\squaresdotted{1}{1}{2}{2};}
    \rightmerge 
    \centeredTikZ{\squaresdotted{4}{3}{0}{0}; \squaresempty{4}{2}{0}{0}; \squaresempty{3}{1}{0}{2};\squaresdotted{1}{3}{0}{0};\squaresdotted{1}{1}{1}{0};}.
\end{equation}
Using the first condition of Lemma~\ref{lemma:fundamental},

\begin{equation}
\begin{aligned}
    \entropy{\centeredTikZ{\squaresempty{4}{3}{0}{0};\squaresdotted{1}{1}{3}{2};}} &= 
    \entropy{\centeredTikZ{\squaresempty{3}{2}{0}{0};\squaresempty{2}{1}{0}{2};}} + 
    \entropy{\centeredTikZ{\squaresempty{2}{2}{1}{0};\squaresempty{2}{2}{0}{1};}} - \entropy{\centeredTikZ{\squaresempty{1}{2}{2}{0};\squaresempty{1}{2}{1}{1};}} \\
    &= \entropy{\centeredTikZ{\squaresempty{3}{3}{0}{0};}} +
    \entropy{\centeredTikZ{\squaresempty{2}{1}{0}{0};\squaresempty{1}{1}{0}{1}}} - 
    \entropy{\centeredTikZ{\squaresempty{2}{2}{0}{0};}}
    + 
    \entropy{\centeredTikZ{\squaresempty{2}{2}{1}{0};\squaresempty{2}{2}{0}{1};}} - \entropy{\centeredTikZ{\squaresempty{1}{2}{2}{0};\squaresempty{1}{2}{1}{1};}}
\end{aligned}
\label{eq:lemma7_temp5}
\end{equation}

The last two terms in the second line of Eq.~\eqref{eq:lemma7_temp5} can be broken down further. First, applying the monotonicity of conditional mutual information (Eq.~\eqref{eq:ssa_monotinicity_intro}) to the \hyperref[constraints:cell2]{Type II-d) cluster constraint},\footnote{Specifically, we are removing a $1\times 1$ cluster on the top right corner.} 
\begin{equation}
\begin{aligned}
\entropy{\centeredTikZ{\squaresempty{2}{2}{1}{0};\squaresempty{2}{2}{0}{1};}}
&= \entropy{\centeredTikZ{\squaresempty{3}{2}{0}{0};\squaresempty{2}{2}{0}{1};}} - 
\entropy{\centeredTikZ{\squaresempty{2}{2}{0}{0};}}
 + \entropy{\centeredTikZ{\squaresempty{2}{1}{0}{1};\squaresempty{1}{1}{1}{0};}} \\
&= 2\entropy{\centeredTikZ{\squaresempty{2}{2}{0}{0};}}- \entropy{\centeredTikZ{\squaresempty{1}{1}{0}{0};}}.
\end{aligned}
\end{equation}
Second, applying the monotonicity of conditional mutual information (Eq.~\eqref{eq:ssa_monotinicity_intro}) to the \hyperref[constraints:cell2]{Type II-a) cluster constraint},\footnote{Specifically, we are removing one $1\times 1$ cluster from the top row and another $1\times 1$ cluster from the bottom row of a $2\times 3$ cluster.}
\begin{equation}
\begin{aligned}
\entropy{\centeredTikZ{\squaresempty{1}{2}{2}{0};\squaresempty{1}{2}{1}{1};}} 
&= \entropy{\centeredTikZ{\squaresempty{2}{1}{0}{0};\squaresempty{1}{1}{0}{1};}} +
\entropy{\centeredTikZ{\squaresempty{2}{1}{0}{1};\squaresempty{1}{1}{1}{0};}}
-\entropy{\centeredTikZ{\squaresempty{2}{1}{0}{0};}}.
\end{aligned}
\end{equation}

Plugging in these results, we get
\begin{equation}
    \entropy{\centeredTikZ{\squaresempty{4}{3}{0}{0};\squaresdotted{1}{1}{3}{2};}}
    =\entropy{\centeredTikZ{\squaresempty{3}{3}{0}{0};}} + \entropy{\centeredTikZ{\squaresempty{2}{2}{0}{0};}}
    -\entropy{\centeredTikZ{\squaresempty{1}{2}{0}{0};}}
\end{equation}
\end{proof}

\begin{proposition}
\begin{equation}
\centeredTikZ{\squaresempty{4}{3}{0}{0};\squaresdotted{1}{1}{3}{2};}
=
\centeredTikZ{\squaresdotted{4}{3}{0}{0};\squaresempty{3}{3}{0}{0};}
\rightmerge
\centeredTikZ{\squaresdotted{4}{3}{0}{0};\squaresempty{2}{2}{2}{0};}.
\end{equation}
\label{prop:336}
\end{proposition}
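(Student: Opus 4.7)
The strategy is to identify $\Omega := \centeredTikZmini{\squaresempty{4}{3}{0}{0};\squaresdotted{1}{1}{3}{2};}$ with $\lambda \maxmerge \mu$, where $\lambda := \centeredTikZmini{\squaresempty{3}{3}{0}{0};}$ and $\mu := \centeredTikZmini{\squaresdotted{4}{3}{0}{0};\squaresempty{2}{2}{2}{0};}$, and then invoke condition~3 of the fundamental lemma (Lemma~\ref{lemma:fundamental}) to rewrite the max-merge as $\lambda \rightmerge \mu$. The two marginals agree on their $1\times 2$ overlap by $\mathcal{C}_L$. Three items complete this identification: (a) $\Omega \consistent \lambda$, which is Lemma~\ref{lemma:twist_commutation2}; (b) the entropy equality $S(\Omega) = S(\lambda) + S(\mu) - S(\centeredTikZmini{\squaresempty{1}{2}{0}{0};})$, which is Lemma~\ref{lemma:twist_commutation_entropy}; and (c) $\Omega \consistent \mu$. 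Once (a)--(c) are in place, $\Omega$ is a state consistent with both marginals that saturates the SSA upper bound on entropy; condition~1 of the fundamental lemma then delivers $\lambda \maxmerge \mu$ with precisely that entropy, and uniqueness of the maximum-entropy state~\cite{Kim2014} forces $\Omega = \lambda \maxmerge \mu = \lambda \rightmerge \mu$.

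To establish (c), I would exploit the definition $\Omega = (\rho_a \rightmerge \mu) \rightmerge \rho_c$ with $\rho_a = \centeredTikZmini{\squaresdotted{4}{3}{0}{0};\squaresempty{2}{3}{0}{0};\squaresempty{1}{2}{2}{0};}$ and $\rho_c = \centeredTikZmini{\squaresdotted{4}{3}{0}{0};\squaresempty{2}{2}{1}{1};}$, and compose two partial traces. The inner step is a direct Petz-map computation: setting $B := \text{Supp}(\rho_a)\cap \text{Supp}(\mu)$ and $C := \text{Supp}(\rho_a)\setminus B$, the identity $\rho_{a,B} = \mu_B$ (from $\mathcal{C}_L$) yields
\begin{equation*}
\Tr_C\!\bigl[\Phi_\mu(\rho_a)\bigr] \;=\; \mu^{1/2}\mu_B^{-1/2}\rho_{a,B}\mu_B^{-1/2}\mu^{1/2} \;=\; \mu^{1/2} I_B \mu^{1/2} \;=\; \mu,
\end{equation*}
so $\rho_a \rightmerge \mu \consistent \mu$. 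The outer step is $\Omega \consistent \rho_a\rightmerge \mu$, which by the equivalence of conditions~1 and~4 of Lemma~\ref{lemma:fundamental} reduces to the entropy identity $S(\Omega) = S(\rho_a\rightmerge \mu) + S(\rho_c) - S(\tau)$, with $\tau$ the state on the L-shaped three-site overlap of $\rho_a\rightmerge \mu$ and $\rho_c$. Composing the two partial traces then gives $\Tr_{\text{Supp}(\Omega)\setminus \text{Supp}(\mu)}\Omega = \mu$, which is (c).

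The entropy identity in the outer step is the delicate piece. I would express $S(\rho_a\rightmerge \mu)$ as $S(\rho_a) + S(\mu) - S(\centeredTikZmini{\squaresempty{1}{2}{0}{0};})$ via an analogue of Lemma~\ref{lemma:prop1to2_ent1} (whose input is a decomposition of $S(\rho_a)$ into fundamental-marginal entropies, obtainable by reflecting the Type~I-c cluster constraint into its ``tall L'' counterpart), and would evaluate $S(\tau)$ using the monotonicity of conditional mutual information applied to the Type~II cluster constraints, following the template of Appendix~\ref{appendix:2to1_ent}. Substituting these expressions into Lemma~\ref{lemma:twist_commutation_entropy} should collapse everything to an algebraic identity among $S(\centeredTikZmini{\squaresempty{2}{2}{0}{0};})$, $S(\centeredTikZmini{\squaresempty{2}{1}{0}{0};})$, $S(\centeredTikZmini{\squaresempty{1}{2}{0}{0};})$, and $S(\centeredTikZmini{\squaresempty{1}{1}{0}{0};})$.

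The main obstacle is precisely this outer entropy identity: the overlap $\tau$ is L-shaped rather than rectangular, so its entropy cannot be read off directly from the constraints in Table~\ref{table:constraints} and requires the same sort of SSA-saturation juggling used throughout Appendix~\ref{appendix:2to1}. Once that identity is in hand, everything else in the proof is a mechanical application of the fundamental lemma.
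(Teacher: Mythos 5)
Your overall architecture coincides with the paper's: show that $\Omega$ is consistent with both the $3\times 3$ marginal and the $2\times 2$ marginal anchored at $(2,0)$, verify the entropy identity, and then pass from condition~1 of Lemma~\ref{lemma:fundamental} to condition~3. Your items (a) and (b) are handled exactly as in the paper, via Lemma~\ref{lemma:twist_commutation2} (more precisely Eq.~\eqref{eq:lemma7_temp6}) and Lemma~\ref{lemma:twist_commutation_entropy} respectively.

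The gap is item (c), the consistency $\Omega \consistent \mu$, which you do not actually establish. Your inner step ($\rho_a \rightmerge \mu \consistent \mu$) is a correct Petz-map computation, but the outer step hinges on an entropy identity for the three-cluster $\Gamma$-shaped overlap of $\rho_a\rightmerge\mu$ and $\rho_c$, which you yourself flag as ``the main obstacle'' and leave as a sketch; moreover, invoking condition~1 of Lemma~\ref{lemma:fundamental} for the pair $(\rho_a\rightmerge\mu,\,\rho_c)$ also requires the local consistency of the composite object $\rho_a\rightmerge\mu$ with $\rho_c$ on that $\Gamma$-shaped region, which is not a direct consequence of $\mathcal{C}_L$ and would need its own argument. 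The missing observation is that (c) is already available: Lemma~\ref{lemma:twist_commutation1} --- which you never invoke --- states precisely that tracing $\Omega$ over the leftmost column and the cluster at $(1,0)$ yields the fundamental marginal on the remaining support, and one further partial trace together with $\mathcal{C}_L$ then gives $\Omega \consistent \mu$ with no additional entropy computation. This is how the paper's proof proceeds; your planned derivation of (c), if completed, would amount to re-proving a weaker form of that lemma by a considerably longer route.
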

\begin{proof}
From Lemma~\ref{lemma:twist_commutation1}, we see that
\begin{equation}
\centeredTikZ{\squaresdotted{4}{3}{0}{0};
\squaresfilled{2}{3}{0}{0};
\squaresempty{2}{2}{2}{0};
\squaresfilled{1}{1}{2}{2};
}
=
\centeredTikZ{\squaresdotted{4}{3}{0}{0};
\squaresempty{2}{2}{2}{0};
}.
\end{equation}
Moreover, in Eq.~\eqref{eq:lemma7_temp6}, we have proved that
\begin{equation}
    \centeredTikZ{\squaresdotted{4}{3}{0}{0}; \squaresempty{4}{2}{0}{0}; \squaresempty{3}{1}{0}{2};\squaresfilled{1}{2}{3}{0};} =
\centeredTikZ{\squaresdotted{4}{3}{0}{0};\squaresempty{3}{3}{0}{0};}.
\end{equation}
These two identities, together with Lemma~\ref{lemma:twist_commutation_entropy}, can be applied to the first condition of Lemma~\ref{lemma:fundamental}. The main claim follows from the equivalence of the first and the third condition of Lemma~\ref{lemma:fundamental}. 
\end{proof}

After applying the \hyperref[constraints:cell1]{Type I-d) cluster constraint}, we obtain the following ``commutation relation.''
\begin{proposition}
\begin{equation}
\begin{aligned}
\left(
\centeredTikZ{\squaresdotted{4}{3}{0}{0};\squaresempty{2}{3}{0}{0};\squaresempty{1}{2}{2}{0};}
\rightmerge 
\centeredTikZ{\squaresdotted{4}{3}{0}{0};\squaresempty{2}{2}{2}{0};}\right)
\rightmerge
\centeredTikZ{\squaresdotted{4}{3}{0}{0};\squaresempty{2}{2}{1}{1};}\\
=
\left(
\centeredTikZ{\squaresdotted{4}{3}{0}{0};\squaresempty{2}{3}{0}{0};\squaresempty{1}{2}{2}{0};}
\rightmerge 
\centeredTikZ{\squaresdotted{4}{3}{0}{0};\squaresempty{2}{2}{1}{1};}
\right)
\rightmerge
\centeredTikZ{\squaresdotted{4}{3}{0}{0};\squaresempty{2}{2}{2}{0};}.
\end{aligned}
\end{equation}
\label{prop:twist_key1}
\end{proposition}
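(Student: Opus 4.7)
The plan is to reduce both sides of the claimed identity to the same canonical expression, namely $\centeredTikZ{\squaresdotted{4}{3}{0}{0};\squaresempty{3}{3}{0}{0};} \rightmerge \centeredTikZ{\squaresdotted{4}{3}{0}{0};\squaresempty{2}{2}{2}{0};}$, using results that have already appeared in this appendix. The main observation to exploit is that the L-shaped marginal $\centeredTikZ{\squaresempty{2}{3}{0}{0};\squaresempty{1}{2}{2}{0};}$ appearing in the proposition has exactly the same set of underlying cluster sites as the L-shape $\centeredTikZ{\squaresempty{3}{2}{0}{0};\squaresempty{2}{1}{0}{2};}$ featured in the \hyperref[constraints:cell1]{Type I-d) cluster constraint}, so translational invariance $\mathcal{C}_L$ identifies these two expressions as the same density matrix in the merging algebra.

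For the left-hand side, by the definition stated just before Lemma~\ref{lemma:twist_commutation1}, one has
\begin{equation*}
\left(\centeredTikZ{\squaresdotted{4}{3}{0}{0};\squaresempty{2}{3}{0}{0};\squaresempty{1}{2}{2}{0};}\rightmerge \centeredTikZ{\squaresdotted{4}{3}{0}{0};\squaresempty{2}{2}{2}{0};}\right)\rightmerge\centeredTikZ{\squaresdotted{4}{3}{0}{0};\squaresempty{2}{2}{1}{1};} = \centeredTikZ{\squaresdotted{4}{3}{0}{0}; \squaresempty{4}{2}{0}{0}; \squaresempty{3}{1}{0}{2};},
\end{equation*}
and Proposition~\ref{prop:336} rewrites this transfer object as $\centeredTikZ{\squaresdotted{4}{3}{0}{0};\squaresempty{3}{3}{0}{0};} \rightmerge \centeredTikZ{\squaresdotted{4}{3}{0}{0};\squaresempty{2}{2}{2}{0};}$. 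This handles the left-hand side with no further work.

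For the right-hand side, the inner right-merge $\centeredTikZ{\squaresdotted{4}{3}{0}{0};\squaresempty{2}{3}{0}{0};\squaresempty{1}{2}{2}{0};}\rightmerge\centeredTikZ{\squaresdotted{4}{3}{0}{0};\squaresempty{2}{2}{1}{1};}$ is precisely the merge already computed in Eq.~\eqref{eq:lemma7_temp6} during the proof of Lemma~\ref{lemma:twist_commutation2}: by applying Lemma~\ref{lemma:fundamental} to the \hyperref[constraints:cell1]{Type I-d) cluster constraint}, this equals $\centeredTikZ{\squaresdotted{4}{3}{0}{0};\squaresempty{3}{3}{0}{0};}$. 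Therefore the right-hand side of the proposition also equals $\centeredTikZ{\squaresdotted{4}{3}{0}{0};\squaresempty{3}{3}{0}{0};} \rightmerge \centeredTikZ{\squaresdotted{4}{3}{0}{0};\squaresempty{2}{2}{2}{0};}$, matching the left-hand side.

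I do not anticipate any genuine obstacle here: both directions reduce mechanically to the same expression, and every ingredient (the defining identity of the transfer object, Proposition~\ref{prop:336}, and Eq.~\eqref{eq:lemma7_temp6}) has been secured earlier. The only point requiring a small remark is the visual identification of $\centeredTikZ{\squaresempty{2}{3}{0}{0};\squaresempty{1}{2}{2}{0};}$ with $\centeredTikZ{\squaresempty{3}{2}{0}{0};\squaresempty{2}{1}{0}{2};}$ — these two drawings cover the same eight clusters (the $3\times 3$ minus its top-right corner), so $\mathcal{C}_L$ forces them to denote the same reduced density matrix, allowing Eq.~\eqref{eq:lemma7_temp6} to be applied verbatim.
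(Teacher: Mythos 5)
Your proof is correct and follows the same route as the paper: the left-hand side is by definition the transfer object, which Proposition~\ref{prop:336} rewrites as the $3\times 3$ marginal right-merged with the $2\times 2$ cluster at the top-right, while the inner merge on the right-hand side collapses to that same $3\times 3$ marginal via the \hyperref[constraints:cell1]{Type I-d) cluster constraint} (the content of the second line of Eq.~\eqref{eq:lemma7_temp6}). Your side remark about the two L-shaped drawings is fine, though no appeal to $\mathcal{C}_L$ is needed — they cover the identical set of sites at the identical location, so they denote the same marginal outright.
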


As before, a $\pi$-rotated version of the entire proof can be reproduced in an analogous way.
\begin{proposition}
\begin{equation}
\centeredTikZ{\squaresempty{4}{3}{0}{0};\squaresdotted{1}{1}{0}{0};}
=
\centeredTikZ{\squaresdotted{4}{3}{0}{0};\squaresempty{3}{3}{1}{0};}
\rightmerge
\centeredTikZ{\squaresdotted{4}{3}{0}{0};\squaresempty{2}{2}{0}{1};}.
\end{equation}
\label{prop:336_1}
\end{proposition}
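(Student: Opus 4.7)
The plan is to mirror the proof of Proposition~\ref{prop:336} step-by-step, performing a $\pi$-rotation on every subsystem, composite marginal, and constraint. Under this rotation the $2\times 2$ cluster at the bottom-left corner moves to the top-right corner of the $4\times 3$ canvas, so the relevant ``hole'' will be at $(0,0)$ instead of $(3,2)$. Accordingly, the Type I cluster constraints used in the original argument become Type II cluster constraints in the rotated setting, and vice versa, while the snake constraints are rotation-symmetric and can be reused directly.

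First I would introduce the rotated auxiliary composite $\centeredTikZ{\squaresdotted{4}{3}{0}{0};\squaresempty{4}{2}{0}{1};\squaresempty{3}{1}{1}{0};}$ defined analogously to $\centeredTikZ{\squaresdotted{4}{3}{0}{0};\squaresempty{4}{2}{0}{0};\squaresempty{3}{1}{0}{2};}$ and prove the two absorption identities that are the rotated analogues of Lemma~\ref{lemma:twist_commutation1} and Lemma~\ref{lemma:twist_commutation2}. Specifically, the rotated version of Lemma~\ref{lemma:twist_commutation1} will absorb a $1\times 3 + 1\times 1$ region at the opposite corner by applying the monotonicity of conditional mutual information to the \hyperref[constraints:cell2]{Type II-b) cluster constraint} in place of the Type I-b) constraint, followed by Petz's theorem to promote the equalities of conditional mutual information into merging identities. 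The rotated version of Lemma~\ref{lemma:twist_commutation2} will then use the \hyperref[constraints:cell2]{Type II-c) cluster constraint} together with Lemma~\ref{lemma:fundamental}, the \hyperref[constraints:cell2]{Type II-a) cluster constraint}, the merging lemma (Lemma~\ref{lemma:merging_lemma}), and finally the \hyperref[constraints:cell2]{Type II-d) cluster constraint} to produce a full $3\times 3$ cluster anchored at $(1,0)$.

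Next, I would prove the rotated analogue of Lemma~\ref{lemma:twist_commutation_entropy}, namely
\begin{equation*}
\entropy{\centeredTikZ{\squaresempty{4}{3}{0}{0};\squaresdotted{1}{1}{0}{0};}}
= \entropy{\centeredTikZ{\squaresempty{3}{3}{0}{0};}}
+ \entropy{\centeredTikZ{\squaresempty{2}{2}{0}{0};}}
- \entropy{\centeredTikZ{\squaresempty{1}{2}{0}{0};}}.
\end{equation*}
The same telescoping decomposition as in Lemma~\ref{lemma:twist_commutation_entropy} applies after rotation: use the first condition of Lemma~\ref{lemma:fundamental} on the two absorption identities to express the entropy of the rotated composite as a sum of three entropies, then reduce the two residual ``two-piece'' entropies using the monotonicity of conditional mutual information together with the \hyperref[constraints:cell1]{Type I-d) cluster constraint} and the \hyperref[constraints:cell1]{Type I-a) cluster constraint} (the rotated counterparts of the Type II constraints invoked in the original lemma).

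Finally, combining the two rotated absorption identities with the rotated entropy identity and plugging them into the first condition of Lemma~\ref{lemma:fundamental} gives that $\centeredTikZmini{\squaresempty{4}{3}{0}{0};\squaresdotted{1}{1}{0}{0};}$ is the maximum-entropy extension of $\centeredTikZmini{\squaresdotted{4}{3}{0}{0};\squaresempty{3}{3}{1}{0};}$ and $\centeredTikZmini{\squaresdotted{4}{3}{0}{0};\squaresempty{2}{2}{0}{1};}$; the equivalence of conditions 1 and 3 of Lemma~\ref{lemma:fundamental} then yields the desired right-merge expression. I expect the main obstacle to be purely bookkeeping: keeping the coordinates of each $1\times 1$, $2\times 2$, and $3\times 3$ tile consistent under the $\pi$-rotation, and correctly matching each Type I constraint in the original proof with its Type II counterpart (and vice versa) so that the sign and direction of every SSA monotonicity step survives intact.
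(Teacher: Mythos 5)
Your proposal matches the paper's approach exactly: the paper states Proposition~\ref{prop:336_1} without proof, remarking only that it follows by reproducing the proof of Proposition~\ref{prop:336} (via the rotated analogues of Lemmas~\ref{lemma:twist_commutation1}, \ref{lemma:twist_commutation2}, and \ref{lemma:twist_commutation_entropy}) with every subsystem and constraint rotated by $\pi$, which is precisely your plan. One bookkeeping caveat you should watch: the a)-type constraints and the centrally symmetric shapes (e.g., the $3\times 2$ rectangle and the two diagonally overlapping $2\times 2$ clusters) are invariant under the $\pi$-rotation, so not every Type~I constraint swaps to its Type~II counterpart — but this does not affect the validity of the argument.
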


\begin{proposition}
\begin{equation}
\begin{aligned}
\left(
\centeredTikZ{\squaresdotted{4}{3}{0}{0};\squaresempty{2}{3}{2}{0};\squaresempty{1}{2}{1}{1};}
\rightmerge 
\centeredTikZ{\squaresdotted{4}{3}{0}{0};\squaresempty{2}{2}{0}{1};}\right)
\rightmerge
\centeredTikZ{\squaresdotted{4}{3}{0}{0};\squaresempty{2}{2}{1}{0};} \\
=
\left(
\centeredTikZ{\squaresdotted{4}{3}{0}{0};\squaresempty{2}{3}{2}{0};\squaresempty{1}{2}{1}{1};}
\rightmerge 
\centeredTikZ{\squaresdotted{4}{3}{0}{0};\squaresempty{2}{2}{1}{0};}
\right)
\rightmerge
\centeredTikZ{\squaresdotted{4}{3}{0}{0};\squaresempty{2}{2}{0}{1};}
\end{aligned}
\end{equation}
\label{prop:twist_key2}
\end{proposition}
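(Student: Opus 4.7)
The plan is to mirror the proof of Proposition~\ref{prop:twist_key1} verbatim, substituting every shape by its $\pi$-rotation. Conceptually the argument has only two ingredients: first, the canonical form for the ``transfer'' composite supplied by Proposition~\ref{prop:336_1}, which rewrites $\centeredTikZ{\squaresempty{4}{3}{0}{0};\squaresdotted{1}{1}{0}{0};}$ as a right-merge of a full $3\times 3$ cluster and a $2\times 2$ cluster; and second, a factorization of that $3\times 3$ cluster into an L-shape and an overlapping $2\times 2$ cluster supplied by a cluster constraint. I will treat Proposition~\ref{prop:336_1} as a black box (the paper already asserts it, so the rotated versions of Lemma~\ref{lemma:twist_commutation1}, Lemma~\ref{lemma:twist_commutation2} and Lemma~\ref{lemma:twist_commutation_entropy} have been discharged), and deduce Proposition~\ref{prop:twist_key2} from it in one line.

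More concretely, I start from the identity
\[
\centeredTikZ{\squaresempty{4}{3}{0}{0};\squaresdotted{1}{1}{0}{0};}
= \centeredTikZ{\squaresdotted{4}{3}{0}{0};\squaresempty{3}{3}{1}{0};} \rightmerge \centeredTikZ{\squaresdotted{4}{3}{0}{0};\squaresempty{2}{2}{0}{1};}
\]
provided by Proposition~\ref{prop:336_1}. Next I apply the \hyperref[constraints:cell2]{Type II-d) cluster constraint} to the $3\times 3$ cluster $\centeredTikZ{\squaresdotted{4}{3}{0}{0};\squaresempty{3}{3}{1}{0};}$, which sits in the rightmost three columns of the $4\times 3$ canvas. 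Upon identifying the L-shape of that constraint (appropriately translated) with $\centeredTikZ{\squaresdotted{4}{3}{0}{0};\squaresempty{2}{3}{2}{0};\squaresempty{1}{2}{1}{1};}$ and the $2\times 2$ piece with $\centeredTikZ{\squaresdotted{4}{3}{0}{0};\squaresempty{2}{2}{1}{0};}$, the constraint asserts that a certain conditional mutual information vanishes. By the equivalence of the first and third conditions of Lemma~\ref{lemma:fundamental}, this upgrades to the merging identity
\[
\centeredTikZ{\squaresdotted{4}{3}{0}{0};\squaresempty{3}{3}{1}{0};}
= \centeredTikZ{\squaresdotted{4}{3}{0}{0};\squaresempty{2}{3}{2}{0};\squaresempty{1}{2}{1}{1};} \rightmerge \centeredTikZ{\squaresdotted{4}{3}{0}{0};\squaresempty{2}{2}{1}{0};}.
\]
Substituting this into the right-hand side of Proposition~\ref{prop:336_1} yields the right-hand side of Proposition~\ref{prop:twist_key2}, whereas the left-hand side of Proposition~\ref{prop:twist_key2} is precisely the definition of $\centeredTikZ{\squaresempty{4}{3}{0}{0};\squaresdotted{1}{1}{0}{0};}$ (Definition~\ref{def:twist_transfer}'s analogue in Appendix~\ref{appendix:twist}). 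Equating the two expressions for $\centeredTikZ{\squaresempty{4}{3}{0}{0};\squaresdotted{1}{1}{0}{0};}$ completes the proof.

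I expect no genuine obstacle beyond the geometric bookkeeping. The only subtle point is to verify that the cells of $\centeredTikZ{\squaresempty{2}{3}{2}{0};\squaresempty{1}{2}{1}{1};}$, the $2\times 2$ at $\centeredTikZ{\squaresempty{2}{2}{1}{0};}$, and their overlap indeed realize the $\pi$-rotated L-shape, the $2\times 2$ cluster and the three-cell ``snake'' appearing in the Type II-d) constraint, once placed inside the $3\times 3$ cluster at columns $1$--$3$; this is a direct cell-count check analogous to the one implicit in the proof of Proposition~\ref{prop:twist_key1}. The deeper content---the analogue of Proposition~\ref{prop:336}---has already been established as Proposition~\ref{prop:336_1}, so once that is invoked the current proposition reduces to a single application of Lemma~\ref{lemma:fundamental} plus one cluster constraint, exactly parallel to the one-line derivation of Proposition~\ref{prop:twist_key1} from Proposition~\ref{prop:336}.
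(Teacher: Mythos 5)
Your proposal is correct and follows the paper's own route: the paper proves Proposition~\ref{prop:twist_key2} exactly as the $\pi$-rotated copy of the derivation of Proposition~\ref{prop:twist_key1} from Proposition~\ref{prop:336}, namely by taking Proposition~\ref{prop:336_1} and applying the \hyperref[constraints:cell2]{Type II-d) cluster constraint} through Lemma~\ref{lemma:fundamental} to factor the $3\times 3$ cluster into the L-shape right-merged with its bottom-left $2\times 2$ cluster. Your geometric identifications (L-shape $=$ the $3\times 3$ at columns $2$--$4$ minus its bottom-left cell, overlap $=$ the rotated three-cell snake) check out, so no further comment is needed.
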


\subsection{Completing the proof}
\label{appendix:twist_complete}
\begin{lemma}
\begin{equation}
\begin{aligned}
    \left(\centeredTikZ{\squaresdotted{5}{3}{0}{0};\squaresempty{2}{3}{1}{0};\squaresempty{1}{2}{0}{1};\squaresempty{1}{2}{3}{0};}
    \rightmerge 
    \centeredTikZ{\squaresdotted{5}{3}{0}{0};\squaresempty{2}{2}{3}{0};}\right)
    \rightmerge
    \centeredTikZ{\squaresdotted{5}{3}{0}{0};\squaresempty{2}{2}{2}{1};} \\= 
     \left(\centeredTikZ{\squaresdotted{5}{3}{-1}{0};\squaresempty{2}{3}{1}{0};\squaresempty{1}{2}{0}{1};\squaresempty{1}{2}{3}{0};}
    \rightmerge 
    \centeredTikZ{\squaresdotted{5}{3}{0}{0};\squaresempty{2}{2}{0}{1};}\right)
    \rightmerge
    \centeredTikZ{\squaresdotted{5}{3}{0}{0};\squaresempty{2}{2}{1}{0};}.\label{eq:338}
\end{aligned}
\end{equation}
\label{lemma:twist_transfer}
\end{lemma}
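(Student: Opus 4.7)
The plan is to reduce both sides of the identity to a common canonical form by invoking Propositions~\ref{prop:twist_key1} and~\ref{prop:twist_key2}, together with the ordinary commutation lemma (Lemma~\ref{lemma:commutation}) for factors whose supports are disjoint. The indicator that this strategy should work is the paper's own remark that eq:twist\_key ``is not too difficult to show'' once the commutation relations of eq:255 are available; these are precisely Propositions~\ref{prop:twist_key1} and~\ref{prop:twist_key2}.

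I would first unpack the composite marginal $\centeredTikZ{\squaresdotted{5}{3}{0}{0};\squaresempty{2}{3}{1}{0};\squaresempty{1}{2}{0}{1};\squaresempty{1}{2}{3}{0};}$ appearing on both sides via Definition~\ref{def:twist_transfer}, converting it into a right-merge of its L-shape and a bridging $2\times 2$. After this expansion, each side of eq:twist\_key becomes a four-fold right-merge of elementary pieces: an L-shape, a bridging $2\times 2$, an outer $2\times 2$, and a top (on the LHS) or bottom (on the RHS) $2\times 2$.

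On the LHS, the L-shape $\centeredTikZ{\squaresdotted{5}{3}{0}{0};\squaresempty{2}{3}{1}{0};\squaresempty{1}{2}{0}{1};}$ has support disjoint from the far-right outer $2\times 2$ $\centeredTikZ{\squaresdotted{5}{3}{0}{0};\squaresempty{2}{2}{3}{0};}$, so Lemma~\ref{lemma:commutation} lets us reorder these factors freely. What remains is a three-fold right-merge of the three mutually-overlapping $2\times 2$ clusters (the bridging, outer, and top cells) attached to the L-shape; the order of the final two can be swapped by invoking Proposition~\ref{prop:twist_key1}, whose statement is exactly the commutation relation for this configuration. Performing the analogous manipulations on the RHS---using Proposition~\ref{prop:twist_key2} in place of~\ref{prop:twist_key1}---reduces both sides to the same sequence of right-merges applied to the shared L-shape, which establishes the identity.

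The main obstacle will be the support bookkeeping. The L-shape appearing in Propositions~\ref{prop:twist_key1}--\ref{prop:twist_key2} lives in a $4\times 3$ canvas and is not literally the L-shape arising in eq:twist\_key's $5\times 3$ canvas, so the relevant sub-expression must first be isolated (via a commutation of disjoint factors) before either proposition becomes directly applicable. This will likely require re-expressing intermediate Petz-map products using the fundamental lemma (Lemma~\ref{lemma:fundamental}) to identify consistent marginals, and then reinterpreting the outcome of the swapped merges as a new composite via Definition~\ref{def:twist_transfer}. A secondary difficulty is that the LHS is extended ``right-and-up'' while the RHS is extended ``left-and-down,'' so aligning the two canonical forms at the very end requires one last commutation, exploiting the disjointness between the far-left and far-right $2\times 2$ merges to interchange them.
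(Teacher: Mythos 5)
Your high-level strategy coincides with the paper's (reduce both sides to a common form via Propositions~\ref{prop:twist_key1}--\ref{prop:twist_key2} and Lemma~\ref{lemma:commutation}), but the step on which everything hinges is missing, and the justification you give for the preliminary reordering misapplies the commutation lemma. Lemma~\ref{lemma:commutation} permits $(\sigma\rightmerge\lambda)\rightmerge\tau=(\sigma\rightmerge\tau)\rightmerge\lambda$ only when the two \emph{merged factors} $\lambda$ and $\tau$ have disjoint supports; the disjointness of the \emph{base} (your L-shape) from the far-right $2\times2$ buys you nothing. In $((\text{L-shape}\rightmerge\text{bridging})\rightmerge\text{outer})\rightmerge\text{top}$, the bridging $2\times2$ and the outer $2\times2$ overlap in a full column, so no reordering among them follows from Lemma~\ref{lemma:commutation}. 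As a result, the base of the last two right-merges is the entire composite marginal of Definition~\ref{def:twist_transfer} --- which carries an extra $1\times2$ column hanging off its upper-left --- and this is \emph{not} the L-shape on which Proposition~\ref{prop:twist_key1} is stated (nor a translate of it), so that proposition is not "exactly the commutation relation for this configuration" and cannot be invoked as written.

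The missing ingredient is the re-factorization in Eq.~\eqref{eq:334}: before anything can be commuted, the composite must be re-derived in a \emph{different} order of merges, as (a $2\times3$ column with a right-hanging $1\times2$) right-merged by the upper-left $2\times2$. This is obtained by splitting the composite's defining L-shape into the $2\times3$ column right-merged by the upper-left $2\times2$ (the \hyperref[constraints:cell2]{Type II-c) cluster constraint} fed into Lemma~\ref{lemma:fundamental}), commuting the upper-left $2\times2$ past the bridging $2\times2$ (these two genuinely are disjoint), and recombining the $2\times3$ column with the bridging $2\times2$ via the \hyperref[constraints:cell1]{Type I-c) cluster constraint}. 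Only then is the leftover upper-left $2\times2$ disjoint from both the outer and the top $2\times2$, so it can be pushed to the outside by Lemma~\ref{lemma:commutation}, leaving precisely the base required by Proposition~\ref{prop:twist_key1}; the \hyperref[constraints:cell1]{Type I-d) cluster constraint} then collapses the result to a $3\times3$ marginal right-merged by two $2\times2$'s, and the $\pi$-rotated argument on the right-hand side lands on the same expression. Your "support bookkeeping" paragraph correctly senses that something of this sort is needed, but the mechanism you propose ("isolation via a commutation of disjoint factors") cannot supply it, because the factor obstructing the match (the bridging $2\times2$) is not disjoint from the factors you need to move it past; a genuine re-derivation of the composite through new conditional-independence identities is required.
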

\begin{proof}
Note that
\begin{equation}
\begin{aligned}
\centeredTikZ{\squaresdotted{5}{3}{0}{0};\squaresempty{2}{3}{1}{0};\squaresempty{1}{2}{0}{1};\squaresempty{1}{2}{3}{0};}
&=
\centeredTikZ{\squaresdotted{5}{3}{0}{0};\squaresempty{2}{3}{1}{0};\squaresempty{1}{2}{0}{1};}
\rightmerge
\centeredTikZ{\squaresdotted{5}{3}{0}{0};\squaresempty{2}{2}{2}{0};}\\
&= 
\left(
\centeredTikZ{\squaresdotted{5}{3}{0}{0};\squaresempty{2}{3}{1}{0};}
\rightmerge
\centeredTikZ{\squaresdotted{5}{3}{0}{0};\squaresempty{2}{2}{0}{1};}
\right)
\rightmerge
\centeredTikZ{\squaresdotted{5}{3}{0}{0};\squaresempty{2}{2}{2}{0};}\\
&= 
\left(
\centeredTikZ{\squaresdotted{5}{3}{0}{0};\squaresempty{2}{3}{1}{0};}
\rightmerge
\centeredTikZ{\squaresdotted{5}{3}{0}{0};\squaresempty{2}{2}{2}{0};}
\right)
\rightmerge
\centeredTikZ{\squaresdotted{5}{3}{0}{0};\squaresempty{2}{2}{0}{1};}\\
&=
\centeredTikZ{\squaresdotted{5}{3}{0}{0};\squaresempty{2}{3}{1}{0};\squaresempty{1}{2}{3}{0};}
\rightmerge
\centeredTikZ{\squaresdotted{5}{3}{0}{0};\squaresempty{2}{2}{0}{1};}.
\end{aligned}
\label{eq:334}
\end{equation}
The first line is the definition. In the second line, we applied the \hyperref[constraints:cell2]{Type II-c) cluster constraint} to Lemma~\ref{lemma:fundamental}. In the third line, we used the commutation lemma (Lemma~\ref{lemma:commutation}). In the last line, we applied the \hyperref[constraints:cell1]{Type I-c) cluster constraint} to Lemma~\ref{lemma:fundamental}.

Therefore,
\begin{equation}
\begin{aligned}
     &\left(\centeredTikZ{\squaresdotted{5}{3}{0}{0};\squaresempty{2}{3}{1}{0};\squaresempty{1}{2}{0}{1};\squaresempty{1}{2}{3}{0};}
    \rightmerge 
    \centeredTikZ{\squaresdotted{5}{3}{0}{0};\squaresempty{2}{2}{3}{0};}\right)
    \rightmerge
    \centeredTikZ{\squaresdotted{5}{3}{0}{0};\squaresempty{2}{2}{2}{1};}\\
    &=
    \left(
    \left( \centeredTikZ{\squaresdotted{5}{3}{0}{0};\squaresempty{2}{3}{1}{0};\squaresempty{1}{2}{3}{0};}
\rightmerge
\centeredTikZ{\squaresdotted{5}{3}{0}{0};\squaresempty{2}{2}{0}{1};}\right) \rightmerge 
    \centeredTikZ{\squaresdotted{5}{3}{0}{0};\squaresempty{2}{2}{3}{0};}\right)
    \rightmerge
    \centeredTikZ{\squaresdotted{5}{3}{0}{0};\squaresempty{2}{2}{2}{1};}\\
&=
 \left(
    \left( \centeredTikZ{\squaresdotted{5}{3}{0}{0};\squaresempty{2}{3}{1}{0};\squaresempty{1}{2}{3}{0};}
\rightmerge
    \centeredTikZ{\squaresdotted{5}{3}{0}{0};\squaresempty{2}{2}{3}{0};}\right)
    \rightmerge
    \centeredTikZ{\squaresdotted{5}{3}{0}{0};\squaresempty{2}{2}{2}{1};}\right) \rightmerge
    \centeredTikZ{\squaresdotted{5}{3}{0}{0};\squaresempty{2}{2}{0}{1};} \\
&=
\left(
\centeredTikZ{\squaresdotted{5}{3}{0}{0};\squaresempty{3}{3}{1}{0};}
\rightmerge
\centeredTikZ{\squaresdotted{5}{3}{0}{0};\squaresempty{2}{2}{3}{0};}
\right)\rightmerge
    \centeredTikZ{\squaresdotted{5}{3}{0}{0};\squaresempty{2}{2}{0}{1};}.
\end{aligned}
\label{eq:340}
\end{equation}
In the second line, we have plugged in the result of Eq.~\eqref{eq:334}. In the third line, we used the commutation lemma (Lemma~\ref{lemma:commutation}). In the last line, we have used Proposition~\ref{prop:twist_key1}.

One can apply a similar logic to the right hand side of Eq.~\eqref{eq:338} by rotating all the subsystems/identities by $\pi$. (For instance, Proposition~\ref{prop:twist_key1} would change to Proposition~\ref{prop:twist_key2}.) The end result is the same as the last line of Eq.~\eqref{eq:340}, completing the proof.
\end{proof}

Now, we are in a position to prove Proposition~\ref{prop:twist}. We restate the result for the reader's convenience.

\twist*
\begin{proof}
Note that
\begin{equation}
    \mathcal{E}_{y,\uparrow}\left(\adulthorizontalsnake{1}{y\shortminus 1}{N}{y} \right)
    = \mathfrak{S}_{N, \rightarrow} \rightmerge \twobytwounit{(N,y)},
    \label{eq:1106}
\end{equation}
where
\begin{equation}
    \mathfrak{S}_{i+1, \rightarrow}
    =
    \left(
    \mathfrak{S}_{i, \rightarrow}
    \rightmerge
    \twobytwounit{(i\shortplus 1,y\shortminus 1)}
    \right)
    \rightmerge
    \twobytwounit{(i, y)}
\end{equation}
for $i$ such that $3<i<N$ and
\begin{equation}
    \mathfrak{S}_{3, \rightarrow}
    :=
    \mym{
    \centeredTikZ{
    \squaresempty{2}{3}{0}{0};\squaresempty{1}{2}{2}{0};
    }_{(3,y\shortminus 1)}
    }
\end{equation}

As a warmup, note that

\begin{equation}
\begin{aligned}
    \mathfrak{S}_{4,\rightarrow}&= 
    \left(
    \mym{
    \centeredTikZ{
    \squaresempty{2}{3}{0}{0};\squaresempty{1}{2}{2}{0};
    }_{(3,y\shortminus 1)}
    }\rightmerge
    \twobytwounit{(4,y\shortminus 1)}
    \right)
    \rightmerge
    \twobytwounit{(3, y)} \\
    &=
    \mym{
    \centeredTikZ{
    \squaresempty{3}{3}{0}{0};}_{(3,y\shortminus 1)}
    }
    \rightmerge
    \twobytwounit{(4, y\shortminus 1)} \\
    &= 
    \left(
    \left(
    \mym{\centeredTikZ{\squaresempty{2}{3}{0}{0};}_{(3,y\shortminus 1)}} \rightmerge \twobytwounit{(2, y)}
    \right) \rightmerge \twobytwounit{(2,y\shortminus 1)}
    \right) \rightmerge \twobytwounit{(4,y\shortminus 1)} \\
    &= \left(\left(
    \mym{\centeredTikZ{\squaresempty{2}{3}{0}{0};}_{(3,y\shortminus 1)}} \rightmerge \twobytwounit{(2, y)}
    \right) \rightmerge \twobytwounit{(4,y\shortminus 1)} \right)
    \rightmerge \twobytwounit{(2,y\shortminus 1)} \\
    &= \left(\mym{\centeredTikZ{\squaresempty{2}{3}{0}{0};\squaresempty{1}{2}{-1}{1};}_{(3,y\shortminus 1)}} \rightmerge \twobytwounit{(4,y\shortminus 1)} \right) \rightmerge \twobytwounit{(2,y\shortminus 1)}.
\end{aligned}
\end{equation}

In the first line, we applied Proposition~\ref{prop:twist_key1} and then applied the \hyperref[constraints:cell1]{Type I-d) cluster constraint} constraint to the first condition of Lemma~\ref{lemma:fundamental}. In the third line, we applied the \hyperref[constraints:cell2]{Type II-c) and d) cluster constraints} to the first condition of Lemma~\ref{lemma:fundamental}. We then used the commutation lemma (Lemma~\ref{lemma:commutation}), followed by the \hyperref[constraints:cell2]{Type II-c) cluster constraint} applied to the first condition of Lemma~\ref{lemma:fundamental}.

From Definition~\ref{def:twist_transfer}, we conclude that
\begin{equation}
    \mathfrak{S}_{4,\rightarrow} = 
    \mym{\centeredTikZ{\squaresempty{2}{3}{1}{0};\squaresempty{1}{2}{0}{1};\squaresempty{1}{2}{3}{0};}_{(4,y\shortminus 1)}} \rightmerge \twobytwounit{(2,y\shortminus 1)}.\label{eq:944}
\end{equation}
Importantly, by the commutation lemma (Lemma~\ref{lemma:commutation}), all the right-merges appearing in the definition of $\mathfrak{S}_{i,\rightarrow}$ for $i\geq 4$ ``commute'' with the right merge by the $2\times 2$ cluster anchored at $(2,y\shortminus 1)$.

From Lemma~\ref{lemma:twist_transfer}, recall

\begin{equation}
    \left(\centeredTikZ{\squaresdotted{5}{3}{0}{0};\squaresempty{2}{3}{1}{0};\squaresempty{1}{2}{0}{1};\squaresempty{1}{2}{3}{0};}
    \rightmerge 
    \centeredTikZ{\squaresdotted{5}{3}{0}{0};\squaresempty{2}{2}{3}{0};}\right)
    \rightmerge
    \centeredTikZ{\squaresdotted{5}{3}{0}{0};\squaresempty{2}{2}{2}{1};} = 
     \left(\centeredTikZ{\squaresdotted{5}{3}{-1}{0};\squaresempty{2}{3}{1}{0};\squaresempty{1}{2}{0}{1};\squaresempty{1}{2}{3}{0};}
    \rightmerge 
    \centeredTikZ{\squaresdotted{5}{3}{0}{0};\squaresempty{2}{2}{0}{1};}\right)
    \rightmerge
    \centeredTikZ{\squaresdotted{5}{3}{0}{0};\squaresempty{2}{2}{1}{0};}.\label{eq:949}
\end{equation}

Applying Eq.~\eqref{eq:949} and the commutation lemma (Lemma~\ref{lemma:commutation}) repeatedly, we obtain
\begin{equation}
    \mathfrak{S}_{N,\rightarrow} =
    \left(\left(\widetilde{\mathfrak{S}} \rightmerge \twobytwounit{(N\shortminus 2, y\shortminus 1)}\right) \cdots \right) \rightmerge \twobytwounit{(2, y\shortminus 1)},
\end{equation}
where

\begin{equation}
    \widetilde{\mathfrak{S}} = \left(\left(\mym{\centeredTikZ{\squaresempty{2}{3}{1}{0};\squaresempty{1}{2}{0}{1};\squaresempty{1}{2}{3}{0};}_{(N,y\shortminus 1)}} \rightmerge \twobytwounit{(N\shortminus 3, y)}\right) \cdots \right)\rightmerge \twobytwounit{(2, y)}.
\end{equation}

Using the commutation lemma (Lemma~\ref{lemma:commutation}), 

\begin{equation}
    \mathfrak{S}_{N,\rightarrow} \rightmerge \twobytwounit{(N,y)}= \left(\left(\left(\widetilde{\mathfrak{S}}\rightmerge \twobytwounit{(N,y)}\right) \rightmerge \twobytwounit{(N\shortminus 2, y\shortminus 1)}\right) \cdots \right) \rightmerge \twobytwounit{(2, y\shortminus 1)}.\label{eq:1101_2}
\end{equation}

Moreover, again using the commutation lemma,

\begin{equation}
    \widetilde{\mathfrak{S}}\rightmerge \twobytwounit{(N,y)} = 
    \left(\left(\left(\mym{\centeredTikZ{\squaresempty{2}{3}{1}{0};\squaresempty{1}{2}{0}{1};\squaresempty{1}{2}{3}{0};}_{(N,y\shortminus 1)}}\rightmerge \twobytwounit{(N,y)}\right) \rightmerge \twobytwounit{(N\shortminus 3, y)}\right) \cdots \right)\rightmerge \twobytwounit{(2, y)}.\label{eq:1101_1}
\end{equation}

Moreover, note that
\begin{equation}
\begin{aligned}
\centeredTikZ{\squaresdotted{4}{3}{0}{0};\squaresempty{2}{3}{1}{0};\squaresempty{1}{2}{0}{1};\squaresempty{1}{2}{3}{0};}
&=
\centeredTikZ{\squaresdotted{4}{3}{0}{0};\squaresempty{2}{3}{1}{0};\squaresempty{1}{2}{0}{1};}
\rightmerge
\centeredTikZ{\squaresdotted{4}{3}{0}{0};\squaresempty{2}{2}{2}{0};}\\
&= 
\left(
\centeredTikZ{\squaresdotted{4}{3}{0}{0};\squaresempty{2}{3}{1}{0};}
\rightmerge
\centeredTikZ{\squaresdotted{4}{3}{0}{0};\squaresempty{2}{2}{0}{1};}
\right)
\rightmerge
\centeredTikZ{\squaresdotted{4}{3}{0}{0};\squaresempty{2}{2}{2}{0};}\\
&= 
\left(
\centeredTikZ{\squaresdotted{4}{3}{0}{0};\squaresempty{2}{3}{1}{0};}
\rightmerge
\centeredTikZ{\squaresdotted{4}{3}{0}{0};\squaresempty{2}{2}{2}{0};}
\right)
\rightmerge
\centeredTikZ{\squaresdotted{4}{3}{0}{0};\squaresempty{2}{2}{0}{1};}\\
&=
\centeredTikZ{\squaresdotted{4}{3}{0}{0};\squaresempty{2}{3}{1}{0};\squaresempty{1}{2}{3}{0};}
\rightmerge
\centeredTikZ{\squaresdotted{4}{3}{0}{0};\squaresempty{2}{2}{0}{1};}.
\end{aligned}
\label{eq:1025}
\end{equation}
From Eq.~\eqref{eq:1025}, we can deduce that
\begin{equation}
\begin{aligned}
\centeredTikZ{\squaresdotted{4}{3}{0}{0};\squaresempty{2}{3}{1}{0};\squaresempty{1}{2}{0}{1};\squaresempty{1}{2}{3}{0};} 
\rightmerge \centeredTikZ{\squaresdotted{4}{3}{0}{0};\squaresempty{2}{2}{2}{1};}
&=  
\left(
\centeredTikZ{\squaresdotted{4}{3}{0}{0};\squaresempty{2}{3}{1}{0};\squaresempty{1}{2}{3}{0};} 
\rightmerge \centeredTikZ{\squaresdotted{4}{3}{0}{0};\squaresempty{2}{2}{2}{1};}
\right)
\rightmerge
\centeredTikZ{\squaresdotted{4}{3}{0}{0};\squaresempty{2}{2}{0}{1};}
\\
&=\centeredTikZ{\squaresdotted{4}{3}{0}{0};\squaresempty{3}{3}{1}{0};}\rightmerge \centeredTikZ{\squaresdotted{4}{3}{0}{0};\squaresempty{2}{2}{0}{1};}\\
&=\centeredTikZ{\squaresempty{4}{3}{0}{0};\squaresdotted{1}{1}{0}{0};},
\end{aligned}
\end{equation}
using the commutation lemma (Lemma~\ref{lemma:commutation}) in the first line, applying the \hyperref[constraints:cell1]{Type I-d) cluster constraint} to the first condition in Lemma~\ref{lemma:fundamental} in the second line, and using Proposition~\ref{prop:336_1} in the last line.

Moreover, using the definition of $\centeredTikZmini{\squaresempty{3}{3}{1}{0};\squaresempty{1}{2}{0}{1};}$ and then applying the \hyperref[constraints:cell2]{Type II-b) cluster constraint} and the \hyperref[constraints:cell1]{Type I-a) cluster constraint} to the first condition of Lemma~\ref{lemma:fundamental}, we obtain

\begin{equation}
    \centeredTikZ{\squaresempty{4}{3}{0}{0};\squaresdotted{1}{1}{0}{0};}=
    \left(\left(\left(
    \centeredTikZ{\squaresdotted{4}{3}{0}{0};\squaresempty{2}{2}{2}{1};}
    \rightmerge
    \centeredTikZ{\squaresdotted{4}{3}{0}{0};\squaresempty{2}{2}{1}{1};}
    \right)
    \rightmerge
    \centeredTikZ{\squaresdotted{4}{3}{0}{0};\squaresempty{2}{2}{0}{1};}\right)
    \rightmerge
    \centeredTikZ{\squaresdotted{4}{3}{0}{0};\squaresempty{2}{2}{2}{0};}
    \right) \rightmerge
    \centeredTikZ{\squaresdotted{4}{3}{0}{0};\squaresempty{2}{2}{1}{0};}.\label{eq:1101}
\end{equation}

Plugging in Eq.~\eqref{eq:1101} to Eq.~\eqref{eq:1101_1}, we obtain a level-2 snake\footnote{See the mutation lemma (Lemma~\ref{lemma:snake_mutation}).} which is right-merged by two $2\times 2$ clusters anchored at a $y$-coordinate of $y-1$. We can plug in this expression into Eq.~\eqref{eq:1101_2}, which is equivalent to the left hand side of Eq.~\eqref{eq:1106}. The obtained expression, by the mutation lemma (Lemma~\ref{lemma:snake_mutation}), establishes our main claim.
\end{proof}

\section{Ground state example}
\label{appendix:examples}
In this section, we discuss a physical ground state that satisfies Eqs.~\eqref{eq:ent_assumption1} and~\eqref{eq:ent_assumption2}. Note that these equations are satisfied trivially for product states, \emph{e.g.,} states in the following tensor product form:
\begin{equation}
    \rho = \bigotimes_{i} \rho_i,
    \label{eq:product_state}
\end{equation}
where the tensor product is taken over a set of sites in a lattice. Therefore, we will focus on states which are in some sense ``far'' from the states of such form.

The state we describe below are \emph{topologically ordered} in a sense that there is an obstruction to preparing the ground state from a product state by applying a ``short'' quantum circuit. If one were to prepare a topologically ordered ground state from the product state, one would need to apply a quantum circuit whose depth grows \emph{extensively} with the system size. Assuming that the allowed gates are geometrically local, \emph{i.e.,} supported on a ball of bounded radius, there is a $\Omega(L)$ lower bound for a system of size $L\times L$~\cite{Bravyi2006}. This lower bounds suggest that topologically ordered ground state cannot be well-approximated by a product state. In contrast, the ansatz proposed in this paper will provide an exact description of such states in the thermodynamic limit using a density matrix on a bounded region.

\subsection{Toric Code}
Toric code is a paradigmatic model of topological order~\cite{Kitaev2002}. this is a model defined on a square lattice, with the physical degrees of freedom living on the edges of the lattice. The Hamiltonian is
\begin{equation}
    H = -\sum_{s} A_s  -\sum_p B_p,
\end{equation}
where $s$ is a site and $p$ is a plaquette. Here $A_s$ is a tensor product of Pauli-$X$ operators acting on the set of edges coincident on $s$ and $B_p$ is a tensor product of Pauli-$Z$ operators acting on the set of edges that surround the plauqette $p$. Every term commutes, and as such, any ground state $|\psi\rangle$ must satisfy $A_s|\psi\rangle = B_p|\psi\rangle = |\psi\rangle$ for all $s$ and $p$.

For our discussion, it will be convenient to consider the ``rotated'' version of the toric code in which the qubits live on the sites of the square lattice, and the stabilizers are alternating sequence of tensor product of Pauli-$X$s and $Z$s along each plaquette; see Fig.~\ref{fig:rtc}. Embedding this lattice into a triangular lattice is simple. One can simply ``shear'' the lattice; see Fig.~\ref{fig:sheared_rtc}. Now, the sites of this sheared lattice can be embedded straightforwardly into a triangular lattice.
\begin{figure}[h]
    \centering
    \includegraphics[width=0.25\columnwidth]{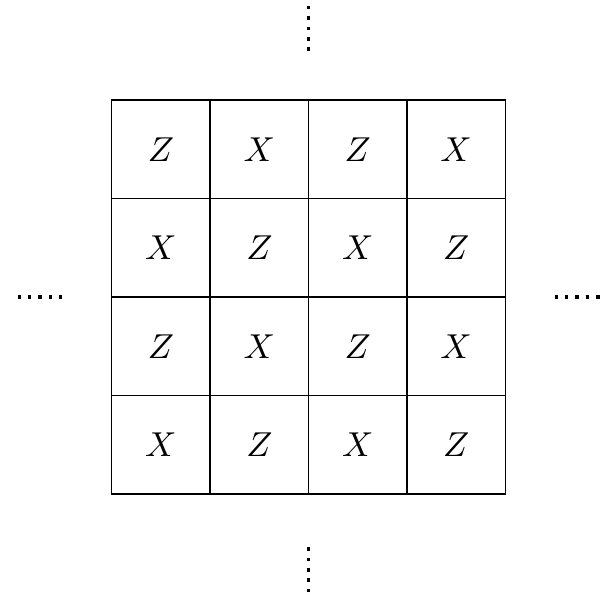}
    \caption{Rotated version of the toric code. Here $X$s ($Z$s) represent the stabilizers, each of which are tensor products of Pauli-$X$s (Pauli-$Z$s) over the qubits on the sites of the plaquette.}
    \label{fig:rtc}
\end{figure}

\begin{figure}[h]
    \centering
    \includegraphics[width=0.25\columnwidth]{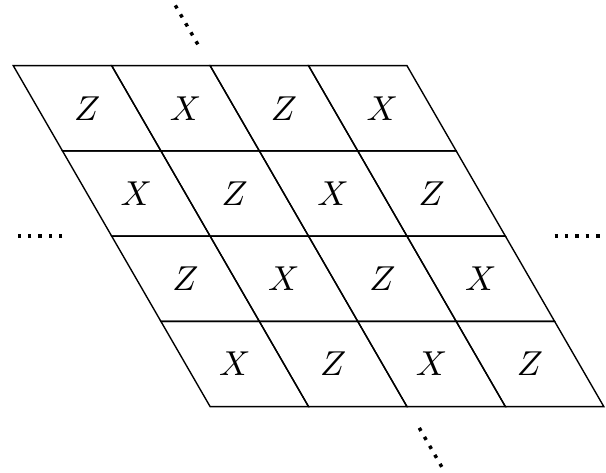}
    \caption{``Shearing'' the lattice of the rotated toric code. Here $X$s ($Z$s) represent the stabilizers, each of which are tensor products of Pauli-$X$s (Pauli-$Z$s) over the qubits on the sites of the plaquette.}
    \label{fig:sheared_rtc}
\end{figure}

Toric code is an example of \emph{stabilizer code}, for which a simple machinery has been already developed to study its ground state properties~\cite{Fattal2004,Hamma2005}. Given a subsystem, say $A$, we can compute the von Neumann entropy of the ground state reduced density matrix of $A$ as follows. First, note that $\langle A_s, B_p\rangle$ generates an Abelian group, which we denote as $S_{\text{TC}}$. Consider a subgroup $S_{\text{TC}, A} \subseteq S_{\text{TC}}$ which is generated by the elements of $S_{\text{TC}}$ that are supported on $A$ exclusively. The ground state entanglement entropy is~\cite{Fattal2004}
\begin{equation}
    S(A) = |A| \log 2 - \log |S_{\text{TC}, A}|. \label{eq:ee_stabilizer}
\end{equation}

\subsection{Verifying local constraints}
We can use Eq.~\eqref{eq:ee_stabilizer} to verify Eqs.~\eqref{eq:ent_assumption1} and~\eqref{eq:ent_assumption2}. We shall use the embedding described in Fig.~\ref{fig:sheared_rtc}.

Within this convention, we can first verify Eqs.~\eqref{eq:ent_assumption1} as follows. We restate this equation for the readers' convience below:
\begin{equation}
\begin{aligned}
\entropy{\centeredTikZ{\squaresfilled{2}{2}{0}{0};\squaresempty{2}{1}{0}{0};\squaresempty{1}{1}{0}{1};}} &= 
\entropy{\centeredTikZ{\squaresfilled{2}{2}{0}{0};\squaresempty{2}{1}{0}{0};\squaresfilled{1}{1}{0}{1};}}
+
\entropy{\centeredTikZ{\squaresfilled{2}{2}{0}{0};\squaresempty{1}{2}{0}{0};\squaresfilled{1}{1}{1}{0};}}
-
\entropy{\centeredTikZ{\squaresfilled{2}{2}{0}{0};\squaresfilled{2}{1}{0}{0};\squaresfilled{1}{1}{0}{1};\squaresempty{1}{1}{0}{0};}},\\[8pt]
\entropy{\centeredTikZ{\squaresfilled{2}{2}{0}{0};\squaresempty{2}{1}{0}{1};\squaresempty{1}{1}{1}{0};}} &= 
\entropy{\centeredTikZ{\squaresfilled{2}{2}{0}{0};\squaresempty{2}{1}{0}{1};\squaresfilled{1}{1}{1}{0};}}
+
\entropy{\centeredTikZ{\squaresfilled{2}{2}{0}{0};\squaresempty{1}{2}{1}{0};\squaresfilled{1}{1}{0}{1};}}
-
\entropy{\centeredTikZ{\squaresfilled{2}{2}{0}{0};\squaresfilled{2}{1}{0}{1};\squaresfilled{1}{1}{1}{0};\squaresempty{1}{1}{1}{1};}},
\end{aligned}
\tag{\ref{eq:ent_assumption1}}
\end{equation}
In this case, no subsystem appearing in the entropy calculation has a nontrivial stabilizer subgroup. This can be seen easily from the fact that $S_{\text{TC}}$ does not contain any element with a weight of less than $4$. Therefore,
\begin{equation}
\begin{aligned}
\entropy{\centeredTikZ{\squaresfilled{2}{2}{0}{0};\squaresempty{2}{1}{0}{0};\squaresempty{1}{1}{0}{1};}} &=  3,\\[8pt]
\entropy{\centeredTikZ{\squaresfilled{2}{2}{0}{0};\squaresempty{2}{1}{0}{1};\squaresempty{1}{1}{1}{0};}} &= 3.
\end{aligned}
\label{eq:tc_ent_1}
\end{equation}
Moreover,
\begin{equation}
\begin{aligned}
&\entropy{\centeredTikZ{\squaresfilled{2}{2}{0}{0};\squaresempty{2}{1}{0}{0};\squaresfilled{1}{1}{0}{1};}}=
\entropy{\centeredTikZ{\squaresfilled{2}{2}{0}{0};\squaresempty{1}{2}{0}{0};\squaresfilled{1}{1}{1}{0};}}
= 
\entropy{\centeredTikZ{\squaresfilled{2}{2}{0}{0};\squaresempty{2}{1}{0}{1};\squaresfilled{1}{1}{1}{0};}}
=
\entropy{\centeredTikZ{\squaresfilled{2}{2}{0}{0};\squaresempty{1}{2}{1}{0};\squaresfilled{1}{1}{0}{1};}}=2, \\[8pt]
&\entropy{\centeredTikZ{\squaresfilled{2}{2}{0}{0};\squaresfilled{2}{1}{0}{0};\squaresfilled{1}{1}{0}{1};\squaresempty{1}{1}{0}{0};}}=
\entropy{\centeredTikZ{\squaresfilled{2}{2}{0}{0};\squaresfilled{2}{1}{0}{1};\squaresfilled{1}{1}{1}{0};\squaresempty{1}{1}{1}{1};}}=1,
\end{aligned}
\label{eq:tc_ent_2}
\end{equation}
Therefore, Eqs.~\eqref{eq:ent_assumption1} are satisfied.

Verifying Eq.~\eqref{eq:ent_assumption2} is also straightforward. Let us restate the equation below.
\begin{equation}
\begin{aligned}
    \entropy{\centeredTikZ{\squaresempty{3}{3}{0}{0};}}
    =
    &\entropy{\centeredTikZ{\squaresfilled{3}{3}{0}{0};\squaresempty{2}{2}{0}{0};}}
    +
    \entropy{\centeredTikZ{\squaresfilled{3}{3}{0}{0};\squaresempty{2}{2}{1}{0};}} \\
    &+
    \entropy{\centeredTikZ{\squaresfilled{3}{3}{0}{0};\squaresempty{2}{2}{0}{1};}}
    +
    \entropy{\centeredTikZ{\squaresfilled{3}{3}{0}{0};\squaresempty{2}{2}{1}{1};}} \\
    &-
    \entropy{\centeredTikZ{\squaresfilled{3}{3}{0}{0};\squaresempty{2}{1}{0}{1};}}
    -
    \entropy{\centeredTikZ{\squaresfilled{3}{3}{0}{0};\squaresempty{2}{1}{1}{1};}}\\
    &-
    \entropy{\centeredTikZ{\squaresfilled{3}{3}{0}{0};\squaresempty{1}{2}{1}{0};}}
    -
    \entropy{\centeredTikZ{\squaresfilled{3}{3}{0}{0};\squaresempty{1}{2}{1}{1};}}
     \\
     &+ 
    \entropy{\centeredTikZ{\squaresfilled{3}{3}{0}{0};\squaresempty{1}{1}{1}{1};}},
\end{aligned}
\tag{\ref{eq:ent_assumption2}}
\end{equation}
Here the subsystems involved in the entropy calculation can have nontrivial stabilizer subgroups. For instance, the stabilizer group of the subsystem appearing in the left hand side of Eq.~\eqref{eq:ent_assumption2} is generated by four ``plaquette operators'' in our new convention, two of which are tensor product of Pauli $X$s and the rest being tensor product of Pauli $Z$s. Therefore, we have
\begin{equation}
\begin{aligned}
    \entropy{\centeredTikZ{\squaresempty{3}{3}{0}{0};}} &= 9-4 \\[8pt]
    &=5.
\end{aligned}
\end{equation}
Similarly, the first four terms appearing on the right hand side of Eq.~\eqref{eq:ent_assumption2} have a correction of $-1$, owing to the fact that each subsystem can support exactly one plaquette operator. Therefore, 
\begin{equation}
    \begin{aligned}
    &\entropy{\centeredTikZ{\squaresfilled{3}{3}{0}{0};\squaresempty{2}{2}{0}{0};}}
    =
    \entropy{\centeredTikZ{\squaresfilled{3}{3}{0}{0};\squaresempty{2}{2}{1}{0};}} =
    \entropy{\centeredTikZ{\squaresfilled{3}{3}{0}{0};\squaresempty{2}{2}{0}{1};}}=
    \entropy{\centeredTikZ{\squaresfilled{3}{3}{0}{0};\squaresempty{2}{2}{1}{1};}}\\&= 4-1\\
    &=3.
    \end{aligned}
\end{equation}
Using translational invariance and Eqs.~\eqref{eq:tc_ent_1} and~\eqref{eq:tc_ent_2}, one can verify Eq.~\eqref{eq:ent_assumption2}. Thus, we have verified both Eqs.~\eqref{eq:ent_assumption1} and~\eqref{eq:ent_assumption2}.

\subsection{Finite temperature}
\label{sec:toric_finite}
The entropy of the subsystem at finite temperature was calculated by Castelnovo and Chamon~\cite{Castelnovo2007}. (For the interested readers, the relevant equation is Eq.~(39) in Ref.~\cite{Castelnovo2007}.) Their formula, in the limit the size of the subsystem goes to infinity, reads:
\begin{equation*}
     S(A) =  \alpha |A|+ (|\partial A| - m_A) \log 2, \label{eq:cc_formula}
\end{equation*}
where $\alpha$ is a constant that depends on the temperature, $m_A$ is the number connected components, $|A|$ is the number of $X$-type stabilizers strictly supported in region $A$, and $|\partial A|$ is the number of $X$-type supported nontrivially on both $A$ and its complement.

Since Eq.~\eqref{eq:cc_formula} holds only in the $|A|\to \infty$ limit, the subsystems chosen in Eqs.~\eqref{eq:ent_assumption1} and~\eqref{eq:ent_assumption2} must consist of a large number of subclusters. While this discussion already appeared at the end of  Section~\ref{sec:summary}, let us reiterate the main point once more for the reader's convenience. Imagine replacing each cluster by a $2\times 2$ cluster, leading to
\begin{equation*}
\begin{aligned}
\centeredTikZ{\squaresempty{2}{2}{0}{0};}   &\longrightarrow
\centeredTikZ{\squaresempty{4}{4}{0}{0};}, \\
\centeredTikZ{\squaresempty{3}{3}{0}{0};}   &\longrightarrow
\centeredTikZ{\squaresempty{6}{6}{0}{0};}.
\end{aligned}
\end{equation*}
Now Eq.~\eqref{eq:ent_assumption1} and~\eqref{eq:ent_assumption2} reads:
\begin{equation*}
\begin{aligned}
\entropy{\centeredTikZ{\squaresfilled{4}{4}{0}{0};\squaresempty{4}{2}{0}{0};\squaresempty{2}{2}{0}{2};}}&=
\entropy{\centeredTikZ{\squaresfilled{4}{4}{0}{0};\squaresempty{4}{2}{0}{0};}}
+
\entropy{\centeredTikZ{\squaresfilled{4}{4}{0}{0};\squaresempty{2}{4}{0}{0};}}\\
&-
\entropy{\centeredTikZ{\squaresfilled{4}{4}{0}{0};\squaresempty{2}{2}{0}{0};}},\\
\entropy{\centeredTikZ{\squaresfilled{4}{4}{0}{0};\squaresempty{4}{2}{0}{2};\squaresempty{2}{2}{2}{0};}}&=
\entropy{\centeredTikZ{\squaresfilled{4}{4}{0}{0};\squaresempty{4}{2}{0}{2};}}
+
\entropy{\centeredTikZ{\squaresfilled{4}{4}{0}{0};\squaresempty{2}{4}{2}{0};}}\\
&-
\entropy{\centeredTikZ{\squaresfilled{4}{4}{0}{0};\squaresempty{2}{2}{2}{2};}},
\end{aligned}
\end{equation*}
\vspace{1.5cm}

and

\begin{equation*}
\begin{aligned}
\entropy{\centeredTikZ{\squaresempty{6}{6}{0}{0};}} &= 
\entropy{\centeredTikZ{\squaresfilled{6}{6}{0}{0};\squaresempty{4}{4}{0}{0};}}
+
\entropy{\centeredTikZ{\squaresfilled{6}{6}{0}{0};\squaresempty{4}{4}{2}{0};}}
-
\entropy{\centeredTikZ{\squaresfilled{6}{6}{0}{0};\squaresempty{4}{4}{0}{2};}}
-
\entropy{\centeredTikZ{\squaresfilled{6}{6}{0}{0};\squaresempty{4}{4}{2}{2};}} \\
&-
\entropy{\centeredTikZ{\squaresfilled{6}{6}{0}{0};\squaresempty{4}{2}{0}{2};}}
-
\entropy{\centeredTikZ{\squaresfilled{6}{6}{0}{0};\squaresempty{4}{2}{2}{2};}}
-
\entropy{\centeredTikZ{\squaresfilled{6}{6}{0}{0};\squaresempty{2}{4}{2}{0};}}
-
\entropy{\centeredTikZ{\squaresfilled{6}{6}{0}{0};\squaresempty{2}{4}{2}{2};}}\\
&+
\entropy{\centeredTikZ{\squaresfilled{6}{6}{0}{0};\squaresempty{2}{2}{2}{2};}}.
\end{aligned}
\end{equation*}

Similarly, one can replace the clusters in Eqs.~\eqref{eq:ent_assumption1} and~\eqref{eq:ent_assumption2} to  a $\ell \times \ell$ cluster for any $\ell>1$. These conditions can be satisfied up to an arbitrarily small error by increasing $\ell$.






\section{Efficient tensor network}
\label{appendix:tensor_network}
In this Appendix, we show that the maximum-entropy state in Eq.~\eqref{eq:global_state} is an efficient two-dimensional tensor network. The rest of this Appendix is structured as follows. First, we express the maximum entropy state as a quantum state created by a sequence of linear maps acting on bounded regions. Second, we will explain why the resulting state can be viewed as a tensor network. Lastly, we shall briefly explain in what sense our construction is efficient.

\subsection{Construction}

Let us begin by explaining how the maximum-entropy global state consistent with the marginals can be constructed. Using the convention in Fig.~\ref{fig:lattice}, we shall generate a sequence of reduced density matrices that add one coarse-grained site at a time, from the bottom to the top, and from left to right. Specifically, we shall argue that the maximum-entropy global state in Theorem~\ref{thm:entropy_maxmerge} is equal to 
\begin{equation}
\mathcal{E}_{M-1,\uparrow} \circ \ldots \circ \mathcal{E}_{1,\uparrow}\left[ \mathbb{S}_{1}\right], \label{eq:global_construction}
\end{equation}
where $\mathbb{S}_1$ is the level-$1$ snake at $y=1$; see Definition~\ref{def:snake2}. By the mutation lemma (Lemma~\ref{lemma:snake_mutation}), the level-$1$ snake can be created by a sequence of quantum channels from left to right. Also, $\mathcal{E}_{i,\uparrow}$ is a sequence of quantum channels acting from left to right. Therefore, Eq.~\eqref{eq:global_construction} is indeed a quantum state created by a sequence of localized quantum channels, in the order we have prescribed above. 

The proof of Eq.~\eqref{eq:global_construction} is based on induction. This is the key identity:
\begin{equation}
\mathcal{E}_{k-1,\uparrow} \circ \ldots \circ \mathcal{E}_{1,\uparrow}\left[ \mathbb{S}_{1}\right] = \mathbb{S}\left( \left(\adulthorizontalsnake{1}{y}{N}{y\shortplus 1} \right)_{y=1}^{k-1} \right)
\end{equation}
for $1<k<M+1$. The $k=2$ case follows from Proposition~\ref{prop:onetotwo}. The $k=3$ case follows a logic similar to the proof of Theorem~\ref{thm:forgotten}. To be more specific, in the proof of Theorem~\ref{thm:forgotten}, one concludes that the sixth condition of Lemma~\ref{lemma:fundamental} is satisfied, with the choice of $\lambda, \sigma,$ and $\Phi$ specified in Eq.~\eqref{eq:lambsigphi}. We have already established in Lemma~\ref{lemma:fundamental} that $\Phi(\lambda)$ is the maximum entropy state, which is equal to the merged state of the two level-$2$ snakes, one at $y=1$ and $2$ and the other at $y=2$ and $3$.

More generally,

\begin{equation}
    \begin{aligned}
    \mathcal{E}_{k-1,\uparrow} \circ \ldots \circ \mathcal{E}_{1,\uparrow}\left[ \mathbb{S}_{1}\right]&= \mathcal{E}_{k-1,\uparrow} \left[\mathbb{S}\left( \left(\adulthorizontalsnake{1}{y}{N}{y\shortplus 1} \right)_{y=1}^{k-2} \right) \right]\\
    &= \mathcal{E}_{k-1,\uparrow} \left[\mathbb{S}\left( \left(\adulthorizontalsnake{1}{y}{N}{y\shortplus 1} \right)_{y=k-2}^{1} \right) \right] \\
    &= \mathcal{E}_{k-1,\uparrow} \left[\adulthorizontalsnake{1}{k\shortminus 2}{N}{k\shortminus 1} \rightmerge \mathbb{S}\left( \left(\adulthorizontalsnake{1}{y}{N}{y\shortplus 1} \right)_{y=k-3}^{1}\right) \right] \\
    &= \mathcal{E}_{k-1,\uparrow} \left[\adulthorizontalsnake{1}{k\shortminus 2}{N}{k\shortminus 1}\right] \rightmerge \mathbb{S}\left( \left(\adulthorizontalsnake{1}{y}{N}{y\shortplus 1} \right)_{y=k-3}^{1}\right) \\
    &= \mathbb{S}\left( \left(\adulthorizontalsnake{1}{y}{N}{y\shortplus 1} \right)_{y=k-2}^{k-1}\right) \rightmerge \mathbb{S}\left( \left(\adulthorizontalsnake{1}{y}{N}{y\shortplus 1} \right)_{y=k-3}^{1}\right) \\
    &= \mathbb{S}\left( \left(\adulthorizontalsnake{1}{y}{N}{y\shortplus 1} \right)_{y=1}^{k-1}\right).
    \end{aligned}
\end{equation}

The first line is based on the induction hypothesis, specifically, the $k=2$ case. The second line uses the mutation lemma (Lemma~\ref{lemma:snake_mutation}). (Note that the order is reversed.) The third line uses the splitting lemma (Lemma~\ref{lemma:snake_splitting}). The fourth line is based on an observation that the map $\mathcal{E}_{k-1,\uparrow}$ acts trivially on the snake that appears after $\rightmerge$. The fifth line follows from the induction hypothesis, \emph{i.e.,} the $k=3$ case. The sixth line uses the splitting lemma (Lemma~\ref{lemma:snake_splitting}).

\subsection{Tensor network representation}
In the recast form of Eq.~\eqref{eq:global_construction}, the maximum-entropy global state can be viewed as a two-dimensional tensor network; see Ref.~\cite{Orus2014} for a review. To see why, note that Eq.~\eqref{eq:global_construction} essentially says that the global state can be created by a sequence of linear maps on bounded regions. One can view this linear map as a tensor whose legs are connected to the set of physical qubits that it acts on, with an additional set of legs that represent the newly introduced physical degrees of freedom. 

More concretely, we can formulate our argument as an induction. Suppose that 
\begin{equation}
    \mathcal{E}_{k,\uparrow} \circ \ldots \circ \mathcal{E}_{1,\uparrow}[\mathbb{S}_1]
\end{equation}
for $k < M$ is a tensor network on a $k\times L$ lattice with constant bond dimension, where $L$ is the number of degrees of freedom in $\mathbb{S}_2$. If we apply $\mathcal{E}_{k+1,\uparrow}$, it is easy to see that the resulting state is a tensor network on a $(k+1)\times L$ lattice; see Fig.~\ref{fig:tensor_network}. Moreover, for each site of the $k\times L$-sized tensor network, at most $\mathcal{O}(1)$ number of local recovery channels act. Therefore, the bond dimension of the tensor network grows at most by a constant amount. Therefore, at all steps, the bond dimension of the tensor network is bounded from above some constant. This argument establishes that the state in Eq.~\eqref{eq:global_state} has a constant bond-dimension two-dimensional tensor network representation.

\begin{figure*}[t]
    \centering
    \subfloat[][First step]{\includegraphics{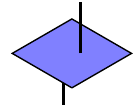}}
        \subfloat[][Second step]{\includegraphics{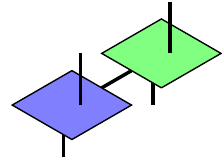}}
        \subfloat[][Third step]{\includegraphics{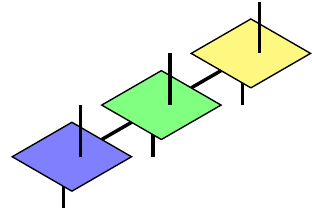}}
        \subfloat[][Completion of a single row]{\includegraphics{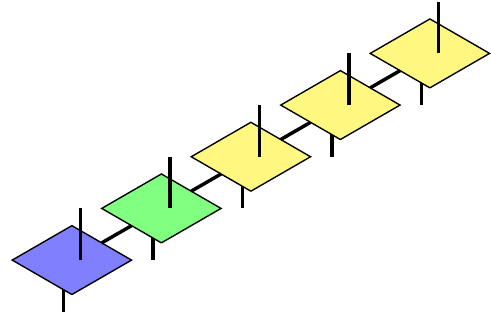}}
        \subfloat[][First step of the second row ]{\includegraphics{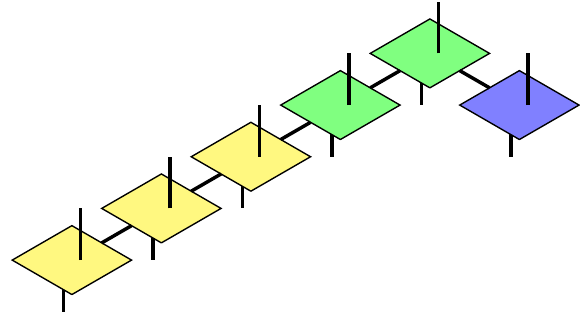}}\\
        \subfloat[][Second step of the second row]{\includegraphics{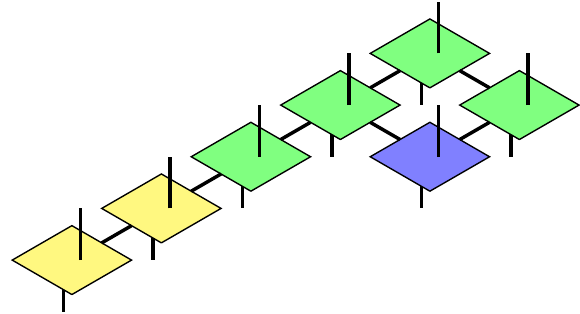}}
        \subfloat[][Repetition]{\includegraphics{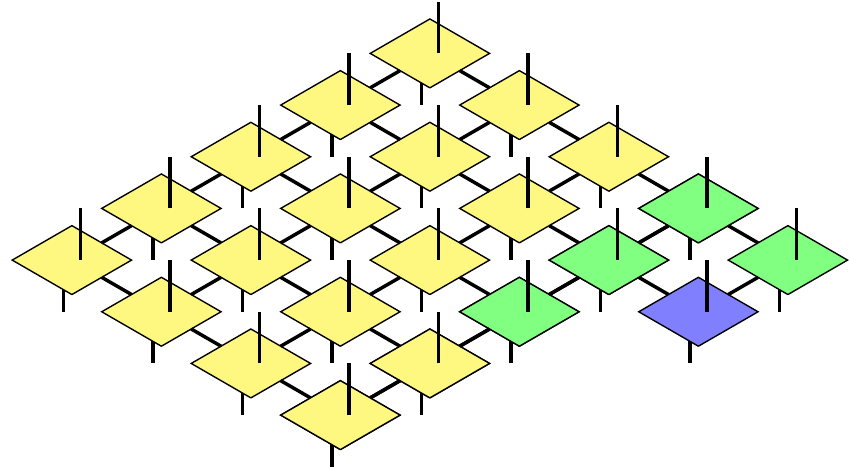}}
    \caption{Sequential creation of the tensor network that describes the maximum entropy state (Eq.~\eqref{eq:global_construction}). The outward-facing legs are the physical degrees of freedom, one representing the ``bra'' and the other representing the ``ket.'' The blue tensors are the newly introduced tensors and the the green tensors are the ones that are nontrivally affected by the local channel that creates the blue tensor. The yellow tensors remain unaffected.}
    \label{fig:tensor_network}
\end{figure*}

Having established that the maximum entropy state (Eq.~\eqref{eq:global_construction}) is a two-dimensional tensor network, let us move on to the next topic. In what sense is this tensor network efficient? In the literature, the word \emph{efficient} is used in two different ways. Sometimes, it means that the amount of data that defines the state grows polynomially with the system size. It should be easy to see that our tensor network is certainly efficient in that sense. 

A more stringent notion of efficiency concerns calculation of expectation values. The main question is, given a description of the tensor network, whether there is a polynomial-time algorithm to compute the expectation values of local observables. Let us focus on geometrically local observables to be more specific, \emph{i.e.,} observables that can be supported on a ball of constant radius. Note that any such observable (provided that its support is sufficiently small) can be computed directly from the reduced density matrices that define the tensor network, since we have already proved that the global state is consistent with the marginals. Therefore, our tensor network is efficient also in the sense that there is a polynomial-time (in fact, constant-time) algorithm to compute such expectation values. We note that two-dimensional tensor networks generally \emph{do not} posses such a property~\cite{Schuch2007} although they can be if the underlying tensors are ``isometric''~\cite{Zaletel2020}.

In fact, our result shows that there is another interesting sense in which our tensor network is efficient. We can compute not just the local expectation values, but also the global entropy! Recall that entropy of a state is a \emph{nonlinear} function of the state. As such, it is a highly nontrivial fact that one can compute the entropy polynomial time. 

To summarize, there is a sense in which the state we are studying, which does have a tensor network representation, is much more efficient than the garden-variety tensor networks. Not only is it possible to have a tensor network with constant bond dimension, but it is also possible to have an efficient scheme to estimate local expectation values, and even the global entropy. 

However, these properties are more manifest from the perspective of the local reduced density matrices. As such, we believe a better point of view is to think in terms of the marginals.

\bibliographystyle{myhamsplain2}
\bibliography{bib}

\providecommand{\bysame}{\leavevmode\hbox to3em{\hrulefill}\thinspace}
\begin{thebibliography}{10}

\bibitem{Cooper1957}
J.~Bardeen, L.~N. Cooper, and J.~R. Schrieffer, \emph{Microscopic theory of
  superconductivity}, Phys. Rev. \textbf{106} (1957), 162--164.

\bibitem{Bednorz1986}
J.~G. Bednorz and Muller.~K. A., \emph{Possible high-tc superconductivity in
  the ba-la-cu-o system}, Z. Phys. B \textbf{64} (1986), no.~2, 189--193.

\bibitem{Kitaev2002}
M.~N.~Vyalyi A.~Yu.~Kitaev, A. H.~Shen, \emph{Classical and quantum
  computation}, American Mathematical Society, Providence, Rhode Island, 2002.

\bibitem{Hastings2007}
M.~B. Hastings, \emph{An area law for one-dimensional quantum systems}, JSTAT
  \textbf{2007} (2007), no.~08, P08024.

\bibitem{Landau2013}
Z.~Landau, U.~Vazirani, and T.~Vidick, \emph{A polynomial-time algorithm for
  the ground state of 1d gapped local hamiltonians},  (2013),
  \href{http://arxiv.org/abs/arXiv:1307.5143}{arXiv:1307.5143}.

\bibitem{Anshu2020}
A.~Anshu, I.~Arad, and D.~Gosset, \emph{Entanglement subvolume law for 2d
  frustration-free spin systems}, Proceedings of the 52nd Annual ACM SIGACT
  Symposium on Theory of Computing (New York, NY, USA), STOC 2020,
  p.~868–874, Association for Computing Machinery, 2020.

\bibitem{Ge2014}
Y.~Ge and J.~Eisert, \emph{Area laws and efficient descriptions of quantum
  many-body states},  (2014),
  \href{http://arxiv.org/abs/arXiv:1411.2995}{arXiv:1411.2995}.

\bibitem{Schuch2007}
N.~Schuch, M.~M. Wolf, F.~Verstraete, and J.~I. Cirac, \emph{Computational
  complexity of projected entangled pair states}, Phys. Rev. Lett. \textbf{98}
  (2007), 140506.

\bibitem{Zaletel2020}
M.~P. Zaletel and F.~Pollmann, \emph{Isometric tensor network states in two
  dimensions}, Phys. Rev. Lett. \textbf{124} (2020), 037201.

\bibitem{Kim2017}
I.~H. Kim, \emph{Markovian matrix product density operators : Efficient
  computation of global entropy},  (2017),
  \href{http://arxiv.org/abs/arXiv:1709.07828}{arXiv:1709.07828}.

\bibitem{Liu2006}
Y-K. Liu, \emph{Consistency of local density matrices is qma-complete},
  Approximation, Randomization, and Combinatorial Optimization. Algorithms and
  Techniques (Berlin, Heidelberg) (Josep D{\'i}az, Klaus Jansen, Jos{\'e} D.~P.
  Rolim, and Uri Zwick, eds.), pp.~438--449, Springer Berlin Heidelberg, 2006.

\bibitem{Broadbent2019}
A.~Broadbent and A.~B. Grilo, \emph{Zero-knowledge for qma from locally
  simulatable proofs},  (2019),
  \href{http://arxiv.org/abs/arXiv:1911.07782}{arXiv:1911.07782}.

\bibitem{Kitaev2006}
A.~Kitaev and J.~Preskill, \emph{Topological entanglement entropy}, Phys. Rev.
  Lett. \textbf{96} (2006), 110404.

\bibitem{Levin2006}
M.~A. Levin and X-G. Wen, \emph{Detecting topological order in a ground state
  wave function}, Phys. Rev. Lett. \textbf{96} (2006), 110405.

\bibitem{Castelnovo2007}
C.~Castelnovo and C.~Chamon, \emph{Entanglement and topological entropy of the
  toric code at finite temperature}, Phys. Rev. B \textbf{76} (2007), 184442.

\bibitem{Eisert2010}
J.~Eisert, M.~Cramer, and M.~B. Plenio, \emph{Colloquium: Area laws for the
  entanglement entropy}, Rev. Mod. Phys. \textbf{82} (2010), 277--306.

\bibitem{Petz1988}
D.~Petz, \emph{{Sufficiency of channels over von Neumann algebras}}, Q. J.
  Math. \textbf{39} (1988), no.~1, 97--108,
  \href{http://arxiv.org/abs/https://academic.oup.com/qjmath/article-pdf/39/1/97/4559225/39-1-97.pdf}{https://academic.oup.com/qjmath/article-pdf/39/1/97/4559225/39-1-97.pdf}.

\bibitem{Petz2003}
D.~Petz, \emph{Monotonicity of quantum relative entropy revisited}, Rev. Math.
  Phys. \textbf{15} (2003), no.~01, 79--91,
  \href{http://arxiv.org/abs/https://doi.org/10.1142/S0129055X03001576}{https://doi.org/10.1142/S0129055X03001576}.

\bibitem{Schilling2015}
C.~Schilling, \emph{Quantum marginal problem and its physical relevance},
  (2015), \href{http://arxiv.org/abs/arXiv:1507.00299}{arXiv:1507.00299}.

\bibitem{Kim2020a}
I.~H. Kim, \emph{An exactly solvable ansatz for statistical mechanics models},
  To appear. (2020).

\bibitem{SKK2019}
Bowen Shi, Kohtaro Kato, and Isaac~H. Kim, \emph{Fusion rules from
  entanglement}, Ann. Phys. \textbf{418} (2020), 168164.

\bibitem{Shi2020}
Bowen Shi and Isaac~H. Kim, \emph{Entanglement bootstrap approach for gapped
  domain walls},  (2020),
  \href{http://arxiv.org/abs/arXiv:2008.11793}{arXiv:2008.11793}.

\bibitem{Shi2020a}
Bowen Shi and Isaac~H. Kim, \emph{Domain wall topological entanglement
  entropy},  (2020),
  \href{http://arxiv.org/abs/arXiv:2008.11794}{arXiv:2008.11794}.

\bibitem{Kato2016}
K.~Kato, F.~Furrer, and M.~Murao, \emph{Information-theoretical analysis of
  topological entanglement entropy and multipartite correlations}, Phys. Rev. A
  \textbf{93} (2016), 022317.

\bibitem{Lieb1973}
E.~H. Lieb and M.~B. Ruskai, \emph{Proof of the strong subadditivity of quantum
  mechanical entropy}, J. Math. Phys. \textbf{14} (1973), no.~12, 1938--1941,
  \href{http://arxiv.org/abs/https://doi.org/10.1063/1.1666274}{https://doi.org/10.1063/1.1666274}.

\bibitem{Kim2014}
I.~H. Kim, \emph{On the informational completeness of local observables},
  (2014), \href{http://arxiv.org/abs/arXiv:1405.0137}{arXiv:1405.0137}.

\bibitem{Levin2005}
M.~A. Levin and X-G. Wen, \emph{String-net condensation: A physical mechanism
  for topological phases}, Phys. Rev. B \textbf{71} (2005), 045110.

\bibitem{Bravyi2006}
S.~Bravyi, M.~B. Hastings, and F.~Verstraete, \emph{Lieb-robinson bounds and
  the generation of correlations and topological quantum order}, Phys. Rev.
  Lett. \textbf{97} (2006), 050401.

\bibitem{Zou2016}
L.~Zou and J.~Haah, \emph{Spurious long-range entanglement and replica
  correlation length}, Phys. Rev. B \textbf{94} (2016), 075151.

\bibitem{Williamson2019}
D.~J. Williamson, A.~Dua, and M.~Cheng, \emph{Spurious topological entanglement
  entropy from subsystem symmetries}, Phys. Rev. Lett. \textbf{122} (2019),
  140506.

\bibitem{Kim2016}
I.~H. Kim, \emph{Markovian marignals},  (2016),
  \href{http://arxiv.org/abs/1609.08579}{1609.08579}.

\bibitem{Uhlmann1977}
A.~Uhlmann, \emph{Relative entropy and the wigner-yanase-dyson-lieb concavity
  in an interpolation theory}, Comm. Math. Phys. \textbf{54} (1977), no.~1,
  21--32.

\bibitem{Fattal2004}
David Fattal, Toby~S. Cubitt, Yoshihisa Yamamoto, Sergey Bravyi, and Isaac~L.
  Chuang, \emph{Entanglement in the stabilizer formalism},  (2004),
  \href{http://arxiv.org/abs/arXiv:quant-ph/0406168}{arXiv:quant-ph/0406168}.

\bibitem{Hamma2005}
Alioscia Hamma, Radu Ionicioiu, and Paolo Zanardi, \emph{Ground state
  entanglement and geometric entropy in the kitaev model}, Physics Letters A
  \textbf{337} (2005), no.~1, 22 -- 28.

\bibitem{Orus2014}
Román Orús, \emph{A practical introduction to tensor networks: Matrix product
  states and projected entangled pair states}, Ann. Phys. \textbf{349} (2014),
  117 -- 158.

\end{thebibliography}
\end{document}